\documentclass{article}
\usepackage[T1]{fontenc}
\usepackage{amsthm}
\usepackage{amsmath}
\usepackage{amssymb}
\usepackage{tikz}
\usetikzlibrary{arrows.meta,positioning,matrix,fit,backgrounds,calc}
\usepackage{verbatim}
\usepackage{array}
\usepackage{enumerate}

\usepackage{microtype}
\usepackage{needspace}
\usepackage{etoolbox}
\usepackage[ruled,vlined]{algorithm2e}
\usepackage{authblk}

\newcommand*{\set}[1]{\{#1\}}
\newcommand*{\tsr}[2]{\mathcal{K}(#1,#2)}
\newcommand*{\dem}[2]{\delta(#1,#2)}
\newcommand*{\drem}[2]{\delta^{\mathrm{rem}}(#1,#2)}

\theoremstyle{plain}
\newtheorem{theorem}{Theorem}[section]
\newtheorem{proposition}[theorem]{Proposition}
\newtheorem{corollary}[theorem]{Corollary}
\newtheorem{lemma}[theorem]{Lemma}

\theoremstyle{definition}
\newtheorem{definition}[theorem]{Definition}
\newtheorem{example}[theorem]{Example}

\BeforeBeginEnvironment{theorem}{\Needspace{8\baselineskip}}
\BeforeBeginEnvironment{proposition}{\Needspace{6\baselineskip}}
\BeforeBeginEnvironment{corollary}{\Needspace{6\baselineskip}}
\BeforeBeginEnvironment{lemma}{\Needspace{6\baselineskip}}

\usepackage[letterpaper,top=2cm,bottom=2cm,left=3cm,right=3cm,marginparwidth=2cm]{geometry}

\usepackage{graphicx}
\usepackage[colorlinks=true, allcolors=blue]{hyperref}
\usepackage{cleveref}

\tikzset{%
    add/.style args={#1 and #2}{
        to path={%
 ($(\tikztostart)!-#1!(\tikztotarget)$)--($(\tikztotarget)!-#2!(\tikztostart)$)%
  \tikztonodes},add/.default={.2 and .2}}
} 

\title{\bf Proportional Representation in Temporal Voting with \\ Ranked Preferences}

\author[1]{Noam Hazon}
\author[1]{Leora Schmerler}
\author[2]{Nicholas Teh}
\affil[1]{Department of Data Science and AI, Ariel University}
\affil[2]{Department of Computer Science, University of Oxford}

\begin{document}
\date{}
\maketitle

\begin{abstract}
We study proportional representation in temporal voting, where one candidate is selected in each round. While prior work has focused on approval ballots, we consider ranked preferences, which may change over time. A natural approach treats each voter's top candidates as approved, but the right cutoff may differ across voters and rounds. We therefore require proportionality to hold for every admissible choice of cutoffs, whether fixed and common, common but varying across rounds, or set individually for each voter in each round. Combining these interpretations with temporal versions of justified representation (JR), proportional JR (PJR), extended JR (EJR), and proportionality for solid coalitions (PSC) gives us a hierarchy of axioms. We ask which of these axioms can be guaranteed, and with how much knowledge of the future. Unlike with approval ballots, no version of EJR can be guaranteed, and for the other axioms, flexibility in the cutoffs comes at a price. With a fixed common cutoff, JR, PJR, and PSC can be guaranteed, but only by rules that see all preferences in advance. Once the cutoff may vary across rounds, even such rules cannot guarantee JR or PSC for groups that agree in only some rounds. For groups that agree in every round, however, knowing only the number of rounds suffices for PJR in polynomial time, and PSC needs no knowledge of the future at all. Under individual cutoffs, no version of JR or PJR can be guaranteed, yet a rule as simple as serial dictatorship achieves PJR up to an additive loss that no rule can improve on, however much it knows. Natural preference restrictions restore exact guarantees. Finally, we show that checking our axioms is often coNP-complete; but perhaps surprisingly, a stronger axiom can be easier to check.
\end{abstract}

\section{Introduction}
Consider the alignment of a language model with human feedback. For each prompt, the model generates several responses, and annotators rank these responses from best to worst~\cite{ouyang2022training}. Suppose that one response is selected for each prompt, for example as a target for further training. Annotators often disagree, and how their rankings should be combined is a question of social choice~\cite{conitzer2024social}. If half of the annotators consistently rank the most concise response first, a third prefer the most detailed one, and the remaining sixth prefer the most cautious one, then selecting the response ranked first by the most annotators for every prompt gives the same annotators their first choice every time. A model trained on the selected responses would then reflect only the majority, whereas pluralistic alignment asks for models that serve people with diverse values and perspectives~\cite{sorensen2024roadmap}. A proportional sequence of decisions should instead give each group influence in proportion to its size. The challenge is that preferences may change from one decision to the next: the annotators who agree on one prompt need not agree on the next.

For approval ballots, proportional justified representation (PJR) makes this idea precise~\cite{bulteau2021justified,chandak2024proportional}. A group agrees in a round if its members approve a common candidate, and its entitlement grows with both its size and the number of rounds in which it agrees. For example, a group containing a quarter of the electorate that agrees in twelve rounds must have a winner approved by at least one of its members in at least three rounds. These winners may occur in any round, and different members may approve different winners.

With ranked preferences, it is not clear which candidates a voter finds acceptable. In our example, an annotator may find her second-ranked response nearly as good as her first for one prompt, whereas for another prompt, or for another annotator, only the first-ranked response may be acceptable. Counting only first choices ignores possible compromises. Counting the first several candidates requires a cutoff, and a single cutoff may not suit every voter or every round. We therefore ask whether one sequence of decisions can be proportional simultaneously for every admissible choice of cutoffs. We consider three interpretations: a cutoff common to all voters and rounds (\emph{fixed-rank}), a common cutoff that may change between rounds (\emph{rank}), and a separate cutoff for each voter in each round (\emph{all-rank}). We also adapt \emph{proportionality for solid coalitions} (PSC), which protects groups whose members rank the same set of candidates above all others.

The choice of cutoffs is only one part of the definition. We distinguish groups that agree in every round from groups that agree in only some rounds, and call an axiom \emph{weak} if it protects only the former. We also consider \emph{justified representation} (JR), which requires just one represented round for a sufficiently large group, and \emph{extended justified representation} (EJR), which requires one member of a group to receive its entire proportional entitlement. We denote the weak versions by wJR, wPJR, wEJR, and wPSC. These axioms carry over the forms of group protection familiar from multi-winner voting~\cite{aziz2017justified,sanchez2017proportional,dummett1984voting}, while allowing both the preferences and the rounds of agreement to change over time.

A further distinction concerns information. An \emph{offline} rule sees all preferences before making any decision. A \emph{semi-online} rule knows the number of rounds, which we call the \emph{horizon}, but learns the preferences one round at a time. An \emph{online} rule does not even know when the process will end. Our main question is which combinations of these proportionality requirements can be guaranteed, and how much information a rule needs to do so.

\subsection{Our Results}

\paragraph{Axioms and their relationships (Section~\ref{sec:ranked-axioms}).}
We define fixed-rank, rank, and all-rank versions of the approval-based axioms,
together with fixed-rank and rank versions of PSC and their weak counterparts.
Their implications reflect three choices: how the cutoffs may vary, in
which rounds a group must agree, and how representation is measured
(Figure~\ref{fig:implication-full}). With three candidates, even
fixed-rank-wEJR can be unattainable at every horizon of at least two rounds.
With two candidates and complete rankings, however, fixed-rank-wEJR and rank-wEJR
coincide with fixed-rank-wPJR and rank-wPJR, respectively; individual
cutoffs still make all-rank-wEJR unattainable in general.
We also characterize the all-rank JR and PJR axioms directly through the
rankings. For the weak axioms, the relevant count is the number of rounds
in which no candidate is unanimously preferred by a group to the winner.
For the non-weak axioms, first-choice representation provides the additional
count needed. The weak characterization connects all-rank-wJR to local
stability.

\paragraph{Existence and computability (Section~\ref{sec:attainability}).}
For groups that agree in some rounds, fixed-rank-PJR is always attainable
offline. However, no semi-online rule guarantees even fixed-rank-JR or
fixed-rank-PSC, and rank-JR and rank-PSC can be unattainable even offline.
For groups that agree in every round, a polynomial-time semi-online rule
satisfies rank-wPJR, and online rules satisfy the weak PSC axioms. These
rules also admit guarantees with the stronger Droop size condition.
Knowledge of the horizon is necessary for the approval-based guarantees,
since no online rule always satisfies fixed-rank-wJR.
Table~\ref{tab:attainability} gives the least sufficient information and the
running times of our rules. Individual cutoffs can make even all-rank-wJR
unattainable. Nevertheless, serial dictatorship satisfies all-rank-PJR up
to an additive loss in the number of represented rounds, and this loss is
optimal for every electorate size and horizon. The matching lower bound
holds even for offline rules and the weak axiom.

\paragraph{Stronger guarantees under preference restrictions
(Section~\ref{sec:preference-structure}).}
When few candidates are ranked first in each round, a polynomial-time
semi-online budget rule achieves the smallest additive loss that depends
only on the number of distinct first choices. In particular, three first
choices suffice for exact all-rank-PJR, whereas four first choices can make
it unattainable at every horizon of at least two rounds. We also characterize
when all-rank-JR is attainable, and give a polynomial-time online rule with
a guarantee on every contiguous time interval. Its loss is zero for two
first choices and an optimal one round for three. With two candidates and
complete rankings, the same rule gives fixed-rank-wEJR and rank-wEJR on every interval,
completing the candidate--horizon existence classification for the weak EJR
axioms.

Single-peaked and single-crossing preferences permit any number of first
choices, but protect only groups that agree in every round. On these
domains, a polynomial-time semi-online rule satisfies all-rank-wPJR.
A common single-crossing order also permits polynomial-time online
rank-wPSC. Even all-rank-wJR, however, cannot always be satisfied online on
a static Euclidean profile with fifteen voters and fifteen candidates.
For all-rank-wPJR on a common single-crossing order, the optimal online
additive loss has logarithmic order in the number of voters.

\paragraph{Verification and truncated ballots
(Sections~\ref{sec:verification} and~\ref{sec:truncated-ballots}).}
We determine the candidate thresholds for verifying every defined weak
JR, PJR, and PSC axiom (Table~\ref{tab:verification-complete}).
Fixed-rank-wPSC is always polynomial-time verifiable; fixed-rank-wJR becomes
coNP-complete with four candidates, and every other entry becomes
coNP-complete with three. On three-candidate Euclidean profiles, verification
of weak JR and PJR is polynomial for the fixed-rank versions, coNP-complete
for the rank versions, and polynomial for the all-rank versions.
Thus, strengthening an axiom can make verification easier. A common
single-crossing voter order permits polynomial-time verification of both
weak and non-weak all-rank JR and PJR, whereas changing orders makes all four
problems coNP-complete even with three candidates and two first choices.

We then allow voters to rank only acceptable candidates. The Greedy Cohesive
Rule (GCR), Solid Coalition Refinement (SCR), first-choice and ordered budget
rules, and a modified Method of Equal Shares (MES) retain their guarantees.
In particular, restricting the groups considered by SCR according to list
length extends the polynomial-time common-order rule to truncated ballots;
rank-wPSC is also polynomial-time verifiable there. Verification of the
weak all-rank axioms can nevertheless become coNP-complete. We classify
cases by ballot lengths and omitted candidates, and on single-peaked
profiles give an algorithm whose running time is exponential only in the
number of distinct voter types that omit at least two candidates in some
round, or at least one candidate when there are only two candidates.

\subsection{Related Work}
\paragraph{Temporal voting with approval ballots.}
Elkind, Obraztsova, and Teh~\cite{elkind2024temporal} propose a unified framework for temporal multi-winner voting and survey the models studied in this area. Bulteau et al.~\cite{bulteau2021justified} adapt justified representation axioms to repeated collective decisions. Chandak, Goel, and Peters~\cite{chandak2024proportional} distinguish groups that agree in every round from groups that agree in only some rounds, and analyze voting rules under different information assumptions. Elkind et al.~\cite{elkind2025verifying} study the verification of these axioms and give a two-stage Greedy Cohesive Rule that achieves extended justified representation, which requires one group member to approve as many winners as the group is owed. Phillips et al.~\cite{phillips2026strengthening} study further strengthenings and adapt this greedy approach to full justified representation.
Teh~\cite{teh2026FJRtemporal} studies the efficient computation and verification of full justified representation in temporal voting.
Our fixed-rank existence proof builds on the two-stage rule, but must meet the fixed-rank requirements for all $m$ cutoffs simultaneously rather than for a single approval profile.

\paragraph{Proportionality in multi-winner voting.}
Our approval-based axioms adapt proportionality axioms for committee elections; see Lackner and Skowron~\cite{lackner2023multi} for an overview. Aziz et al.~\cite{aziz2017justified} introduced JR and EJR, and S{\'a}nchez-Fern{\'a}ndez et al.~\cite{sanchez2017proportional} introduced PJR, which is implied by EJR and implies JR. Stronger axioms include full justified representation (FJR)~\cite{peters2021proportional}, EJR+~\cite{brill2023robust}, and FJR+~\cite{teh2026strengthening}.\footnote{Each of these axioms implies EJR in multi-winner voting~\cite{peters2021proportional,brill2023robust,teh2026strengthening}, and EJR+ and FJR also imply EJR in temporal voting~\cite{phillips2026strengthening}. If a temporal axiom $Y$ implies EJR, then fixed-rank-$Y$, rank-$Y$, and all-rank-$Y$, defined as in Section~\ref{sec:approval-axioms}, each imply fixed-rank-EJR, so none of them can always be satisfied (Theorem~\ref{thm:ejr}). For multi-winner elections, Brill and Peters~\cite{brill2023robust} similarly show that rank-EJR+ can be unsatisfiable, and propose rank-PJR+ instead.}

\paragraph{Ranked preferences.}
Dummett~\cite{dummett1984voting} introduced proportionality for solid coalitions in multi-winner elections. Brill and Peters~\cite{brill2023robust} transfer approval-based proportionality axioms to rankings by requiring them to hold at every cutoff common to all voters; applied in every round with the same cutoff, this gives our fixed-rank axioms. We additionally allow the cutoff to vary across rounds (rank) and across voters (all-rank). Aziz et al.~\cite{aziz2024committee} study the compatibility of proportionality with committee monotonicity, which requires that increasing the committee size does not remove a previously selected candidate. Their Solid Coalition Refinement rule is the starting point for our online rule for solid coalitions. Unlike a multi-winner election, our model allows the same candidate to win repeatedly and allows rankings to change; consequently, agreement and representation must be tracked across rounds. Our connection between all-rank representation and Pareto efficiency is related to the multi-winner local stability condition of Aziz et al.~\cite{aziz2017condorcet}.

\paragraph{Proportionality over time.}
Lackner~\cite{lackner2020perpetual} introduced perpetual voting rules, which take earlier decisions into account in order to treat voters fairly over time. Lackner and Maly~\cite{lackner2023proportional} analyze perpetual voting rules through proportionality and through bounds on how long voters can remain unrepresented. Kozachinskiy, Shen, and Steifer~\cite{kozachinskiy2025optimal} bound the number of rounds in which a voter does not approve the winner, provided that small groups of voters rarely fail to agree, and show that the exponents in their bounds cannot be improved. Teh~\cite{teh2026price} bounds the worst-case loss in utilitarian welfare caused by requiring JR, PJR, EJR, or EJR+, and shows that maximizing welfare subject to these axioms is computationally hard. Borodin and Lueger~\cite{borodin2026online} show that serial dictatorship gives an additive PJR guarantee for approval ballots as the horizon grows. Our serial dictatorship result determines the optimal finite-horizon loss for all-rank-PJR. Our first-choice results instead bound the loss independently of the number of voters, and our interval guarantee applies to every group on every contiguous interval, not only to representation accumulated from the first round. The lower bound for three first choices on time intervals uses a classical scheduling example~\cite{holte1992pinwheel}; our positive guarantee allows arbitrary changes in the voters' rankings.

\paragraph{Welfare and other objectives over time.}
Elkind, Neoh, and Teh~\cite{elkind2024temporalelections} study utilitarian and egalitarian welfare in temporal voting with approval ballots, including the complexity of maximizing each objective and its compatibility with strategyproofness and proportionality. Elkind, Neoh, and Teh~\cite{elkind2025nimby} consider public chores, which benefit society but impose costs on the affected voters, and show that proportionality and equitability are either not well defined or costly in terms of welfare in that setting. Zech et al.~\cite{zech2024multiwinner} study two-stage committee elections and seek a winning committee in the second stage that overlaps as much as possible with the first-stage committee. Temporal fairness has also been studied beyond voting, including the assignment of projects to time slots~\cite{elkind2022fairness} and the allocation of indivisible items that arrive over time~\cite{elkind2025temporal,choi2026temporal}.

\section{Preliminaries}
\label{sec:preliminaries}
For a positive integer $k$, let $[k]=\set{1,\dots,k}$.
There is a set $V=[n]$ of \emph{voters}, a set $C=\set{c_1,\dots,c_m}$ of \emph{candidates}, and a set $T=[|T|]$ of \emph{rounds}; we call $|T|$ the \emph{horizon}. We assume that $n,m,|T|\geq1$. A \emph{group} is a nonempty set $S\subseteq V$.

In each round $j\in T$, each voter $i$ reports a strict ranking $\succ_i^j$ of $C$; we write $c\succ_i^j c'$ if voter $i$ prefers $c$ to $c'$ in round $j$, and omit the round index when it is clear from the context. For $A,B\subseteq C$, we write $A\succ_i^j B$ if $a\succ_i^j b$ for all $a\in A$ and $b\in B$. The \emph{round-$j$ profile} is $P^j=(\succ_1^j,\dots,\succ_n^j)$, and the \emph{profile} is $P=(P^1,\dots,P^{|T|})$. A profile is \emph{static} if $P^1=\cdots=P^{|T|}$. Let $\mathrm{rank}_i^j(c)=\left|\set{c'\in C:c'\succ_i^j c}\right|+1$ be the rank of candidate $c$ in the ranking of voter $i$ in round $j$. The candidate of rank $1$ is voter $i$'s \emph{first choice} in round $j$.

\paragraph{Preference domains.}
A round-$j$ profile is \emph{single-peaked} with respect to a linear order of $C$, called an \emph{axis}, if, for every voter, candidates are ranked lower as one moves away from her first choice (her \emph{peak}) along the axis in either direction. It is \emph{single-crossing} with respect to a linear order of $V$ if, for every pair of candidates $a,b$, the voters who prefer $a$ to $b$ form a prefix or a suffix of that order. It is \emph{one-dimensional Euclidean} if voters and candidates can be placed on the real line so that every voter ranks the candidates by increasing distance from her position, with no ties in distance. Such a profile is single-peaked with respect to the order of the candidates on the line and single-crossing with respect to the order of the voters.

\paragraph{Voting rules.}
A \emph{temporal voting rule} maps $(V,T,C,P)$ to a sequence $W=(w^1,\dots,w^{|T|})$ with $w^j\in C$ for each $j\in T$; we call $w^j$ the \emph{winner} of round $j$. An \emph{offline} rule may use the whole profile. A rule is \emph{semi-online} if its choice of $w^j$ depends only on $P^1,\dots,P^j$ and $|T|$, and \emph{online} if, in addition, it does not depend on $|T|$.

\subsection{Axioms for Approval Ballots}
We recall the axioms of Bulteau et al.~\cite{bulteau2021justified} and Chandak et al.~\cite{chandak2024proportional} for approval ballots. In this setting, each voter $i$ approves a set $A_i^j\subseteq C$ in each round $j$. The round-$j$ approval profile is $A^j=(A_1^j,\dots,A_n^j)$, and the approval profile is $A=(A^1,\dots,A^{|T|})$. A group $S$ \emph{agrees} in round $j$ if its members approve a common candidate, that is, $\bigcap_{i\in S}A_i^j\neq\varnothing$, and $S$ is \emph{represented} in round $j$ if some member approves the winner, that is, $w^j\in\bigcup_{i\in S}A_i^j$. The axioms require that a sufficiently large group that agrees in many rounds be represented; they differ in how representation is counted.
\begin{definition}
A sequence $W$ satisfies \emph{justified representation (JR)} (respectively, \emph{proportional justified representation (PJR)}, \emph{extended justified representation (EJR)}) if, for every positive integer $\ell$ and every group $S$ that agrees in $k>0$ rounds and satisfies $|S|\geq\ell\cdot\frac{n}{k}$,
\begin{itemize}
 \item (JR) $S$ is represented in at least one round;
 \item (PJR) $S$ is represented in at least $\ell$ rounds;
 \item (EJR) some voter $i\in S$ approves the winner in at least $\ell$ rounds, that is, $|\set{j\in T:w^j\in A_i^j}|\geq\ell$.
\end{itemize}
\end{definition}
The largest $\ell$ satisfying $|S|\geq\ell n/k$ is $\lfloor k|S|/n\rfloor$. Thus, PJR requires every group that agrees in $k>0$ rounds to be represented in at least $\lfloor k|S|/n\rfloor$ rounds. For JR, it suffices to consider $\ell=1$, since the size condition $|S|\geq\ell n/k$ is weakest then; thus, JR is the special case of PJR with $\ell=1$. The \emph{weak} versions wJR, wPJR, and wEJR apply the same requirements only to groups that agree in every round, that is, with $k=|T|$. A rule satisfies an axiom if its output satisfies the axiom for every input.

We also consider a stronger size condition for the weak proportionality axioms.
The \emph{Hare quota} used above gives a group $S$ an entitlement of
$\lfloor |T||S|/n\rfloor$ rounds. Under the \emph{Droop quota}, a group
with $|S|>\ell n/(|T|+1)$ is owed $\ell$ rounds; its largest integer
entitlement is therefore $\lceil (|T|+1)|S|/n\rceil-1$.
This is at least the Hare entitlement. Unless stated otherwise, our axioms
use the Hare quota; we specify the Droop size condition when a rule satisfies
this stronger guarantee.

\section{Proportionality Axioms and Their Relationships}
\label{sec:ranked-axioms}
With ranked preferences, both agreement and representation depend on which positions in a ranking a group considers acceptable. We first require the approval-based axioms to hold simultaneously for different choices of cutoffs (Section~\ref{sec:approval-axioms}). We then adapt proportionality for solid coalitions, which protects groups that rank a common set above all other candidates (Section~\ref{sec:psc-axioms}). These approaches inherit different forms of group protection from multi-winner voting. Here, however, a candidate may win repeatedly and preferences may change, so proportionality concerns the rounds in which a group agrees and is represented.

\subsection{Approval-based Axioms}
\label{sec:approval-axioms}
Following Brill and Peters~\cite{brill2023robust}, we treat a voter's first $r$ candidates as approved. For $r\in[m]$, let $A_i^j(r)=\set{c\in C:\mathrm{rank}_i^j(c)\leq r}$ and $N_c^j(r)=\set{i\in V:c\in A_i^j(r)}$.
Thus, $N_c^j(r)$ is the set of voters who rank candidate $c$ among their first $r$ candidates in round $j$. The approval sets are nested: $A_i^j(r)\subseteq A_i^j(r')$ whenever $r\leq r'$. In particular, every approval set $A_i^j(r)$ contains voter $i$'s first choice.

We consider three ways of choosing the cutoffs.
\begin{itemize}
 \item \emph{Fixed-rank:} a single cutoff $r\in[m]$ for all voters and rounds, giving the approval profile $A(r)=(A^1(r),\dots,A^{|T|}(r))$, where $A^j(r)=(A_1^j(r),\dots,A_n^j(r))$.
 \item \emph{Rank:} a \emph{rank vector} $\mathbf r=(r^1,\dots,r^{|T|})\in[m]^{|T|}$ with one cutoff per round, giving $A(\mathbf r)=(A^1(r^1),\dots,A^{|T|}(r^{|T|}))$.
 \item \emph{All-rank:} a \emph{rank matrix} $R\in[m]^{|T|\times n}$ with a cutoff $r_i^j$ for each voter $i$ in each round $j$, giving $A(R)=(A^1(R^1),\dots,A^{|T|}(R^{|T|}))$, where $A^j(R^j)=(A_1^j(r_1^j),\dots,A_n^j(r_n^j))$.
\end{itemize}

\begin{definition}
Let $X\in\set{\text{wJR, JR, wPJR, PJR, wEJR, EJR}}$. A sequence $W$ satisfies \emph{fixed-rank-$X$} if it satisfies $X$ with respect to $A(r)$ for every $r\in[m]$; it satisfies \emph{rank-$X$} if it satisfies $X$ with respect to $A(\mathbf r)$ for every $\mathbf r\in[m]^{|T|}$; and it satisfies \emph{all-rank-$X$} if it satisfies $X$ with respect to $A(R)$ for every $R\in[m]^{|T|\times n}$.
\end{definition}
A single sequence must satisfy the requirement for every choice of cutoffs; it cannot be chosen separately for each choice. When the cutoffs are clear from the context, we say that a group agrees or is represented in a round if it does so in the corresponding approval profile. For example, rank-wPJR can be stated directly as follows.

\begin{definition}[rank-wPJR]
A sequence $W$ satisfies \emph{rank-wPJR} if, for every rank vector $\mathbf r\in[m]^{|T|}$, every positive integer $\ell$, and every group $S$ with $|S|\geq\ell\cdot\frac{n}{|T|}$ such that $\bigcap_{i\in S}A_i^j(r^j)\neq\varnothing$ for all $j\in T$, there are at least $\ell$ rounds $j\in T$ with $w^j\in\bigcup_{i\in S}A_i^j(r^j)$.
\end{definition}

\subsection{Axioms for Solid Coalitions}
\label{sec:psc-axioms}
The most common formalization of proportionality for rankings in multi-winner elections is \emph{proportionality for solid coalitions (PSC)}~\cite{dummett1984voting}. A group $S$ forms a \emph{solid coalition} over a nonempty set $C'\subseteq C$ in round $j$ if $C'\succ_i^j C\setminus C'$ for all $i\in S$; that is, all members of $S$ rank the candidates in $C'$ above all other candidates, although possibly in different orders. PSC protects such a group by requiring winners from $C'$.

A natural temporal version of PSC would require that every group $S$ with $|S|\geq\ell n/|T|$ that forms a solid coalition over some set $C^j$ in every round $j$ receives a winner from $C^j$ in at least $\ell$ rounds; the set $C^j$ may change from round to round. Writing $r^j=|C^j|$ relates this condition to the approval-based axioms. A group $S$ forms a solid coalition over a set of size $r^j$ in round $j$ exactly when all its members have the same approval set at cutoff $r^j$, that is, $|\bigcap_{i\in S}A_i^j(r^j)|=r^j$.
In this case, we say that $S$ is \emph{solid} in round $j$ under $\mathbf r$. The weak axiom below formalizes the preceding requirement; the other version also protects groups that are solid in only some rounds.

\begin{definition}[rank-wPSC and rank-PSC]
A sequence $W$ satisfies \emph{rank-PSC} if, for every rank vector $\mathbf r\in[m]^{|T|}$, every positive integer $\ell$, and every group $S$ that is solid under $\mathbf r$ in $k>0$ rounds and satisfies $|S|\geq\ell\cdot\frac{n}{k}$, there are at least $\ell$ rounds $j\in T$ with $w^j\in\bigcup_{i\in S}A_i^j(r^j)$. It satisfies \emph{rank-wPSC} if this holds whenever $k=|T|$.
\end{definition}
The \emph{fixed-rank} versions, fixed-rank-PSC and fixed-rank-wPSC, restrict the requirement to constant rank vectors. In rank-PSC and fixed-rank-PSC, representation is counted over all rounds, not only those in which the group is solid. In a round in which $S$ is solid, it is represented exactly when the winner belongs to the common set $\bigcap_{i\in S}A_i^j(r^j)$. We do not consider all-rank versions of PSC: the members of a solid group have the same approval set in a round, which is possible only if they use the same cutoff.

\subsection{Relations among the Axioms}
A solid group agrees, and both families measure representation in the same way. Hence, rank-wPJR implies rank-wPSC and rank-PJR implies rank-PSC, and the same holds for the fixed-rank versions. Every rank vector corresponds to a rank matrix, and every cutoff to a constant rank vector, so every all-rank axiom implies the corresponding rank axiom, which in turn implies the corresponding fixed-rank axiom. Within each approval-based family, EJR implies PJR, which implies JR; each axiom also implies its weak version. Figure~\ref{fig:implication-full} shows all the axioms together with these implications; its colors summarize the existence results of Section~\ref{sec:attainability} for unrestricted rankings. Allowing more cutoff choices strengthens an axiom, whereas restricting protection to groups that agree throughout the horizon weakens it. In Section~\ref{sec:attainability}, we examine which combinations can be satisfied and what information they require.

\begin{figure}[!htbp]
\begin{center}
\begin{tikzpicture}[
  >=Latex,
  axiom/.style={draw, rounded corners=2pt, inner sep=1.6pt, align=center, font=\scriptsize},
  strong/.style={axiom, font=\scriptsize\bfseries},
  weak/.style={axiom, fill=black!7},
  implLocal/.style={-Latex, thick, shorten <=1pt, shorten >=1pt},
  implBetween/.style={-Latex, semithick, shorten <=1pt, shorten >=1pt},
  familybox/.style={draw, very thick, rounded corners=4pt, inner sep=3pt}
]

\matrix (PSC) [
  matrix of nodes,
  anchor=south west,
  row sep=4mm,
  column sep=7mm
] at (7,0)
{
  |[strong, text=orange, draw=orange]| \shortstack{fixed-rank\\PSC}  & |[strong, text=red, draw=red]| \shortstack{rank\\PSC}  \\
  |[weak, text=cyan]|  \shortstack{fixed-rank\\wPSC} & |[weak, text=cyan]|  \shortstack{rank\\wPSC} \\
};
\node[familybox, fit=(PSC-1-1)(PSC-2-2), label={[font=\small\bfseries]above:PSC}] {};

\foreach \r in {1,2}{
  \draw[implLocal] (PSC-\r-2) -- (PSC-\r-1); 
}
\foreach \c in {1,2}{
  \draw[implLocal] (PSC-1-\c) -- (PSC-2-\c); 
}

\matrix (JR) [
  matrix of nodes,
  anchor=south west,
  row sep=4mm,
  column sep=7mm
] at (0,0)
{
  |[strong, text=orange, draw=orange]| \shortstack{fixed-rank\\JR}  & |[strong, text=red, draw=red]| \shortstack{rank\\JR}  & |[strong, text=red, draw=red]| \shortstack{all-rank\\JR} \\
  |[weak, text=blue, draw=blue]|  \shortstack{fixed-rank\\wJR} & |[weak, text=blue, draw=blue]|  \shortstack{rank\\wJR} & |[weak, text=red, draw=red]|  \shortstack{all-rank\\wJR} \\
};
\node[familybox, fit=(JR-1-1)(JR-2-3), label={[font=\small\bfseries]above:JR}] {};

\foreach \r in {1,2}{
  \draw[implLocal] (JR-\r-3) -- (JR-\r-2); 
  \draw[implLocal] (JR-\r-2) -- (JR-\r-1); 
}
\foreach \c in {1,2,3}{
  \draw[implLocal] (JR-1-\c) -- (JR-2-\c); 
}

\matrix (PJR) [
  matrix of nodes,
  anchor=south west,
  row sep=4mm,
  column sep=7mm
] at (4,2.9)
{
  |[strong, text=orange, draw=orange]| \shortstack{fixed-rank\\PJR}  & |[strong, text=red, draw=red]| \shortstack{rank\\PJR}  & |[strong, text=red, draw=red]| \shortstack{all-rank\\PJR} \\
  |[weak, text=blue, draw=blue]|  \shortstack{fixed-rank\\wPJR} & |[weak, text=blue, draw=blue]|  \shortstack{rank\\wPJR} & |[weak, text=red, draw=red]|  \shortstack{all-rank\\wPJR} \\
};
\node[familybox, fit=(PJR-1-1)(PJR-2-3), label={[font=\small\bfseries]above:PJR}] {};

\foreach \r in {1,2}{
  \draw[implLocal] (PJR-\r-3) -- (PJR-\r-2); 
  \draw[implLocal] (PJR-\r-2) -- (PJR-\r-1); 
}
\foreach \c in {1,2,3}{
  \draw[implLocal] (PJR-1-\c) -- (PJR-2-\c); 
}

\matrix (EJR) [
  matrix of nodes,
  anchor=south west,
  row sep=4mm,
  column sep=7mm,
  nodes={text=red, draw=red}
] at (8,5.8)
{
  |[strong]| \shortstack{fixed-rank\\EJR}  & |[strong]| \shortstack{rank\\EJR}  & |[strong]| \shortstack{all-rank\\EJR} \\
  |[weak]|  \shortstack{fixed-rank\\wEJR} & |[weak]|  \shortstack{rank\\wEJR} & |[weak]|  \shortstack{all-rank\\wEJR} \\
};
\node[familybox, fit=(EJR-1-1)(EJR-2-3), label={[font=\small\bfseries]above:EJR}] {};

\foreach \r in {1,2}{
  \draw[implLocal] (EJR-\r-3) -- (EJR-\r-2);
  \draw[implLocal] (EJR-\r-2) -- (EJR-\r-1);
}
\foreach \c in {1,2,3}{
  \draw[implLocal] (EJR-1-\c) -- (EJR-2-\c);
}

\foreach \r in {1,2}{
  \foreach \c in {1,2,3}{
    \draw[implBetween] (EJR-\r-\c) -- (PJR-\r-\c);
    \draw[implBetween] (PJR-\r-\c) -- (JR-\r-\c);
  }
  \foreach \c in {1,2}{
    \draw[implBetween] (PJR-\r-\c) -- (PSC-\r-\c);
  }
}

\end{tikzpicture}
\end{center}
\caption{Implications among the axioms (an arrow points from a stronger
axiom to a weaker one; weak axioms are shaded) and our results for
unrestricted rankings from Section~\ref{sec:attainability}
(Table~\ref{tab:attainability}). Missing arrows do not assert
non-implications.
The text color of an axiom indicates the least information with which it
can always be satisfied: \textcolor{cyan}{online}, \textcolor{blue}{semi-online},
or \textcolor{orange}{offline}; \textcolor{red}{red} text means that it
cannot always be satisfied. The border color indicates an impossibility
result: \textcolor{blue}{not always satisfiable online},
\textcolor{orange}{not always satisfiable semi-online}, or
\textcolor{red}{not always satisfiable}. The colors concern information,
not running time. The two-candidate exceptions for weak EJR are given in
Corollary~\ref{cor:ejr-candidate-horizon}.}
\label{fig:implication-full}
\end{figure}

\subsection{Extended Justified Representation}
Unlike for approval ballots, where EJR can always be satisfied~\cite{elkind2025verifying}, no version of EJR can be guaranteed on unrestricted ranked profiles. Part~(i) of the following theorem adapts Example~7 of Brill and Peters~\cite{brill2023robust}. The number of candidates matters: with two candidates and complete rankings, fixed-rank-wEJR and rank-wEJR coincide with fixed-rank-wPJR and rank-wPJR, respectively (part~(ii)), whereas all-rank-wEJR can still be unattainable (part~(iii)).

\begin{theorem}
\label{thm:ejr}
\begin{enumerate}[(i)]
 \item For every horizon $|T|\geq2$, there is a static profile with two
 voters and three candidates, single-peaked with respect to a common axis,
 for which no sequence satisfies fixed-rank-wEJR. Hence, no version of EJR
 or wEJR can always be satisfied on unrestricted profiles.
 \item On complete profiles with at most two candidates, rank-wEJR is
 equivalent to rank-wPJR, and fixed-rank-wEJR is equivalent to
 fixed-rank-wPJR. Both equivalences also hold when the axioms are applied
 to any fixed interval of rounds.
 \item For every horizon $|T|\geq2$, there is a static strict
 one-dimensional Euclidean profile with two voters and two candidates for
 which no sequence satisfies all-rank-wEJR.
\end{enumerate}
\end{theorem}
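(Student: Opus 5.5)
For part~(i), take candidates $a,b,c$ on the axis $a$–$b$–$c$ and two voters with the static rankings $a\succ_1 b\succ_1 c$ and $c\succ_2 b\succ_2 a$. The profile is single-peaked with respect to this axis. The plan is to play two fixed cutoffs against each other.

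With $r=1$, each singleton $\{i\}$ agrees in every round. Since $|\{i\}|=1\geq \ell\cdot 2/|T|$ for $\ell=\lfloor |T|/2\rfloor\geq1$, fixed-rank-wEJR forces $a$ to win in at least one round (for voter~1) and $c$ to win in at least one round (for voter~2).

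With $r=2$, the group $\{1,2\}$ agrees on $b$ in every round and is owed $\ell=|T|$. So some single voter must approve the winner in every round. Voter~1 approves $\{a,b\}$ and voter~2 approves $\{b,c\}$, so either $c$ never wins or $a$ never wins. This contradicts the requirement from $r=1$.

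The ``hence'' clause follows from the implications in Figure~\ref{fig:implication-full}. Every version of EJR and wEJR implies fixed-rank-wEJR, so none of them can always be satisfied.

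For part~(ii), EJR implies PJR for any approval profile, so only the converse needs proof. Fix a rank vector; a constant vector covers the fixed-rank case. Fix also a group $S$ agreeing in every round of the interval. I will show that the number of rounds in which $S$ is represented equals the number of rounds in which any single member $i\in S$ approves the winner. Then the PJR count of $S$ is attained by a single voter, which gives EJR.

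\begin{itemize}
\item In a round with $r^j=2$, every voter approves both candidates, because rankings are complete. So every member approves the winner.
\item In a round with $r^j=1$, agreement means all members of $S$ share the same first choice. Hence either all of them approve the winner or none does.
\item With $m=1$ the claim is trivial.
\end{itemize}

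The same argument applies verbatim to any fixed interval of rounds.

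For part~(iii), place candidate $a$ at $0.1$ and $b$ at $0.9$, voter~1 at $0$ and voter~2 at $1$, so that $a\succ_1 b$ and $b\succ_2 a$ in every round. The profile is static, strict, and one-dimensional Euclidean.

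Suppose $W$ satisfies all-rank-wEJR. Taking all cutoffs equal to $1$, each singleton is owed $\lfloor |T|/2\rfloor\geq1$ rounds, so both $a$ and $b$ win at least once.

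Now choose the rank matrix adaptively to $W$, which the definition allows because every rank matrix must be checked against the fixed $W$:
\begin{itemize}
\item in rounds where $b$ wins, set $r_1^j=1$ and $r_2^j=2$, so the voters agree on $a$;
\item in rounds where $a$ wins, set $r_1^j=2$ and $r_2^j=1$, so the voters agree on $b$.
\end{itemize}
Then $\{1,2\}$ agrees in every round and is owed $|T|$ rounds. However, voter~1 fails to approve the winner in the $b$-rounds and voter~2 fails in the $a$-rounds, and both kinds of round exist. This contradicts all-rank-wEJR.

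The main subtlety is here in part~(iii): the cutoffs must be chosen depending on $W$, since any uniform choice gives one voter cutoff $2$, and she then trivially approves every winner.
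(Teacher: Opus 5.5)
Your proposal is correct and follows the paper's proof essentially step for step. The paper uses the same two-voter, three-candidate profile with cutoffs one and two played against each other for (i), the identical-approval-sets argument for (ii), and the same winner-dependent rank matrix for (iii). The only difference is cosmetic: you place the candidates at $0.1$ and $0.9$, whereas the paper places each candidate at its voter's position.
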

\begin{proof}
\emph{(i)} Let $C=\set{a,b,c}$. In every round, voter~$1$ ranks $a\succ b\succ c$
and voter~$2$ ranks $c\succ b\succ a$; both rankings are single-peaked on
the axis $a,b,c$. At cutoff $1$, each singleton group agrees in every
round and is owed $\lfloor |T|/2\rfloor\geq1$ rounds. Thus both $a$ and
$c$ must win. At cutoff $2$, the whole electorate agrees on $b$ in every
round and is owed $|T|$ rounds. Fixed-rank-wEJR therefore requires one
voter to approve every winner. Voter~$1$ does not approve $c$, and voter~$2$
does not approve $a$, a contradiction.

\emph{(ii)} The case of one candidate is immediate. With two candidates,
fix a rank vector $\mathbf r$ and a group that agrees in every round under
$\mathbf r$. In a
round with cutoff one, all its members have the same first choice; in a
round with cutoff two, all its members approve both candidates. In either
case, its members have identical approval sets. Consequently, every member
of the group approves the winner in every round in which the group is
represented. The group's number of represented rounds is therefore each
member's number of approved winners, proving the rank equivalence.
Restricting the rank vector to be constant proves the fixed-rank
equivalence. The same argument applies to any interval of rounds.

\emph{(iii)} The voters have rankings $a\succ_1 b$ and $b\succ_2 a$ in every round.
These rankings are realized by placing the candidates and their respective
voters at $0$ and $1$. At cutoff one, each singleton group is owed
$\lfloor |T|/2\rfloor\geq1$ rounds, so both candidates must win.

Now choose individual cutoffs according to the winner. In a round won by
$a$, give voters~$1$ and~$2$ cutoffs $2$ and $1$, respectively. Both approve
$b$, but only voter~$1$ approves the winner. In a round won by $b$, give them
cutoffs $1$ and $2$; both approve $a$, but only voter~$2$ approves the winner.
The whole electorate agrees in every round and is owed $|T|$ rounds under
wEJR. Each voter approves exactly the rounds won by her first choice, so
neither approves all $|T|$ winners. This is a contradiction.
\end{proof}

By part~(ii), the online rule of Theorem~\ref{thm:first-choice-intervals}
will give fixed-rank-wEJR and rank-wEJR with two candidates and
complete rankings, on every time interval. Part~(iii) shows that this conclusion
does not extend to individual cutoffs, even with two voters and static
preferences.

We focus mainly on JR, PJR, and PSC in what follows. We return to the positive
two-candidate EJR case in Corollary~\ref{cor:ejr-candidate-horizon}, and give
a polynomial-time test for the individual-cutoff EJR axioms on this domain
in Proposition~\ref{prop:binary-all-rank-ejr-verification}.

\subsection{An Ordinal Interpretation of the All-rank Axioms}
\label{sec:pareto}
The weak all-rank axioms quantify over all individual cutoffs under which a group agrees in every round. Their meaning can also be expressed directly in terms of the rankings: a group is necessarily represented in a round exactly when it has no unanimously preferred alternative to the winner. For a group $S$, a candidate $c$ \emph{Pareto dominates} a candidate $c'$ in round $j$ if $c\succ_i^j c'$ for all $i\in S$; a candidate is \emph{Pareto efficient} for $S$ in round $j$ if no candidate Pareto dominates it for $S$ in that round. We prove the following lemma, and Theorem~\ref{thm:all-rank-characterization} below, in Appendix~\ref{app:ranked-axioms-proofs}.

\begin{lemma}
\label{lem:all-rank-minimum-representation}
Fix a sequence $W$ and a group $S$. The minimum number of rounds in which $S$ is represented, over all rank matrices $R$ under which $S$ agrees in every round, equals the number of rounds in which $w^j$ is Pareto efficient for $S$.
\end{lemma}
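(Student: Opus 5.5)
The plan is to prove the equality as two inequalities, and to observe that the problem decomposes round by round. Whether $S$ agrees in round $j$ and whether it is represented there depend only on the row $R^j$. The set of rank matrices under which $S$ agrees in every round is the product over $j$ of the sets of rows under which $S$ agrees in round $j$. Hence the minimum number of represented rounds is the sum over $j$ of a per-round minimum, and it suffices to prove the following single-round statement. For a fixed round $j$ and winner $w=w^j$, every row under which $S$ agrees in round $j$ makes $S$ represented if and only if $w$ is Pareto efficient for $S$. The set of admissible rows in a round is nonempty, since each voter can take cutoff $m$, so the minimum is well defined.

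For the first direction, suppose $w$ is Pareto dominated for $S$ by some $c$, so $c\succ_i^j w$ for all $i\in S$. Set $r_i^j=\mathrm{rank}_i^j(c)$ for each $i\in S$, and let voters outside $S$ take any cutoff. Every $i\in S$ then approves $c$, so $S$ agrees in round $j$. Since $\mathrm{rank}_i^j(w)>\mathrm{rank}_i^j(c)=r_i^j$, no member approves $w$, so $S$ is not represented in that round.

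For the converse, suppose $w$ is Pareto efficient for $S$, and let the row $R^j$ be such that $S$ agrees on some $c\in\bigcap_{i\in S}A_i^j(r_i^j)$. If $c=w$, then $S$ is represented. Otherwise $c$ does not Pareto dominate $w$, so some $i\in S$ has $w\succ_i^j c$. Then $\mathrm{rank}_i^j(w)<\mathrm{rank}_i^j(c)\le r_i^j$, so $w\in A_i^j(r_i^j)$ and $S$ is represented. This is where strictness of the rankings is used: failure of $c\succ_i^j w$ for all $i$ gives an $i$ with $w\succ_i^j c$.

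Summing the per-round minima gives exactly the number of rounds in which $w^j$ is Pareto efficient for $S$. No step is a genuine obstacle. The only point needing care is the product-structure justification, namely that the admissibility constraint "agrees in every round" is imposed row by row, so that minimizing over matrices equals summing per-round minima.
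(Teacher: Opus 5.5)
Your proof is correct and essentially the paper's: in a Pareto-efficient round, any common approved candidate $c\neq w^j$ fails to dominate the winner, so some member approves $w^j$; in a Pareto-dominated round, setting each member's cutoff at the rank of a dominating candidate $c^j$ gives agreement without representation. The only difference is presentational: the paper assembles one explicit rank matrix (cutoff $m$ in the Pareto-efficient rounds) instead of summing per-round minima over the product structure, and both rest on the same independence of cutoffs across rounds.
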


Consequently, $W$ satisfies all-rank-wPJR if and only if, for every group $S$, the winner is Pareto efficient for $S$ in at least $\lfloor |S||T|/n\rfloor$ rounds. Similarly, $W$ satisfies all-rank-wJR if and only if every group with $|S|\geq n/|T|$ has at least one such round. Lemma~\ref{lem:all-rank-minimum-representation} also connects the all-rank axioms to two familiar conditions. A sequence $W$ is \emph{Pareto efficient} if $w^j$ is Pareto efficient for $V$ in every round $j$. Aziz et al.~\cite{aziz2017condorcet} call a committee of $k$ members \emph{locally stable} if there is no candidate $c$ and no group of more than $n/(k+1)$ voters who all prefer $c$ to every committee member. The corresponding condition in our setting, using the Hare quota as in our other definitions~\cite{brill2023robust}, is as follows: $W$ is \emph{locally stable} if there is no group $S$ with $|S|\geq n/|T|$ such that, in every round $j$, some candidate $c^j$ Pareto dominates $w^j$ for $S$.

By Lemma~\ref{lem:all-rank-minimum-representation}, this temporal local-stability condition is equivalent to all-rank-wJR: both require every group of size at least $n/|T|$ to have a round in which the winner is Pareto efficient for it. All-rank-wPJR also implies Pareto efficiency of the sequence. Indeed, applying the lemma to $S=V$ requires a Pareto-efficient winner in all $|T|$ rounds.

The connection is with efficiency for each group, not only for the electorate as a whole. A sequence can be Pareto efficient for the electorate while repeatedly neglecting the same group; all-rank-wPJR limits how often this can happen according to the group's size.

\emph{Unanimity}, which is weaker than Pareto efficiency, requires that a candidate ranked first by every voter in a round be selected in that round. This already follows from rank-wPSC. If all voters rank $c$ first in round $j^*$, choose $r^{j^*}=1$ and $r^j=m$ in every other round. The whole electorate is then solid in every round and is owed $|T|$ represented rounds, so the winner in round $j^*$ must be $c$.

The preceding characterization concerns groups that agree in every round.
For the non-weak axioms, we must also account for rounds in which representation
is unavoidable even without agreement. A winner ranked first by a member of
$S$ is approved under every positive cutoff. This leads to two useful counts:
\[
 \begin{aligned}
 F_W(S)&=\left|\set{j\in T:w^j\text{ is the first choice of some }i\in S}\right|,\\
 D_W(S)&=\left|\set{j\in T:w^j\text{ is Pareto dominated for }S}\right|.
 \end{aligned}
\]
The rounds counted by $F_W(S)$ and $D_W(S)$ are disjoint. The remaining rounds
have a winner that is Pareto efficient for $S$ but is nobody's first choice
within $S$.

\begin{theorem}
\label{thm:all-rank-characterization}
For complete rankings, fix a sequence $W$ and a group $S$.
For a rank matrix $R$, let
$k_R(S)$ and $\mathrm{rep}_R(S)$ be the numbers of rounds in which $S$
agrees and is represented, respectively. Then
\[
 \max_R\left\{\left\lfloor\frac{k_R(S)|S|}{n}\right\rfloor
              -\mathrm{rep}_R(S)\right\}
 =\left\lfloor\frac{|S|(D_W(S)+F_W(S))}{n}\right\rfloor-F_W(S).
\]
Consequently, $W$ satisfies all-rank-PJR if and only if
$|S|D_W(S)-(n-|S|)F_W(S)<n$ for every group $S$. It satisfies all-rank-JR
if and only if there is no group $S$ with $F_W(S)=0$ and
$|S|D_W(S)\geq n$.
\end{theorem}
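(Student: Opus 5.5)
The plan is to classify the rounds of $W$ into three disjoint types relative to $S$ and to show that each type contributes independently to $k_R(S)$ and $\mathrm{rep}_R(S)$:
\begin{itemize}
 \item $\mathcal F$: rounds whose winner is some member's first choice; there are $F_W(S)$ of them.
 \item $\mathcal D$: rounds whose winner is Pareto dominated for $S$; there are $D_W(S)$ of them.
 \item $\mathcal E$: the remaining rounds, whose winner is Pareto efficient for $S$ but nobody's first choice in $S$.
\end{itemize}
Write $s=|S|$. I would first establish three per-round facts, and then optimise over $R$ round by round.

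\textbf{Per-round facts.}
\begin{enumerate}[(a)]
 \item In a round of $\mathcal F$, $S$ is represented under every cutoff choice, since every approval set contains the voter's first choice. Choosing $r_i^j=m$ for all $i$ makes $S$ agree, because everyone approves everything.
 \item In a round of $\mathcal D$, pick $c$ dominating $w^j$ for $S$ and set $r_i^j=\mathrm{rank}_i^j(c)$. Then every member approves $c$, so $S$ agrees. No member approves $w^j$, since $c\succ_i^j w^j$. So $S$ agrees but is not represented.
 \item In a round of $\mathcal E$, agreement forces representation. If all members approve a common $c$ but none approves $w^j$, then $c\succ_i^j w^j$ for all $i\in S$ (approval sets are top segments), so $c$ Pareto dominates $w^j$. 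This is the same observation underlying Lemma~\ref{lem:all-rank-minimum-representation}.
 \item Conversely, in a round of $\mathcal E$, setting all cutoffs to $1$ gives neither agreement nor representation. The winner is nobody's first choice, so $S$ is not represented. If all members had the same first choice, that candidate would dominate $w^j$, so $S$ does not agree.
\end{enumerate}

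\textbf{Lower bound on the maximum.} Take $R$ as in (a) on $\mathcal F$, as in (b) on $\mathcal D$, and as in (d) on $\mathcal E$. Then $k_R(S)=D_W(S)+F_W(S)$ and $\mathrm{rep}_R(S)=F_W(S)$, which attains the right-hand side.

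\textbf{Upper bound.} Fix any $R$, and let $e$ be the number of rounds of $\mathcal E$ in which $S$ agrees under $R$.
\begin{itemize}
 \item By (c), $\mathrm{rep}_R(S)\ge F_W(S)+e$.
 \item Trivially, $k_R(S)\le D_W(S)+F_W(S)+e$.
 \item Since $s\le n$, adding $e$ to the numerator raises $\lfloor\cdot\, s/n\rfloor$ by at most $e$.
\end{itemize}
Hence $\lfloor k_R(S)s/n\rfloor-\mathrm{rep}_R(S)\le \lfloor s(D_W(S)+F_W(S))/n\rfloor-F_W(S)$. I expect this step to be the main subtlety: one must see that agreeing in an $\mathcal E$ round never helps the adversary, because each such round costs one represented round while gaining at most one unit of entitlement.

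\textbf{Consequences.} All-rank-PJR fails for $S$ exactly when some $R$ gives $\mathrm{rep}_R(S)<\lfloor k_R(S)s/n\rfloor$; rank matrices with $k_R(S)=0$ impose nothing. This holds exactly when the maximum is positive, that is, when $\lfloor s(D_W(S)+F_W(S))/n\rfloor\ge F_W(S)+1$. That inequality is equivalent to $s(D_W(S)+F_W(S))\ge n(F_W(S)+1)$, i.e.\ $sD_W(S)-(n-s)F_W(S)\ge n$, which gives the stated PJR criterion.

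For all-rank-JR, a violation needs some $R$ with $\mathrm{rep}_R(S)=0$ and $k_R(S)s\ge n$.
\begin{itemize}
 \item If such an $R$ exists, then $\mathrm{rep}_R(S)=0$ forces $F_W(S)=0$ by (a). It also forces $S$ not to agree in any $\mathcal E$ round by (c), so $k_R(S)\le D_W(S)$ and therefore $sD_W(S)\ge n$.
 \item Conversely, if $F_W(S)=0$ and $sD_W(S)\ge n$, the matrix built for the lower bound witnesses a violation.
\end{itemize}
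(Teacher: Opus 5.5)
Your proof is correct and follows the paper's argument: it uses the same three-way classification of rounds, the same per-round facts, and the same extremal rank matrix. The only difference is cosmetic. The paper first pulls the floor outside, using that $\mathrm{rep}_R(S)$ is an integer, and then maximizes a sum of per-round contributions ($|S|/n$, $|S|/n-1$, and at most $0$); you keep the floor and count the agreeing rounds in $\mathcal E$ directly, and both versions hinge on $|S|\le n$.
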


The left-hand side is the largest shortfall of $S$ over all rank matrices.
It is attained by cutoffs under which $S$ agrees without being represented
in every round whose winner is Pareto dominated for $S$, agrees in every
round whose winner is a member's first choice (where representation is
unavoidable), and does not agree in the remaining rounds.
The theorem removes the need to choose cutoffs when testing a fixed group.
For the weak axioms, the relevant count is the number of Pareto-efficient
winning rounds. For the non-weak axioms, first-choice representation also
matters. This distinction will be useful for verification on single-crossing
profiles in Section~\ref{sec:single-crossing-verification}.

\section{Existence and Computability Results}
\label{sec:attainability}
We now ask which axioms can be satisfied on unrestricted rankings. The answer depends on both the allowed cutoffs and the rounds in which a group must agree. We first give exact guarantees and show what information they require, then determine the optimal additive loss for all-rank-PJR. Table~\ref{tab:attainability} summarizes the exact guarantees; the running times refer to the rules presented here. The colors in Figure~\ref{fig:implication-full} place the same results on the diagram of implications among the axioms, showing for each axiom the least information with which it can always be satisfied.

\begin{table}[!htbp]
\centering
\small
\renewcommand{\arraystretch}{1.18}
\begin{tabular}{@{}p{0.33\linewidth}p{0.19\linewidth}p{0.42\linewidth}@{}}
\hline
\textbf{Axioms} & \textbf{Least information} & \textbf{Computation or nonexistence} \\
\hline
Fixed-rank JR, PJR, PSC & Offline & GCR, exponential time; Theorem~\ref{thm:fixed-rank-gcr}. \\
Fixed-rank and rank wJR, wPJR & Semi-online & MES, polynomial time; Theorem~\ref{thm:mes-rank-wpjr}. \\
Fixed-rank-wPSC & Online & SCR, polynomial time; Theorem~\ref{thm:online-scr}. \\
Rank-wPSC & Online & SCR, exponential time; Theorem~\ref{thm:online-scr}. \\
Rank JR, PJR, PSC & --- & Not always attainable; Theorem~\ref{thm:rank-impossibility}. \\
All-rank JR, PJR, wJR, wPJR & --- & Not always attainable; Example~\ref{thm:all-rank-small-example}. \\
Every version of EJR and wEJR & --- & Not always attainable; Theorem~\ref{thm:ejr}(i). \\
\hline
\end{tabular}
\caption{Exact guarantees for unrestricted rankings. Offline information is necessary in the first row by Theorem~\ref{thm:semi-online-impossibility}, and knowledge of the horizon is necessary in the second by Theorem~\ref{thm:online-impossibility}. Nonexistence means that some profile admits no sequence satisfying the axiom.
The two-candidate exceptions for weak EJR are given in
Corollary~\ref{cor:ejr-candidate-horizon}.}
\label{tab:attainability}
\end{table}

\subsection{Groups That Agree in Some Rounds}
\label{sec:some-rounds}
We first protect groups in proportion to the number of rounds in which they agree, without requiring agreement throughout the horizon. When the cutoff is common to all voters and rounds (fixed-rank), these requirements can be met for all such cutoffs simultaneously. Allowing the cutoff to change across rounds (rank), or requiring decisions without future preferences, leads to impossibility.

Fixed-rank-PJR can always be satisfied offline. Our rule adapts the two-stage Greedy Cohesive Rule (GCR) of Elkind et al.~\cite{elkind2025verifying}, which satisfies EJR for approval ballots and was later adapted to full justified representation~\cite{phillips2026strengthening}. The new difficulty is that a single sequence must meet the requirements for all approval profiles $A(1),\dots,A(m)$ simultaneously.

For a cutoff $r\in[m]$ and a group $S$, let $\tsr{S}{r}=\set{j\in T:\bigcap_{i\in S}A_i^j(r)\neq\varnothing}$ be the set of rounds in which $S$ agrees under $A(r)$, and let $\dem{S}{r}=\lfloor|\tsr{S}{r}|\cdot|S|/n\rfloor$ be the number of rounds that $S$ is owed under $A(r)$, which we call its \emph{demand}. Fixed-rank-PJR requires that, for every $r\in[m]$, every group $S$ is represented under $A(r)$ in at least $\dem{S}{r}$ rounds.

The rule has two stages (Algorithm~\ref{alg:fixed-rank-gcr}). The first stage decides how many rounds to reserve for which groups, and the second stage assigns the rounds and winners. Stage~1 processes the cutoffs $r=1,\dots,m$ in increasing order. For each $r$, it computes the \emph{remaining demand} $\drem{S}{r}$ of every group $S$ that agrees in some round: the part of $\dem{S}{r}$ that is not already covered by rounds reserved, at smaller cutoffs, for groups intersecting $S$. It then repeatedly selects a group with the largest remaining demand, reserves $g(S,r)=\drem{S}{r}$ rounds for it, and discards this group and all groups intersecting it for the rest of this cutoff. It moves on to the next cutoff when all remaining groups have zero remaining demand. Stage~2 processes the pairs $(S,r)$ with $g(S,r)>0$ in increasing order of $|\tsr{S}{r}|$, so that groups agreeing in fewer rounds are handled first. For each pair, it chooses $g(S,r)$ unassigned rounds in which $S$ agrees under $A(r)$, and in each of them selects a candidate approved by all members of $S$.

\begin{algorithm}[hbpt]
\caption{2-Stage Greedy Cohesive Rule for fixed-rank-PJR}
\label{alg:fixed-rank-gcr}
initialize $g(S,r) \gets 0$ for every $r \in [m]$ and $S \subseteq V$\;
\tcp{\textbf{Stage 1: determine how many rounds are reserved}}
\For{$r\gets 1$ \KwTo $m$}
{
    $\mathcal V \gets \set{S \subseteq V : \tsr{S}{r} \neq \varnothing}$\;
    \ForEach{$S \in \mathcal V$}
    {
        $\drem{S}{r} \gets \max\{0,\, \dem{S}{r} - \sum_{t=1}^{r-1}\sum_{S'\subseteq V, S'\cap S\neq\varnothing} g(S',t)\}$
    }
    \While{there exist $S\in\mathcal V$ such that $\drem{S}{r} > 0$}
    {
        choose $S \in \arg\max_{S\in\mathcal V} \drem{S}{r}$, breaking ties arbitrarily\;
    
    $g(S,r) \gets \drem{S}{r}$\;
    $\mathcal V\gets \{S'\in\mathcal V:S'\cap S=\varnothing\}$\;
    }
}
\tcp{\textbf{Stage 2: assign winners to reserved rounds}}
order all pairs $(S,r)$ with $g(S,r)>0$ as $(S_1,r_1),\dots,(S_q,r_q)$ so that $|\tsr{S_1}{r_1}|\leq\cdots\leq |\tsr{S_q}{r_q}|$, breaking ties arbitrarily\;

$T_{\mathrm{free}}\gets T$\;

\For{$h\gets 1$ \KwTo $q$}
{
    choose $T_h\subseteq T_{\mathrm{free}}\cap\tsr{S_h}{r_h}$ such that $|T_h|=g(S_h,r_h)$;

    \ForEach{$j\in T_h$}
    {
        choose $w^j\in \bigcap_{i\in S_h}A_i^j(r_h)$;
    }
    $T_{\mathrm{free}}\gets T_{\mathrm{free}}\setminus T_h$\;
}
choose an arbitrary $w^j$ for all $j \in T_{\mathrm{free}}$\;
\KwRet{$W$}\;
\end{algorithm}

\begin{theorem}
\label{thm:fixed-rank-gcr}
Algorithm~\ref{alg:fixed-rank-gcr} satisfies fixed-rank-PJR. 
\end{theorem}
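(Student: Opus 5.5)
The proof has two parts: Stage~2 never gets stuck, and the output meets every fixed-rank PJR demand. I would do the demand part first, assuming Stage~2 succeeds, since it follows from Stage~1 and the nesting of approval sets. The feasibility of Stage~2 is where I expect the real difficulty.

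\emph{Demand part.} Fix a cutoff $r$ and a group $S$ with $\tsr{S}{r}\neq\varnothing$.
\begin{itemize}
 \item \emph{Every reservation touching $S$ represents $S$ under $A(r)$.} Take any pair $(S',t)$ with $t\le r$, $S'\cap S\neq\varnothing$ and $g(S',t)>0$. In each round of $T_h$ assigned to it, the winner lies in $\bigcap_{i\in S'}A_i^j(t)$. By nesting, $A_i^j(t)\subseteq A_i^j(r)$, so any $i\in S\cap S'$ approves the winner under $A(r)$.
 \item \emph{Rounds are counted once.} The sets $T_h$ are pairwise disjoint, so the number of rounds in which $S$ is represented under $A(r)$ is at least the sum of $g(S',t)$ over all such pairs.
 \item \emph{Case $\drem{S}{r}=0$.} Then the reservations at cutoffs $t<r$ for groups intersecting $S$ already sum to at least $\dem{S}{r}$.
 \item \emph{Case $\drem{S}{r}>0$.} At cutoff $r$, $S$ leaves $\mathcal V$ only when some $S^*$ with $S^*\cap S\neq\varnothing$ is selected (possibly $S^*=S$). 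At that moment $S$ is still in $\mathcal V$, so the argmax rule gives $g(S^*,r)=\drem{S^*}{r}\geq\drem{S}{r}$. The reservations at $t<r$ for groups intersecting $S$ contribute at least $\dem{S}{r}-\drem{S}{r}$, by the definition of $\drem{S}{r}$. Adding the contribution of $S^*$ gives at least $\dem{S}{r}$.
\end{itemize}
So fixed-rank-PJR holds at every cutoff, provided every $T_h$ exists.

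\emph{Stage~2 feasibility.} We must show that, when pair $h$ is processed, $|T_{\mathrm{free}}\cap\tsr{S_h}{r_h}|\geq g(S_h,r_h)$. I would argue by contradiction, imitating the analysis of the two-stage GCR of Elkind et al.~\cite{elkind2025verifying}.

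Suppose pair $h$ fails. Then more than $|\tsr{S_h}{r_h}|-g(S_h,r_h)$ rounds of $\tsr{S_h}{r_h}$ were already taken by earlier pairs $h'<h$, each with $|\tsr{S_{h'}}{r_{h'}}|\le|\tsr{S_h}{r_h}|$. The tool I would use is a load-charging argument:
\begin{itemize}
 \item Charge each reserved round of a pair $(S,t)$ to its members, $1/|S|$ per member.
 \item Since $g(S,t)\le\dem{S}{t}\le|\tsr{S}{t}|\,|S|/n$, each member receives at most $|\tsr{S}{t}|/n$ from that pair.
 \item The subtraction in $\drem{\cdot}{\cdot}$ caps what one voter accumulates across cutoffs. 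Any later group containing $i$ gets only its demand minus earlier reservations to groups meeting it, so its demand is "topped up" rather than added afresh.
 \item Groups reserved at the same cutoff are pairwise disjoint, which should give that the load on any voter from pairs with $|\tsr{\cdot}{\cdot}|\le k$ is at most $k/n$.
\end{itemize}
Summing over the $n$ voters would bound the total number of rounds reserved by pairs with $|\tsr{\cdot}{\cdot}|\le k$ by $k$. Taking $k=|\tsr{S_h}{r_h}|$ contradicts the assumption that pair $h$ failed.

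\emph{Main obstacle.} The global count above does not directly confine the earlier pairs' rounds to $\tsr{S_h}{r_h}$; they could have used rounds outside it. I would first try to strengthen the counting to a Hall-type condition: for every collection of pairs, the union of their agreement sets is at least the sum of their reservations. Given this, the increasing-$|\tsr{\cdot}{\cdot}|$ greedy order should succeed, possibly after an exchange argument that reroutes earlier assignments. If that fails, I would fall back on proving that some valid assignment exists via Hall's theorem, and then check that the sorted greedy order finds one.
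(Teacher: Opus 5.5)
Your coverage argument and your reduction of Stage~2 to a global counting bound are essentially the paper's. The gap is that the counting bound itself, which is the real content of the proof, is left unproved, and the per-voter route you sketch for it is false. Suppose each reserved round of $(S,t)$ is charged $1/|S|$ to every member of $S$. Then a voter's load from pairs with $|\tsr{\cdot}{\cdot}|\le k$ can exceed $k/n$.

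Here is a counterexample. Take $n=5$, $|T|=2$, and the static profile in which voters $1,2,3$ rank $a\succ b\succ c$, voter $4$ ranks $b\succ a\succ c$, and voter $5$ ranks $c\succ a\succ b$.
\begin{itemize}
\item At cutoff one, only $\set{1,2,3}$ has positive demand, so $g(\set{1,2,3},1)=1$.
\item At cutoff two, all voters approve $a$. The group $V$ has remaining demand $2-1=1$. Every other group meeting $\set{1,2,3}$ has remaining demand $0$, and subsets of $\set{4,5}$ have demand $0$. Hence $g(V,2)=1$.
\end{itemize}
Both pairs agree in $k=2$ rounds, yet voter~$1$ carries $1/3+1/5=8/15>2/5$. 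The subtraction in $\drem{\cdot}{\cdot}$ limits what a group receives in total, not what each member receives under uniform charging. Here the top-up for $V$ is spread equally over voters already charged and voters not yet charged.

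The paper instead proves a set-wise bound (Lemma~\ref{lem:fixed-rank-packing}): for every $V'\subseteq V$ and every $k$, the sum of $g(S',t)$ over $S'\subseteq V'$ with $|\tsr{S'}{t}|\le k$ is at most $k|V'|/n$. The proof is by induction on selections. Suppose $S\subseteq V'$ is selected at cutoff $r$ with $|\tsr{S}{r}|\le k$.
\begin{itemize}
\item Terms for groups disjoint from $S$ are bounded by the hypothesis applied to $V'\setminus S$.
\item Terms for groups meeting $S$ all come from cutoffs $t<r$, because groups selected earlier at cutoff $r$ are disjoint from $S$. So they sum to at most the subtracted amount $z$.
\item Together with $g(S,r)=\dem{S}{r}-z$, these contribute at most $\dem{S}{r}\le k|S|/n$.
\end{itemize}
Your charging idea can be repaired by charging non-uniformly: give the new units to members with remaining slack. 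The same observation applies there: members' earlier loads come only from groups meeting $S$ at earlier cutoffs, so the total slack in $S$ is at least $|\tsr{S}{r}||S|/n-z\ge g(S,r)$.

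Finally, your ``main obstacle'' is not one. Rounds that earlier pairs took outside $\tsr{S_h}{r_h}$ only leave more free rounds inside it. At most $\sum_{a<h}g(S_a,r_a)$ rounds of $\tsr{S_h}{r_h}$ are taken, and the bound with $V'=V$ and $k=|\tsr{S_h}{r_h}|$ gives $\sum_{a\le h}g(S_a,r_a)\le k$. So at least $g(S_h,r_h)$ agreeing rounds remain free, with no Hall condition or exchange argument needed (Corollary~\ref{cor:fixed-rank-stage-two}).
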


The proof has two parts. First, Stage~2 can assign all reserved rounds: the total number of rounds reserved for pairs $(S,r)$ with $|\tsr{S}{r}|\leq k$ never exceeds $k$, and since Stage~2 processes the pairs in increasing order of $|\tsr{S}{r}|$, each pair finds enough unassigned rounds in which its group agrees. Second, the greedy choice in Stage~1 ensures that, for every cutoff $r$ and group $S$, at least $\dem{S}{r}$ rounds are reserved at cutoffs at most $r$ for groups intersecting $S$. A winner approved at a smaller cutoff is still approved at cutoff $r$, so these rounds represent $S$ under $A(r)$. Appendix~\ref{app:gcr-proof} gives the full proof.

Stage~1 considers all groups of voters, so the rule runs in time $O(2^n\cdot\mathrm{poly}(n,m,|T|))$. Whether a polynomial-time rule can guarantee fixed-rank-PJR is an open question. We next examine whether the cutoff can vary across rounds, and whether future preferences are necessary.

The fixed-rank guarantee does not extend to the rank interpretation: even rank-JR and rank-PSC can be unattainable on a static one-dimensional Euclidean profile. The idea is that a rule cannot spread its winners over all blocks of similar candidates: a group whose block is neglected can use a large cutoff in the rounds in which the winner lies outside its block, and a cutoff of one in the remaining rounds.

\begin{theorem}
\label{thm:rank-impossibility}
There is a static one-dimensional Euclidean profile for which no sequence
satisfies rank-JR and no sequence satisfies rank-PSC. Hence, rank-PJR cannot
always be satisfied either.
\end{theorem}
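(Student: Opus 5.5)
We will construct an explicit static profile consisting of two far-apart \emph{blocks} of candidates, each surrounded by its own voters. The idea is the one sketched before the statement. Whatever sequence is chosen, some block receives few winners. A subgroup of that block's voters then uses the cutoff equal to the block size in rounds where the winner lies outside the block, so that it is solid there. In the remaining rounds it uses cutoff $1$, so that it is not represented. The same group and rank vector will witness a violation of both rank-JR and rank-PSC (with $\ell=1$). Since rank-PJR implies both rank-JR and rank-PSC, the final sentence of the theorem then follows immediately.

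\textbf{Construction.} Let $|T|=8$. Take $m=16$ candidates on the line, $c_1,\dots,c_8$ at positions $1,\dots,8$ and $c_9,\dots,c_{16}$ at positions $101,\dots,108$. Take $n=16$ voters, voter $i$ at position $p_i+0.1$, where $p_i$ is the position of $c_i$. The offset $0.1$ ensures that no voter is equidistant from two integer positions, so the profile is strict one-dimensional Euclidean and static. Each voter's first choice is her own candidate $c_i$. Moreover, every voter in the first block $V_1=\{1,\dots,8\}$ ranks $B_1=\{c_1,\dots,c_8\}$ above $B_2=\{c_9,\dots,c_{16}\}$, because distances within a block are below $8$ and distances across blocks exceed $90$. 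Symmetrically, every voter in $V_2=\{9,\dots,16\}$ ranks $B_2$ above $B_1$.

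\textbf{Argument.} Fix any sequence $W$, and let $x_t$ be the number of rounds whose winner lies in $B_t$. Since $x_1+x_2=8$, some block $t$ has $x_t\le4$. Let $S\subseteq V_t$ consist of the voters of block $t$ whose first choice never wins. At most $x_t$ distinct candidates of $B_t$ win, so $|S|\ge 8-x_t\ge4$. Define the rank vector by
\begin{itemize}
 \item $r^j=8$ in every round $j$ whose winner is outside $B_t$;
 \item $r^j=1$ in every round whose winner lies in $B_t$.
\end{itemize}
In rounds of the first kind, all of $S$ has approval set exactly $B_t$, so $S$ is solid (hence agrees), and $S$ is not represented because $w^j\notin B_t$. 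In rounds of the second kind, each member of $S$ approves only her own first choice, which by definition of $S$ is not $w^j$, so $S$ is not represented. Thus $S$ is solid, and agrees, in $k\ge 8-x_t\ge4$ rounds and is represented in none. Since $|S|\ge4\ge 16/k=n/k$, this violates rank-JR with $\ell=1$. Because rank-PSC counts representation over all rounds, it also violates rank-PSC.

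\textbf{Main obstacle.} The delicate step is the numerical balance. The Hare condition $|S|\ge n/k$ must survive two losses: removing the voters whose first choices win inside the block, and removing the in-block rounds from the count $k$. Both losses are bounded by the same quantity $x_t$. Writing $q$ for the block size and $g$ for the number of blocks, with $|T|=ga$ and $x_t\le a$, the requirement becomes $(q-a)(|T|-a)\ge gq$. I checked that $g=2$, $q=8$, $a=4$ works, with equality $4\cdot4=16$, and would verify this inequality carefully in the write-up. The remaining checks are routine: strictness of the Euclidean distances, and the fact that $S$ is nonempty.
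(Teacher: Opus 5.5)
Your proposal is correct and follows the paper's proof. Both use well-separated blocks, a pigeonhole argument for a neglected block, and a rank vector with the block-size cutoff when the winner lies outside that block and cutoff one otherwise. The only difference is the choice of parameters: you use two blocks of eight with $|T|=8$ and $n=m=16$, while the paper uses three blocks of four with $|T|=5$ and $n=m=12$, and in both the Hare condition holds with equality.
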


\begin{proof}
Let $n=m=12$ and $|T|=5$. Partition the candidates into three blocks
\[
B_1=\set{c_1,c_2,c_3,c_4},\quad
B_2=\set{c_5,c_6,c_7,c_8},\quad
B_3=\set{c_9,c_{10},c_{11},c_{12}}.
\]
Place the candidates, respectively, at
\[
1,2,3,4;\quad
101,102,103,104;\quad
201,202,203,204.
\]
For every $i\in V$, place voter $i$ at the position of $c_i$ minus
$1/10$. In every round, each voter ranks the candidates by increasing
distance, and the same rankings are used in all five rounds. No voter is
equidistant from two candidates, so the rankings are strict. Voter $i$
ranks $c_i$ first, and, for each $s\in\set{1,2,3}$, every voter $i$ with
$c_i\in B_s$ has $B_s$ as her set of four most-preferred candidates.

Fix any sequence $W=(w^1,\dots,w^5)$. Since there are five rounds and
three blocks, there is some $B\in\set{B_1, B_2, B_3}$ such that
$w^j\in B$ for at most one $j\in T$. Hence, at least three candidates in
$B$ are never selected. Let $S$ consist of the three voters whose
first-ranked candidates are three such candidates. Define $\mathbf r=(r^1,\dots,r^5)$ by
\[
r^j=
\begin{cases}
4, & \text{if } w^j\notin B,\\
1, & \text{if } w^j\in B.
\end{cases}
\]

There are at least four rounds with $w^j\notin B$. In each such round,
$A^j_i(r^j)=B$ for every $i\in S$, whereas $w^j\notin B$. In every round
with $w^j\in B$, each voter $i\in S$ approves only $c_i$, and none of
these candidates is ever selected. Thus, $w^j\notin\bigcup_{i\in S}A^j_i(r^j)$ for every $j \in T$.
The group $S$ agrees in at least four rounds and $|S|=3=n/4$. Therefore,
for $\ell=1$, rank-JR requires $S$ to be represented in at least one
round, a contradiction. Since rank-PJR implies rank-JR, no sequence
satisfies rank-PJR either.

Moreover, in each round with $w^j\notin B$, all members of $S$ have the
approval set $B$, so $S$ is solid under $\mathbf r$ in at least four rounds. Since $|S|=n/4$ and $S$ is
not represented in any round, rank-PSC is also violated for $\ell=1$.
\end{proof}

\paragraph{The need for future preferences.}
\label{sec:information}
Even the attainable fixed-rank guarantees require future preferences. A semi-online rule must commit to early winners without knowing which groups will agree later. The following construction exploits this: in the first four rounds, the rule must select winners preferred by four of eight pairs of voters, and the preferences in the last four rounds are then chosen so that the members of these four pairs whose first choices were not selected form a new group.

\begin{theorem}
\label{thm:semi-online-impossibility}
No semi-online rule always satisfies fixed-rank-JR, and no semi-online rule always satisfies fixed-rank-PSC, even on profiles that are single-peaked with respect to a common axis. Hence, no semi-online rule always satisfies fixed-rank-PJR.
\end{theorem}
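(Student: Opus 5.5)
The proof is an adversary argument. Take $n=16$ voters arranged in eight pairs $P_1,\dots,P_8$ and a horizon $|T|=8$. The adversary fixes the profile for rounds $1$–$4$ in advance and chooses the profile for rounds $5$–$8$ only after seeing the rule's first four winners. This is legitimate because a semi-online rule's choice of $w^j$ depends only on $P^1,\dots,P^j$ and $|T|$. The argument then has three steps: make every pair a protected group, make each early winner serve at most one pair, and finally build the late rounds so that the remaining demands cannot all be met.

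\textbf{Early rounds.} Give each pair $P_p=\set{u_p,v_p}$ its own block of candidates on a common axis, with the blocks of different pairs disjoint and arranged left to right. Within block $p$, voter $u_p$ ranks $a_p$ first and voter $v_p$ ranks $b_p$ first, where $a_p,b_p$ are adjacent on the axis. Both voters rank $\set{a_p,b_p}$ above every other candidate. Then, at cutoff $2$, pair $P_p$ is solid, hence agrees, in each of rounds $1$–$4$. I intend to keep every pair solid at cutoff $2$ in all eight rounds. A pair then agrees in $k=8$ rounds, and since $|P_p|=2=n/8$, both fixed-rank-JR and fixed-rank-PSC (with $\ell=1$) require each pair to be represented at cutoff $2$ at least once.

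Each voter ranks her own block first, so a single winner from block $p$ represents only $P_p$ at cutoff $2$. Rounds $1$–$4$ therefore represent at most four pairs, and there is a set $U$ of at least four pairs that are still unrepresented. Among the represented pairs, pick four, say $P_{p_1},\dots,P_{p_4}$. In each of them, some member's first choice was never selected; when a winner $a_p$ was chosen, this is $v_p$. Let $G$ be the set of these four members.

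\textbf{Late rounds.} In rounds $5$–$8$, change the rankings so that the members of $G$ share a common first choice $x$. I would place $x$ at a fresh point of the axis and rearrange the blocks so that single-peakedness is kept. Then $G$ agrees at cutoff $1$ in four rounds and is solid there, and since $|G|=4=n/4$, $G$ must be represented at cutoff $1$. Every other voter in rounds $5$–$8$ keeps her first choice distinct and inside her own pair's block, still with the pair's two candidates on top.

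Now count, separately for each cutoff, what the last four rounds must deliver:
\begin{itemize}
\item at cutoff $1$, the representation of $G$, which requires $x$ to win in some round;
\item at cutoff $2$, the representation of each of the at least four pairs in $U$.
\end{itemize}
If $x$ lies in no block of a pair in $U$, no winner does both jobs, so five distinct requirements compete for four rounds. This contradicts both fixed-rank-JR and fixed-rank-PSC. Fixed-rank-PJR implies both axioms, so it fails as well.

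\textbf{Main obstacle.} The hard step is realizing all of this simultaneously while checking the axioms at every cutoff $r\in[m]$. Three things must hold together:
\begin{itemize}
\item every ranking is single-peaked on one fixed axis, even though $x$ must be a top choice for voters whose pairs' blocks lie in different places;
\item no winner can represent two demanding groups, whether a pair in $U$ together with $G$ or two different pairs;
\item pairs remain solid over $\set{a_p,b_p}$ in the late rounds, even for members of $G$ whose top choice is now $x$. If this fails, I would drop the late-round solidity of those pairs: a pair in the represented set needs no further representation anyway, so only the pairs in $U$ need to stay solid.
\end{itemize}
What I would try first is to place $x$ in the middle of the axis and put the blocks of the pairs from which $G$ is drawn adjacent to $x$. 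Since $G$ depends on the rule's earlier winners, this is done by a relabeling the adversary chooses after round $4$; the relabeling has to leave the fixed early rounds consistent with one common axis, so I would then verify directly that every late ranking is single-peaked with respect to it.
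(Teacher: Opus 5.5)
Your overall architecture matches the paper's: sixteen voters in eight pairs, eight rounds, each pair solid over its two-candidate block at cutoff $2$. After round $4$, the four unserved partners of the represented pairs are forced to share a first choice, so they are solid at cutoff $1$ in rounds $5$--$8$. The five competing requirements are also the paper's.

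\textbf{The unfinished step.} The proposal stops at the step you call the main obstacle, and the plan you give for it would fail. The axis and the profile of rounds $1$--$4$ must be fixed before the rule acts, and the rule's early winners are determined by that fixed profile. So nothing can be relabeled after seeing $w^1,\dots,w^4$. Moreover, at most two disjoint blocks can be adjacent to a single point $x$ on a line, not four.

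The missing observation is that there is no obstacle. Single-peakedness constrains each ranking separately, and rankings may change arbitrarily between rounds. Once you drop late solidity for the represented pairs, as you suggest, the members of $G$ only need some single-peaked ranking with $x$ on top. If $x$ is an extra candidate included in $C$ from the start, say at one end of the axis, ranking candidates by their axis distance from $x$ does this. The paper does it even more simply. It takes $x=w^1$, which lies in a represented block and hence in no block of $U$. It then gives every late voter of a represented pair the early ranking $\succ^*$ of the voter whose first choice is $w^1$; that ranking is single-peaked by construction. Relatedly, an impossibility needs only one violated cutoff per sequence, so nothing has to be checked ``at every cutoff $r\in[m]$''.

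\textbf{The missing case.} Your step ``among the represented pairs, pick four'' silently assumes that exactly four pairs were represented in rounds $1$--$4$. If the rule repeats a block, or selects outside all blocks (for instance your extra $x$), fewer pairs are represented and $G$ cannot be formed. This case must be closed separately, which is easy. Then $|U|\geq5$, and if the pairs in $U$ keep their rankings, their cutoff-$2$ requirements already need five distinct winners in four rounds. The paper handles both cases at once: it defines the late rounds from the set $H$ of represented blocks, whatever its size, and first derives $|H|=4$ from the cutoff-$2$ requirements of the pairs outside $H$.
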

\begin{proof}
Suppose, for contradiction, that a semi-online rule always satisfies fixed-rank-PSC. Let $n=16$, $|T|=8$, and $C=\set{a_1,b_1,\dots,a_8,b_8}$, with the axis $a_1,b_1,a_2,b_2,\dots,a_8,b_8$. Partition the voters into eight pairs $S_1=\set{1,2},\dots,S_8=\set{15,16}$, and the candidates into blocks $B_h=\set{a_h,b_h}$ for $h\in[8]$. In the first four rounds, for every $h$, one voter in $S_h$ ranks $a_h\succ b_h$ at the top and the other ranks $b_h\succ a_h$ at the top; we complete these rankings to single-peaked rankings on the axis, and each voter uses the same ranking in all four rounds.

Let $H$ be the set of indices $h$ such that some winner among $w^1,\dots,w^4$ belongs to $B_h$; clearly, $|H|\leq4$. Since the rule is semi-online, $w^1,\dots,w^4$ do not depend on the preferences in rounds $5,\dots,8$, which we now define. Every candidate is some voter's first choice in the first four rounds; let $\succ^*$ be the ranking of a voter whose first choice is $w^1$. In rounds $5,\dots,8$, every voter in a pair $S_h$ with $h\in H$ has the ranking $\succ^*$, and every other voter keeps her ranking from the first four rounds. All rankings are single-peaked on the axis.

For every $h\notin H$, the pair $S_h$ is solid over $B_h$ under the cutoff $r=2$ in all eight rounds, and $|S_h|=n/|T|$. Thus, fixed-rank-PSC requires a winner from $B_h$, which can only occur in rounds $5,\dots,8$. Since the blocks are disjoint, rounds $5,\dots,8$ contain winners from at most four such blocks, so $8-|H|\leq4$, and hence $|H|=4$. Consequently, $w^1,\dots,w^4$ belong to four distinct blocks, one winner per block; $w^5,\dots,w^8$ belong to the four blocks $B_h$ with $h\notin H$, again one winner per block; and for every $h\in H$, exactly one voter in $S_h$ has her first choice selected in the first four rounds. Let $S$ consist of the other voter of each of these four pairs, so that $|S|=4$.

Under the cutoff $r=1$, the group $S$ is solid over $\set{w^1}$ in rounds $5,\dots,8$, and $|S|=n/4$. Hence, fixed-rank-PSC requires a round in which the winner is the first choice of a member of $S$. In the first four rounds, the first choices of members of $S$ are not selected, by the choice of $S$. In the last four rounds, every member's first choice is $w^1$, which belongs to a block $B_h$ with $h\in H$, whereas $w^5,\dots,w^8$ belong to blocks $B_h$ with $h\notin H$. This contradiction proves the claim for fixed-rank-PSC. Every solid group agrees, and every requirement above uses $\ell=1$, so the same argument proves the claim for fixed-rank-JR.
\end{proof}

\subsection{Groups That Agree in Every Round}
\label{sec:weak-rules}
We now restrict protection to groups that agree throughout the horizon. This permits a common cutoff to vary across rounds: rank-wPJR is achievable without knowing future preferences, provided that the horizon is known. The proof shows that, until a group receives its entitlement, its members pay only in rounds that represent it.

The rule adapts the \emph{Method of Equal Shares (MES)}, following the simplified version of Brill and Peters~\cite{brill2023robust}. Each voter starts with a budget of $|T|/n$, and every winner costs one unit. Given budgets $(b_i)_{i\in V}$ and a cutoff $r$, candidate $c$ is \emph{$\rho$-affordable} in round $j$ if $\rho>0$ and $\sum_{i\in N_c^j(r)}\min\set{\rho,b_i}=1$; that is, the voters who approve $c$ at cutoff $r$ can buy it if each pays at most $\rho$. In each round, the rule finds the smallest cutoff $r$ at which some candidate is affordable (that is, the voters approving it have a total budget of at least one), selects the candidate that is $\rho$-affordable for the smallest $\rho$, and deducts $\min\set{\rho,b_i}$ from each voter $i\in N_{w^j}^j(r)$ (Algorithm~\ref{alg:MES1}). Unlike MES for multi-winner elections and for temporal voting with approval ballots, this rule never stops early: at cutoff $m$, every voter approves every candidate, so some candidate is always affordable.

\begin{algorithm}[hbpt]
\caption{MES for rank-wPJR}
\label{alg:MES1}
$b_i \gets \frac{|T|}{n}$ for all $i \in V$\;
\For{$j \gets 1$ \KwTo $|T|$}
{
    \For{$r \gets 1$ \KwTo $m$}
    {
        $X \gets \{ c \in C : \sum_{i \in N^j_c(r)} b_i \geq 1 \}$\;
        \If{$X \neq \varnothing$}
        {
            select $c \in X$ that is $\rho$-affordable for the lowest $\rho$\;
            $w^j \gets c$\;
            \For{$i \in N^j_c(r)$}
            {
                $b_i \gets b_i - \min(\rho,b_i)$\;
            }
            \textbf{continue} to the next $j$\;
        }  
    }
}
\KwRet{$W$}
\end{algorithm}

\begin{theorem}
\label{thm:mes-rank-wpjr}
    Algorithm~\ref{alg:MES1} is a polynomial-time semi-online rule that satisfies rank-wPJR.
\end{theorem}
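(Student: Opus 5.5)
The plan is to use a budget-accounting argument. The structure of Algorithm~\ref{alg:MES1} makes the rule semi-online: it uses $|T|$ only for the initial budgets, and it chooses $w^j$ using only $P^j$ and budgets determined by earlier rounds.

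For polynomial time, note that each round scans at most $m$ cutoffs and $m$ candidates. For a fixed candidate and cutoff, the smallest $\rho$ with $\sum_{i\in N_c^j(r)}\min\{\rho,b_i\}=1$ is found by sorting the budgets of the approvers and solving a piecewise-linear equation.

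\textbf{Step 1: every round costs exactly one unit.} If $X\neq\varnothing$ for a candidate $c$, then $\sum_{i\in N_c^j(r)}b_i\ge1$. Since $\rho\mapsto\sum_{i\in N_c^j(r)}\min\{\rho,b_i\}$ is continuous, nondecreasing and runs from $0$ to at least $1$, some $\rho>0$ gives value exactly $1$. So the deduction in each round totals exactly $1$.

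At cutoff $m$ every voter approves every candidate. While total budget is at least $1$, some candidate is therefore affordable. The total budget starts at $n\cdot|T|/n=|T|$ and drops by exactly $1$ per round, so it is at least $1$ at the start of every round. Hence the rule never gets stuck, every round is paid for, and all budgets are exactly $0$ at the end.

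\textbf{Step 2: $S$ pays only in rounds in which it is represented, while its budget is at least one.} Fix a rank vector $\mathbf r$, an integer $\ell\ge1$, and a group $S$ with $|S|\ge\ell n/|T|$ that agrees in every round under $\mathbf r$. Let $c^j\in\bigcap_{i\in S}A_i^j(r^j)$ be a common candidate. Suppose that at the start of round $j$ the members of $S$ hold total budget at least $1$.

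Then $c^j$ is affordable at cutoff $r^j$, so the rule stops at some cutoff $r\le r^j$. Any member $i\in S$ who pays in round $j$ lies in $N^j_{w^j}(r)$. By nestedness, $w^j\in A_i^j(r)\subseteq A_i^j(r^j)$, so $S$ is represented in round $j$. Moreover, in any round, $S$ pays at most the round's total cost of $1$.

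\textbf{Step 3: conclude by contradiction.} Suppose $S$ is represented in fewer than $\ell$ rounds. We prove by induction over rounds that the budget of $S$ at the start of each round is at least $|S||T|/n-(\text{number of earlier represented rounds})\ge\ell-(\ell-1)=1$.

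The inductive step uses Step 2 twice. Because the budget is at least $1$, $S$ pays nothing in unrepresented rounds. In represented rounds it pays at most $1$.

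Running the same bound after round $|T|$, the final budget of $S$ is at least $\ell-(\ell-1)=1>0$. This contradicts Step 1, under which all budgets end at $0$. Hence $S$ is represented in at least $\ell$ rounds, which is rank-wPJR.

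The main obstacle is Step 2. It shows that a voter in $S$ can never be charged under a cutoff larger than $r^j$, since such a charge could occur for a winner $S$ does not approve. This is exactly where choosing the smallest affordable cutoff is used.
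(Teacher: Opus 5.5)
Your proof is correct and follows essentially the same route as the paper. Both arguments rest on two facts: all budgets are exhausted, because each round costs exactly one unit and cutoff $m$ always yields an affordable candidate; and a group holding at least one unit can only be charged at a cutoff $r\le r^j$, hence only in rounds that represent it. The only difference is packaging: the paper derives a contradiction at the first unrepresented round in which $S$ pays, whereas you run the same reasoning as a forward induction on the remaining budget of $S$.
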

\begin{proof}
Before round $j$, the total remaining budget is $|T|-j+1\geq1$.
At cutoff $m$, every voter approves every candidate, so the rule can always
buy a winner. Each round spends exactly one unit, and all budgets remain
nonnegative. Consequently, every voter's budget is zero at the end.

Suppose that rank-wPJR is violated by a rank vector $\mathbf r$, a group $S$
that agrees in every round, and an integer $\ell$ with
$|S|\geq\ell n/|T|$. The initial total budget of $S$ is at least $\ell$, but
$S$ is represented in at most $\ell-1$ rounds. Since all budgets are zero at
the end, the members of $S$ pay at least $\ell$ units in total. Each round
costs one unit, so they pay at most $\ell-1$ units in represented rounds, and
must therefore pay a positive amount in some unrepresented round.
Choose the first such round $j$.

Before this round, $S$ has paid only in represented rounds, of which there
are at most $\ell-1$. Its remaining budget is therefore at least one.
Any candidate in $\bigcap_{i\in S}A_i^j(r^j)$ is affordable at cutoff $r^j$.
The algorithm consequently buys the winner at some cutoff $r\leq r^j$.
If a member $i\in S$ pays in that round, then
$w^j\in A_i^j(r)\subseteq A_i^j(r^j)$, contradicting the choice of $j$ as
unrepresented. This proves rank-wPJR. 

For each candidate and cutoff, the smallest affordable price $\rho$ can be
found by sorting the relevant budgets and solving the piecewise-linear
equation $\sum_{i\in N_c^j(r)}\min\{\rho,b_i\}=1$. There are at most
$|T|m^2$ such calculations. Thus, the running time is polynomial. The rule uses the
horizon but no future preferences, so it is semi-online.
\end{proof}

A small change to the initial budgets gives the Droop size condition. The
additional budget is not fully spent; the proof instead compares a group's
remaining budget with that of the entire electorate.

\begin{corollary}
\label{cor:mes-droop}
If Algorithm~\ref{alg:MES1} starts each voter with $(|T|+1)/n$ units,
it satisfies rank-wPJR with the Droop size condition. In particular, every
group $S$ that agrees in every round under some rank vector is
represented in at least $\lceil (|T|+1)|S|/n\rceil-1$ rounds.
\end{corollary}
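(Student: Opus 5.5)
We follow the proof of Theorem~\ref{thm:mes-rank-wpjr} and replace its final-budget argument by a comparison between the budget held by $S$ and the budget held by the whole electorate. Each voter now starts with $(|T|+1)/n$ units, so the total budget is $|T|+1$. Every round still costs exactly one unit, so before round $j$ the total remaining budget is $|T|-j+2\geq 2>1$. At cutoff $m$ every voter approves every candidate, so a winner is always affordable and the rule remains well defined. It is semi-online and polynomial for the same reasons as before. After the last round, exactly one unit remains in total.

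Fix a rank vector $\mathbf r$ and a group $S$ that agrees in every round under $\mathbf r$. Let $\ell=\lceil (|T|+1)|S|/n\rceil-1$, so that $|S|(|T|+1)/n>\ell$. Assume for contradiction that $S$ is represented in at most $\ell-1$ rounds. First we show that the members of $S$ pay nothing in unrepresented rounds. Suppose they do, and take the first unrepresented round $j$ in which some member of $S$ pays. Before round $j$, $S$ has paid only in represented rounds, at most one unit each, hence at most $\ell-1$ units in total. Its remaining budget is therefore more than $\ell-(\ell-1)=1$. Any candidate in $\bigcap_{i\in S}A_i^j(r^j)$ is then affordable at cutoff $r^j$. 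The rule thus buys $w^j$ at some cutoff $r\leq r^j$. A paying member $i\in S$ must satisfy $w^j\in A_i^j(r)\subseteq A_i^j(r^j)$, which contradicts the assumption that round $j$ is unrepresented. So $S$ pays only in represented rounds.

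It follows that the final budget of $S$ is strictly greater than $|S|(|T|+1)/n-(\ell-1)>1$. This already contradicts the fact that the total final budget of the electorate is exactly one, since the budget of $S$ is part of that total and all budgets are nonnegative. The strict inequality $|S|(|T|+1)/n>\ell$ is exactly what makes this final step work. That strictness is the Droop condition, and it replaces the Hare-quota argument that every budget ends at zero.

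The main point needing care is the inductive claim that $S$ keeps more than one unit before each unrepresented round, including rounds after its last represented one. The argument handles this because the bound on what $S$ has spent counts only represented rounds. No other obstacle arises.
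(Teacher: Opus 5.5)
Your proposal is correct and follows essentially the same argument as the paper. Both proofs show that, under the assumed shortfall, $S$ keeps more than one unit before any unrepresented round, so it pays only in represented rounds. Both then compare the more than one unit that $S$ retains with the electorate's final total of exactly one unit.
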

\begin{proof}
Suppose that a group $S$ is represented in $q$ rounds, fewer than its stated
entitlement. Its initial budget $x=(|T|+1)|S|/n$ then satisfies $x>q+1$.
Before a first payment in an unrepresented round, $S$ would retain at least
$x-q>1$ units. A common approved candidate would be affordable at the group's
cutoff, so the rule would select a winner at no larger cutoff. Any member
charged for that winner would approve it, a contradiction. Hence $S$ pays
only in represented rounds and retains more than one unit at the end.
Every round spends one unit, so the electorate starts with $|T|+1$ units
and ends with exactly one. This is impossible.
\end{proof}

\paragraph{The need to know the horizon.}
The horizon is needed for the preceding guarantee: without it, a rule cannot guarantee even fixed-rank-wJR. The reason is that the horizon changes which groups are large enough to be owed a round, and hence whether a compromise candidate should be selected. The construction follows the idea of Theorem~7.2 of Aziz et al.~\cite{aziz2024committee}.

\begin{theorem}
\label{thm:online-impossibility}
No online rule always satisfies fixed-rank-wJR. Hence, no online rule always satisfies any of the approval-based axioms.
\end{theorem}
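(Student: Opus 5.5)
The plan is an adversarial construction in the spirit of Theorem~7.2 of Aziz et al.~\cite{aziz2024committee}. An online rule must commit to $w^1$ (and perhaps a few more winners) without knowing $|T|$. We build a single profile prefix that is extended in two ways: under one horizon $H_1$, fixed-rank-wJR forces a compromise candidate early; under a longer horizon $H_2$, the same prefix continued forces a different early choice. Since an online rule's choices on the common prefix are the same for both horizons, one of the two instances is violated. The second sentence of the statement then follows immediately, because every approval-based axiom in Figure~\ref{fig:implication-full} implies fixed-rank-wJR.

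For the concrete design I would try a static, single-peaked profile with a compromise candidate $b$ and ``extreme'' candidates. The requirements must really force a particular winner, and JR only asks for a winner in the union of the members' approval sets. So I would include filler candidates that no group of the relevant size approves at the relevant cutoff, so that a ``wrong'' winner is genuinely unapproved. The two horizons would play the following roles.
\begin{itemize}
\item For the short horizon (say $|T|=1$ or $2$), only large groups count ($|S|\ge n/|T|$). Their only common approved candidate at cutoff $2$ is $b$, and every other candidate outside the union of these groups' top-$2$ sets is excluded. Hence the rule must select $b$, or something in a small allowed set, in round $1$.
\item For the longer horizon, small groups of size $n/|T|$ become protected. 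I would make these groups cohesive at cutoff $1$, meaning each consists of voters sharing a first choice $x_h$. There are exactly $|T|$ such groups, pairwise disjoint and covering the electorate, none approving $b$ at cutoff $1$. Each must then get its own first choice in some round, which leaves no round for $b$.
\end{itemize}

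If a static profile makes this counting too tight, I would instead let the adversary change preferences after observing $w^1$, which is still legitimate against an online rule. Concretely, if the rule picked $b$, the later rounds would be chosen so that the $|T|$ cutoff-$1$ groups all remain cohesive and disjoint. Round $1$ is then wasted on a candidate none of them approves at cutoff $1$, so some group is never represented. If the rule did not pick $b$, we stop at the short horizon, where $b$ was forced.

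The key steps, in order, are:
\begin{enumerate}
\item Fix the candidate set, the axis, and the voter types.
\item Verify the short-horizon forcing: enumerate the groups of size at least $n/|T|$ and the cutoffs under which they agree, and check that $b$ (or the allowed set) is the only way to represent all of them.
\item Verify the long-horizon counting argument: disjoint cohesive groups, one per round, so that the rounds are fully committed and $b$ cannot win.
\item Note that the online rule's first winner is identical in both instances.
\end{enumerate}

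The main obstacle is step 2. The JR requirement is union-based, so it is hard to make a single compromise candidate strictly necessary: a large group at a large cutoff approves almost everything. I would address this by using few voter types, choosing $n$ divisible by the relevant horizons, and relying on cutoff $r=2$ only. I would also place $b$ so that the short-horizon requirement comes from a group whose top-$2$ union is exactly $\{x_h\}\cup\{b\}$ for a subset of the extremes. If necessary, I would require two short-horizon groups whose only jointly satisfiable winner is $b$.
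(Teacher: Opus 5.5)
Your architecture matches the paper's. A static profile forces a compromise candidate at a short horizon, while at a long horizon $|T|$ disjoint groups cohesive at cutoff $1$ use up every round with their own first choices. An online rule cannot tell the two instances apart on their common prefix. The gap is in your step~2, which you flag as the main obstacle but do not resolve, and your framing of it would fail.

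\emph{Horizon $1$ forces nothing.} At horizon $1$, any voter's first choice lies in her approval set at every cutoff, so it represents the electorate whenever the electorate agrees. Hence nothing forces $b$ in round~$1$. Your adaptive fallback (``stop at the short horizon'') relies on the same round-$1$ forcing, so it does not close the argument either.

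\emph{A fixed group does not work.} Forcing $b$ through a fixed group fails for the reason you noticed: at cutoff $2$ the group's union contains its members' first choices, and the rule may select one of those instead. Adding filler candidates does not change this.

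\emph{Missing idea: choose the group after seeing the winners.} Let every voter rank $b$ second, with $d$ equal-sized types whose first choices are $x_1,\dots,x_d$. Suppose $b\notin\{w^1,\dots,w^{H_1}\}$. Then at least $d-H_1$ types have unselected first choices. The group formed by these types agrees on $b$ at cutoff $2$ in every round and approves only their first choices and $b$, so it is never represented. This gives a violation whenever the group has at least $n/H_1$ members, i.e.\ whenever $(d-H_1)H_1\geq d$.

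This condition is impossible for $H_1=1$ and first holds for $H_1=2$, $d=4$. That is exactly the paper's instance: $n=4$, voter $i$ ranks $c_i\succ x$ on top, horizons $2$ and $4$, with $x$ playing the role of $b$.

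The conclusion is then $b\in\{w^1,w^2\}$ rather than $w^1=b$, so you must compare the first two winners across the two instances. Your long-horizon count already handles this: four rounds must contain four distinct first choices at cutoff $1$, so $b$ never wins. No filler candidates, preference changes, or adaptivity are needed.
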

\begin{proof}
Suppose, for contradiction, that an online rule always satisfies fixed-rank-wJR. Let $n=4$ and $C=\set{c_1,c_2,c_3,c_4,x}$, and consider the static profile in which each voter $i$ ranks $c_i\succ x$ above all other candidates.

First, let $|T|=2$ and $r=2$. At least two voters, say $i$ and $i'$, have first choices that are not selected. The group $\set{i,i'}$ agrees on $x$ in both rounds and $|\set{i,i'}|=n/|T|$, so it must be represented. Its members approve only $c_i$, $c_{i'}$, and $x$, so $x\in\set{w^1,w^2}$.

Next, let $|T|=4$ and $r=1$. Every singleton group $\set{i}$ agrees in every round and $|\set{i}|=n/|T|$, so $c_i$ must win in some round. Hence, the four winners are $c_1,\dots,c_4$, and $x$ never wins.

The first two rounds have the same preferences in both instances. Since the rule is online, it selects the same $w^1$ and $w^2$ in both instances, which contradicts the two conclusions above.
\end{proof}

\paragraph{Solid coalitions without a known horizon.}
The online impossibility does not extend to solid coalitions: fixed-rank-wPSC and rank-wPSC can be satisfied online by adapting the Solid Coalition Refinement (SCR) rule of Aziz et al.~\cite{aziz2024committee}. Consider round $j$ and a rank vector $\mathbf r=(r^1,\dots,r^j)\in[m]^j$ for the rounds so far. If a group $S$ is solid under $\mathbf r$ in every round $t\leq j$, we call the pair $(S,\mathbf r)$ a \emph{claim}, and write $C^t(S,\mathbf r)$ for the common approval set $A_i^t(r^t)$ of the members $i\in S$ in round $t$. The \emph{representation} of a claim before round $j$ and its \emph{priority} are
\[
 \mathrm{rep}^{j-1}(S,\mathbf r)=\left|\set{t<j:w^t\in C^t(S,\mathbf r)}\right|,
 \quad
 \pi^{j-1}(S,\mathbf r)=\frac{|S|}{\mathrm{rep}^{j-1}(S,\mathbf r)+1}.
\]
Thus, larger groups have higher priority, and a group's priority decreases each time it is represented. We define $\mathrm{rep}^j(S,\mathbf r)$ and $\pi^j(S,\mathbf r)$ in the same way, counting round $j$ as well, once $w^j$ is selected.

In each round $j$, the rule considers the constant rank vectors $(r,\dots,r)$ for fixed-rank-wPSC, or all rank vectors in $[m]^j$ for rank-wPSC. For each such vector, it keeps the claims of the maximal groups that are solid under it in every round so far; these groups partition the voters. To select the winner, the rule starts with $X=C$ and repeatedly refines $X$: it selects the claim of highest priority whose common approval set in round $j$ is a proper subset of $X$, and replaces $X$ by this set. When a single candidate remains, it is the winner (Algorithm~\ref{alg:online-scr}).

\begin{algorithm}[hbpt]
\caption{Online SCR}
\label{alg:online-scr}
\For{$j=1,2,\dots$}
{
    $\mathsf{Claims}^j\gets\varnothing$\;
    $\mathcal{R}^j \gets 
    \begin{array}{ll}
    \set{(r,\dots,r) : r\in[m]} & \text{for fixed-rank-wPSC}\\
    \set{(r^1,\dots,r^j) : r^1,\dots,r^j \in[m]} & \text{for rank-wPSC}
    \end{array}$ 

    \ForEach{$\mathbf r=(r^1,\dots,r^j)\in\mathcal {R}^j$}
    {
        $\mathcal V^j(\mathbf r) \gets$ partition $V$ into maximal groups $S$ satisfying \\ $A^t_i(r^t) = A^t_{i'}(r^t)$ for all $i,i' \in S$ and all $t \leq j$\;
        \ForEach{$S\in\mathcal V^j(\mathbf r)$}
        {
            add the claim $(S,\mathbf r)$ to $\mathsf{Claims}^j$\;
        }
    }
    $X\gets C$\;
    \While{$|X|>1$}{
        choose $(S,\mathbf r) \in \arg\max_{(S,\mathbf r)\in\mathsf{Claims}^j, 
        C^j(S,\mathbf r)\subsetneq X} \pi^{j-1}(S,\mathbf{r})$, using a fixed tie-breaking order\;
        $X\gets C^j(S,\mathbf r)$\;
    }
    $w^j \gets$ the unique candidate in $X$\;
}
\end{algorithm}

\begin{theorem}\label{thm:online-scr}
Algorithm~\ref{alg:online-scr} is an online rule that satisfies fixed-rank-wPSC if $\mathcal R^j = \set{(r,\dots,r):r\in[m]}$, and rank-wPSC if $\mathcal R^j = \set{(r^1,\dots,r^j):r^1,\dots,r^j\in[m]}$.
\end{theorem}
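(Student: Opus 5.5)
We argue by contradiction. Suppose a rank vector $\mathbf r\in[m]^{|T|}$, an integer $\ell\geq1$ and a group $S$ violate the axiom (rank-wPSC, or fixed-rank-wPSC with $\mathbf r$ constant). That is, $S$ is solid under $\mathbf r$ in every round, $|S|\geq\ell n/|T|$, and $S$ is represented in at most $\ell-1$ rounds. For each round $j$, let $S_j$ be the block of the partition $\mathcal V^j(r^1,\dots,r^j)$ that contains $S$. Such a block exists because the members of $S$ have identical approval sets in every round $t\leq j$. The common sets then agree, $C^t(S_j,\mathbf r)=C^t(S,\mathbf r)$, so the claim $(S_j,(r^1,\dots,r^j))$ has the same representation as $S$. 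Its priority is therefore at least $|S|/\ell$ throughout, since $S$ is represented at most $\ell-1$ times. The fixed-rank case is identical, because the constant vectors belong to $\mathcal R^j$.

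\emph{Step 1 (local lemma).} Fix a round $j$ in which $S$ is not represented. The refinement starts with $X=C\supseteq C^j(S_j)$. Consider the first step at which the chosen claim $(S',\mathbf r')$ yields $X'=C^j(S',\mathbf r')\not\supseteq C^j(S_j)$. At that step $C^j(S_j)\subsetneq X$ was still eligible, so $\pi^{j-1}(S',\mathbf r')\geq\pi^{j-1}(S_j)\geq|S|/\ell$. If $S'\cap S_j\neq\varnothing$, a common voter sees both sets as prefixes of her ranking, so they are nested. The only remaining option is then $X'\subsetneq C^j(S_j)$, which would put the winner in $C^j(S_j)$ and make $S$ represented. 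Hence $S'\cap S_j=\varnothing$. Since $X$ only shrinks afterwards, $w^j\in C^j(S',\mathbf r')$, so this claim is represented in round $j$. Every unrepresented round is thus \emph{charged} to a represented claim of priority at least $|S|/\ell$ whose group is disjoint from the current block of $S$. Moreover, the refinement always terminates, because when $|X|>1$ any voter's cutoff-$1$ claim with first choice in $X$ is eligible, and so is the claim from the vector with cutoff $m$.

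\emph{Step 2 (counting).} There are at least $|T|-\ell+1$ charged rounds. We must show this contradicts $|V\setminus S|\leq n-|S|$ together with $|S|\geq\ell n/|T|$. The intended accounting is as follows. A claim charged at priority $|S'|/(\mathrm{rep}+1)\geq|S|/\ell$ is being represented for the $(\mathrm{rep}+1)$-th time with $\mathrm{rep}+1\leq\ell|S'|/|S|$. Spreading the charge evenly over its members gives each member at most $\ell/|S|$ per charged round. If every voter of $V\setminus S$ absorbs at most $|T|\cdot\ldots$ in total, more precisely at most $\ell/|S|$ per claim-representation and with the representations of claims containing a voter suitably non-double-counted, then the total charge is at most $(n-|S|)\ell/|S|\leq|T|-\ell$. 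This is less than $|T|-\ell+1$, a contradiction.

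\emph{Main obstacle.} The hard part is making this per-voter accounting legitimate. A voter belongs to claims under many rank vectors. Her block under a fixed vector also only refines as $j$ grows, so representations of "the same" claim are inherited along a chain of ever smaller groups. I would first try to charge each round to the pair (voter, chosen rank-vector prefix). I would then exploit two facts: representation counts are monotone under refinement, and a refined group has no larger size, hence no larger priority. The aim is a per-voter potential, namely the largest priority over claims containing her, that drops whenever she is charged. If that fails, I would fall back on the original committee-monotonicity argument for SCR of Aziz et al.~\cite{aziz2024committee}, which treats the analogous overlap among solid coalitions, and transfer it round by round, treating the pair (round, common set) as the analogue of a committee seat.
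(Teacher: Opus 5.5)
Your Step~1 is correct, and it is essentially the observation the paper uses. In a round that does not represent $S$, the claim of its block stays eligible until the loop leaves $C^j(S_j)$. Hence a claim of priority at least $\pi^{j-1}(S_j)\geq|S|/\ell$ is selected, and because prefixes are nested, that claim must be disjoint from $S_j$. There is one slip: the cutoff-$m$ claim is never eligible, because its common set is $C$. For termination you instead need a voter whose first choice lies in $X$, for example a member of the previously selected claim.

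The genuine gap is Step~2, which carries the whole difficulty and is not proved. Your charging rule picks, in each round, a claim that is guaranteed disjoint only from the block of the single group $S$. Nothing then bounds how often a voter of $V\setminus S$ is charged. She belongs to claims under many rank vectors, and the bound $\mathrm{rep}+1\leq\ell|S'|/|S|$ controls only the charges routed through one claim. A round may charge her through one claim without representing another high-priority claim that also contains her. When that second claim is selected later, its members need no longer be able to afford the price. So the inequality ``total charge $\leq(n-|S|)\ell/|S|$'' has no justification.

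The missing idea is a global invariant with a threshold that does not depend on the group. Give each voter one unit and fix a price $p$; the paper takes $p=n/|T|$. In each round, charge $p$ to the group of the \emph{last} selected claim whose priority is at least $p$. Then prove by induction the following invariant: for every rank vector in $\mathcal R^j$ and every group solid under it throughout (not only maximal blocks) with priority at least $p$, its members have been charged only in rounds that represent it.

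Choosing the \emph{last} such claim is what makes the invariant hold. Suppose an unrepresented group of priority at least $p$ met the charged group $S'$. Their round-$j$ common sets are prefixes of a shared voter's ranking, so they are nested, and the former lies strictly inside $C^j(S',\mathbf r')$. The claim of its maximal block would then be eligible at the next refinement with priority at least $p$, contradicting the choice of the last one.

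The invariant gives affordability: $S'$ has paid at most $u'p$, and $|S'|/(u'+1)\geq p$ yields $|S'|-u'p\geq p$. It also shows that $S$ never pays in rounds that do not represent it.

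The paper then charges every round. If no selected claim reaches $p$, the remaining total $(|T|-j+1)p\geq p$ suffices, and your Step~1 argument shows that every group of priority at least $p$ is represented, so any distribution of the charge preserves the invariant. All budgets therefore end at zero, while $S$ has paid at most $(\ell-1)p<|S|$, a contradiction.

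Alternatively, take $p=|S|/\ell$ and charge only in rounds where some selected claim reaches $p$. Your Step~1 shows that every one of the at least $|T|-\ell+1$ rounds not representing $S$ is charged, and the invariant shows these charges fall entirely on $V\setminus S$. This would require $(|T|-\ell+1)|S|/\ell\leq n-|S|$, that is, $(|T|+1)|S|\leq\ell n$, which contradicts $|S|\geq\ell n/|T|$.
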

\begin{proof}
We use the same idea as in Theorem 5.5 of \cite{aziz2024committee} and first show that the algorithm is well-defined. That is, we need to show that in each iteration of the refinement loop, an \emph{eligible} claim exists, namely a claim whose common approval set in round $j$ is a proper subset of $X$.
Consider an iteration in which $|X| >1$. If $X=C$, choose an arbitrary voter $i$; otherwise, $X$ is the common approval set of the claim $(S,\mathbf r)$ selected in the previous iteration, and we choose a voter $i \in S$. In both
cases, voter $i$'s first choice $c^*$ belongs to $X$. Note that the vector $\mathbf{r} = (1,\dots,1)$ belongs to $\mathcal{R}^j$ in both versions of the rule. The claim of the maximal group containing $i$ for this vector belongs to
$\mathsf{Claims}^j$ and has common approval set $\set{c^*}\subsetneq X$. Therefore, an eligible claim always exists. In addition, since every refinement replaces $X$ by a strict subset, the loop terminates after at most $m-1$ iterations, and thus the algorithm always terminates.

We now prove that the algorithm satisfies the relevant version of wPSC. Fix an arbitrary horizon $|T|$ and let $p=\frac{n}{|T|}$.
The value $p$ is used only in the proof; it is not known to
the SCR algorithm. Give every voter one unit of virtual budget. We show retrospectively that the winner in each round can be bought at price $p$ without charging any voter more than her remaining budget; all $|T|$ winners then cost $|T| p =n$, exactly the total initial budget. We maintain the following invariant after every round $j$:
for every rank vector $\mathbf{r} \in \mathcal{R}^j$ and every group $S$ that is solid under $\mathbf{r}$ in every round $t\leq j$, if
\begin{equation}
\label{eq:scr-budget-condition}
\pi^j(S,\mathbf{r}) \geq p,
\end{equation}
then the voters in $S$ have been charged only in rounds $t\leq j$ with
$w^t\in A_i^t(r^t)$ for $i\in S$. The invariant concerns all such groups,
not only the maximal groups kept by the algorithm. When using it in
round $j-1$, we restrict the rank vector to its first $j-1$ entries.

Clearly, the invariant holds before the first round, since no payments have been made. Consider round $j$, and assume that the invariant holds after round $j-1$. There are two possible cases. In the first case, at least one of the claims selected during the refinement loop (at round $j$) has a priority of at least $p$.
Let $(S',\mathbf{r}')$ be the last selected claim whose
priority is at least $p$, and write $u' = \mathrm{rep}^{j-1}(S',\mathbf{r}')$. Then $\pi^{j-1}(S',\mathbf{r}') = \frac{|S'|}{u'+1}\geq p$.
Let $\bar{\mathbf r}$ consist of the first $j-1$
entries of $\mathbf{r}'$. When $j>1$, the group $S'$ is solid under $\bar{\mathbf r}$ in every round $t\leq j-1$, and $\bar{\mathbf r} \in \mathcal{R}^{j-1}$. Therefore, by the induction hypothesis, the voters of $S'$ were previously charged only in the $u'$ rounds with winners that represented them.
Since the cost of every winner is $p$, they have previously paid at most $u'p$. Their remaining total budget is therefore at least $|S'|-u'p$.
For $j=1$, the same bound holds directly because $u'=0$.
Recall that $\frac{|S'|}{u'+1}\geq p$, and thus $|S'|-u'p \geq p$. 
Thus, the voters in $S'$ can pay the price $p$ for the winner in round $j$. Charge the entire price to these voters, without exceeding their remaining budgets. 

We now need to verify that the invariant remains true.
First note that after the claim $(S',\mathbf{r}')$ is selected, all subsequent refinements are subsets of $C^j(S',\mathbf{r}')$. Therefore, $w^j\in C^j(S',\mathbf{r}')$.
Consider a claim $(S, \mathbf{r})$ satisfying condition~\eqref{eq:scr-budget-condition}. If $w^j$ represents $S$, the invariant continues to hold for $S$:
since $\pi^{j-1}(S,\mathbf{r}) \ge \pi^j(S,\mathbf{r})\geq p$, the induction hypothesis applies to $S$ up to round $j-1$, and any payment made in round $j$ is for a winner that represents $S$. Otherwise, $w^j\notin C^j(S,\mathbf{r})$, so
$\mathrm{rep}^j(S,\mathbf r)=\mathrm{rep}^{j-1}(S,\mathbf r)$ and $\pi^{j-1}(S, \mathbf{r}) = \pi^j(S, \mathbf{r}) \geq p$.
We claim that $S\cap S'=\varnothing$. Suppose otherwise, and choose $i\in S\cap S'$. The two sets $C^j(S,\mathbf{r})$ and $C^j(S',\mathbf{r}')$ are both approval sets of voter $i$, so one contains the other. Recall that $w^j\notin C^j(S,\mathbf{r})$ but $w^j\in C^j(S',\mathbf{r}')$, and thus $C^j(S,\mathbf{r}) \subsetneq C^j(S',\mathbf{r}')$.
Let $S^{*}$ be the maximal group in $\mathcal V^j(\mathbf{r})$ containing $S$. Note that $S^{*}$ has the same common approval sets and representation as $S$.
In particular, $C^j(S^{*},\mathbf{r}) \subsetneq C^j(S',\mathbf{r}')$. Since $C^j(S^{*},\mathbf{r})$ is nonempty, $|C^j(S',\mathbf{r}')|\geq2$, so the refinement loop continues after $(S',\mathbf{r}')$ is selected, and the claim $(S^{*},\mathbf{r})$ is eligible at the next refinement. In addition, $\pi^{j-1}(S^{*},\mathbf{r}) \geq \pi^{j-1}(S,\mathbf{r}) \geq p$, so the next refinement would select a claim of priority at least $p$.
This contradicts the choice of
$(S',\mathbf{r}')$ as the last selected claim whose priority is at least $p$.
Thus, $S\cap S'=\varnothing$. Since only voters in $S'$ pay in round $j$, no voter in the unrepresented group $S$
is charged, and thus the invariant remains true.

For the second case, assume that every claim selected by the refinement
loop has priority below $p$. Before round $j$, the total remaining budget is
$n-(j-1)p=(|T|-j+1)p\geq p$, so the price can be paid from the remaining
budgets.

Consider any group $S$ and rank vector $\mathbf r$ satisfying the invariant's
condition after round $j$. If this round does not represent $S$, its priority
was also at least $p$ before the round, and its common approval set in round $j$
is a proper subset of $C$, since it does not contain $w^j$. The maximal group containing $S$ for
$\mathbf r$ has the same common approval sets and representation, and at
least as many voters. Its claim was therefore eligible at the first
refinement and had priority at least $p$, so the first refinement would have selected a claim of priority at least $p$, a contradiction. Thus, every group
to which the invariant applies after this round is represented in this
round. Such a group also met the priority condition before the round, since
representation counts cannot decrease. Any feasible distribution of the
price $p$ consequently preserves the invariant. This completes the induction.

After $|T|$ rounds, the total payment is $|T|p=n$. Since each voter started
with one unit and budgets remain nonnegative, every voter's budget is zero.
Suppose that the required wPSC axiom is violated by $S$, $\mathbf r\in
\mathcal R^{|T|}$, and $\ell\in\mathbb N$. Then $|S|\geq\ell p$ and
$\mathrm{rep}^{|T|}(S,\mathbf r)\leq\ell-1$, so
\[
\pi^{|T|}(S,\mathbf r)
=\frac{|S|}{\mathrm{rep}^{|T|}(S,\mathbf r)+1}\geq p.
\]
The invariant implies that $S$ paid only in its represented rounds. Its
members therefore paid at most $(\ell-1)p<|S|$, contradicting their zero
remaining budget. Constant rank vectors give fixed-rank-wPSC, and arbitrary
rank vectors give rank-wPSC.
\end{proof}

For each rank vector $\mathbf r\in\mathcal R^j$, the groups in $\mathcal V^j(\mathbf r)$ partition the voters, so $|\mathsf{Claims}^j|\leq n|\mathcal R^j|$; these groups are found by grouping the voters according to $(A_i^1(r^1),\dots,A_i^j(r^j))$. The refinement loop has at most $m-1$ iterations, and each iteration scans $\mathsf{Claims}^j$. Hence, round $j$ takes time $|\mathcal R^j|\cdot\mathrm{poly}(n,m,j)$. For fixed-rank-wPSC, this is polynomial. For rank-wPSC, it is $m^j\cdot\mathrm{poly}(n,m,j)$, so this version is not a polynomial-time rule. If the horizon is known, Algorithm~\ref{alg:MES1} satisfies rank-wPJR, and hence rank-wPSC, in polynomial time.

Unlike MES, SCR needs no change to the rule to obtain the Droop
size condition: its budgets are used only in the proof.

\begin{corollary}
\label{cor:scr-droop}
Each version of Algorithm~\ref{alg:online-scr} satisfies its corresponding
weak PSC axiom with the Droop size condition.
\end{corollary}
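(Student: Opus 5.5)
The plan is to rerun the virtual-budget argument from the proof of Theorem~\ref{thm:online-scr}, changing only the price. Fix a horizon $|T|$ and set $p=n/(|T|+1)$ instead of $n/|T|$. Every voter again starts with one unit, so the total budget is $n=(|T|+1)p$. The rule itself is unchanged, since $p$ appears only in the analysis.

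We keep the same invariant. After round $j$, every group $S$ that is solid under $\mathbf r\in\mathcal R^j$ in all rounds so far and has $\pi^j(S,\mathbf r)\geq p$ has been charged only in rounds that represent it. The inductive step carries over word for word.

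\emph{Case 1: some selected claim $(S',\mathbf r')$ has priority at least $p$.} Take the last such claim. By the invariant, $S'$ has paid at most $u'p$, so it retains at least $|S'|-u'p\geq p$. It can therefore buy $w^j$, and the disjointness argument shows that no unrepresented high-priority group pays.

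\emph{Case 2: every selected claim has priority below $p$.} We need the total remaining budget to be at least $p$. Before round $j$ it is $n-(j-1)p=(|T|+2-j)p\geq 2p$, which is more than enough. Every high-priority group is represented in this round, as before. So the invariant survives all $|T|$ rounds, and the payments total $|T|p=n-p$. The electorate therefore finishes with exactly $p$ units left over.

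For the conclusion, suppose the Droop version of the weak PSC axiom fails. Then there are $S$, $\mathbf r\in\mathcal R^{|T|}$ with $S$ solid in every round, and $\ell$ with $|S|>\ell p$ but $\mathrm{rep}^{|T|}(S,\mathbf r)\leq\ell-1$. Then $\pi^{|T|}(S,\mathbf r)\geq |S|/\ell>p$, so the invariant applies and $S$ has paid at most $(\ell-1)p$. Hence $S$ retains more than $|S|-(\ell-1)p>p$ units, which exceeds the total leftover $p$ of the whole electorate. This is a contradiction.

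The main point to check is exactly this final comparison. Unlike in Theorem~\ref{thm:online-scr}, budgets are not exhausted at the end. So the contradiction comes from comparing the group's leftover with the global leftover, in the style of Corollary~\ref{cor:mes-droop}, and it relies on the strict inequality in the Droop condition.
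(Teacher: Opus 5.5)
Your proposal is correct and matches the paper's proof essentially verbatim: rerun the payment argument of Theorem~\ref{thm:online-scr} with $p=n/(|T|+1)$, observe that the electorate ends with exactly $p$ units, and derive a contradiction from a violating group that retains more than $p$. One small wording slip: $S$ retains \emph{at least} $|S|-(\ell-1)p$, and this quantity exceeds $p$; it does not retain strictly more than $|S|-(\ell-1)p$, but the conclusion is unaffected.
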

\begin{proof}
Use the payment argument of Theorem~\ref{thm:online-scr} with
$p=n/(|T|+1)$ instead of $n/|T|$. The proof of the invariant is unchanged,
and the total remaining budget before every round is at least $p$.
After $|T|$ rounds the electorate retains $n-|T|p=p$ units.
If a solid group $S$ with $|S|>\ell p$ is represented in only
$q\leq\ell-1$ rounds, its final priority exceeds $p$. By the invariant,
it has paid only in those $q$ rounds and retains
$|S|-qp>(\ell-q)p\geq p$ units. This exceeds the electorate's remaining
budget, a contradiction.
\end{proof}

The constant one in this quota cannot be increased uniformly over all
profiles. At horizon one, take two equally sized groups with opposite first
choices. For any $\beta>1$, the entitlement
$\lceil(1+\beta)|S|/n\rceil-1$ would require both groups to be represented
at cutoff one, which a single winner cannot achieve.

Section~\ref{sec:structured} shows that a common single-crossing order
permits a polynomial-time online rule for rank-wPSC. Theorem~\ref{thm:truncated-common-sc-scr}
extends that result to truncated ballots and also gives polynomial-time
verification on this domain.

\paragraph{Individual cutoffs.}
Requiring agreement in every round makes rank-wPJR attainable, but allowing a different cutoff for each voter can make even all-rank-wJR unattainable. To see why, consider rankings under which the only candidates that are Pareto efficient for a group are the first choices of its members. By Lemma~\ref{lem:all-rank-minimum-representation}, such a group is then guaranteed representation only in rounds in which the first choice of one of its members wins.

Let $c_1,\dots,c_d$ be candidates. A ranking is \emph{cyclic} (on $c_1,\dots,c_d$) if, for some $x\in[d]$, it ranks
\[
 c_x\succ c_{x+1}\succ\cdots\succ c_{x+d-1}
\]
above all other candidates, with indices read modulo $d$. For a group whose rankings are all cyclic, Pareto efficiency is easy to describe.

\begin{lemma}
\label{lem:cyclic}
Suppose that, in round $j$, every member of a group $S$ has a cyclic ranking on $c_1,\dots,c_d$. Then a candidate is Pareto efficient for $S$ in round $j$ if and only if it is the first choice of some member of $S$.
\end{lemma}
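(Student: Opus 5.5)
The plan is to prove the two directions separately. The ``if'' direction is immediate from strictness: a member's first choice cannot be Pareto dominated for $S$, since that member ranks it above every other candidate. The substance is the ``only if'' direction. We must show that every candidate that is nobody's first choice in $S$ is Pareto dominated for $S$.

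First, I would dispose of candidates outside $\{c_1,\dots,c_d\}$. Every cyclic ranking places $c_1,\dots,c_d$ above all other candidates. Hence any outside candidate $c'$ is Pareto dominated, for instance by $c_1$, since every member ranks $c_1\succ c'$. The case $S\neq\varnothing$ is given, because groups are nonempty.

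It remains to treat a candidate $c_y$ in the cycle that is no member's first choice. For each member $i$, write $x_i$ for the index of her first choice, so her top block reads $c_{x_i}\succ c_{x_i+1}\succ\cdots\succ c_{x_i+d-1}$ with indices modulo $d$. The natural dominator is the cyclic predecessor $c_{y-1}$. For member $i$, the position of $c_y$ in her ranking is determined by the cyclic distance from $x_i$ to $y$, which is $(y-x_i)\bmod d$. Since $c_y$ is not her first choice, $x_i\neq y$, so this distance lies in $\{1,\dots,d-1\}$. The predecessor $c_{y-1}$ then sits at distance one less, so it comes exactly one position earlier in her ranking. In other words, within the top block it is the candidate immediately above $c_y$. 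Thus $c_{y-1}\succ_i c_y$ for every $i\in S$, and $c_y$ is Pareto dominated. (When $d=1$, every member's first choice is $c_1$, so this case is vacuous.)

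The only delicate step is the modular bookkeeping. We must check that ``$c_y$ not first'' is exactly what guarantees that its predecessor does not wrap around to the bottom of the block. This holds because wrapping happens only when $c_y=c_{x_i}$. Beyond this index check, I expect no real obstacle.
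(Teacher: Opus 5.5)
Your proof is correct and follows the paper's three-case structure: first choices are undominated, candidates outside $\{c_1,\dots,c_d\}$ are dominated, and a cycle candidate that is no member's first choice is dominated. The only difference is the dominator in the last case. You use the immediate cyclic predecessor $c_{y-1}$, which every member ranks directly above $c_y$, whereas the paper uses the last member's first choice preceding $c_y$ in the cyclic order; both work, and yours makes the index check slightly more direct.
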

\begin{proof}
A member's first choice is not Pareto dominated, since she ranks it above every other candidate. Every candidate outside $\set{c_1,\dots,c_d}$ is Pareto dominated by any member's first choice. Finally, let $c_a$ be a candidate that is not the first choice of any member, and let $c_b$ be the last candidate before $c_a$, in the cyclic order $c_1,\dots,c_d$, that is the first choice of a member. No member's ranking starts strictly after $c_b$ and at or before $c_a$ in this order, so every member ranks $c_b$ above $c_a$. Thus $c_b$ Pareto dominates $c_a$.
\end{proof}

If every round has cyclic rankings on a common set of candidates for that
round, Lemma~\ref{lem:cyclic} gives
$D_W(S)+F_W(S)=|T|$ for every group $S$.
Theorem~\ref{thm:all-rank-characterization} and
Lemma~\ref{lem:all-rank-minimum-representation} then imply that
all-rank-PJR and all-rank-wPJR coincide on these profiles. The same is true
of all-rank-JR and all-rank-wJR. Thus, for the cyclic examples below, allowing
agreement in only some rounds does not strengthen the requirement.

\begin{example}
\label{thm:all-rank-small-example}
Let $n=m=4$ and $|T|=2$, and let the profile be static, with voter $i$ ranking $c_i\succ c_{i+1}\succ c_{i+2}\succ c_{i+3}$ (indices modulo four). Every sequence leaves at least two candidates unselected. Let $S$ consist of the two voters whose first choices are unselected. By Lemma~\ref{lem:cyclic}, neither winner is Pareto efficient for $S$. Since $|S|=n/|T|$, Lemma~\ref{lem:all-rank-minimum-representation} yields a violation of all-rank-wJR. Hence, no version of an all-rank axiom can always be satisfied.
\end{example}

\subsection{Optimal Additive Guarantees}
\label{sec:all-rank}
\label{sec:electorate-size}
The preceding example rules out an exact all-rank guarantee on every profile. We now determine how many represented rounds may be lost for a given electorate size and horizon. The upper bound uses only first choices and is achieved online; the lower bound holds even when the entire profile is known.

We say that $W$ satisfies all-rank-PJR \emph{with additive loss $L$}, where $L\geq0$ is an integer, if, for every rank matrix $R$, every group $S$ that agrees in $k>0$ rounds under $A(R)$ is represented in at least $\lfloor k|S|/n\rfloor-L$ rounds. For all-rank-wPJR, this requirement is restricted to groups that agree in every round. A loss of zero gives the corresponding exact axiom.

Selecting a voter's first choice represents her under every cutoff. Thus, the following sufficient condition provides a guarantee for all-rank-PJR without having to consider the cutoffs or the rounds of agreement separately.

\begin{lemma}
\label{lem:first-choice}
Let $L\geq0$ be an integer. Suppose that, for every group $S$, in at least $\lfloor |S||T|/n\rfloor-L$ rounds the winner is the first choice of some member of $S$, that is,
\[
 \left|\set{j\in T:w^j\in\bigcup_{i\in S}A_i^j(1)}\right|\geq\left\lfloor\frac{|S||T|}{n}\right\rfloor-L.
\]
Then $W$ satisfies all-rank-PJR with additive loss $L$.
\end{lemma}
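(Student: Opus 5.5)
The plan is to fix an arbitrary rank matrix $R$, a group $S$, and the number $k>0$ of rounds in which $S$ agrees under $A(R)$. We then show that $S$ is represented in at least $\lfloor k|S|/n\rfloor-L$ rounds. The argument is a direct monotonicity chain with two steps: pass from agreement rounds to the horizon in the entitlement, and pass from first-choice representation to representation under $R$.

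First, compare entitlements. Since $k\leq|T|$, we have $\lfloor k|S|/n\rfloor\leq\lfloor |S||T|/n\rfloor$, so
\[
\left\lfloor\frac{k|S|}{n}\right\rfloor-L\leq\left\lfloor\frac{|S||T|}{n}\right\rfloor-L .
\]
Thus the hypothesis, stated with the full horizon, is at least as strong as what we need for any $k$.

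Second, compare representation. Every cutoff satisfies $r_i^j\geq1$, and the approval sets are nested, so $A_i^j(1)\subseteq A_i^j(r_i^j)$ for every $i$ and $j$. Hence every round whose winner lies in $\bigcup_{i\in S}A_i^j(1)$ also has its winner in $\bigcup_{i\in S}A_i^j(r_i^j)$. Such a round is therefore a round in which $S$ is represented under $A(R)$, regardless of whether $S$ agrees in it. Representation in the PJR definition is counted over all rounds of $T$, so the number of represented rounds under $A(R)$ is at least the first-choice count. By the hypothesis, that count is at least $\lfloor |S||T|/n\rfloor-L$.

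Chaining the two inequalities gives the claim for every $R$, $S$, and $k$, and hence all-rank-PJR with additive loss $L$. The weak version follows as the special case $k=|T|$. No step is a real obstacle. The only point needing care is that representation is counted over all rounds rather than only the agreement rounds; this is what lets the full-horizon first-choice count serve for every $k$.
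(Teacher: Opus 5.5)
Your proof is correct and matches the paper's argument: every first-choice round represents $S$ under every rank matrix because $A_i^j(1)\subseteq A_i^j(r_i^j)$, and $k\leq|T|$ gives $\lfloor k|S|/n\rfloor\leq\lfloor |S||T|/n\rfloor$. The paper states these same two facts in two sentences.
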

\begin{proof}
Every round counted on the left-hand side represents $S$ under every rank matrix. The required bound follows from $k\leq|T|$.
\end{proof}

Note that Lemma~\ref{lem:first-choice} does not require the group to agree in any round.

Consider the \emph{serial dictatorship} rule that cycles through the voters in a fixed order and, in each round, selects the current voter's first choice. When $n$ divides $|T|$, every voter chooses equally often, and Lemma~\ref{lem:first-choice} gives all-rank-PJR. Otherwise, only the last incomplete cycle can leave a group short of its proportional share. The following theorem bounds the resulting loss and shows that it is optimal.

\begin{theorem}
\label{thm:all-rank-additive}
Write $|T|=qn+s$, where $0\leq s<n$. Serial dictatorship is an online rule that satisfies all-rank-PJR with additive loss $\lfloor s(n-s)/n \rfloor$.
No smaller additive loss is possible for all profiles with $n$ voters and horizon $|T|$, even for an offline rule and all-rank-wPJR. The lower bound holds on a static profile with $m=n$.
\end{theorem}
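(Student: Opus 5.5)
The proof has two parts. For the upper bound I reduce serial dictatorship to Lemma~\ref{lem:first-choice} and bound a counting expression. For the lower bound I use the cyclic profile of Example~\ref{thm:all-rank-small-example} generalized to $n$ voters, together with Lemma~\ref{lem:cyclic} and Lemma~\ref{lem:all-rank-minimum-representation}.

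\emph{Upper bound.} Serial dictatorship is online, since round $j$ depends only on $j$ and $P^j$. Fix a group $S$ with $|S|=k$. In the first $qn$ rounds each voter dictates exactly $q$ times. In the last $s$ rounds the first $s$ voters in the order each dictate once more. So the winner is the first choice of some member of $S$ in at least $qk+\max\{0,k-(n-s)\}$ rounds, the worst case being when the members of $S$ occupy the last positions of the order. By Lemma~\ref{lem:first-choice}, it suffices to show that
\[
\left\lfloor\frac{k(qn+s)}{n}\right\rfloor-qk-\max\{0,k-n+s\}
=\left\lfloor\frac{ks}{n}\right\rfloor-\max\{0,k-n+s\}
\le\left\lfloor\frac{s(n-s)}{n}\right\rfloor .
\]
I split into two cases.
\begin{itemize}
\item If $k\le n-s$, the left side is $\lfloor ks/n\rfloor\le\lfloor s(n-s)/n\rfloor$.
\item If $k>n-s$, the integer term can be moved inside the floor. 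The left side becomes $\lfloor ks/n-k+n-s\rfloor=\lfloor (n-k)(n-s)/n\rfloor$, and $n-k<s$ gives the bound.
\end{itemize}

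\emph{Lower bound.} If $s=0$ there is nothing to prove, so assume $s\ge1$. Take $m=n$ and the static profile in which voter $i$ ranks $c_i\succ c_{i+1}\succ\cdots\succ c_{i+n-1}$, with indices modulo $n$. Fix any sequence $W$, offline or not. Let $x_c$ be the number of rounds won by $c$, so that $\sum_c x_c=qn+s$, and sort these values as $a_1\le\cdots\le a_n$.

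The key integrality step is the bound $a_1+\cdots+a_{n-s}\le q(n-s)$. Suppose instead that this sum is at least $q(n-s)+1$. Then $a_{n-s}\ge q+1$, so each of the top $s$ values is at least $q+1$. The total would then be at least $q(n-s)+1+s(q+1)>qn+s$, a contradiction.

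Now let $S$ be the $n-s$ voters whose first choices are these least-selected candidates. At cutoff $n$ for everyone, every group agrees in every round, so the axiom applies to $S$. By Lemma~\ref{lem:cyclic}, the winner is Pareto efficient for $S$ exactly in the rounds won by a member's first choice, and there are at most $q(n-s)$ such rounds. By Lemma~\ref{lem:all-rank-minimum-representation}, some rank matrix under which $S$ agrees in every round represents $S$ in at most $q(n-s)$ rounds. Its entitlement is $\lfloor (n-s)(qn+s)/n\rfloor=q(n-s)+\lfloor s(n-s)/n\rfloor$, so the loss is at least $\lfloor s(n-s)/n\rfloor$. This holds for all-rank-wPJR, on a static profile with $m=n$.

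The main obstacle I expect is the integrality argument for the least-selected block. Plain averaging gives only $(n-s)|T|/n$, which is too weak, so the proof must use that the $x_c$ are integers. The case analysis in the upper bound is routine once the worst-case placement of $S$ in the order is identified.
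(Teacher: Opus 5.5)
Your proposal is correct and follows the paper's proof essentially step by step. For the upper bound, you reduce serial dictatorship to Lemma~\ref{lem:first-choice} with the same count $q|S|+\max\{0,|S|-(n-s)\}$ and the same two-case floor bound; for the lower bound, you use the same cyclic profile, the same integrality argument on the $n-s$ least-selected candidates, and Lemmas~\ref{lem:cyclic} and~\ref{lem:all-rank-minimum-representation}.
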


\begin{proof}
After $|T|=qn+s$ rounds, $s$ voters have chosen $q+1$ times and the other $n-s$ voters have chosen $q$ times. Thus, for every group $S$, the number of rounds in which a member of $S$ chooses the winner is at least $q|S|+\max\set{0,\,|S|-(n-s)}$.
Since $\lfloor |S||T|/n\rfloor=q|S|+\lfloor s|S|/n\rfloor$, the shortfall with respect to Lemma~\ref{lem:first-choice} is at most $\lfloor s|S|/n \rfloor-\max\set{0,\,|S|-(n-s)}$.
For $|S|\leq n-s$, this is at most $\lfloor s(n-s)/n\rfloor$. For $|S|\geq n-s$, it equals $\lfloor (n-s)(n-|S|)/n\rfloor$, which has the same upper bound because $n-|S|\leq s$. Lemma~\ref{lem:first-choice} proves the upper bound. The rule does not use the horizon, so it is online.

For the lower bound, let $C=\set{c_1,\dots,c_n}$ and consider the static profile in which each voter $i$ has the cyclic ranking
\begin{equation}
 c_i\succ_i c_{i+1}\succ_i\cdots\succ_i c_{i+n-1},
 \label{eq:cyclic-profile}
\end{equation}
with indices read modulo $n$. If $s=0$, the claimed loss is zero, so suppose that $s>0$. Fix any sequence $W$, let $a_i$ be the number of rounds in which $c_i$ wins, and let $S$ consist of the $n-s$ voters with the smallest values of $a_i$. Then $\sum_{i\in S}a_i\leq q(n-s)$.
Indeed, if the sum were at least $q(n-s)+1$, some $a_i$ with $i\in S$ would be at least $q+1$, and so would each of the other $s$ values. The total number of rounds would then be at least $q(n-s)+1+s(q+1)=|T|+1$, a contradiction.

By Lemma~\ref{lem:cyclic}, the winner is Pareto efficient for $S$ in exactly $\sum_{i\in S}a_i$ rounds. By Lemma~\ref{lem:all-rank-minimum-representation}, some rank matrix makes $S$ agree in every round and represents it in only this many rounds, whereas
\[
 \left\lfloor\frac{|S||T|}{n}\right\rfloor
 =q(n-s)+\left\lfloor\frac{s(n-s)}{n}\right\rfloor.
\]
Hence, even all-rank-wPJR can require the stated loss.
\end{proof}

\begin{corollary}
\label{cor:all-rank-existence}
Fix $n$ and $|T|$, and allow the set of candidates to vary. Every profile admits a sequence satisfying all-rank-PJR if and only if $|T|\bmod n\in\set{0,1,n-1}$.
Every profile admits a sequence satisfying all-rank-JR if and only if $|T|=1$ or $|T|\geq n-1$. Both characterizations also hold for the corresponding weak axioms, and in each positive case serial dictatorship suffices.
\end{corollary}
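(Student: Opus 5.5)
The plan is to read both characterizations off Theorem~\ref{thm:all-rank-additive}, together with a direct cyclic construction for JR. Throughout I use one observation: each positive case is shown for the non-weak axiom, by serial dictatorship. Each negative case is shown for the weak axiom on a static cyclic profile. Since the non-weak axioms imply the weak ones, the two characterizations then hold simultaneously for the weak and non-weak versions.

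\emph{PJR.} Write $|T|=qn+s$ with $0\le s<n$. By Theorem~\ref{thm:all-rank-additive}, serial dictatorship attains loss $\lfloor s(n-s)/n\rfloor$, and on the static cyclic profile~\eqref{eq:cyclic-profile} no sequence achieves a smaller loss, even for all-rank-wPJR. So exact all-rank-PJR is always attainable if and only if $s(n-s)<n$, and I will check that this means $s\in\{0,1,n-1\}$.
\begin{itemize}
\item For $s\in\{0,1,n-1\}$, the product is $0$ or $n-1$, which is less than $n$.
\item For $2\le s\le n-2$ (possible only when $n\ge4$), the product is concave in $s$. Its minimum on this range is $2(n-2)\ge n$.
\end{itemize}

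\emph{JR, positive cases.} I use the first-choice argument of Lemma~\ref{lem:first-choice} with $\ell=1$. Any group $S$ that must be represented agrees in some $k$ rounds with $|S|\ge n/k\ge n/|T|$, so it suffices that some member of $S$ is a dictator at least once.
\begin{itemize}
\item If $|T|\ge n$, every voter dictates at least once, so every group is covered.
\item If $|T|=n-1$, the condition $|S|\ge n/(n-1)>1$ forces $|S|\ge2$. Only one voter never dictates, so $S$ contains a dictator.
\item If $|T|=1$, the only protected group is $V$, which contains the unique dictator.
\end{itemize}

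\emph{JR, negative case.} This is the only new construction. Take $2\le |T|\le n-2$ (so $n\ge4$) and the static cyclic profile with $m=n$. Any sequence selects at most $|T|$ candidates, so at least $n-|T|$ first choices are never selected. I will choose $S$ to be $\lceil n/|T|\rceil$ of the voters with unselected first choices; this needs $\lceil n/|T|\rceil\le n-|T|$.
\begin{itemize}
\item Checking the inequality is the main (mild) obstacle. Set $f(t)=n-t-n/t$, which is concave on $[2,n-2]$, so it suffices to check the endpoints.
\item At $t=2$, the inequality reads $\lceil n/2\rceil\le n-2$, which holds for $n\ge4$.
\item At $t=n-2$, it reads $\lceil n/(n-2)\rceil\le2$, which also holds for $n\ge4$.
\item Since $n-t$ is an integer, $f(t)\ge0$ gives $n/t\le n-t$, and hence $\lceil n/t\rceil\le n-t$.
\end{itemize}
By Lemma~\ref{lem:cyclic}, no winner is Pareto efficient for $S$. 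Since $|S|\ge n/|T|$, Lemma~\ref{lem:all-rank-minimum-representation} gives a rank matrix under which $S$ agrees in every round but is never represented. This violates all-rank-wJR, completing the characterization.
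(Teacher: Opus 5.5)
Your proof is correct and follows the paper's argument. For PJR you read the answer off Theorem~\ref{thm:all-rank-additive} with the same check $s(n-s)<n$, and for the JR lower bound you use the same cyclic profile together with Lemmas~\ref{lem:cyclic} and~\ref{lem:all-rank-minimum-representation}. The differences are cosmetic: you verify the cases $|T|\in\{1,n-1\}$ directly rather than deducing them from the PJR result, and you take $\lceil n/|T|\rceil$ voters with unselected first choices rather than all of them, which reduces to the same inequality $(n-|T|)|T|\geq n$.
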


\begin{proof}
For PJR and wPJR, Theorem~\ref{thm:all-rank-additive} gives universal existence exactly when $s(n-s)<n$. This holds for $s\in\set{0,1,n-1}$. If $2\leq s\leq n-2$, then $n\geq4$ and $s(n-s)\geq2(n-2)\geq n$.

For JR and wJR, the cases $|T|=1$ and $|T|=n-1$ follow from the PJR result. If $|T|\geq n$, serial dictatorship selects every voter's first choice at least once, so every group is represented under every rank matrix. Conversely, suppose that $2\leq|T|\leq n-2$, and consider the profile~\eqref{eq:cyclic-profile}. Every sequence leaves at least $n-|T|$ candidates unselected. Let $S$ consist of the voters whose first choices are these candidates. By Lemma~\ref{lem:cyclic}, no winner is Pareto efficient for $S$. Moreover, $n\geq4$ and $|S||T|\geq(n-|T|)|T|\geq2(n-2)\geq n$.
Lemma~\ref{lem:all-rank-minimum-representation} therefore yields a violation of all-rank-wJR.
\end{proof}

The two characterizations differ markedly: all-rank-JR can always be satisfied once $|T|\geq n-1$, whereas the existence of an all-rank-PJR sequence depends on the remainder of $|T|$ modulo $n$. The additive loss in Theorem~\ref{thm:all-rank-additive} does not depend on $q$, but can grow linearly with $n$. We next obtain bounds that depend instead on the preferences.

\section{Stronger Guarantees under Preference Restrictions}
\label{sec:preference-structure}
The general results leave two questions: can restrictions on preferences give exact all-rank guarantees, and can they make the available rules more efficient? We consider two restrictions with different effects. Bounding the number of distinct first choices gives an optimal additive guarantee for all-rank-PJR, which is exact with at most three first choices. Single-peaked and single-crossing preferences allow any number of first choices and give all-rank-wPJR for groups that agree throughout the horizon. Both exact guarantees are achieved by polynomial-time semi-online rules. A common single-crossing order also permits an efficient online rule for solid coalitions.

\subsection{Few First Choices}
\label{sec:first-choices}
The unrestricted additive bound depends on the number of voters. We now bound the loss by the number of distinct first choices instead: throughout this subsection, $d$ is a positive integer, and in every round at most $d$ candidates are ranked first by some voter. This restriction concerns only the top of the rankings. There may be arbitrarily many candidates, the rest of each ranking is unrestricted, and neither the rankings nor the set of first choices needs to stay the same over time.

\paragraph{Proportional justified representation.}
By Lemma~\ref{lem:first-choice}, it suffices to select, for every group, enough first choices of its members. A budget interpretation suggests a simple rule. Give each voter $|T|/n$ units of budget, and charge her only when her first choice wins. In each round, select a candidate for which the voters who rank it first have the largest total remaining budget, and collect one unit from these voters, or all they have left if this is less. Since each round collects at most one unit, the amount a group pays is at most the number of rounds in which a member's first choice wins. It therefore suffices to show that the total unspent budget is small.

\begin{algorithm}[htbp]
\caption{First-choice budget rule}
\label{alg:first-choice-budgets}
$b_i\gets |T|/n$ for every $i\in V$\;
\For{$j\gets1$ \KwTo $|T|$}{
    choose $w^j$ maximizing $\sum_{i\in N_c^j(1)}b_i$ over candidates $c$ with $N_c^j(1)\neq\varnothing$, breaking ties by a fixed candidate order\;
    deduct a total of $\min\set{1,\sum_{i\in N_{w^j}^j(1)}b_i}$ from the voters in $N_{w^j}^j(1)$ in the order of their indices, exhausting each budget before moving to the next voter\;
}
\KwRet{$W$}\;
\end{algorithm}

\begin{samepage}
\begin{theorem}
\label{thm:first-choice-loss}
For $d\geq2$, let $\Delta_d=\left\lfloor(d-1)\left(1-\frac1d\right)^{d-1}\right\rfloor$.
If at most $d$ candidates are ranked first in each round, Algorithm~\ref{alg:first-choice-budgets} is a polynomial-time semi-online rule such that, for every group $S$, the winner is the first choice of some member of $S$ in at least $\lfloor|S||T|/n\rfloor-\Delta_d$ rounds. Consequently, it satisfies all-rank-PJR with additive loss $\Delta_d$. No smaller additive loss that depends only on $d$ suffices for all numbers of voters and all horizons, even for an offline rule and all-rank-wPJR.
\end{theorem}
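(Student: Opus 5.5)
The plan is to prove the upper bound by a budget argument on Algorithm~\ref{alg:first-choice-budgets}, combined with Lemma~\ref{lem:first-choice}, and the lower bound by a product construction with changing cyclic rankings.

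\emph{Upper bound.} A voter is charged only in rounds in which her first choice wins, and each round collects at most one unit. Hence a group $S$ pays at most the number $f(S)$ of rounds in which the winner is the first choice of a member. Let $R$ be the total budget left unspent after round $|T|$. The group $S$ starts with $|S||T|/n$ units and ends with at most $R$, so $f(S)\geq |S||T|/n-R$. Since $f(S)$ is an integer, $f(S)\geq\lceil |S||T|/n-R\rceil\geq\lfloor|S||T|/n\rfloor-\lfloor R\rfloor$. It therefore suffices to prove $R\leq (d-1)(1-1/d)^{d-1}$; Lemma~\ref{lem:first-choice} then gives all-rank-PJR with loss $\Delta_d$. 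Polynomial time and the semi-online property are immediate, since the rule uses $|T|$ only for the initial budgets.

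\emph{Bounding $R$ (main step).} Let $B_j$ be the total budget before round $j$. Every voter's budget lies in one of at most $d$ first-choice classes, so the selected class holds at least $B_j/d$. Consequently, the round is \emph{full} (collects one unit) whenever $B_j\geq d$, and otherwise it collects at least $\min\{1,B_j/d\}$. Let $j_1$ be the first round that collects less than one unit; if there is none, then $R=0$. Before $j_1$ every round is full, so $B_{j_1}=k:=|T|-j_1+1$, and $k<d$ because that round is partial. From then on each round maps $x$ to at most $\max\{x-1,\,x(1-1/d)\}$, and this maximum equals $x(1-1/d)$ because $x<d$ throughout, as $B$ is nonincreasing. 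The map $x\mapsto x(1-1/d)$ is monotone, so an exchange/induction argument gives $R\leq k(1-1/d)^{k}$. Over integers $k\leq d-1$, this expression is maximized at $k=d-1$, where it equals $(d-1)(1-1/d)^{d-1}$. I expect the care to lie here: justifying the per-round bound including rounds in which the collection is exactly one with $B_j<d$, and the monotonicity argument across such mixed rounds.

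\emph{Lower bound.} The plan is to take $|T|=d-1$, candidates $c_1,\dots,c_d$, and $n=d^{d-1}$ voters indexed by vectors $v\in[d]^{d-1}$. In round $t$, voter $v$ has the cyclic ranking on $c_1,\dots,c_d$ that starts at $c_{v_t}$, so each round has exactly $d$ first choices. Fix any sequence $W$ with $w^t=c_{x_t}$, and let $S=\{v: v_t\neq x_t \text{ for all } t\}$, so $|S|=n(1-1/d)^{d-1}$. In each round, Lemma~\ref{lem:cyclic} shows that $w^t$ is not Pareto efficient for $S$, since it is no member's first choice. Lemma~\ref{lem:all-rank-minimum-representation} then gives a rank matrix under which $S$ agrees in every round and is never represented. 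Its entitlement is $\lfloor |S||T|/n\rfloor=\lfloor (d-1)(1-1/d)^{d-1}\rfloor=\Delta_d$. So loss $\Delta_d$ is needed even offline and even for all-rank-wPJR, and any smaller loss depending only on $d$ fails.
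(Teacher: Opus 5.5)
Your proof is correct and follows the paper's argument. Both use the same budget accounting, in which the selected first-choice class holds at least $B/d$, then Lemma~\ref{lem:first-choice}, and for the lower bound the same product construction with cyclic rankings at horizon $|T|=d-1$. The only cosmetic difference is in bounding the unspent budget: you locate the first partial round and note that every later round shrinks the total by a factor $1-1/d$, whereas the paper iterates the increasing map $B\mapsto B-\min\{1,B/d\}$ from $|T|$, and both give the bound $(d-1)(1-1/d)^{d-1}$.
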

\end{samepage}

\begin{proof}
Suppose that the total remaining budget before a round is $B$. The nonempty sets $N_c^j(1)$ partition the voters into at most $d$ groups, so the group selected by the rule has at least $B/d$ units and pays at least $\min\set{1,B/d}$. The total after the round is therefore at most $B-\min\set{1,B/d}$, which is an increasing function of $B$. Starting from $|T|$ and applying this bound in each of the $|T|$ rounds, the final total is at most
\[
\begin{cases}
 |T|\left(1- 1/d\right)^{|T|}, & |T|\leq d-1,\\[4pt]
 (d-1)\left(1- 1/d\right)^{d-1}, & |T|\geq d-1.
\end{cases}
\]
In the upper-bound recurrence for the second case, the first $|T|-d+1$ steps each subtract one unit, leaving $d-1$, and each of the remaining $d-1$ steps multiplies the total by $1-1/d$. The first case is no larger than the second, since $t(1-1/d)^t$ is nondecreasing for integers $1\leq t\leq d-1$.

Every group $S$ starts with $|S||T|/n$ units and ends with at most the final total. It therefore pays at least $\frac{|S||T|}{n}-(d-1)(1-\frac1d )^{d-1}$.
Only rounds in which a member's first choice wins collect money from $S$, and each collects at most one unit. The number of such rounds is thus at least the ceiling of this expression, which is at least $\lfloor|S||T|/n\rfloor-\Delta_d$. Lemma~\ref{lem:first-choice} then gives all-rank-PJR with additive loss $\Delta_d$. The rule uses only the current first choices and budgets, and each round takes polynomial time.

For the lower bound, fix a horizon $|T|$, and let $C=\set{c_1,\dots,c_d}$. There is one voter for each tuple $(x_1,\dots,x_{|T|})\in[d]^{|T|}$, so $n=d^{|T|}$. In round $j$, this voter has the cyclic ranking starting at $c_{x_j}$. Fix any sequence $W$, and let $S$ consist of the voters whose first choice differs from the winner in every round. Exactly one value of each coordinate is excluded, so $\frac{|S|}{n}=(1-1/d)^{|T|}$.
By Lemma~\ref{lem:cyclic}, the winner is never Pareto efficient for $S$. By Lemma~\ref{lem:all-rank-minimum-representation}, some rank matrix makes $S$ agree in every round without ever representing it. Hence, the additive loss must be at least $\lfloor |T|(1-1/d)^{|T|}\rfloor$, which equals $\Delta_d$ for $|T|=d-1$.
\end{proof}

The loss $\Delta_d$ depends on the number of first choices rather than on the number of voters, and $\Delta_d=d/e+O(1)$. In particular, $\Delta_2=\Delta_3=0$ and $\Delta_4=1$: three first choices per round suffice for exact all-rank-PJR, whereas four do not. Theorem~\ref{thm:first-choice-loss} complements Theorem~\ref{thm:all-rank-additive}, which can give zero loss at particular horizons even when $d$ is large.

\begin{corollary}
\label{cor:three-four-first-choices}
Fix $d$ and $|T|$, and allow the number of voters and the set of candidates to vary. Every profile with at most $d$ first choices per round admits a sequence satisfying all-rank-PJR if and only if $|T|=1$ or $d\leq3$.
The same characterization holds for all-rank-wPJR. In the positive cases, a polynomial-time semi-online rule suffices. For every $|T|\geq2$, the negative case holds on a static profile with four candidates and $n=2|T|$ voters.
\end{corollary}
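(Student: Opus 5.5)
The positive direction follows directly from Theorem~\ref{thm:first-choice-loss}, and the negative direction needs a new static construction with four candidates. I would handle three cases: $|T|=1$, $d\leq3$, and $d\geq4$ with $|T|\geq2$.

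\emph{Positive cases.} If $d\leq3$, then $d=1$ trivially gives zero loss, since the unique first choice can always be selected. For $d\in\{2,3\}$, Theorem~\ref{thm:first-choice-loss} gives additive loss $\Delta_d=0$, so Algorithm~\ref{alg:first-choice-budgets} is a polynomial-time semi-online rule satisfying all-rank-PJR, and hence all-rank-wPJR. If $|T|=1$, Theorem~\ref{thm:all-rank-additive} applies with $|T|\bmod n\in\{0,1\}$ (using $n=1$ or $s=1$), so the loss $\lfloor s(n-s)/n\rfloor$ is zero and serial dictatorship suffices. Alternatively, $\lfloor |S|/n\rfloor$ is positive only for $S=V$, and any first choice represents $V$.

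\emph{Negative case.} Fix $d\geq4$ and $|T|\geq2$. A profile with four first choices also has at most $d$ first choices, so it suffices to use $C=\{c_1,\dots,c_4\}$ with the cyclic rankings of Lemma~\ref{lem:cyclic} and $n=2|T|$ voters. Place $|T|/2$ voters on each cyclic ranking when $|T|$ is even. When $|T|$ is odd, $n=2|T|$ is not divisible by $4$, so I would use multiplicities $a_1,\dots,a_4$ summing to $2|T|$, as balanced as possible (each $\lfloor |T|/2\rfloor$ or $\lceil |T|/2\rceil$).

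Fix any sequence $W$ and let $x_h$ be the number of rounds won by $c_h$. I would take $S$ to be the voters whose first choice is one of the two least-selected candidates. By Lemma~\ref{lem:cyclic} and Lemma~\ref{lem:all-rank-minimum-representation}, the worst-case representation of $S$ over rank matrices under which it agrees in every round equals the number of rounds won by those two candidates. This is at most $\lfloor |T|/2\rfloor$. Since $|S|\approx |T|$, the entitlement is $\lfloor |S||T|/n\rfloor\approx\lfloor |T|/2\rfloor$, so this choice alone gives no strict violation. I therefore expect to refine it: take $S$ to be the voters of the three least-selected candidates, with $|S|\approx 3|T|/2$ and entitlement $\lfloor 3|T|/4\rfloor$. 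Their combined wins are at most $\lfloor 3|T|/4\rfloor$, which is still not strict. A single least-selected candidate gives entitlement $\lfloor |T|/4\rfloor$ against at most $\lfloor |T|/4\rfloor$ wins, again not strict.

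\emph{Main obstacle.} The balanced choice $n=4k$ gives exact ties, so the case $n=2|T|$ must exploit integrality. For $|T|$ odd, the candidates with larger multiplicity $\lceil |T|/2\rceil$ carry an entitlement that rounds up, while the win counts must be integers summing to $|T|$. For $|T|$ even, I would use unequal multiplicities, for example $(|T|/2+1,\,|T|/2+1,\,|T|/2-1,\,|T|/2-1)$, and show by a case analysis on $(x_1,\dots,x_4)$ that some union of one, two, or three candidate classes $S$ satisfies
\[
\sum_{h\in S} x_h<\left\lfloor \frac{\bigl(\sum_{h\in S}a_h\bigr)|T|}{2|T|}\right\rfloor=\left\lfloor \frac{\sum_{h\in S} a_h}{2}\right\rfloor .
\]
This reduces to a combinatorial lemma: for multiplicities $a_h$ summing to $2|T|$, no nonnegative integers $x_h$ summing to $|T|$ satisfy $\sum_{h\in S}x_h\geq\lfloor \sum_{h\in S}a_h/2\rfloor$ for all $S$. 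Because the requirement uses floors, I would choose the $a_h$ to be odd, so that each singleton demands $x_h\geq(a_h-1)/2$ and each pair or triple loses less to rounding. Summing the triple constraints over all four triples should then exceed $3|T|$, which is the contradiction. Finding multiplicities (all odd if possible, summing to $2|T|$) for which this counting works for every $|T|\geq2$ is the step I expect to need the most care, checking small cases such as $|T|=2$ (multiplicities $1,1,1,1$) by hand.
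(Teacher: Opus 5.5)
Your positive direction is correct and matches the paper. The gap is in the negative direction, which your proposal does not establish. Your reduction is sound. By Lemmas~\ref{lem:cyclic} and~\ref{lem:all-rank-minimum-representation}, and since enlarging a group to whole classes keeps its represented rounds and can only raise its entitlement, it suffices to find multiplicities $a_1,\dots,a_4\geq1$ with sum $2|T|$ such that no nonnegative integers $x_1,\dots,x_4$ with sum $|T|$ satisfy $\sum_{h\in H}x_h\geq\lfloor\sum_{h\in H}a_h/2\rfloor$ for every set $H$ of classes. However, you never exhibit such multiplicities, and your concrete candidates fail:
\begin{itemize}
\item For odd $|T|$, the balanced choice admits a feasible $x$. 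With $|T|=3$ and $a=(2,2,1,1)$, take $x=(1,1,1,0)$. With $|T|=5$ and $a=(3,3,2,2)$, take $x=(2,1,1,1)$.
\item For $|T|\equiv2\pmod 4$, your split $(|T|/2+1,|T|/2+1,|T|/2-1,|T|/2-1)$ has only even entries, so $x_h=a_h/2$ is feasible. At $|T|=2$, two of its classes are even empty.
\item The count you propose over the four triples cannot work. If all $a_h$ are odd, each triple has an odd total, so the four right-hand sides add up to $(6|T|-4)/2=3|T|-2$, which does not exceed $3|T|$.
\end{itemize}
Your remark that the balanced profile yields only exact ties is also inaccurate when $|T|/2$ is odd.

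The missing step is to use pairs of classes rather than triples. When all four $a_h$ are odd, every pair of classes has an even total and is owed exactly $(a_h+a_{h'})/2$ rounds. Two complementary pairs are owed $|T|$ rounds in total, which equals $\sum_h x_h$, so all six pair constraints must hold with equality. Then $x_1=\frac{1}{2}\left((x_1+x_2)+(x_1+x_3)-(x_2+x_3)\right)=a_1/2$, which is not an integer. Four odd multiplicities summing to $2|T|$ exist for every $|T|\geq2$, for example $(2|T|-3,1,1,1)$, and this is exactly the paper's profile.

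The paper argues on this profile directly. The $2|T|-3$ voters ranking $c_1$ first are owed $|T|-2$ rounds. Every two of the three remaining voters are owed one round, so at least two of $c_2,c_3,c_4$ must win. Hence $c_1$ wins exactly $|T|-2$ times and one of $c_2,c_3,c_4$ never wins. Adding the voter of that candidate to the large class gives a group of size $2|T|-2$ that is owed $|T|-1$ rounds but receives only $|T|-2$.
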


\begin{proof}
For $d\in\set{2,3}$, the result follows from Theorem~\ref{thm:first-choice-loss}. For $d=1$, select the unanimous first choice in each round. For $|T|=1$, only the whole electorate can be owed a round, and any voter's first choice represents it under every rank matrix.

For the negative case, use the following rankings in every round:
\[
\begin{array}{c|l}
\text{number of voters}&\text{ranking}\\ \hline
2|T|-3 & c_1\succ c_2\succ c_3\succ c_4\\
1 & c_2\succ c_3\succ c_4\succ c_1\\
1 & c_3\succ c_4\succ c_1\succ c_2\\
1 & c_4\succ c_1\succ c_2\succ c_3
\end{array}
\]
All rankings are cyclic, so by Lemmas~\ref{lem:all-rank-minimum-representation} and~\ref{lem:cyclic}, all-rank-wPJR requires every group $S$ to have a member's first choice selected in at least $\lfloor |S||T|/n\rfloor$ rounds. The $2|T|-3$ voters of the first type require $c_1$ in at least $|T|-2$ rounds. Every two of the three remaining voters form a group that requires one round, so at least two of $c_2,c_3,c_4$ must be selected. Since there are only $|T|$ rounds, $c_1$ wins exactly $|T|-2$ times, two of $c_2,c_3,c_4$ win once each, and the remaining candidate never wins. The first type together with the voter whose first choice never wins forms a group of size $2|T|-2$ that requires $|T|-1$ rounds, but it receives only $|T|-2$. This contradicts all-rank-wPJR.
\end{proof}

The positive part of Corollary~\ref{cor:three-four-first-choices} concerns the non-weak axiom: groups need not agree in every round. In Section~\ref{sec:structured}, we obtain exact guarantees without bounding the number of first choices, but only for the weak axiom and for single-peaked or single-crossing profiles. Appendix~\ref{app:three-choice-quota} shows that, with three first choices, a slightly larger initial budget gives a smaller size requirement.

\paragraph{Justified representation.}
Although four first choices can rule out all-rank-PJR, all-rank-JR, which asks for only one represented round, can still be guaranteed at many horizons. To achieve it, it suffices to give priority to voters whose first choice has never been selected. Each round selects the first choice of at least a $1/d$ fraction of these voters, which gives an exact condition for all-rank-JR.

\begin{theorem}
\label{thm:first-choice-jr}
Fix $d$ and $|T|$, and allow the number of voters and the set of candidates to vary. Every profile with at most $d$ first choices per round admits a sequence satisfying all-rank-JR if and only if $|T|(1-1/d)^{|T|}<1$. The same characterization holds for all-rank-wJR. A single polynomial-time online rule, which is given neither $d$ nor $|T|$, achieves all positive cases.
\end{theorem}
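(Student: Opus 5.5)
The plan is to handle the two directions separately. The online rule keeps track of the set $U$ of voters whose first choice has never been selected. In each round $j$, it picks the candidate $c$ that maximizes $|N_c^j(1)\cap U|$, breaking ties by a fixed order, and then removes the winner's first-choice supporters from $U$. This rule uses neither $d$ nor $|T|$, and each round clearly takes polynomial time.

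\emph{Sufficiency.} The nonempty sets $N_c^j(1)$ partition $V$ into at most $d$ parts, so the chosen candidate is the first choice of at least $|U|/d$ voters in $U$. Hence $|U|$ shrinks by a factor of at least $1-1/d$ in every round, and after $|T|$ rounds
\[
|U|\leq n(1-1/d)^{|T|}.
\]
Now take any group $S$ with $|S||T|\geq n$ and $F_W(S)\geq1$. Some member's first choice wins, so $S$ is represented under every rank matrix and cannot violate all-rank-JR. By Theorem~\ref{thm:all-rank-characterization}, the only possible violations come from groups with $F_W(S)=0$. Such a group satisfies $S\subseteq U$, so
\[
|S||T|\leq n|T|(1-1/d)^{|T|}<n,
\]
where the last step uses the hypothesis $|T|(1-1/d)^{|T|}<1$. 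Therefore $|S|<n/|T|$, and $S$ is owed nothing even when it agrees in all $|T|$ rounds, since $k\leq|T|$ gives $k|S|<n$. This yields all-rank-JR, which in turn implies all-rank-wJR.

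\emph{Necessity.} Suppose $|T|(1-1/d)^{|T|}\geq1$. I would reuse the product construction from the lower bound of Theorem~\ref{thm:first-choice-loss}. There are $d$ candidates and one voter for each tuple in $[d]^{|T|}$, and in round $j$ that voter has the cyclic ranking starting at $c_{x_j}$. For any $W$, let $S$ be the set of voters whose first choice never wins. Then
\[
|S|=n(1-1/d)^{|T|}\geq n/|T|.
\]
By Lemma~\ref{lem:cyclic}, no winner is Pareto efficient for $S$. Lemma~\ref{lem:all-rank-minimum-representation} then supplies a rank matrix under which $S$ agrees in every round but is never represented, which violates all-rank-wJR and hence all-rank-JR.

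The delicate part is keeping the logic of the ``single rule achieves all positive cases'' claim straight. The rule must not depend on $|T|$, but the contraction bound holds after every prefix of rounds. For whatever horizon actually occurs, the guarantee therefore follows as long as the condition holds at that horizon. I would also check the edge case $d=1$, where the condition reads $0<1$, so it is always satisfied and $U$ becomes empty after one round; the argument covers this case without change.
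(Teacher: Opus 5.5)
Your proof is correct and follows the paper's argument essentially step for step. It uses the same greedy rule on the set $U$ of voters whose first choice has not yet won, the same $(1-1/d)$ contraction bound, and the same cyclic product construction from Theorem~\ref{thm:first-choice-loss} for the converse. The appeal to Theorem~\ref{thm:all-rank-characterization} is unnecessary, since your direct observation that a winning first choice represents the group under every rank matrix already covers all groups with $F_W(S)\geq1$.
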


\begin{proof}
Maintain the set $U$ of voters whose first choice has not yet been selected; initially, $U=V$. In round $j$, select a candidate $c$ maximizing $|U\cap N_c^j(1)|$ and remove these voters from $U$. If $U$ is empty, select any voter's first choice. Each round removes at least $|U|/d$ voters, so at the end $|U|\leq n(1-1/d)^{|T|}$.
A group that is represented in no round under some rank matrix must be contained in $U$, since a round in which a member's first choice wins represents the group under every rank matrix. If $|T|(1-1/d)^{|T|}<1$, such a group has fewer than $n/|T|$ members, and hence fewer than $n/k$ for every $k\leq|T|$, so it cannot violate all-rank-JR. The rule uses only the current first choices and the set $U$.

If $|T|(1-1/d)^{|T|}\geq1$, then $d\geq2$. The construction in the proof of Theorem~\ref{thm:first-choice-loss} gives, for every sequence, a group of size $n(1-1/d)^{|T|}\geq n/|T|$ that agrees in every round under some rank matrix and is never represented. Thus, even all-rank-wJR cannot be satisfied.
\end{proof}

For $d\leq3$, the condition $|T|(1-1/d)^{|T|}<1$ holds at every horizon, so the online rule always satisfies all-rank-JR. For $d=4$, it holds exactly when $|T|=1$ or $|T|\geq7$. Thus, with four first choices, all-rank-JR can always be satisfied when there are at least seven rounds, whereas all-rank-wPJR cannot always be satisfied for any horizon of at least two rounds.

\subsubsection{Proportionality on Every Time Interval}
\label{sec:time-intervals}
So far, representation has been counted from the first round. A group may also care about a shorter period during which its preferences have repeatedly been passed over. For a contiguous interval $I\subseteq T$, we apply an axiom to the rounds in $I$ alone, using $|I|$ as the horizon and counting both agreement and representation only within $I$. The rule must choose one sequence that meets the requirement for all intervals simultaneously; it is not restarted at the beginning of each interval.

A variant of Algorithm~\ref{alg:first-choice-budgets} achieves this up to an additive loss. In the \emph{replenished} rule, every budget starts at zero, and every voter receives $1/n$ units at the beginning of each round; the rule then selects the winner and charges the voters who rank it first exactly as Algorithm~\ref{alg:first-choice-budgets} does. Over any interval, a group receives new budget in proportion to its size, while its unspent budget remains bounded independently of the length of the interval.

\begin{theorem}
\label{thm:first-choice-intervals}
If at most $d\geq2$ candidates are ranked first in each round, the replenished rule is a polynomial-time online rule such that, for every contiguous interval $I\subseteq T$ and every group $S$, the winner is the first choice of some member of $S$ in at least $\lfloor|I||S|/n\rfloor-(d-2)$ rounds of $I$. Consequently, it satisfies all-rank-PJR with additive loss $d-2$ on every interval. The rule is given neither $d$ nor the horizon.

For two first choices the loss is zero, and for three first choices the loss of one is optimal: zero loss is impossible even for an offline rule and fixed-rank-wJR on a static one-dimensional Euclidean profile. For general $d$, the optimal worst-case loss on intervals is $\Theta(d)$.
\end{theorem}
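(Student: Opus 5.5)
The plan is a potential argument on the replenished budgets. The upper bound comes from bounding the total budget held by all voters. The lower bounds come from explicit constructions.

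\textbf{Upper bound.} Let $B_j$ be the total budget just after round $j$'s payment, with $B_0=0$. Before the payment in round $j$ the total is $B_{j-1}+1$. The at most $d$ first-choice blocks partition the voters, so the selected block holds at least $(B_{j-1}+1)/d$ units. It therefore pays $\min\set{1,(B_{j-1}+1)/d}$.
\begin{itemize}
\item If $B_{j-1}+1\ge d$, the round pays one unit and $B_j=B_{j-1}$.
\item Otherwise $B_j\le (B_{j-1}+1)(1-1/d)$, and this is below $d-1$ whenever $B_{j-1}+1<d$.
\end{itemize}
Induction then gives $B_j\le d-1$ for all $j$. One should check whether the sharper invariant $B_j\le d-2$ holds. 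The fixed point of $x\mapsto (x+1)(1-1/d)$ is exactly $d-1$, so I expect the clean invariant to be $B_j<d-1$, which bounds a group's leftover budget.

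Now fix an interval $I=[a,b]$ and a group $S$.
\begin{itemize}
\item At the start of $I$, $S$ holds some amount $x\ge 0$.
\item During $I$ it receives $|I||S|/n$ units.
\item At the end it holds at most $B_b<d-1$.
\end{itemize}
So the payments by $S$ during $I$ total more than $|I||S|/n-(d-1)$. Each round in which $S$ pays selects a member's first choice, and each round collects at most one unit. The number of such rounds is therefore an integer strictly greater than $|I||S|/n-(d-1)$, hence at least $\lfloor |I||S|/n\rfloor-(d-2)$. Lemma~\ref{lem:first-choice}, applied to the interval, gives the all-rank-PJR consequence.

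The rule is online and needs neither $d$ nor the horizon, since it only adds $1/n$ per round. For $d=2$ the bound gives zero loss. The strictness of $B_b<d-1$ is the delicate point. The first case is fine as long as $B_{j-1}<d-1$ is carried as the induction hypothesis. The second case gives strict inequality because $B_{j-1}+1<d$.

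\textbf{Lower bound for $d=3$.} I would use the pinwheel example~\cite{holte1992pinwheel}. Take a static Euclidean profile with three candidates $a,b,c$ at points on a line and three voter types: one third of the voters at each candidate, each ranking by distance.

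Fixed-rank-wJR at cutoff one on an interval of length $L$ forces each type to win once per interval of length $\lceil n/|S|\rceil$. The goal is to choose type sizes whose demanded frequencies are infeasible to schedule, yet sum to at most one. Pinwheel instances with densities such as $1/2,1/3,1/k$ are unschedulable. Concretely, I would realize voter weights that make one type need a win in every interval of length $2$, another in every interval of length $3$, and the third in every interval of length $k$, for a suitable $k$. Such weights can be written with Hare thresholds, $|S|\ge n/L$.

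A possible wrinkle is that the middle candidate combined with cutoff two might create extra protections. These only strengthen the requirement, so they do no harm here.

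\textbf{Lower bound for general $d$.} The $\Theta(d)$ claim needs a construction with loss $\Omega(d)$. Theorem~\ref{thm:first-choice-loss} already gives a loss of about $d/e$ for the whole horizon, which is itself an interval. Since $d-2$ is $O(d)$, this yields $\Theta(d)$.

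The main obstacle is finding the exact pinwheel parameters and a Euclidean realization that make zero loss impossible for $d=3$.
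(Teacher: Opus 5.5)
Your upper bound matches the paper's proof: the invariant that the total budget stays strictly below $d-1$, the interval accounting, and the integrality step that turns ``more than $|I||S|/n-(d-1)$'' into $\lfloor|I||S|/n\rfloor-(d-2)$. Your $\Theta(d)$ claim, via the lower bound of Theorem~\ref{thm:first-choice-loss} at $|T|=d-1$, is also the paper's argument. You should add one line for polynomial time: after scaling by $n$, all budgets are integers with total below $n(d-1)$, so their bit length does not grow.

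The genuine gap is the lower bound for $d=3$, which you leave open. Your first concrete suggestion cannot work. With one third of the voters at each candidate, each type needs a win only in every window of length three, and round robin $a,b,c,a,b,c,\dots$ meets all of these requirements. Voters placed exactly at an equally spaced middle candidate would also have a tie, so the profile would not be strict. Appealing to the unschedulability of pinwheel instances with periods $2,3,k$ is not enough either. The axiom concerns a finite horizon, so you need a specific horizon, group sizes that actually partition $V$, and a finite infeasibility argument.

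Here is what is missing:
\begin{itemize}
\item Take $n=6$, with three, two, and one voters ranking $a$, $b$, $c$ first. At cutoff one, fixed-rank-wJR on every interval then requires $a$ in every interval of length $2$, $b$ in every interval of length $3$, and $c$ in every interval of length $6$. Choosing $k=6$ makes the three type sizes sum to exactly $n$.
\item Realize the profile with candidates at $0,10,20$ and the three voter types at $-1/10$, $99/10$, and $199/10$, so that all rankings are strict.
\item Take $|T|=8$. If $c$ wins in some round $j\in\set{2,\dots,7}$, the intervals $\set{j-1,j}$ and $\set{j,j+1}$ force $a$ in rounds $j\pm1$. Then $\set{j-1,j,j+1}$ contains no $b$, a contradiction.
\item So $c$ can win only in rounds $1$ and $8$, and then the interval $\set{2,\dots,7}$ contains no $c$, another contradiction.
\end{itemize}
Zero-loss all-rank-PJR on every interval implies fixed-rank-wJR on every interval. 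This example therefore shows that a loss of zero is impossible for $d=3$, and with your upper bound it shows that the loss of one is optimal.
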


\begin{proof}
We first show that the total budget after every round is less than $d-1$. This holds initially. If the total after the preceding round is $B<d-1$, replenishment raises it to $B+1<d$. As in the proof of Theorem~\ref{thm:first-choice-loss}, the selected group can pay at least $(B+1)/d$, so the total after payment is at most $(1-1/d )(B+1)<d-1$.

Fix $I$ and $S$. The group starts the interval with a nonnegative budget, receives $|I||S|/n$ units during the interval, and retains less than $d-1$ units at its end. It therefore pays more than $|I||S|/n-(d-1)$ during $I$. These payments occur only in rounds in which a member's first choice wins, at most one unit per round. Since the number of such rounds is an integer, it is at least $\lfloor |I||S|/n\rfloor-(d-2)$, as claimed. Applying Lemma~\ref{lem:first-choice} to the rounds in $I$ gives the all-rank-PJR guarantee. The rule uses only the current first choices and the $n$ budgets. After multiplying all budgets by $n$, they are integers whose total is less than $n(d-1)$, regardless of the number of elapsed rounds, so each round takes polynomial time.

For the lower bound with three first choices, we use a classical scheduling example in which three tasks must occur in every two, three, and six consecutive positions, respectively~\cite{holte1992pinwheel}. Consider a static profile with six voters and three candidates $a,b,c$, where three voters rank $a$ first, two rank $b$ first, and one ranks $c$ first. Placing $a,b,c$ at $0,10,20$ and the three groups of voters at $-1/10$, $99/10$, and $199/10$ gives a one-dimensional Euclidean profile. Let $|T|=8$. With $r=1$, fixed-rank-wJR on every interval requires $a$ in every interval of length two, $b$ in every interval of length three, and $c$ in every interval of length six. Suppose that $c$ wins in some round $j\in\set{2,\dots,7}$. The intervals $\set{j-1,j}$ and $\set{j,j+1}$ force $a$ to win in rounds $j-1$ and $j+1$, and then the interval $\set{j-1,j,j+1}$ contains no $b$, a contradiction. Thus, $c$ can win only in rounds $1$ and $8$, and the interval $\set{2,\dots,7}$ contains no $c$, again a contradiction. Hence, zero loss is impossible for $d=3$.

Finally, the lower bound in Theorem~\ref{thm:first-choice-loss} with $|T|=d-1$ requires a loss of $\Delta_d=d/e+O(1)$ on the full horizon, which is itself an interval. Together with the upper bound of $d-2$, this proves the $\Theta(d)$ bound.
\end{proof}

The lower bound for three first choices uses intervals that do not start in the first round. It therefore leaves open whether, with three first choices, an online rule can satisfy all-rank-PJR exactly at every horizon.

With two candidates and complete rankings, the zero-loss interval guarantee also settles
the positive EJR case identified in Theorem~\ref{thm:ejr}(ii).
Together with the earlier examples, it determines exactly when the weak
EJR axioms can always be satisfied as a function of the number of candidates
and the horizon.

\begin{corollary}
\label{cor:ejr-candidate-horizon}
For complete profiles with at most two candidates, a polynomial-time online
rule satisfies rank-wEJR and fixed-rank-wEJR on every contiguous interval
simultaneously.

More generally, fix $m$ and $|T|$, and allow the number of voters to vary.
Every profile admits a rank-wEJR sequence, and every profile admits a
fixed-rank-wEJR sequence, if and only if $|T|=1$ or $m\leq2$.
Every profile admits an all-rank-wEJR sequence if and only if $|T|=1$ or
$m=1$.
\end{corollary}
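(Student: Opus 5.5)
The plan is to assemble the corollary from results already proved. For the first statement, a complete profile with at most two candidates has at most two distinct first choices per round. Theorem~\ref{thm:first-choice-intervals} with $d=2$ then says the replenished rule, which is online and polynomial-time, gives zero loss on every contiguous interval. On each interval $I$, the winner is the first choice of some member of $S$ in at least $\lfloor |I||S|/n\rfloor$ rounds, so all-rank-PJR holds exactly on $I$. Since all-rank-PJR implies rank-wPJR and fixed-rank-wPJR on $I$, Theorem~\ref{thm:ejr}(ii) converts these into rank-wEJR and fixed-rank-wEJR on $I$, because that equivalence holds on any fixed interval. The case $m=1$ is trivial: the single candidate always wins and every voter approves every winner. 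When $d=1$, so that every voter has the same first choice in a round, we simply use that candidate.

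For the characterization of rank-wEJR and fixed-rank-wEJR, there are two positive cases. If $m\le2$, the first part applies, taking the interval $T$. If $|T|=1$, we select any candidate that is a first choice of a voter. Then only groups with $|S|\ge n$, that is $S=V$, are owed anything, with $\ell=1$. Every voter in $V$ approves her own first choice at every cutoff, so the voter whose first choice won approves the winner, and EJR holds at every cutoff. If $m\ge3$ and $|T|\ge2$, Theorem~\ref{thm:ejr}(i) gives a profile with no fixed-rank-wEJR sequence. We add dummy candidates ranked last by both voters to reach any $m\ge3$; they do not change the cutoff-$1$ or cutoff-$2$ arguments. Since rank-wEJR implies fixed-rank-wEJR, neither exists on that profile.

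For all-rank-wEJR, there are again two positive cases. For $m=1$ the statement is trivial. For $|T|=1$, the argument above works at every rank matrix, because each voter's first choice lies in every approval set. For the negative direction, take $m\ge2$ and $|T|\ge2$. Theorem~\ref{thm:ejr}(iii) gives a two-candidate counterexample. To extend it to $m>2$, I would add candidates ranked below $a,b$ by both voters and rerun the proof. The singleton groups at cutoff one still force both $a$ and $b$ to win; a winner outside $\{a,b\}$ only makes this harder. In a round won by $a$, voter $1$ gets cutoff $2$ and voter $2$ gets cutoff $1$, so both approve $b$; symmetrically in rounds won by $b$. I expect the main obstacle to be rounds won by a dummy candidate. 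There I would give both voters cutoff $2$: they agree on $\{a,b\}$ and neither approves the dummy. So no voter approves all $|T|$ winners, and the contradiction goes through as before.
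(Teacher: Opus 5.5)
Your proposal is correct and follows essentially the same route as the paper. For the positive two-candidate case you use the replenished rule of Theorem~\ref{thm:first-choice-intervals} with $d=2$ together with the interval version of Theorem~\ref{thm:ejr}(ii); at horizon one you select a first choice; and for the negative cases you use Theorem~\ref{thm:ejr}(i) and~(iii), padded with bottom-ranked candidates. Your explicit treatment of rounds won by an added candidate correctly fills in what the paper states only as ``adding candidates below the displayed rankings preserves both examples.'' Your separate $d=1$ remark is unnecessary, since the hypothesis of at most two first choices already covers rounds with a single first choice.
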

\begin{proof}
For two candidates, Theorem~\ref{thm:first-choice-intervals} gives
all-rank-PJR, and hence rank-wPJR and fixed-rank-wPJR, on every interval.
Theorem~\ref{thm:ejr}(ii) gives the corresponding EJR axioms.
With one candidate every voter approves every winner.

At horizon one, select any voter's first choice. Only the whole electorate
can have positive entitlement, and this voter approves the winner under
every cutoff. For $|T|\geq2$, Theorem~\ref{thm:ejr}(i) rules out
universal existence for fixed-rank-wEJR, and hence for rank-wEJR, with three
candidates, and Theorem~\ref{thm:ejr}(iii) does so for all-rank-wEJR with
two. Adding candidates below the displayed rankings preserves both examples
and proves the remaining negative cases.
\end{proof}

The two-candidate assumption concerns complete rankings. Having two distinct
first choices with additional lower-ranked candidates does not suffice:
the profile in Theorem~\ref{thm:ejr}(i) already has only two first choices.
Nor does the equivalence in Theorem~\ref{thm:ejr}(ii) extend
by the same argument to groups that agree in only some rounds.

\subsection{Single-peaked and Single-crossing Profiles}
\label{sec:structured}
The rules above use only the voters' first choices. We now impose structure on the full rankings to obtain exact guarantees without bounding the number of first choices, for groups that agree in every round. When the horizon is known, a budget rule satisfies all-rank-wPJR on single-peaked and single-crossing profiles, even if the axis or voter order changes between rounds. For a common single-crossing order, rank-wPSC can also be satisfied exactly
by a polynomial-time online rule. In contrast, even all-rank-wJR requires
knowledge of the horizon on a static Euclidean profile, and online rules
for all-rank-wPJR must incur a logarithmic additive loss.

The rule for a known horizon gives each voter $|T|+1$ units of budget and spends exactly $n$ units in each round. It collects these payments from the voters in the given single-crossing order, or in the order of their peaks on the given axis, and the last voter to pay chooses the winner. The additional unit per voter gives the Droop size condition introduced in
Section~\ref{sec:preliminaries}: a group with more than $\ell n/(|T|+1)$
members receives at least $\ell$ represented rounds.

\begin{algorithm}[hbpt]
\caption{Ordered budget rule}
\label{alg:ordered-budgets}
$b_i\gets |T|+1$ for every $i\in V$\;
\For{$j\gets1$ \KwTo $|T|$}{
    order the voters by the given single-crossing order of round $j$, or by their peaks on the given axis of round $j$, breaking ties by voter index\;
    deduct a total of $n$ units from the budgets in this order, exhausting each budget before moving to the next voter and stopping as soon as $n$ units have been deducted\;
    $w^j\gets$ the first choice of the last voter whose budget decreased\;
}
\KwRet{$W$}\;
\end{algorithm}

\begin{theorem}
\label{thm:ordered-budgets}
Suppose that every round-$j$ profile $P^j$ is single-peaked with respect to a given axis or single-crossing with respect to a given voter order; the axes and orders may differ across rounds. Algorithm~\ref{alg:ordered-budgets} is a polynomial-time semi-online rule. For every group $S$, the winner is Pareto efficient for $S$ in at least $\lceil (|T|+1)|S|/n \rceil-1$ rounds. In particular, it satisfies all-rank-wPJR, with the stronger Droop size condition $|S|>\ell n/(|T|+1)$.
\end{theorem}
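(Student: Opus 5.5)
The plan is a budget argument built on one structural fact. In any round where the winner is Pareto dominated for $S$, the whole group $S$ sits strictly on one side of the last payer in that round's order. If it sits before the last payer, its budget is wiped out; if after, it pays nothing. Everything else is accounting with the totals $n(|T|+1)$ initial and $n|T|$ spent.

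\emph{Well-definedness and running time.} Before round $j$ the total remaining budget is $n(|T|+1)-n(j-1)\geq 2n>n$, so the deduction of $n$ units always stops at some voter $v_j$, and $w^j$ is the first choice of $v_j$. Each round is a sort plus a linear scan, so the rule runs in polynomial time. Only $|T|$ and the rankings up to round $j$ are used, so the rule is semi-online. At the end, exactly $n$ units remain in total.

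\emph{Structural step.} Fix a round $j$ and suppose some $c$ Pareto dominates $w^j$ for $S$.
\begin{itemize}
 \item Single-crossing case: the voters preferring $c$ to $w^j$ form a prefix or a suffix of the order. This set contains $S$ but not $v_j$, whose top is $w^j$. Hence $S$ lies entirely strictly before $v_j$ or entirely strictly after $v_j$.
 \item Single-peaked case: $w^j$ is dominated for $S$ exactly when all peaks of $S$ lie strictly on one side of $w^j$ on the axis. The rule orders voters by their peaks, and every voter whose peak equals $w^j$ is not in $S$. Hence again $S$ lies entirely before or entirely after $v_j$.
\end{itemize}
I would double-check this step carefully, in particular that ties in the ordering (voters sharing $v_j$'s peak) cause no trouble. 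If $S$ lies after $v_j$, $S$ pays nothing in round $j$. If $S$ lies before $v_j$, every member of $S$ has been exhausted, so $S$'s budget is zero after the round. Call such a round a \emph{drain round} for $S$.

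\emph{Accounting.} Let $q$ be the number of rounds in which $w^j$ is Pareto efficient for $S$. In each such round, $S$ pays at most $n$.
\begin{itemize}
 \item Drain rounds: after the first drain round, $S$'s budget is zero until it pays again, and any later payment happens only in efficient rounds. Hence only the first drain round can contribute a positive payment from $S$. That payment is at most $n$, since the round collects only $n$, and $S$ ends that round with zero budget.
 \item Non-drain dominated rounds: $S$ pays nothing.
\end{itemize}
Consider the sum of $S$'s drain payment and $S$'s final leftover. In the drain case the leftover is zero, because after its last drain $S$ is at zero and only spends afterwards. In the no-drain case there is no drain payment and the leftover is at most the global leftover $n$. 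Either way this sum is at most $n$. Therefore
\[
 |S|(|T|+1)\leq nq+n,
\]
so $q\geq (|T|+1)|S|/n-1$, and since $q$ is an integer, $q\geq\lceil(|T|+1)|S|/n\rceil-1$. By Lemma~\ref{lem:all-rank-minimum-representation}, a group with $|S|>\ell n/(|T|+1)$ is represented in at least $\ell$ rounds under every rank matrix under which it agrees in every round. This is all-rank-wPJR with the Droop size condition.

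\emph{Expected difficulty.} The main obstacle is the drain rounds. Their contribution has to be bounded by a single $n$ in total rather than $n$ per drain round. The observation that $S$ is already at zero after its first drain, so later drains cost it nothing extra, is what makes this work. I would verify that observation carefully, together with the tie-breaking in the structural step.
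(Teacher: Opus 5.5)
Your proposal is correct and follows essentially the same route as the paper. If $w^j$ is Pareto dominated for $S$, then $S$ lies strictly on one side of the last payer: via the prefix/suffix property in the single-crossing case, and via peaks strictly on one side of $w^j$ in the single-peaked case, so tie-breaking among voters sharing the last payer's peak never involves $S$. Only the first such round with $S$ before the last payer can charge $S$, and the accounting gives $(|T|+1)|S|\leq nh+n$.

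The paper also remarks that this payment is strictly below $n$, which is not needed. It states explicitly that $\lceil (|T|+1)|S|/n\rceil-1\geq\lfloor |T||S|/n\rfloor$, which your Droop-implies-Hare conclusion uses implicitly.
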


\begin{proof}
Initially, the total budget is $n(|T|+1)$. Spending $n$ units per round is
therefore always possible, and exactly $n$ units remain at the end. The
rule only needs the current order and budgets, so it is semi-online and
runs in polynomial time.

Fix $S$, and consider a round in which the winner is not Pareto efficient
for $S$. Let $c$ be a candidate that all members of $S$ prefer to $w^j$,
and let $i$ be the last voter to pay. Since $w^j$ is her first choice, $i$
prefers $w^j$ to $c$. In a single-crossing round, the voters preferring $c$
to $w^j$ form a prefix or a suffix of the given order that does not contain
$i$. Hence, all members of $S$ are strictly on the same side of $i$.
In a single-peaked round, if $c$ is to the left of $w^j$, every member of $S$
has her peak strictly to the left of $w^j$: a peak at or to its right would
make that voter prefer $w^j$ to $c$. The case in which $c$ is to the right is symmetric.
Thus, the same conclusion holds in the order by peaks.

If $S$ comes after $i$, it pays nothing in this round. If it comes before
$i$, every remaining budget in $S$ is exhausted, and $S$ pays
strictly less than $n$, since $i\notin S$ pays a positive amount. Thus,
at most one round whose winner is not Pareto efficient for $S$ charges $S$ a
positive amount, because all budgets in $S$ are zero after such a round, and this amount is less than $n$.

Let $h$ be the number of rounds whose winners are Pareto efficient for
$S$. These rounds take at most $nh$ units from $S$. If another round
charges $S$, it exhausts all its budgets, and therefore
$(|T|+1)|S|<nh+n$. If no other round charges $S$, its final budget is
at most the total remaining budget $n$, so
$(|T|+1)|S|\leq nh+n$. In either case, $h\geq \lceil (|T|+1)|S|/n \rceil-1$.
Lemma~\ref{lem:all-rank-minimum-representation} converts this bound into
representation under every rank matrix for which $S$ agrees in every round.
Finally,
$\lceil (|T|+1)|S|/n\rceil-1\geq\lfloor |T||S|/n\rfloor$,
which proves all-rank-wPJR.
\end{proof}

Since the axis or voter order may change between rounds, voters need not keep the same positions over time. The restriction to groups that agree in every round cannot be dropped: by Theorem~\ref{thm:rank-impossibility}, even rank-JR, and hence all-rank-JR, can be unattainable on a static one-dimensional Euclidean profile. Theorem~\ref{thm:ordered-budgets} and Corollary~\ref{cor:three-four-first-choices} therefore complement each other: single-peaked and single-crossing preferences allow any number of first choices but give only the weak axiom, whereas the bound on first choices gives the non-weak axiom.

\paragraph{Online proportionality for solid coalitions.}
The general online SCR rule in Section~\ref{sec:weak-rules} may consider exponentially many rank vectors. A common single-crossing order makes it possible to satisfy rank-wPSC in polynomial time online.
If two voters have the same $r$ most-preferred candidates in a round, then so does every voter between them in the order. It therefore suffices to consider intervals of the voter order, keeping for each interval the smallest common approval set in each round.

\begin{theorem}
\label{thm:single-crossing-online-psc}
If every round is single-crossing with respect to the same given voter
order, then rank-wPSC can be satisfied by a polynomial-time online rule,
also with the Droop size condition.
\end{theorem}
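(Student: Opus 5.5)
The plan is to run Algorithm~\ref{alg:online-scr} restricted to a polynomial family of claims, and to re-use the payment-based correctness argument of Theorem~\ref{thm:online-scr} with $p=n/|T|$, and with $p=n/(|T|+1)$ as in Corollary~\ref{cor:scr-droop} for the Droop condition. The key structural fact comes from the common single-crossing order. If two voters have the same top-$r$ set in round $t$, then every voter between them has the same set. Indeed, for $a$ in the common set and $b$ outside it, both endpoints prefer $a$ to $b$. Since the voters preferring $a$ to $b$ form a prefix or suffix of the order, and both endpoints lie in it, every voter between them prefers $a$ to $b$ too.

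Hence, whenever a group $S$ is solid under some rank vector in every round, the interval $[\min S,\max S]$ of the voter order is solid under the same vector, with the same common sets. This interval is at least as large, so its priority is at least as high and its representation count is identical. The claims therefore only need to be indexed by intervals $I=[a,b]$.

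For each interval and each round $t$, the cutoffs under which $I$ is solid are those $r$ at which all voters in $I$ share their top-$r$ set; call the corresponding common sets the \emph{admissible sets} of $I$ in round $t$. These sets form a chain. Their number is at most $m$, and they can be computed in polynomial time. A claim is then an interval together with a choice of admissible set per round. There are exponentially many such choices, so the rule must compress them.

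For the refinement step, I would follow the hint in the text. For each interval, keep only the \emph{smallest} admissible set in each past round, and track the representation count of that choice. The smallest choice minimizes representation and thus maximizes priority among claims on that interval. For the current round, all admissible sets of the interval are offered as eligible refinement targets. Concretely, a claim becomes an interval $I$ plus an admissible set $C'$ in round $j$, with priority $|I|/(\mathrm{rep}_{\min}(I)+1)$, where $\mathrm{rep}_{\min}(I)$ counts the past rounds in which the winner lies in the smallest admissible set of $I$. Since the smallest admissible set contains the voters' common first choice, each round contributes one of at most $m$ sets. This yields $O(n^2m)$ claims per round, all updated in polynomial time.

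Correctness then follows the invariant of Theorem~\ref{thm:online-scr}, rephrased for interval claims. For every interval $I$ and every rank vector under which $I$ is solid in all rounds so far, if its priority is at least $p$, then $I$ has paid only in rounds that represent it under that vector. The two cases of that proof carry over.

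In the first case, take the last selected claim of priority at least $p$, with interval $I'$. Charge the payment to $I'$; its remaining budget suffices, by the invariant applied to its minimal choice. Now let $I$ be an unrepresented interval of priority at least $p$ that intersects $I'$. Its current set is nested inside the common set of $I'$, so a claim on $I$ of priority at least $p$ would have been eligible at the next refinement, a contradiction.

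The second case is unchanged. The final budget contradiction then gives rank-wPSC for intervals. Arbitrary groups are handled by passing to their hull interval, which is at least as large and has the same common sets and representation.

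The main obstacle is the priority comparison. The invariant must hold for \emph{all} rank vectors, yet the rule only sees the priority of the minimal choice. The needed fact is that if $(I,\mathbf r)$ has priority at least $p$, then so does the claim on $I$ used by the rule. This holds because the minimal past choices give representation at most that of $\mathbf r$. The budget invariant, however, is stated per vector. I would therefore prove it for the minimal choice and derive it for the others, using that a round representing $I$ under the minimal choice also represents it under any larger admissible choice, since the sets are nested. Checking that this monotonicity interacts correctly with the invariant's disjointness argument is where I expect the care to be needed.
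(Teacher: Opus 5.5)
Your proposal is correct and follows essentially the paper's proof. It uses the same ingredients: claims indexed by intervals of the voter order, with the smallest common prefix kept in past rounds; the hull interval of any solid group as a proxy of at least the same priority, with common sets no larger; and the payment argument of Theorem~\ref{thm:online-scr} with $p=n/|T|$, or $p=n/(|T|+1)$ for the Droop condition. The paper keeps only the smallest admissible set in the current round as well, but your larger menu is harmless. The step you flag as delicate also needs no derivation from the minimal choice: the disjointness argument runs directly for each pair $(S,\mathbf r)$, and uses the kept interval claim only to exhibit an eligible refinement of at least the same priority.
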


\begin{proof}
Use the refinement loop of Algorithm~\ref{alg:online-scr}, with the following
collection in place of $\mathsf{Claims}^j$. For every nonempty interval $S$
of the given voter order, keep one rank vector $\mathbf r$ whose entry
$r^t$ is the smallest cutoff for which all members of $S$ have the same
approval set $A_i^t(r^t)$. Such a cutoff always exists because cutoff $m$ gives $C$.
The representation count and priority are defined as before. In each round,
compute the new entry and, after selecting the winner, update the count.
Singleton intervals ensure that every refinement with $|X|>1$ is possible,
by the same argument as in Theorem~\ref{thm:online-scr}.

To see why this smaller collection suffices, fix any group $S$ that is solid
under a rank vector $\mathbf r$ up to the current round. For a set $C'\subseteq C$, the voters whose $|C'|$ most-preferred candidates in round $t$ form the set $C'$ are $\bigcap_{a\in C',\ b\notin C'}\{i\in V:a\succ_i^t b\}$.
By single-crossing, each set in this intersection is a prefix or a suffix of the voter order, so their intersection is an interval. Thus, every voter between the first and last members of $S$ has the same $r^t$ most-preferred candidates in round $t$ as the members of $S$, for every $t\leq j$.
Let $\widehat S$ be that interval, and let $\widehat{\mathbf r}$ be its kept vector. Then, in every round $t\leq j$, its common approval set is contained in the common approval set of $(S,\mathbf r)$. Consequently,
\[
 \mathrm{rep}^{j-1}(\widehat S,\widehat{\mathbf r})
 \leq \mathrm{rep}^{j-1}(S,\mathbf r),
 \quad
 \pi^{j-1}(\widehat S,\widehat{\mathbf r})
 \geq\pi^{j-1}(S,\mathbf r).
\]
In particular, whenever the common approval set of $(S,\mathbf r)$ is a
proper subset of $X$, the claim $(\widehat S,\widehat{\mathbf r})$ is eligible and has at
least its priority. Apply the payment argument of Theorem~\ref{thm:online-scr},
with $p=n/|T|$ and its invariant for every solid group. In the first case, an
unrepresented group of priority at least $p$ intersecting the charged
group would give, through its interval $\widehat S$, an eligible next
refinement of priority at least $p$. This contradicts the choice of the
last such refinement. In the second case, any unrepresented group of
priority at least $p$ would give an eligible first refinement of at least
that priority. Thus both cases preserve the invariant, and the same final
budget argument proves rank-wPSC. Using $p=n/(|T|+1)$ and the final-budget argument of Corollary~\ref{cor:scr-droop} gives the Droop size condition.

There are $n(n+1)/2$ intervals and at most $m-1$ refinements per round.
Each smallest common approval set can be found by checking the $m$ possible
cutoffs. It suffices to store the common approval set in the current round and the
representation count for each interval; no earlier rank vectors need be
enumerated. All calculations take polynomial time, and none uses $|T|$.
\end{proof}

\paragraph{Online all-rank proportionality.}
The ordered budget rule uses the horizon to set the initial budgets. This
information is necessary even for all-rank-wJR on a static Euclidean
profile. The following example uses only three possible horizons, and the
preferences do not change between them.

\begin{theorem}
\label{thm:static-euclidean-online-jr}
No online rule always satisfies all-rank-wJR on static strict
one-dimensional Euclidean profiles. This holds with fifteen voters,
fifteen candidates, a common candidate axis, and a common single-crossing
voter order.
\end{theorem}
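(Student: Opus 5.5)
The plan is to turn all-rank-wJR on a suitable static Euclidean profile into a purely combinatorial covering condition, and then play an adversary argument across several horizons, in the same spirit as Theorem~\ref{thm:online-impossibility}.

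\textbf{Profile and reduction.} Take $n=m=15$, place candidate $c_i$ at position $10i$ and voter $i$ at $10i-1/10$. Then voter $i$ ranks $c_i$ first, the rankings are strict, all voters share the candidate axis, and they share the single-crossing order $1,\dots,15$. For a group $S$, let $\min S$ and $\max S$ be its extreme members. I will show that $w^j=c_x$ is Pareto efficient for $S$ if and only if $\min S\le x\le\max S$.
\begin{itemize}
\item If every member of $S$ lies strictly to one side of $c_x$, say all have $i<x$, then $c_{\max S}$ is strictly closer to each of them than $c_x$. So $c_x$ is Pareto dominated.
\item If $c_x$ lies between the extremes, any candidate to its left is farther from voter $\max S$, and any candidate to its right is farther from voter $\min S$. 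So nothing dominates $c_x$.
\end{itemize}
Combining this with Lemma~\ref{lem:all-rank-minimum-representation}, all-rank-wJR at horizon $t$ holds exactly when the set of selected indices hits every group of size at least $\lceil 15/t\rceil$. It suffices to consider the tightest groups, which are the windows of $L_t=\lceil 15/t\rceil$ consecutive voters. Concretely, the condition is that the selected indices $p_1<\dots<p_k$ satisfy
\[
p_1\le L_t,\qquad p_{a+1}-p_a\le L_t,\qquad p_k\ge 16-L_t .
\]

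\textbf{Adversary argument.} Because the profile is static and the rule is online, the winners $w^1,w^2,\dots$ are the same for every horizon; only the length of the prefix that must be feasible changes. I will pick three horizons $t_1<t_2<t_3$. For each, I will list the feasible prefixes of length $t_a$ under the covering condition, and then show that no infinite sequence has all three prefixes feasible.

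The natural first candidates are $t=2$ and $t=3$, with $L=8$ and $L=5$.
\begin{itemize}
\item At $t=2$, feasibility forces $w^1$ and $w^2$ to satisfy $p\le 8\le q$ and $q-p\le 8$.
\item At $t=3$, a third point must then close every gap of length more than $5$.
\end{itemize}
These two alone do not seem rigid enough. So the plan is to add a third horizon, trying $t_3\in\{4,5,6,7\}$, where the windows of length $4$ or $3$ are only barely coverable by $t_3$ points. This forces the points to sit near an arithmetic progression with a prescribed offset, which I expect to conflict with the position forced at $t=2$ or $t=3$. I would enumerate the feasible positions for the early winners case by case: first where $w^1$ can lie, then $w^2$, then $w^3$, each time checking against all three horizons.

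\textbf{Main obstacle.} The hard step is choosing the three horizons, and possibly modifying the profile, for example by using unequal voter multiplicities at candidate positions, so that the case analysis actually closes. With the uniform placement above, the covering conditions appear fairly flexible. If no triple of horizons works, I would switch to placing several voters at some candidate positions, so that window sizes in terms of voters force particular candidates. I would then re-run the same prefix enumeration, using the claim of Theorem~\ref{thm:static-euclidean-online-jr} (fifteen voters, fifteen candidates, three horizons) as a guide to the intended tight instance.
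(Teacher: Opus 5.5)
\textbf{Gap: no infeasible set of horizons is ever exhibited.} Your reduction is sound and is the same as the paper's. On this profile $c_x$ is Pareto efficient for $S$ exactly when $\min S\le x\le\max S$. So by Lemma~\ref{lem:all-rank-minimum-representation}, all-rank-wJR at horizon $t$ says that every window of $\lceil 15/t\rceil$ consecutive candidates contains a winner. An online rule on a static profile must also use the same prefix at every horizon.

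The gap is that you never produce horizons whose prefix conditions are jointly infeasible, so nothing is actually proved. The search you propose would not find them either. With $t_1=2$ and $t_2=3$ fixed, no $t_3\in\{4,5,6,7\}$ works:
\begin{itemize}
\item a sequence starting $c_4,c_8,c_{12}$ satisfies horizons $2,3,4$;
\item a sequence starting $c_3,c_8,c_{13},c_5,c_{10},c_{15}$ satisfies horizons $2,3,5,6,7$.
\end{itemize}
Your fallback of changing the profile is also unnecessary.

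\textbf{The missing idea.} Use horizons $3,4,5$, with window lengths $5,4,3$, and count runs of unselected candidates instead of enumerating positions.
\begin{itemize}
\item After three rounds at least $12$ candidates are unselected. They form at most four maximal runs of consecutive candidates (pad with empty runs), and each run has length at most $4$ by horizon $3$.
\item At horizon $4$ every run must have length at most $3$. The single new winner lies in only one original run, so at most one original run has length greater than $3$.
\item At horizon $5$ every run must have length at most $2$. The two new winners meet at most two original runs, so at most two have length greater than $2$.
\end{itemize}
Sorting the run lengths, their total is at most $4+3+2+2=11<12$, a contradiction. Repeated winners only help this argument. This closes the proof on your uniform profile with no case analysis.
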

\begin{proof}
Place candidate $c_i$ at $i$ and voter $i$ at $i-1/10$, for $i\in[15]$,
and use the resulting strict rankings in every round. For a consecutive
group of voters $\set{a,\dots,b}$, the Pareto-efficient candidates are
exactly $c_a,\dots,c_b$. Every candidate in this set is a member's first
choice, and each candidate outside it is Pareto dominated by the nearer
endpoint candidate.

At horizon $t$, any block of at least $\lceil15/t\rceil$ consecutive
unselected candidates gives a group that violates all-rank-wJR, by
Lemma~\ref{lem:all-rank-minimum-representation}. Thus, at horizons $3$, $4$,
and $5$, every block of consecutive unselected candidates must have length
at most $4$, $3$, and $2$, respectively.

After three rounds, at least twelve candidates remain unselected, in at
most four such blocks. Add blocks of length zero if there are fewer than
four. Each length is at most four. Only one additional candidate can be
selected by horizon four, so at most one of these original blocks can have
length greater than three. Only two additional candidates can be selected
by horizon five, so at most two original blocks can have length greater
than two. Ordering the original lengths from largest to smallest, their
sum is therefore at most $4+3+2+2=11$, contrary to the at least twelve unselected candidates. Repeated winners
cannot improve this bound.

The preferences are identical at all three possible horizons. An online
rule must therefore make the same first three choices in each instance,
and the same first four choices in the latter two. It cannot meet all
three requirements.
\end{proof}

The same example rules out an exact guarantee with probability one for a
rule that uses random choices. Use the same random choices when comparing
the three horizons. For every resulting sequence, at least one of horizons
$3$, $4$, and $5$ violates all-rank-wJR. The sum of the three violation
probabilities is therefore at least one, so at one fixed horizon the
probability of a violation is at least $1/3$.

The difficulty here comes entirely from not knowing when the decisions
will end, rather than from changes in future preferences. In contrast,
Theorem~\ref{thm:ordered-budgets} achieves all-rank-wPJR on the same domain
when the horizon is known. We next quantify the loss without this information.
On a common single-crossing order, the optimal worst-case additive loss for
all-rank-wPJR has logarithmic order in the number of voters. The rule for
the upper bound selects first choices of voters so that every interval of
the voter order chooses approximately in proportion to its size.

\begin{theorem}
\label{thm:unknown-horizon}
Suppose that every round is single-crossing with respect to the same given
voter order. There is a polynomial-time online rule that, at every horizon, satisfies
all-rank-wPJR with additive loss $2\lceil\log_2 n\rceil$. No deterministic online rule can replace this additive loss by
$o(\log n)$ for all such profiles. The lower bound holds on static strict
one-dimensional Euclidean profiles with $m=n$. Thus, the optimal worst-case
additive loss for all-rank-wPJR on this domain is $\Theta(\log n)$.
\end{theorem}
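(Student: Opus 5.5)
The upper bound reduces to a discrepancy statement about intervals of the voter order; the lower bound is an adversary argument on a static Euclidean profile with nested horizons.

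\textbf{Upper bound.} By Lemma~\ref{lem:all-rank-minimum-representation}, it suffices that every group $S$ has a Pareto-efficient winner in at least $\lfloor |S||T|/n\rfloor - 2\lceil\log_2 n\rceil$ rounds, at every horizon $|T|$.

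On a common single-crossing order, the winner $w^j$ is Pareto efficient for $S$ whenever it is the first choice of some voter between the leftmost and rightmost members of $S$. The argument is the one in the proof of Theorem~\ref{thm:ordered-budgets}: if all of $S$ prefers some $c$ to $w^j$, then the set of voters preferring $c$ to $w^j$ is a prefix or suffix containing $S$. This set cannot contain the voter $i$ whose first choice is $w^j$, so $i$ lies strictly outside the span of $S$. Hence it suffices to choose, in each round, a \emph{dictator} voter $v^j$ and set $w^j$ to be her current first choice. We must ensure that, for every interval $J$ of the order and every prefix $[1,t]$ of rounds, the number of rounds $j\le t$ with $v^j\in J$ is at least $\lfloor |J|t/n\rfloor - 2\lceil\log_2 n\rceil$. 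For a general $S$, apply this to its span, since $|J|\ge|S|$.

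This is a pure scheduling problem on the index set $[n]$, independent of the preferences and the horizon. I would solve it with a dyadic decomposition. Pad $n$ to $2^K$ with $K=\lceil\log_2 n\rceil$ and build a balanced binary tree over the positions. Dictators are chosen by a weighted round-robin down the tree: at each node, send the current round to the child whose count lags most behind its proportional share, namely the number of real voters below it divided by the number below the node. With deterministic low-discrepancy alternation, each node's count is within one of its share at every time $t$. An interval decomposes into at most $2K$ canonical dyadic nodes, each losing at most one round. This gives the loss $2\lceil\log_2 n\rceil$.

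The step needing care is that shares inside a node are relative to that node's counts rather than to $t$, so errors could compound down the tree. I would instead define each node's target directly as $\lfloor t\cdot(\text{voters below})/n\rfloor$ and use a rule such as ``earliest-deadline'' scheduling of the leaves. I would then argue that every node's count stays within $1$ of its target, which works because the targets of the children sum to within one of the parent's target. Polynomial time and onlineness follow because only counts are stored.

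\textbf{Lower bound.} Use the profile of Theorem~\ref{thm:static-euclidean-online-jr} with $n$ voters and candidates: $c_i$ at $i$ and voter $i$ at $i-1/10$. For a consecutive block, the Pareto-efficient winners are exactly the first choices of its members, and for any $S$ the worst case is an interval. The adversary exploits that a deterministic online rule fixes its sequence $w^1,w^2,\dots$ in advance. I would show that some prefix length $t$ and some interval $J$ have $\lfloor|J|t/n\rfloor - \#\{j\le t: w^j\in J\}=\Omega(\log n)$. This is the classical fact that a single sequence of points cannot have $o(\log N)$ interval discrepancy at every prefix simultaneously (van~der~Corput / Schmidt type, discretized to $n$ cells).

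The main obstacle is producing a self-contained discrete version of Schmidt's bound that controls one-sided deficit rather than absolute discrepancy. Deficit and surplus are related, however: a surplus on an interval forces a deficit on its complement, which is a union of at most two intervals, so at the cost of a constant factor it suffices to bound absolute discrepancy. For that I would try a direct recursive argument. Among the first $t$ choices, consider the halves, quarters, and so on. At each of about $\log n$ scales, choose $t$ so that the rounding forces an imbalance of order $1/2$ to accumulate on nested dyadic intervals. I would adapt the standard Roth-style argument over $\log n$ scales and accept constant factors. Combined with the upper bound, this gives $\Theta(\log n)$.
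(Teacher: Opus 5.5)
Your upper-bound reduction matches the paper: choose a dictator $v^j$, note that her first choice is Pareto efficient for every group whose span contains her, and control visit counts on intervals of the voter order. Your lower-bound route through Schmidt's theorem on the static Euclidean profile also matches. The gap is in the scheduling step, which is where the constant $2\lceil\log_2 n\rceil$ comes from.

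Your key claim is that an earliest-deadline rule keeps \emph{every} dyadic node's count within one of its absolute target $t\,w_u/n$ at all times. You assert this rather than prove it. The reason you give, that children's targets sum to within one of the parent's target, is a local property that holds for every hierarchy of voter sets. It holds, for instance, for the chain $\{1\}\subset\{1,2\}\subset\cdots\subset[n]$, where the conclusion is false for large $n$. There it would keep every prefix of voters within one of its share during the first $n$ rounds, contradicting Schmidt's theorem, the very result you use for the lower bound. So a proof would have to use the balance of the tree in an essential way, and an earliest-deadline rule on the leaves controls only the leaves. Your first rule (split each node's visits proportionally, relative to the node's own count) does not fix this if you analyze it through a canonical decomposition: errors compound, and with near-even splits you get a constant per node. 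That yields $O(\log n)$, but not the stated constant.

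The missing idea is to keep that relative rule and bound \emph{prefixes} of leaves rather than canonical nodes. Under the rule, the left child receives exactly $\lfloor ts_L/s\rfloor$ of a node's first $t$ visits. Induct on the height $H$ of a subtree.
\begin{itemize}
\item For a prefix of $a\le s_L$ leaves, compare with the left child's proportional share. It differs from $ta/s$ by the node's rounding error times $a/s_L$.
\item For $a>s_L$, compare with the whole left child plus the right child's proportional share. It differs by the rounding error times $(s-a)/(s-s_L)$.
\end{itemize}
Both factors lie in $[0,1]$, and only one child is partially covered, contributing its inductive error of at most $H-1$. Errors therefore add at most one per level along a single path, so the prefix discrepancy is at most $\lceil\log_2 n\rceil$. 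Intervals are differences of two prefixes, so they lose at most $2\lceil\log_2 n\rceil$.

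For the lower bound, do not try to reprove Schmidt's theorem by a Roth-style argument; that method gives only $\Omega(\sqrt{\log n})$. Cite Schmidt's theorem for the points $((j-1/2)/n,(x_j-1/2)/n)$, and move the rectangle's corner to multiples of $1/n$, which changes $n$ times the area by at most one. Then take the prefix $\{1,\dots,a\}$, or, in case of surplus, its complementary suffix. Either is a single interval, so no factor of two is lost.
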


\begin{proof}
For the upper bound, number voters according to the given order. Build a
binary tree whose leaves, from left to right, are $1,\dots,n$, splitting
each interval of leaves into two parts whose sizes differ by at most one.
In every round, start at the root and proceed to a leaf. At a node with
$s$ leaves, of which $s_L$ are in its left child, send the $t$-th visit to
the left child exactly when $\lfloor t s_L /s \rfloor
 > \lfloor (t-1)s_L/s\rfloor$, otherwise, send it to the right child. Let $v^j$ be the leaf reached in
round $j$, and select voter $v^j$'s first choice. The visit counts and choices
do not depend on the horizon.

After $t$ visits to a node, its left child has received exactly
$\lfloor t s_L/s\rfloor$ visits. We claim that, within a subtree of height
$H$, the number of visits to any prefix of its leaves differs from its
proportional share of the subtree's visits by at most $H$. This follows by
induction on $H$. At a leaf the error is zero. Consider a node with $t$
visits, and call $\lfloor ts_L/s\rfloor-ts_L/s$ its rounding error; its
absolute value is less than one. For a prefix of $a\leq s_L$ leaves, the
proportional share of the left child's visits, $\lfloor ts_L/s\rfloor a/s_L$,
differs from $ta/s$ by the rounding error multiplied by $a/s_L$. For a prefix
of $a>s_L$ leaves, the left child's visits plus the proportional share of the
right child's visits,
$\lfloor ts_L/s\rfloor+(t-\lfloor ts_L/s\rfloor)(a-s_L)/(s-s_L)$, differ from
$ta/s$ by the rounding error multiplied by $(s-a)/(s-s_L)$. Both factors are
in $[0,1]$. Only one child
contains part but not all of the prefix, so its inductive error plus the
current rounding error is at most $H$. Since the tree has height
$\lceil\log_2 n\rceil$, at every horizon
\[
 \left|
   |\{j\in T:v^j\leq a\}|-\frac{|T|a}{n}
 \right|\leq\lceil\log_2 n\rceil
 \quad \text{for all } 0\leq a\leq n.
\]
Subtracting the counts for two prefixes shows that, for every interval of
voters, the number of rounds $j$ in which $v^j$ lies in this interval is at
least $|T|/n$ times its length minus $2\lceil\log_2 n\rceil$.

Fix a group $S$ and take the interval from its first to its last
member. Whenever $v^j$ lies in this interval, her first choice is Pareto
efficient for $S$. Indeed, a candidate preferred to it by all of $S$ would
be preferred by both extreme members; single-crossing would then imply
that voter $v^j$ also prefers it, a contradiction. The interval contains
at least $|S|$ voters, so the bound on visits, together with
Lemma~\ref{lem:all-rank-minimum-representation}, proves the desired
representation guarantee. Each round needs only $O(1+\log n)$ visits in the
tree and the current chosen voter's first choice, so the rule is polynomial.

For the lower bound, place candidate $c_i$ at $i$ and voter $i$ at
$i-1/10$, and repeat these rankings in every round. The rankings are strict
and Euclidean. For every interval $S=\{a,\dots,b\}$, its Pareto-efficient
candidates are exactly $\{c_a,\dots,c_b\}$. Candidates outside this interval
are unanimously worse than the nearer extreme candidate; for a candidate
inside it, the two extreme voters rule out a unanimous improvement.

Run an arbitrary deterministic online rule for $n$ rounds on this profile,
and write $w^j=c_{x_j}$. Apply Schmidt's discrepancy theorem
\cite{schmidt1972irregularities} to the $n$ points
$((j-1/2)/n,(x_j-1/2)/n)$, for $j\in[n]$. It gives an axis-parallel
rectangle with lower-left corner $(0,0)$ whose number of points differs
from $n$ times its area by $\Omega(\log n)$. Move each upper endpoint
to a nearest multiple of $1/n$ that preserves the point count. Each move
has length at most $1/(2n)$, so $n$ times the area changes by at most one.
Thus, for an absolute
constant $c_0>0$, some $1\leq\tau\leq n$ and $0\leq a\leq n$ satisfy
\[
 \left|
   |\{j\leq\tau:x_j\leq a\}|-\frac{\tau a}{n}
 \right|\geq c_0\log n-1.
\]
For sufficiently large $n$, the right-hand side is positive, so
$1\leq a\leq n-1$. If the count is below $\tau a/n$, take
$S=\{1,\dots,a\}$. Otherwise, take $S=\{a+1,\dots,n\}$; its count is
below its proportional share by the same amount. In either case, at the
horizon $|T|=\tau$, the number of rounds in which the winner is Pareto
efficient for $S$ is at least $c_0\log n-2$ below $\lfloor |S||T|/n\rfloor$.
Lemma~\ref{lem:all-rank-minimum-representation} gives a rank matrix under
which $S$ agrees in every round and receives exactly this representation.
The rule is online, so its first $\tau$ choices are unchanged when the
instance ends at $\tau$ rather than $n$. This proves the lower bound.
\end{proof}

On a common single-crossing order, Theorems~\ref{thm:single-crossing-online-psc},
\ref{thm:ordered-budgets}, and~\ref{thm:unknown-horizon} distinguish three cases: rank-wPSC can be satisfied exactly online, all-rank-wPJR can be satisfied exactly semi-online, and online rules for all-rank-wPJR must incur a logarithmic additive loss. Theorem~\ref{thm:static-euclidean-online-jr} further shows that the need
for the horizon already appears at the level of weak JR. In contrast,
Theorem~\ref{thm:first-choice-intervals} needs no common voter order and
applies to every time interval, with a loss that depends on the number of
first choices rather than on the number of voters.

\section{Verifying Proportionality}
\label{sec:verification}
A rule may be chosen for reasons other than proportionality, or a sequence
may be proposed directly by participants. In either case, one can ask
whether the resulting sequence meets a proportionality requirement.
For an axiom, the \emph{verification problem} takes a profile $P$ and a
sequence $W$ and asks whether $W$ satisfies that axiom. We consider complete
rankings in this section and truncated ballots in the next.

Table~\ref{tab:verification-complete} gives the complexity of verifying the
weak axioms on unrestricted complete profiles. Fixed-rank-wPSC is always
polynomial-time verifiable. For fixed-rank-wJR, hardness begins with four
candidates; for every other defined entry, it begins with three. Preference
restrictions lead to further distinctions. On three-candidate single-peaked
profiles, verification of weak JR and PJR is easy for the fixed-rank and
all-rank versions, but hard for the rank versions.
For single-crossing profiles, a common voter order permits polynomial-time
verification of all four all-rank JR and PJR axioms, whereas changing orders
make all four problems coNP-complete.

\begin{table}[!ht]
\centering
\small
\renewcommand{\arraystretch}{1.25}
\begin{tabular}{@{}p{0.145\linewidth}p{0.25\linewidth}p{0.25\linewidth}p{0.25\linewidth}@{}}
\hline
\textbf{Cutoffs}&\textbf{wJR}&\textbf{wPJR}&\textbf{wPSC}\\
\hline
Fixed-rank&Polynomial: $m\leq3$\newline coNP-complete: $m\geq4$
 &Polynomial: $m\leq2$\newline coNP-complete: $m\geq3$
 &Polynomial for all $m$\\
Rank&Polynomial: $m\leq2$\newline coNP-complete: $m\geq3$
 &Polynomial: $m\leq2$\newline coNP-complete: $m\geq3$
 &Polynomial: $m\leq2$\newline coNP-complete: $m\geq3$\\
All-rank&Polynomial: $m\leq2$\newline coNP-complete: $m\geq3$
 &Polynomial: $m\leq2$\newline coNP-complete: $m\geq3$&---\\
\hline
\end{tabular}
\caption{Verification of the weak axioms on unrestricted complete rankings.
The fixed-rank row is Theorem~\ref{thm:fixed-rank-verification}; the rank row
follows from Theorems~\ref{thm:three-candidate-verification}
and~\ref{thm:two-candidate-verification}; and the all-rank row follows from
Theorems~\ref{thm:two-candidate-verification}
and~\ref{thm:varying-sc-all-rank-verification}. No all-rank PSC axiom is defined.}
\label{tab:verification-complete}
\end{table}

\subsection{Fixed Cutoffs}
\label{sec:fixed-verification}
We begin with the fixed-rank axioms, whose cutoff is common to all voters
and rounds. Solid
coalitions are particularly easy to check: for a fixed cutoff, voters who
are solid together throughout the horizon have the same sequence of approval
sets. Agreement allows different approval sets within a group, and the
complexity of verification then depends on the number of candidates.

\begin{theorem}
\label{thm:fixed-rank-verification}
For complete rankings, fixed-rank-wPSC can be verified in polynomial time
for every number of candidates. Fixed-rank-wJR can be verified in polynomial
time for $m\leq3$ and is coNP-complete to verify for every fixed $m\geq4$.
Fixed-rank-wPJR can be verified in polynomial time for $m\leq2$ and is
coNP-complete to verify for every fixed $m\geq3$.

The three-candidate hardness result for fixed-rank-wPJR holds even when
each round is single-crossing with respect to a given voter order, at most
two candidates are ranked first in each round, and the proposed winner is
constant. The order may differ across rounds.
\end{theorem}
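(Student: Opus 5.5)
We treat the three axioms separately. The common theme is that, for a fixed cutoff, everything is determined by each voter's sequence of approval sets $(A_i^1(r),\dots,A_i^{|T|}(r))$. Membership in coNP is immediate for every entry: a violation is certified by a cutoff $r$, a group $S$ and an integer $\ell$. Agreement in every round and the number of represented rounds can then be checked in polynomial time.

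\emph{Polynomial cases.} For fixed-rank-wPSC, we fix $r\in[m]$ and partition $V$ into classes of voters with identical approval-set sequences at cutoff $r$. A group is solid in every round exactly when it lies inside one class. Every subgroup of a class has the same common approval sets, hence the same number of represented rounds, and a no larger entitlement. So it suffices to compare $\lfloor |T||S|/n\rfloor$ with the representation count for each of the at most $nm$ maximal classes.

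For $m\le 2$ and wPJR (hence wJR), cutoff $m$ is trivial, because every voter approves every winner. At cutoff $1$, agreement in every round means identical first-choice sequences, so the class argument above applies again.

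For fixed-rank-wJR with $m=3$, cutoffs $1$ and $3$ are handled as above. At cutoff $2$, we use the following observation. A member of a violating group fails to approve $w^j$ exactly when she ranks $w^j$ last. Then every member has approval set $C\setminus\set{w^j}$, so such a group automatically agrees. Hence a violation exists iff the set of voters ranking $w^j$ last in every round has size at least $n/|T|$.

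\emph{Hardness of fixed-rank-wJR for $m\ge4$.} We reduce from independent set in $3$-uniform hypergraphs, using cutoff $r=2$ and candidates $\set{a,b,c,w}$. Vertices become voters. For each hyperedge $\set{u,v,x}$ we create one round with constant winner $w$, in which:
\begin{itemize}
\item $u,v,x$ have top-two sets $\set{a,b}$, $\set{b,c}$, $\set{a,c}$, respectively;
\item every other vertex voter has top-two set $\set{a,b}$.
\end{itemize}
A set of these voters then agrees in a round unless it contains all three pairs, and none of them approves $w$. Thus a group of vertex voters agrees in every round iff it is an independent set.

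We add padding voters who rank $w$ first, so they can never belong to a violating group. We also add rounds in which all vertex voters rank $a$ first and $w$ last. These two knobs let us tune $n$ and $|T|$ so that $\lceil n/|T|\rceil=k$. We must check that cutoffs $1$, $3$, $4$ create no spurious violations:
\begin{itemize}
\item at cutoffs $3$ and $4$, every vertex voter approves $w$;
\item at cutoff $1$, the padding rounds give all vertex voters the same first choice, so I expect to control this by choosing first choices inside the hyperedge rounds that separate distinct vertices.
\end{itemize}
Larger $m$ follows by appending candidates at the bottom of every ranking.

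\emph{Hardness of fixed-rank-wPJR for $m=3$.} We take a constant winner $c$ and two first choices $a,b$ per round. At cutoff $1$, $c$ is never approved, so only the easy class check arises. At cutoff $2$, we allow only the top-two sets $\set{a,b}$ and $\set{a,c}$. Every group then agrees in every round, and it is represented in round $j$ iff some member has $\set{a,c}$ there.

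We reduce from clique in regular graphs: vertices become voters, and each edge $uv$ becomes a round in which exactly $u$ and $v$ rank $a\succ c\succ b$, while all others rank $a\succ b\succ c$ or $b\succ a\succ c$. A $k$-set $S$ is represented in exactly the edges incident to $S$, namely $k\Delta-e(S)$ rounds. This is minimized, and falls below $\lfloor k|T|/n\rfloor$, iff $S$ is a clique. Again, padding voters and rounds tune the threshold. Each round uses only three ranking types, which can be ordered to be single-crossing, and the order may change between rounds.

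The main obstacle is this threshold arithmetic: choosing the padding so that exactly the $k$-cliques fall short of their entitlement, while larger or smaller groups and the padding voters never do. I would first try padding voters who always rank $c$ first, since they then represent any group they join.
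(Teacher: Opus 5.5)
Your polynomial-time cases and your coNP membership argument are correct and are the paper's. The genuine gap is the three-candidate hardness of fixed-rank-wPJR, and it is not a matter of threshold arithmetic.

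\textbf{Why your wPJR reduction cannot work.} In your construction every cutoff-two approval set of a vertex voter contains $a$. So every group of vertex voters agrees in every round, and any group containing a padding voter is represented in every round. At cutoff two the question becomes whether some group has $|\set{j:S\cap H^j\neq\varnothing}|<\lfloor |T||S|/n\rfloor$, where $H^j$ is the set of voters approving $c$. The minimum-cut network in the proof of Theorem~\ref{thm:two-candidate-verification} decides exactly this in polynomial time; the paper reuses it at cutoff two in Corollary~\ref{cor:three-candidate-sp-fixed-verification}. Cutoffs one and three are easy. Hence every instance your reduction can produce is verifiable in polynomial time, and no padding can turn it into a hardness proof unless $\mathrm{P}=\mathrm{NP}$.

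The underlying reason is that wPJR constrains groups of every size, with an entitlement linear in $|S|$. Take padding rounds in which no vertex voter approves $c$. A group $S$ of vertex voters falls short iff $e(S)-\mu|S|\geq1$, where $\mu=\Delta-|T|/n$. Suppose $\mu$ is small enough that a $k$-clique would fall short, that is, $\binom{k}{2}-\mu k\geq1$. Then for $\Delta\geq k-1$, the set of all $N\geq k$ vertex voters satisfies $N(\Delta/2-\mu)\geq N\bigl((k-1)/2-\mu\bigr)\geq N/k\geq1$. So that set falls short whether or not a clique exists.

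\textbf{The missing idea.} Hardness must come from the agreement condition at cutoff two, so that non-independent groups impose no weak requirement at all. In the round of an oriented edge $u\to v$, the paper gives $u$ the ranking $c\succ a\succ b$, gives $v$ the ranking $c\succ b\succ a$, and gives every other vertex voter $b\succ a\succ c$. The cutoff-two sets $\set{c,a}$, $\set{c,b}$, $\set{a,b}$ have empty intersection, so a group of at least three vertex voters agrees throughout iff it is independent.

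The paper then adds $D-\deg(u)$ rounds in which $u$ alone among the vertices ranks $c\succ a\succ b$, plus one round in which every voter does. An independent $s$-set is then represented in exactly $Ds+1$ rounds. Suppose the original graph has $p$ vertices and target $K_0$. Adding $2p-K_0$ isolated vertices and $K_0$ dummy voters gives $n=3p$. With $|T|=Dn+3$, the entitlement is $Ds+\lfloor s/p\rfloor$, so an independent set falls short exactly when $s\geq 2p$. The single-vertex rounds also make first-choice sequences distinct, which settles cutoff one. Every round is single-crossing with first choices $b$ and $c$.

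\textbf{Repairs to your wJR reduction.} Once repaired, your four-candidate wJR reduction is the paper's, but three fixes are needed.
\begin{itemize}
\item It does not encode independence in a $3$-uniform hypergraph. Every non-endpoint vertex voter also has $\set{a,b}$. So any group of at least three vertex voters containing $v$ and $x$ fails to agree, with or without $u$, while $\set{v,x}$ alone agrees on $c$. You are really reducing from graph Independent Set on the pairs $\set{v,x}$. You therefore need a target of at least three, which the paper ensures by adding isolated vertices.
\item Your claim that every vertex voter approves $w$ at cutoff three fails in your padding rounds, where $w$ is last. Put $w$ third in every round, as the paper does.
\item Cutoff one needs no separate control. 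A group agreeing throughout at cutoff one also agrees throughout at cutoff two, where no vertex voter approves $w$. So every cutoff-one violation is already a cutoff-two violation.
\end{itemize}
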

\begin{proof}
We first prove the polynomial-time results. Fix a cutoff $r$, and partition
the voters according to their sequences
$(A_i^1(r),\dots,A_i^{|T|}(r))$. A group that is solid in every round is
contained in one class. Enlarging the group to this class preserves its
representation and can only increase its entitlement. Thus, checking at
most $n$ classes for each of the $m$ cutoffs verifies fixed-rank-wPSC.

At cutoff one, agreement also means identical approval sets. The same
classes therefore suffice for both wJR and wPJR at this cutoff. At cutoff
$m$, every voter approves every winner. These observations give polynomial
verification for both axioms with at most two candidates. With three
candidates, only cutoff two remains to be considered for wJR. A group that
is never represented at this cutoff contains only voters who rank the
winner last in every round. Conversely, all such voters together approve
exactly $C\setminus\set{w^j}$ in round $j$, so they agree in every round
and are never represented. It suffices to check whether there are at least
$\lceil n/|T|\rceil$ such voters.

Both verification problems are in coNP: a cutoff and a group violating the
axiom can be checked in polynomial time. We prove hardness by two reductions
from \textsc{Independent Set}~\cite{karp1972reducibility}.

\emph{Fixed-rank-wJR with four candidates.}
Let $(G_0,K_0)$ be an instance with $N_0\geq2$ vertices and
$1\leq K_0\leq N_0$. Add $N_0$ vertices with no incident edges to obtain a
graph $G$ with $N=2N_0$ vertices, and put $K=N_0+K_0\geq3$.
The graph $G$ has an independent set of size at least $K$ if and only if
$G_0$ has one of size at least $K_0$.

Set $h=\max\{2,|E(G)|\}$ and use $h$ rounds and $n=hK$ voters.
There is one vertex voter for each vertex; the remaining $n-N$ voters are
dummy voters, and $n-N\geq 2K-2N_0>0$. Let $C=\set{a,b,c,w}$, and propose $w$ in every round.
For each edge, choose an orientation $u\to v$ and create a round with
rankings
\[
\begin{array}{c|l}
\text{voters}&\text{ranking}\\\hline
u&a\succ b\succ w\succ c\\
v&c\succ b\succ w\succ a\\
\text{other vertex voters}&a\succ c\succ w\succ b.
\end{array}
\]
In each of the remaining $h-|E(G)|$ rounds, every vertex voter has the
third displayed ranking.
Dummy voters rank $w$ first in every round, with the remaining candidates
in any fixed order.

Consider a vertex group $S$ with $|S|\geq K\geq3$. At cutoff two, it
agrees in an edge round unless it contains both endpoints. Indeed, the
three possible approval sets are $\set{a,b}$, $\set{b,c}$, and
$\set{a,c}$. If both endpoints belong to $S$, a third member has the last
set, and the intersection is empty. If at most one endpoint belongs to
$S$, the intersection is nonempty. Thus $S$ agrees throughout at cutoff
two exactly when it is independent, and it is never represented.
The wJR size requirement is $n/h=K$.

No other cutoff introduces additional violations. A group agreeing
throughout at cutoff one also agrees throughout at cutoff two, where the
winner is still excluded by every vertex voter. At cutoffs three and four,
every vertex voter approves $w$. Any group containing a dummy voter is
represented in every round. Hence fixed-rank-wJR is violated exactly when
$G_0$ has the required independent set.

\emph{Fixed-rank-wPJR with three candidates.}
Here the reduction must distinguish a shortfall of one represented round
from a complete absence of representation. Let $(G_0,K_0)$ have $k\geq3$
vertices and $1\leq K_0\leq k$. Add $2k-K_0$ vertices with no incident
edges, producing a graph $G$ with $N=3k-K_0$ vertices and independent-set
target $2k$. Create one voter per vertex and $K_0$ dummy voters, so $n=3k$.
Let $D$ be one more than the largest vertex degree in $G$, put
$h=Dn+3$, and use $h$ rounds. There are three candidates $a,b,c$, and $c$ is the
proposed winner in every round. The following are the three ranking types:
\[
 \mathrm{I}:\ c\succ a\succ b,\quad
 \mathrm{II}:\ c\succ b\succ a,\quad
 \mathrm{III}:\ b\succ a\succ c.
\]
Dummy voters have type $\mathrm{I}$ in every round. For each oriented edge $u\to v$,
create a round in which $u$ has type $\mathrm{I}$, $v$ has type $\mathrm{II}$, and all other
vertex voters have type $\mathrm{III}$. For each vertex $u$, create
$D-\deg_G(u)\geq1$ further rounds in which $u$ has type $\mathrm{I}$ and every
other vertex voter has type $\mathrm{III}$. Add one round in which every voter has
type $\mathrm{I}$. In the remaining rounds, every vertex voter has type $\mathrm{III}$.
The number of rounds specified before this last step is
$DN-|E(G)|+1<h$, so the construction is well defined and polynomial in size.

Every round is single-crossing in the order of types $\mathrm{I},\mathrm{II},\mathrm{III}$, omitting
empty types. In an edge round, place the dummy voters and $u$ first,
then $v$, then the remaining vertices. The other rounds use only $\mathrm{I}$ and
$\mathrm{III}$, or only $\mathrm{I}$. There are at most two first choices, $b$ and $c$.

At cutoff two, a vertex group of size at least three agrees in every round
if and only if it is independent. To see this, the approval sets in an edge
round are $\set{c,a}$, $\set{c,b}$, and $\set{a,b}$. Their intersection
is empty precisely when the group contains both endpoints and at least one
other vertex. All non-edge rounds permit agreement for every vertex group.

For a vertex group $S$, let $E(G[S])$ be the set of edges whose two
endpoints belong to $S$. Writing $s=|S|$, its number of represented rounds
at cutoff two is $q(S)=Ds-|E(G[S])|+1$.
Each vertex is assigned type $\mathrm{I}$ or $\mathrm{II}$ in exactly $D$ edge or individual
rounds; an edge with both endpoints in $S$ is counted twice in this total
but represents $S$ only once. The round in which all voters have type $\mathrm{I}$
adds one. The group's entitlement is $\left\lfloor\frac{hs}{n}\right\rfloor
   =Ds+\left\lfloor\frac{s}{k}\right\rfloor$.
For an independent group, $q(S)=Ds+1$, so its entitlement is violated
exactly when $s\geq2k$. Groups of size at most two do not violate it:
$\lfloor s/k\rfloor=0$ and $|E(G[S])|\leq1$ imply $q(S)\geq Ds$.
Groups of size at least three that are not independent do not agree
throughout and impose no weak requirement at this cutoff.

It remains to check the other cutoffs. At cutoff one, distinct vertex voters
have distinct first-choice sequences, because each vertex has a round in
which it alone among the vertices has type $\mathrm{I}$. Each vertex voter has its
first choice selected in $D+1$ rounds, exceeding its singleton entitlement
$D$. Dummy voters have their first choice selected in every round; their
sequence differs from every vertex sequence in the added type-$\mathrm{III}$ rounds.
Thus all groups that agree throughout at cutoff one meet their entitlements.
Cutoff three is automatic. A group containing a dummy voter is represented
throughout at every cutoff. We have therefore obtained a violation of
fixed-rank-wPJR exactly when $G$ has an independent set of size at least
$2k$, or equivalently when $G_0$ has one of size at least $K_0$.

For either reduction, adding candidates at the bottom of every ranking
preserves the argument and proves hardness for each larger fixed number
of candidates.
\end{proof}

\subsection{Common Cutoffs That Vary across Rounds}
Allowing the common cutoff to change between rounds, as in the rank axioms,
strengthens the weak axioms. Their verification becomes hard already with
three candidates, even on Euclidean profiles. The same construction proves hardness for wJR,
wPJR, and wPSC.

\begin{theorem}
\label{thm:three-candidate-verification}
    Verifying rank-wJR, rank-wPJR, or rank-wPSC is coNP-complete, even with
three candidates at fixed locations on the real line, strict Euclidean
preferences in every round, and the same winner in every round.
\end{theorem}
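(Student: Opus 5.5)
The plan is to show membership in coNP and then give one reduction from \textsc{Independent Set} that proves hardness for rank-wJR, rank-wPJR, and rank-wPSC at once, reusing the padding trick from Theorem~\ref{thm:fixed-rank-verification}. Membership is immediate: a rank vector $\mathbf r$, a group $S$, and an integer $\ell$ form a certificate of a violation, and checking agreement or solidity in every round and counting represented rounds takes polynomial time.

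\emph{Construction.} Place $a,b,c$ at $0,1,2$ and propose $a$ in every round. Voters at positions $-0.2$, $0.8$, $1.2$, and $1.8$ have the strict rankings $a\succ b\succ c$, $b\succ a\succ c$, $b\succ c\succ a$, and $c\succ b\succ a$, respectively. Start from $(G_0,K_0)$ with $N_0\geq2$ vertices and $1\leq K_0\leq N_0$. Add $N_0$ isolated vertices to obtain $G$ with $N=2N_0$ vertices, and set $K=N_0+K_0$. Use $h=\max\{2,|E(G)|\}$ rounds and $n=hK$ voters: one vertex voter per vertex, and $n-N>0$ dummy voters who rank $a$ first in every round. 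For each edge, fix an orientation $u\to v$ and create a round in which $u$ ranks $b\succ a\succ c$, $v$ ranks $c\succ b\succ a$, and every other vertex voter ranks $b\succ c\succ a$. In the padding rounds, every vertex voter ranks $b\succ c\succ a$. Then $n/h=K$, so the size threshold for $\ell=1$ is exactly $K$.

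\emph{Key claim.} The claim is: a vertex group $S$ is never represented under some rank vector while agreeing, or even being solid, in every round, if and only if $S$ is independent in $G$. Cutoff $3$ always represents $S$, so only cutoffs $1$ and $2$ matter. Suppose $S$ is independent, and consider an edge round $u\to v$.
\begin{itemize}
\item If $v\notin S$, choose cutoff $1$: every member of $S$ has the singleton $\{b\}$, so $S$ is solid and $a$ is not approved.
\item If $u\notin S$, choose cutoff $2$: every member has the set $\{b,c\}$, so $S$ is solid and unrepresented.
\end{itemize}
Padding rounds work with cutoff $1$ in the same way. Conversely, suppose $S\ni u,v$ for some edge $u\to v$. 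At cutoff $1$ the first choices $b$ and $c$ differ, so $S$ does not agree. At cutoff $2$, $S$ agrees on $b$, but $u$ approves $a$, so $S$ is represented in that round. Hence a violation of rank-wJR or rank-wPSC with $\ell=1$ exists if and only if $G$ has an independent set of size at least $K$. This holds if and only if $G_0$ has one of size at least $K_0$.

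\emph{Ruling out other violations, the main obstacle.} It remains to check that rank-wPJR with larger $\ell$, and groups containing dummies, introduce no further violations. Any group containing a dummy is represented in every round, since the dummy approves $a$ at every cutoff. For a vertex-only group, $|S|\leq 2N_0<2K$, so its entitlement $\lfloor h|S|/n\rfloor=\lfloor|S|/K\rfloor$ is at most $1$. Such a group is therefore violated only if it is represented in no round, and this reduces to the key claim. This bound on the entitlement is the step to verify carefully, and it relies on $K_0\geq1$. The reduction is polynomial, and every round uses fixed candidate positions with strict Euclidean preferences and the same winner $a$. Together with membership, this gives coNP-completeness for all three axioms.
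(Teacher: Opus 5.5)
Your proposal is correct and is essentially the paper's proof. It uses the same \textsc{Independent Set} reduction, with $N_0$ added isolated vertices, $K=N_0+K_0$, $n=K|T|$ and $|T|=\max\{2,|E(G)|\}$, and the same four Euclidean ranking types up to relabeling (your $a,b,c$ play the roles of the paper's $w,a,b$, and the dummies rank the winner first); it also has the same cutoff-one/cutoff-two case split for independent groups and the same bound $\lfloor |S|/K\rfloor\leq 1$ that rules out any further violations.
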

\begin{proof}
    The complement of each problem is in NP. A group $S\subseteq V$ and a rank
    vector $\mathbf r\in[m]^{|T|}$ can be guessed, after which agreement, solidity, the number of represented rounds, and the proportional
    requirement can all be checked in polynomial time.
    
    For coNP-hardness, we reduce from \textsc{Independent Set}
    \cite{karp1972reducibility}. Let $(G_0,K_0)$ be an instance in which $G_0$ has
    $N_0$ vertices and $1\leq K_0\leq N_0$. Add $N_0$ new vertices with no
    incident edges to obtain $G$. Thus, $G$ has $2N_0$ vertices. Let $K=N_0+K_0$.
    Then $G$ has an independent set of size at least $K$ if and only if $G_0$ has
    an independent set of size at least $K_0$.
    
    Create one round for every edge of $G$, and add rounds if necessary so that     $|T|\geq 2$. There is one vertex voter for every vertex of $G$, and there are     $n-2N_0$ dummy voters, where $n=K|T|$.
    The number of dummy voters is nonnegative because $K\geq N_0+1$ and
    $|T|\geq 2$.
    
    Let $C=\set{a,b,w}$ and let $W=(w,\dots,w)$. Place $b,a,w$ at positions
    $0,1,3$, respectively. In the round corresponding to an edge $\set{u,v}$,
    orient the edge arbitrarily and place the voters as follows:
    \[
    \begin{array}{c|c|c}
    \text{voter type} & \text{position} & \text{ranking} \\ \hline
    u & 7/4 & a\succ w\succ b \\
    v & 0   & b\succ a\succ w \\
    \text{every other vertex voter} & 1 & a\succ b\succ w \\
    \text{every dummy voter} & 3 & w\succ a\succ b
    \end{array}
    \]
    In every added round, place every vertex voter at $1$ and every dummy voter at
    $3$. All rankings are strict, and the candidate locations are the same in
    every round.
    
    Fix a group $S$ containing only vertex voters and consider the round for
    $\set{u,v}$. If $u\in S$ and $v\notin S$, then $r^j=1$ gives the common
    approved set $\set{a}$ and excludes $w$. If $v\in S$ and $u\notin S$, then
    $r^j=2$ gives the common approved set $\set{a,b}$ and excludes $w$. If neither
    endpoint belongs to $S$, then $r^j=1$ again works. If both endpoints belong to
    $S$, then at cutoff $1$ the intersection is empty and the approval sets of $u$
    and $v$ differ; at cutoff $2$ voter $u$ approves $w$ and the approval sets of
    $u$ and $v$ again differ; and at cutoff $3$ every voter approves $w$. Thus, no
    cutoff gives agreement without representation, and no cutoff makes $S$ solid
    without representation. In every added round, cutoff $1$
    gives the common approved set $\set{a}$ and excludes $w$.
    
    It follows that a group of vertex voters admits a rank vector under which it
    agrees in every round and is represented in no round if and only if it is an
    independent set in $G$. The same equivalence holds when agreement is replaced
    by solidity.
    Any group containing a dummy voter is represented in every round because each
    dummy voter ranks $w$ first.
    
    For every group $S$ of vertex voters, $\lfloor\frac{|S||T|}{n}\rfloor = \lfloor\frac{|S|}{K}\rfloor$.
    Since $|S|\leq 2N_0<2K$, this number is positive exactly when $|S|\geq K$, and
    then it equals $1$. The size requirement of rank-wJR is also
    $n/|T|=K$. Therefore, a violation of rank-wJR, rank-wPJR, or rank-wPSC
    exists exactly when $G$ has an independent set of size at least $K$. This
    proves coNP-hardness for all three verification problems. Appending
    further candidates at the bottom of every ranking preserves the argument:
    the original cutoffs are unchanged, and every additional cutoff represents
    every group.
\end{proof}

Three candidates are necessary for this hardness result. With two candidates, a group can agree in a round without being represented exactly when none of its members ranks the winner first. This reduces verification to the following polynomial-time calculations.

\begin{theorem} \label{thm:two-candidate-verification}
    If $m\leq2$, then rank-wJR, rank-wPJR, rank-wPSC, all-rank-JR,
all-rank-PJR, all-rank-wJR, and all-rank-wPJR can be verified in polynomial time.
\end{theorem}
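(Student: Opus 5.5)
We handle the case $m=1$ trivially: every voter approves the unique winner under every cutoff, so every group is represented in every round and no axiom can be violated. From now on $m=2$ and rankings are complete. In each round a voter approves either only her first choice (cutoff one) or both candidates (cutoff two). The plan is to reduce each axiom to a condition on a single canonical group.

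\textbf{Rank axioms.} Fix a round $j$ and a group $S$. We first note the basic fact. If some member ranks $w^j$ first, then $S$ is represented under every cutoff. Otherwise every member ranks the other candidate first. Then cutoff one gives agreement without representation, and the group is solid, since all members share the approval set. Cutoff two gives agreement and representation.

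Hence the minimum representation of $S$ over rank vectors under which it agrees (or is solid) in every round equals $F_W(S)$. This minimum is attained by using cutoff one exactly in the rounds where no member ranks $w^j$ first. Agreement in every round is always possible, at worst with cutoff two.

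So rank-wPJR, rank-wJR and rank-wPSC all amount to checking, for every group $S$ and the relevant size condition, that $F_W(S)\geq\lfloor |S||T|/n\rfloor$. For rank-wJR the requirement is that $F_W(S)\geq 1$ whenever $|S|\geq n/|T|$. With two candidates the first choice determines the ranking, so $F_W(S)$ depends only on the rounds in which some member ranks $w^j$ first.

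\textbf{Reducing to few groups.} The key step is to avoid enumerating all groups. I would use the first-choice budget reasoning from Lemma~\ref{lem:first-choice}. For a fixed number $q$ of represented rounds, we want the largest group with $F_W(S)\leq q$. Each voter $i$ has a set $T_i$ of rounds in which $w^j$ is her first choice, and $F_W(S)=|\bigcup_{i\in S}T_i|$.

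The worry is that maximizing $|S|$ subject to $|\bigcup_{i\in S} T_i|\le q$ is a max-coverage-type problem. I expect this to be the main obstacle. The saving structure is that in each round there are only two first choices. So $T_i$ is the set of rounds in which voter $i$'s first choice coincides with the winner, and voters are described by binary vectors.

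The union is still a general set system, so I would instead look for a stronger argument. For wJR the obstacle disappears: the largest group with $F_W(S)=0$ is the set of voters who never have their first choice win, and we simply check its size. For wPJR I would try the exchange argument used in Theorem~\ref{thm:fixed-rank-verification}: represented rounds of any group are at least those of the worst voters. If that fails, I would use the characterization of Theorem~\ref{thm:all-rank-characterization} below.

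\textbf{All-rank axioms.} With two candidates, $w^j$ is Pareto dominated for $S$ exactly when no member ranks it first. So $D_W(S)+F_W(S)=|T|$, and all-rank-PJR, all-rank-wPJR and rank-wPJR coincide, as do the corresponding JR axioms. By Theorem~\ref{thm:all-rank-characterization}, all-rank-PJR holds if and only if $|S|D_W(S)-(n-|S|)F_W(S)<n$ for all $S$. This can be rewritten as $|S|\,|T|-nF_W(S)<n$.

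For each value $s=|S|$ we then need the minimum of $F_W(S)=|\bigcup_{i\in S}T_i|$ over groups of size $s$. This is the crux. My first attempt would be a greedy or flow argument exploiting that rounds are covered by one of two voter classes per round. Failing that, I would look for monotonic structure from the two-candidate setting that makes a min-union computation over sets of size $s$ tractable, for instance via a bipartite min-cut between voters and rounds.
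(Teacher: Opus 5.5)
Your reductions are correct and match the paper's. In a round where no member of $S$ ranks $w^j$ first, cutoff one makes $S$ agree, and indeed be solid, without being represented; in every other round representation is forced. So the minimum over rank vectors is $F_W(S)$. The identity $D_W(S)+F_W(S)=|T|$ then collapses the all-rank axioms onto rank-wPJR and rank-wJR, and the JR-type axioms reduce to counting the voters whose first choice never wins.

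\textbf{The gap.} The step that carries all the difficulty for rank-wPJR, rank-wPSC, all-rank-PJR and all-rank-wPJR is deciding in polynomial time whether some $S$ satisfies $|T||S|-nF_W(S)\geq n$. You leave this open, and your proposed route cannot close it. The sets $T_i$ can be arbitrary: take $w^j=a$ in every round and let voter $i$ rank $a$ first exactly in the rounds of $T_i$. So minimizing $|\bigcup_{i\in S}T_i|$ over $|S|=s$, or maximizing $|S|$ subject to $|\bigcup_{i\in S}T_i|\leq q$, is the Minimum $k$-Union / Densest Subgraph problem. Already with two-element sets $T_i$ forming the edges of a graph on the rounds, asking whether $\binom{\beta}{2}$ voters cover at most $\beta$ rounds is \textsc{Clique}.

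The class/exchange argument from Theorem~\ref{thm:fixed-rank-verification} also fails, because overlaps matter. Take $n=3$, $|T|=6$, $T_1=\set{1,2}$ and $T_2=\set{1,3}$. Each of voters $1$ and $2$ alone receives her entitlement of $2$, but $\set{1,2}$ is owed $4$ rounds and has $F_W=3$.

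\textbf{The missing idea} is not to fix the group size at all. The violation test is a single linear objective with fixed coefficients, $\max_S\bigl(|T||S|-n|\bigcup_{i\in S}T_i|\bigr)$. This is a maximum-weight closure problem: each voter is worth $|T|$, each round costs $n$, and choosing voter $i$ forces every round in $T_i$. Your closing mention of a voter--round minimum cut is the right tool, but only for this unconstrained objective; a cardinality constraint $|S|=s$ destroys the cut structure.

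The paper uses one minimum cut with the following arcs: capacity $|T|$ from the source to each voter, capacity $n$ from each round to the sink, and capacity $n|T|+1$ from voter $i$ to round $j$ whenever $w^j$ is $i$'s first choice. No minimum cut uses a large arc. A minimum cut places on the source side exactly the rounds forced by its source-side voters, so its value is $\min_S\{|T|(n-|S|)+nF_W(S)\}$. A violating group therefore exists if and only if this value is at most $n|T|-n$. With this step in place, the rest of your argument goes through.
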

\begin{proof}
The case $m=1$ is immediate, so suppose that $m=2$. For every round $j\in T$,
let $H^j=\set{i\in V:\mathrm{rank}_i^j(w^j)=1}$.
Fix a group $S$. If $S\cap H^j=\varnothing$, all voters in $S$ rank the other
candidate first, so cutoff $1$ gives agreement, gives identical approval sets,
and excludes $w^j$. If $S\cap H^j\neq\varnothing$, then either every voter in
$S$ ranks $w^j$ first, or $S$ contains both possible first-choice types. In
the first case every cutoff represents $S$. In the second case, at cutoff $1$ the
intersection is empty and the approval sets differ, while cutoff $2$ represents
$S$. Consequently, the minimum number of represented rounds, over rank
vectors under which $S$ agrees in every round, is $|\set{j\in T:S\cap H^j\neq\varnothing}|$,
and the same statement holds for rank vectors under which $S$ is solid in every round.

Thus, rank-wPJR and rank-wPSC are both satisfied exactly when every group
$S$ satisfies $|\set{j\in T:S\cap H^j\neq\varnothing}|
\geq \lfloor |S||T|/n\rfloor$.
A violating group exists exactly when
\begin{equation}
|T||S|-n\left|\set{j\in T:S\cap H^j\neq\varnothing}\right|\geq n.
\label{eq:two-candidate-violation}
\end{equation}
To check this inequality, construct a directed network with a source, a sink,
one node for each voter, and one node for each round. Add an arc of capacity
$|T|$ from the source to every voter node, an arc of capacity $n$ from every
round node to the sink, and an arc of capacity $n|T|+1$ from voter $i$ to round
$j$ whenever $i\in H^j$.

No minimum cut uses an arc of capacity $n|T|+1$, because the cut that places all voter and round nodes on the sink side has capacity $n|T|$. If $S$ is the set of voter nodes on the source side, the round nodes on the source side are exactly the rounds $j$ with $S\cap H^j\neq\varnothing$: these rounds
are forced to that side, and moving any other round node to the sink side strictly decreases the capacity. Conversely, every $S$ defines a cut by placing precisely those round nodes on the source side. The cut capacity is therefore $|T|(n-|S|)+n\left|\set{j\in T:S\cap H^j\neq\varnothing}\right|$.
Hence, a group satisfying~\eqref{eq:two-candidate-violation} exists if and
only if the minimum cut has capacity at most $n|T|-n$.

For rank-wJR, a group that agrees in every round and is represented in no
round can contain only voters who never rank $w^j$ first. Conversely, all such
voters agree at cutoff $1$ in every round and are never represented. It is
therefore enough to count these voters and check whether their number is at
least $\lceil n/|T|\rceil$.

For the all-rank axioms, a two-candidate winner is Pareto efficient for $S$
exactly when it is a member's first choice. The weak characterization in
Lemma~\ref{lem:all-rank-minimum-representation} therefore gives the same
minimum representation as above. Moreover,
$D_W(S)+F_W(S)=|T|$, so Theorem~\ref{thm:all-rank-characterization} gives
the same tests for the non-weak all-rank axioms.
\end{proof}

\subsection{Individual Cutoffs on Single-peaked Profiles}
For individual cutoffs, Lemma~\ref{lem:all-rank-minimum-representation} gives a different description: only the rounds in which the winner is Pareto efficient for a group matter. On single-peaked preferences, these winners lie between the group's extreme peaks. This permits polynomial-time verification even with arbitrarily many candidates.

\begin{theorem}
\label{thm:single-peaked-verification}
Suppose that, for every round $j\in T$, $P^j$ is single-peaked with respect to
a given candidate axis. The axes may differ across rounds. Then
all-rank-wJR and all-rank-wPJR can be verified in polynomial time.
\end{theorem}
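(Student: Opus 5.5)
By Lemma~\ref{lem:all-rank-minimum-representation}, it suffices to decide whether some group $S$ has fewer than $\lfloor |S||T|/n\rfloor$ rounds (for wPJR), or zero rounds with $|S|\geq n/|T|$ (for wJR), in which $w^j$ is Pareto efficient for $S$. The plan is to first describe Pareto efficiency of the winner in a single-peaked round in terms of peaks. Then fix one voter of $S$ as a seed, which reduces the problem to the min-cut computation already used in the proof of Theorem~\ref{thm:two-candidate-verification}.

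\emph{Step 1 (local characterization).} Fix round $j$ and its axis. Call voter $i$ of type $L$, $R$, or $M$ in round $j$ according to whether her peak lies strictly left of $w^j$, strictly right of $w^j$, or at $w^j$. I claim that $w^j$ is Pareto dominated for $S$ if and only if all members of $S$ have type $L$, or all have type $R$.
\begin{itemize}
\item If all members have type $L$, then the candidate immediately to the left of $w^j$ on the axis lies between each member's peak and $w^j$ (possibly equal to the peak). By single-peakedness every member prefers it to $w^j$. The case $R$ is symmetric.
\item Conversely, suppose some member has type $M$, or $S$ contains both an $L$ and an $R$ member. Take any $c\neq w^j$, say to the left of $w^j$. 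Some member has her peak at or to the right of $w^j$, and she prefers $w^j$ to $c$. So $c$ does not Pareto dominate $w^j$. The right-hand case is symmetric.
\end{itemize}
The axes may differ across rounds, since only the type labels matter. Hence each voter gets a type string $t_i\in\set{L,R,M}^{|T|}$, and the number of rounds in which $w^j$ is Pareto dominated for $S$ is the number of rounds in which all of $S$ share a common type in $\set{L,R}$.

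\emph{Step 2 (seeding).} Guess a member $i\in S$; there are $n$ choices. Let $T_i=\set{j:t_i(j)\neq M}$. For every other voter $i'$, let $D_{i'}=\set{j\in T_i:t_{i'}(j)\neq t_i(j)}$, and set $D_i=\varnothing$. Then the rounds in which the winner is dominated for $S$ are exactly the rounds of $T_i\setminus\bigcup_{i'\in S}D_{i'}$. So the number of Pareto-efficient rounds is
\[
 e(S)=|T|-|T_i|+\Bigl|\bigcup_{i'\in S}D_{i'}\Bigr|.
\]
This is a constant plus a coverage function, just as in the two-candidate case with $D_{i'}$ in place of $\set{j: i'\in H^j}$.

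For wJR, a violation with seed $i$ requires $T_i=T$. The best such group is the set of all voters whose type strings equal $t_i$, so it suffices to compare its size with $\lceil n/|T|\rceil$.

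For wPJR, a violation means $|S||T|-n\,e(S)\geq n$ for some $S\ni i$. I will build the network of Theorem~\ref{thm:two-candidate-verification}:
\begin{itemize}
\item an arc of capacity $|T|$ from the source to each voter;
\item an arc of capacity $n$ from each round node $j\in T_i$ to the sink;
\item an arc of capacity $n|T|+1$ from voter $i'$ to round $j$ whenever $j\in D_{i'}$;
\item an infinite-capacity arc from the source to $i$, forcing $i\in S$.
\end{itemize}
The same cut argument then shows that $\min_{S\ni i}\bigl(|T|(n-|S|)+n|\bigcup_{i'\in S}D_{i'}|\bigr)$ is the minimum cut value. A violation exists if and only if this value is at most $n|T|-n-n(|T|-|T_i|)$. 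Forcing $i$ to the source side does not break the argument, since $D_i=\varnothing$ and no large arc leaves $i$.

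Running $n$ max-flow computations gives polynomial time. Membership in the decision problem is then immediate.

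I expect the main obstacle to be Step~1: stating the Pareto characterization cleanly, including the boundary cases where the winner is at an end of the axis or a member's peak is adjacent to $w^j$. The other point needing care is checking that seeding removes the "all $L$ or all $R$" disjunction. Relative to the seed, a round stays dominated exactly when every member copies the seed's type. Without seeding, the condition is not a coverage function of $S$.
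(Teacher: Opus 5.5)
Your proposal is correct and follows the paper's proof essentially step for step. It uses the same peak-based characterization of Pareto efficiency (your ``not all $L$ and not all $R$'' is the paper's ``between the leftmost and rightmost peaks''), the same seeded minimum-cut network with one computation per voter for all-rank-wPJR, and the same grouping of voters by left/at/right type strings for all-rank-wJR. The only difference is cosmetic: you drop the rounds in which the seed's peak is the winner and shift the threshold by $n(|T|-|T_i|)$, whereas the paper keeps those rounds and forces them to the source side through arcs from the seed. Its arc condition ``$w^j$ lies between the peaks of $i$ and $i'$'' is exactly $j\in D_{i'}\cup(T\setminus T_i)$.
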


\begin{proof}
In a single-peaked profile, the Pareto-efficient candidates for a group $S$
are exactly those between its leftmost and rightmost peaks, including both
endpoints. A candidate outside this interval is unanimously worse than the
nearer extreme peak. For a candidate inside the interval, any alternative
on its left is ranked lower by a voter with the rightmost peak, and any
alternative on its right is ranked lower by a voter with the leftmost peak.

We first consider all-rank-wPJR. By
Lemma~\ref{lem:all-rank-minimum-representation}, an all-rank-wPJR violation
exists exactly when some group $S$ satisfies
\begin{equation}
|T||S|-n\left|\set{j\in T:w^j\text{ is Pareto efficient in round }j
\text{ for }S}\right|\geq n.
\label{eq:single-peaked-violation}
\end{equation}

Fix a voter $i\in V$. Construct a directed network with a source, a sink,
one node for each voter, and one node for each round. Add an arc of capacity
$n|T|+1$ from the source to voter $i$, an arc of capacity $|T|$ from the source
to every other voter, and an arc of capacity $n$ from every round node to the
sink. In addition, add an arc of capacity $n|T|+1$ from voter $i'$ to round $j$
whenever $w^j$ lies between the peaks of $i$ and $i'$ on the candidate axis in
round $j$, including the two endpoints.

The cut that places all voter and round nodes on the source side has capacity
$n|T|$, so a minimum cut does not use an arc of capacity $n|T|+1$. Its
source-side voter nodes form a group $S$ containing $i$. Since $S$ contains
$i$, the winner $w^j$ lies between the leftmost and rightmost peaks of $S$ if
and only if at least one source-side voter node has an arc of capacity
$n|T|+1$ to round $j$. Hence, the source-side round nodes are exactly the
rounds in which $w^j$ is Pareto efficient for $S$: including any additional
round node would strictly increase the capacity. Conversely, every group
containing $i$ defines a cut with exactly these round nodes. Its capacity is
\[
|T|(n-|S|)+n\left|\set{j\in T:w^j\text{ is Pareto efficient in round }j
\text{ for }S}\right|.
\]
Thus, a group containing $i$ satisfies
\eqref{eq:single-peaked-violation} if and only if this minimum cut has
capacity at most $n|T|-n$. Checking all $i\in V$ takes $n$ minimum-cut
computations and decides all-rank-wPJR.

For all-rank-wJR, Lemma~\ref{lem:all-rank-minimum-representation} shows that a
violation exists exactly when there is a group of at least
$\lceil n/|T|\rceil$ voters for which $w^j$ is not Pareto efficient in any
round. Partition the voters into classes according to whether, in each round
$j$, their peak lies to the left of $w^j$, at $w^j$, or to the right of $w^j$.
A group has no round in which the winner is Pareto efficient for it exactly
when, in every round, all its members have their peaks strictly on the same side of $w^j$,
that is, when the group is contained in a class whose members never have their
peak at the winner. Hence, a violation exists exactly when such a class has
size at least $\lceil n/|T|\rceil$.
Forming and counting these classes takes polynomial time.
\end{proof}

The three-candidate single-peaked case also admits a positive result for
fixed cutoffs. Unlike individual cutoffs, a fixed cutoff does not give the
Pareto characterization in general. With three candidates, however, the
middle candidate on the axis makes agreement automatic at cutoff two.

\begin{corollary}
\label{cor:three-candidate-sp-fixed-verification}
On complete three-candidate profiles that are single-peaked in every round,
fixed-rank-wJR, fixed-rank-wPJR, and fixed-rank-wPSC can be verified in
polynomial time. The axes may differ across rounds.
\end{corollary}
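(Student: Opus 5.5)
With three candidates there are only three cutoffs to consider. Cutoff three is automatic, since every voter approves every winner. At cutoff one, agreement and solidity coincide with identical approval sets. So I would partition the voters by their first-choice sequences, exactly as in the proof of Theorem~\ref{thm:fixed-rank-verification}, and check each class for wJR, wPJR and wPSC. Fixed-rank-wPSC is already covered by Theorem~\ref{thm:fixed-rank-verification}, and fixed-rank-wJR with three candidates is too. The only remaining case is fixed-rank-wPJR at cutoff two.

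The key observation concerns a single round $j$ with axis $x\,y\,z$ and middle candidate $y$. At cutoff two, every single-peaked ranking has $y$ among its top two: a voter with peak $x$ ranks $x\succ y\succ z$, a voter with peak $z$ ranks $z\succ y\succ x$, and a voter with peak $y$ ranks $y$ first. Hence every group agrees at cutoff two in every round, with $y$ as a common candidate. So the weak requirement applies to every group $S$: $S$ must be represented in at least $\lfloor |S||T|/n\rfloor$ rounds. We then need to describe representation at cutoff two.
\begin{itemize}
 \item If $w^j=y$, every voter approves it.
 \item If $w^j$ is an endpoint, say $x$, then voter $i$ approves it unless her ranking puts $x$ last, i.e.\ unless her peak is $z$.
 \item A voter with peak $y$ approves $x$ exactly when she ranks $y\succ x\succ z$, so this depends on her full ranking.
\end{itemize}
In each case the set $H^j$ of voters approving $w^j$ at cutoff two is computed directly from the rankings. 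Group $S$ is represented in round $j$ exactly when $S\cap H^j\neq\varnothing$, and agreement never fails.

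A violation at cutoff two therefore exists if and only if some group $S$ satisfies
\[
|T||S|-n\left|\set{j\in T:S\cap H^j\neq\varnothing}\right|\geq n .
\]
This is exactly the inequality~\eqref{eq:two-candidate-violation} from the proof of Theorem~\ref{thm:two-candidate-verification}, with the new sets $H^j$. I would reuse that min-cut network verbatim:
\begin{itemize}
 \item source-to-voter arcs of capacity $|T|$;
 \item round-to-sink arcs of capacity $n$;
 \item voter-to-round arcs of capacity $n|T|+1$ whenever $i\in H^j$.
\end{itemize}
A violation exists exactly when the minimum cut has capacity at most $n|T|-n$. The same argument as before shows this, since it never used two candidates beyond defining $H^j$. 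For wJR at cutoff two this reduces again to counting voters outside all $H^j$, which the earlier theorem already covers.

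The main obstacle is making sure the claim ``every group agrees at cutoff two'' is airtight. I also need to confirm that the cutoff-one classes and the cutoff-two cut together cover every violation. A violating pair (cutoff, group) has cutoff one, two or three, and each is handled above, so the corollary should follow.
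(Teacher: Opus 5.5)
Your proposal is correct and follows the paper's proof. Like the paper, you defer wJR, wPSC, and cutoffs one and three for wPJR to Theorem~\ref{thm:fixed-rank-verification}, use the middle candidate of each axis to get agreement of every group at cutoff two, and rerun the min-cut network from Theorem~\ref{thm:two-candidate-verification} with $H^j=\set{i:w^j\in A_i^j(2)}$. One small slip: in your second bullet, ``i.e.\ unless her peak is $z$'' is not an equivalence, since a voter with peak $y$ who ranks $y\succ z\succ x$ also excludes $x$; your third bullet and the direct definition of $H^j$ already cover this case.
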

\begin{proof}
Theorem~\ref{thm:fixed-rank-verification} already gives the claims for wJR
and wPSC, and handles cutoffs one and three for wPJR. At cutoff two, every
voter approves the middle candidate on the round's axis, so every group
agrees in every round. Let $H^j=\set{i:w^j\in A_i^j(2)}$.
A group's representation is $|\set{j:S\cap H^j\neq\varnothing}|$.
The minimum-cut construction in the proof of
Theorem~\ref{thm:two-candidate-verification}, with these sets $H^j$,
checks whether any group has representation below
$\lfloor |T||S|/n\rfloor$.
\end{proof}

Thus, on the same three-candidate Euclidean domain, verification of weak
JR and PJR is polynomial for the fixed-rank versions, coNP-complete for the
rank versions, and polynomial again for the all-rank versions.
The implications among the axioms do not determine these computational
relationships. In the reduction of
Theorem~\ref{thm:three-candidate-verification}, a group containing both
endpoints of an edge cannot agree in its round without being represented
when its members must use the same cutoff in that round, as in the rank
axioms. Individual cutoffs remove that
restriction, and Lemma~\ref{lem:all-rank-minimum-representation} reduces
verification to Pareto efficiency on the given domain.

\subsection{Single-crossing Profiles}
\label{sec:single-crossing-verification}
For single-crossing preferences, whether the voter order is common to all
rounds makes a difference. A common order permits polynomial-time
verification of both weak and non-weak all-rank axioms. When the order can
change, all four verification problems are coNP-complete, even with three
candidates and only two distinct first choices per round.

\begin{theorem}
\label{thm:common-sc-all-rank-verification}
Suppose that every round is single-crossing with respect to the same given
voter order. Then all-rank-JR, all-rank-PJR, all-rank-wJR, and all-rank-wPJR
can be verified in polynomial time. One implementation uses
$O(n^4+n^2|T|m)$ arithmetic operations.
\end{theorem}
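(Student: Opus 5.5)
We verify all four axioms by optimizing over intervals of the common voter order. Throughout, Lemma~\ref{lem:all-rank-minimum-representation} handles the weak axioms and Theorem~\ref{thm:all-rank-characterization} handles the non-weak ones.

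The key structural fact is the one used in the proof of Theorem~\ref{thm:unknown-horizon}. Fix a group $S$ and let $\widehat S=\set{a,\dots,b}$ be the interval from its first to its last member. In round $j$, the winner $w^j$ is Pareto dominated for $S$ if and only if some candidate $c$ is preferred to $w^j$ by both voters $a$ and $b$. Single-crossing makes the voters preferring $c$ to $w^j$ a prefix or a suffix, so if both $a$ and $b$ belong to it, every voter between them does as well. Hence $w^j$ is Pareto dominated for $S$ exactly when it is Pareto dominated for $\widehat S$, and also exactly when it is dominated for the pair $\set{a,b}$. So $D_W(S)$ and the number of Pareto-efficient rounds depend only on the endpoints $(a,b)$. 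We precompute, for every pair $a\leq b$ and every round $j$, whether some candidate is preferred to $w^j$ by both $a$ and $b$. Done naively this costs $O(n^2|T|m)$, which matches the stated bound. Write $D(a,b)$ for the resulting count and $E(a,b)=|T|-D(a,b)$.

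\emph{Weak axioms.} For fixed endpoints, the tightest group is the whole interval, since enlarging $S$ within $\widehat S$ keeps the Pareto-efficient count and only increases $|S|$. So all-rank-wPJR holds if and only if $E(a,b)\geq\lfloor (b-a+1)|T|/n\rfloor$ for all $a\leq b$. All-rank-wJR holds if and only if no interval with $E(a,b)=0$ has length at least $\lceil n/|T|\rceil$. Both checks take $O(n^2)$ after preprocessing.

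\emph{Non-weak axioms.} This is the main obstacle, because $F_W(S)$ is not monotone in $S$: adding a member can add first-choice rounds, which hurts the violator. Fix endpoints $a<b$; these two voters must belong to $S$, since otherwise the endpoints change. We then choose which interior voters to include so as to maximize $|S|D(a,b)-(n-|S|)F_W(S)$, the quantity from Theorem~\ref{thm:all-rank-characterization}. Including a voter $i$ gains $D(a,b)+F_W(S)$ in the coefficient of $|S|$, but it may enlarge the set of rounds in which some member's first choice wins. Single-crossing helps here as well: the voters ranking $w^j$ first form an interval, since the set of voters with a given first choice is an intersection of prefixes and suffixes. So for each round $j$ the first-choice holders of $w^j$ form an interval $I_j$, and round $j$ counts in $F_W(S)$ exactly when $S$ meets $I_j$.

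I would first try to reduce the interior choice to a one-dimensional dynamic program along the order. Process the voters $a,\dots,b$ from left to right, keeping track of the current size and of which $I_j$ have already been hit. Because the $I_j$ are intervals, whether a round is hit is determined by the chosen voters nearest to that interval. The objective is linear in $|S|$ for fixed $F$, so for each value $f$ of $F_W(S)$ we want the maximum $|S|$ achieving at most $f$ hit rounds. This is a "maximum number of points avoiding all but $f$ intervals" problem, solvable by a DP over positions with state equal to the last chosen voter, giving $O(n^2)$ per pair or amortized $O(n^4)$ in total.

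For all-rank-JR the test simplifies: we need $F_W(S)=0$, so $S$ must avoid every $I_j$. Taking all interior voters outside $\bigcup_j I_j$, together with $a$ and $b$ if they are also outside, and checking $|S|D(a,b)\geq n$ suffices.

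The remaining risk is whether the DP state really suffices. If not, I would fall back on a min-cut formulation in the style of Theorem~\ref{thm:single-peaked-verification}, with voters restricted to $[a,b]$ and arcs from voter nodes to round nodes for $i\in I_j$. For fixed $(a,b)$, minimizing $(n-|S|)F-|S|D$ is a cut problem of exactly the type used in Theorem~\ref{thm:two-candidate-verification}, since $D$ is a constant once the endpoints are fixed. This is polynomial and would handle the non-weak axioms even if the interval DP fails.
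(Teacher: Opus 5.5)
Your proposal is correct and follows the paper's proof essentially step for step: you use the endpoint identity $D_W(S)=D_W(\set{a,b})$, enumerate intervals for the weak axioms, run a left-to-right dynamic program over the first-choice intervals, indexed by the last chosen voter and the current size, for the non-weak axioms, and take the same shortcut for all-rank-JR. The dynamic program does suffice, which removes the risk you flagged: an interval $I_j$ that contains the new voter $x$ and some earlier member also contains the previously chosen member $y$, so each transition adds exactly $|\set{j: x\in I_j,\ y\notin I_j}|$, which is the paper's cost $c(y,x)$. You therefore do not need the min-cut fallback, and as stated it is not of the type in Theorem~\ref{thm:two-candidate-verification}, because the coefficient $n-|S|$ of $F_W(S)$ depends on $S$.
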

\begin{proof}
Number the voters according to the common order, and write
$[a,b]=\set{a,a+1,\dots,b}$. If $a$ and $b$ are the first and last members
of a group $S$, then
\begin{equation}
 D_W(S)=D_W(\set{a,b})=D_W([a,b]).
 \label{eq:sc-endpoint-domination}
\end{equation}
Indeed, a candidate preferred to the winner by both endpoints is preferred
to it by every voter between them: the voters making that comparison form
a prefix or a suffix of the order. The converse follows because both
endpoints belong to $S$.

For a weak axiom, enlarge any violating group $S$ to $[a,b]$. Its number
of Pareto-efficient winning rounds is unchanged, while its entitlement
can only increase. By Lemma~\ref{lem:all-rank-minimum-representation},
it therefore suffices to check all intervals. This proves the weak cases.

For the non-weak axioms, Theorem~\ref{thm:all-rank-characterization} also
requires the first-choice count $F_W(S)$. For each round, let
\[
 H^j=\set{i\in V:w^j\text{ is voter }i\text{'s first choice in round }j}.
\]
This is an interval, possibly empty: it is the intersection, over
$c\neq w^j$, of the prefixes or suffixes
$\set{i:w^j\succ_i^j c}$. Thus $F_W(S)$ counts the intervals $H^j$
that contain at least one member of $S$.

For fixed endpoints $a,b$, equation~\eqref{eq:sc-endpoint-domination}
fixes $D_W(S)$. For each possible size of $S$, we will minimize $F_W(S)$
by dynamic programming. Fix $a$. Let $\mathrm{dp}_a[s,x]$ be the smallest
number of intervals $H^j$ containing a member of an $s$-voter group whose
first member is $a$ and whose last member is $x$. Set $\mathrm{dp}_a[1,a]=|\set{j:a\in H^j}|$ and $\mathrm{dp}_a[1,x]=+\infty$ for all $x \neq a$.
For $y<x$, define $c(y,x)=|\set{j:x\in H^j,\ y\notin H^j}|$.
These are exactly the intervals newly included when voter $x$ is added
to a group ending at $y$. If an interval contains $x$ and an earlier member,
it also contains $y$. Conversely, an interval containing $x$ but not $y$
cannot contain any earlier member. Therefore
\begin{equation}
 \mathrm{dp}_a[s,x]
 =\min_{a\leq y<x}\{\mathrm{dp}_a[s-1,y]+c(y,x)\}.
 \label{eq:sc-first-choice-dp}
\end{equation}
Entries with no feasible group have value $+\infty$.

Write $D_{ab}=D_W(\set{a,b})$. For fixed $a,b,s$, minimizing $F_W(S)$
maximizes $sD_{ab}-(n-s)F_W(S)$, the left-hand side of the all-rank-PJR
condition in Theorem~\ref{thm:all-rank-characterization}, since
$n-s\geq0$. Thus all-rank-PJR is violated if and only if some feasible
triple satisfies
\[
 sD_{ab}-(n-s)\mathrm{dp}_a[s,b]\geq n.
\]
Similarly, by the all-rank-JR condition in
Theorem~\ref{thm:all-rank-characterization}, all-rank-JR is violated if and
only if some feasible triple satisfies
$\mathrm{dp}_a[s,b]=0$ and $sD_{ab}\geq n$.

Computing all costs $c(y,x)$ takes $O(n^2|T|)$ operations. All counts
$D_{ab}$ take $O(n^2|T|m)$ operations after the rankings have been converted
to rank positions. The dynamic programs over all left endpoints take
$O(n^4)$ operations. These bounds prove the stated running time. Keeping a minimizing
predecessor for each entry in~\eqref{eq:sc-first-choice-dp} also permits
recovery of a violating group. The rank matrix is then obtained from the
proof of Theorem~\ref{thm:all-rank-characterization}, or from
Lemma~\ref{lem:all-rank-minimum-representation} for a weak axiom.
\end{proof}

For all-rank-JR, the dynamic program is not needed. Keep only voters whose
first choice never wins, and check groups consisting of all remaining voters
between each pair of remaining endpoints. Every possible violating group is
contained in one of these groups, with the same domination count and zero
first-choice count. The dynamic program is needed for PJR because a group
may have some first-choice representation and still be owed more rounds.

The preceding theorem uses a common order throughout. The next construction
shows that allowing different orders changes the conclusion for all four
axioms. It also distinguishes construction from verification: the profiles
have only two first choices per round, so Theorem~\ref{thm:first-choice-intervals}
can construct an all-rank-PJR sequence even though checking a proposed
sequence is hard.

\begin{theorem}
\label{thm:varying-sc-all-rank-verification}
Verifying any of all-rank-JR, all-rank-PJR, all-rank-wJR, and all-rank-wPJR
is coNP-complete, even with three candidates, at most two first choices per
round, a constant proposed winner, and a given single-crossing voter order
for each round. The order may differ across rounds.
\end{theorem}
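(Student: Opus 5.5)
The plan is to reduce from \textsc{Independent Set}, using a single gadget for all four axioms. Membership in coNP is routine. A violation is certified by a group $S$, and by Lemma~\ref{lem:all-rank-minimum-representation} and Theorem~\ref{thm:all-rank-characterization} it can be checked in polynomial time by computing $D_W(S)$ and $F_W(S)$; no rank matrix needs to be guessed. For hardness I propose the candidates $a,b,c$ and the constant winner $W=(c,\dots,c)$. No voter ranks $c$ first, so $F_W(S)=0$ for every group. Then Theorem~\ref{thm:all-rank-characterization} reduces all-rank-PJR and all-rank-JR to conditions on $|S|D_W(S)$ alone. Lemma~\ref{lem:all-rank-minimum-representation} does the same for the weak versions, using $|T|-D_W(S)$. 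This should let one construction serve all four problems.

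\emph{Edge gadget.} For an edge oriented $u\to v$, create a round with these rankings:
\[
u:\ a\succ c\succ b,\qquad v:\ b\succ c\succ a,\qquad \text{other vertex voters}:\ a\succ b\succ c.
\]
The only first choices are $a$ and $b$. The round is single-crossing in the order $u$, others, $v$: for $a$ versus $b$ and $a$ versus $c$ the preferring sets are prefixes, and for $b$ versus $c$ the set is a suffix. This order depends on the edge, which is exactly why varying orders are needed. Now $c$ is Pareto dominated for $S$ unless $S$ contains both endpoints. If $v\notin S$, then $a$ dominates $c$; if $u\notin S$, then $b$ dominates $c$; and if both are in $S$, neither $a$ nor $b$ dominates $c$. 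So the number of non-dominated rounds for $S$ counts the edges of the graph induced by $S$. Independent groups are therefore dominated in every round.

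\emph{Calibration.} Next I add filler rounds in which every voter ranks $a\succ b\succ c$. I also pad $G_0$ with isolated vertices, as in Theorem~\ref{thm:three-candidate-verification}, so that independent groups reach a size $K$ with $K|T|\ge n$, while every group containing an induced edge falls short. For JR and wJR this is direct. A non-independent group has at least one efficient round, so it is represented, and an independent group of size at least $n/|T|$ is a violation.

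\emph{Main obstacle.} The hard step is PJR. An induced edge lowers $D_W(S)$ by only one, but larger groups are owed more rounds, so the entitlement $\lfloor |S||T|/n\rfloor$ may still exceed the representation. I would handle this as in the fixed-rank-wPJR reduction of Theorem~\ref{thm:fixed-rank-verification}. First, repeat each edge round $M$ times. Second, add degree-balancing rounds so that every vertex lies in the same number $D$ of ``protected'' rounds. The idea is that non-independent groups of size $s$ then gain about $Ms$ or more efficient rounds, while independent groups gain only a bounded amount. The entitlement is then chosen to be $Ds+\lfloor s/k\rfloor$ with $n=3k$, so that only independent sets of size at least $2k$ violate.

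The balancing rounds must keep two first choices, single-crossing, and the Pareto structure. A lone vertex cannot make $c$ efficient by itself, since $c$ is never a first choice. I would therefore realize these rounds by pairing each vertex $u$ with a private isolated partner $u'$ that has the $b\succ c\succ a$ ranking. This makes $c$ efficient for $S$ exactly when $S$ contains both $u$ and $u'$. The partners would then be argued never to help a violating group. Verifying this accounting is where I expect the real work to be.
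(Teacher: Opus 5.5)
Your coNP-membership argument and your edge gadget agree with the paper's. The calibration, however, has a genuine gap, and it lies in the non-weak axioms, not only in PJR.

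With $F_W\equiv0$, Theorem~\ref{thm:all-rank-characterization} turns all-rank-JR and all-rank-PJR into the \emph{same} condition, $|S|D_W(S)\ge n$. So the obstacle you assign to PJR is already present for JR. Your JR step (``a non-independent group has at least one efficient round, so it is represented'') is valid only for wJR. In a round where $c$ is Pareto efficient for $S$ but is nobody's first choice, the group can use cutoff one and simply not agree, so that round costs it nothing. Because nobody ever ranks $c$ first, this condition applies to every group, including the largest ones, and your construction violates it regardless of the graph:
\begin{itemize}
\item One filler round in which everyone ranks $a\succ b\succ c$ gives $D_W(V)\ge1$. The whole electorate, with $|V|=n$ and $F_W(V)=0$, then violates all-rank-JR on every instance.
\item Dropping the fillers does not help. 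Take any voter $x$ that is an endpoint ($u$, $v$, or a partner) in two rounds, for instance a vertex of degree two or an endpoint of a repeated edge round. Then $D_W(V\setminus\set{x})\ge2$, and $(n-1)\cdot2\ge n$.
\end{itemize}

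The repetition-and-balancing remedy from Theorem~\ref{thm:fixed-rank-verification} also does not transfer. There, a round represents a group as soon as one member approves the winner, so each vertex carries its $D$ protected rounds by itself. Here $c$ is efficient for $S$ only if $S$ contains both $u$ and $v$ (or both $u$ and $u'$), and it is never efficient for a singleton. An entitlement $Ds+\lfloor s/k\rfloor$ with $D\ge1$ would therefore make every singleton a wJR violation.

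The paper repeats nothing and instead makes the quota tight. It uses $h=2e$ rounds and pads $G_0$ with $(h-1)N_0$ isolated vertices, so that $N<2K$ and $(h-1)N<hK=n$. It then fills the electorate with two dummy blocks $L$ and $R$ of size $e(K-N_0)$. In each edge round they rank like $u$ and like $v$, respectively. In the $e$ remaining rounds they rank the winner \emph{first*}, while the vertex voters rank it last. The consequences are:
\begin{itemize}
\item every vertex group has weak entitlement at most one;
\item a single induced edge gives $|S|D_W(S)\le N(h-1)<n$;
\item a group meeting both blocks has $D_W(S)=0$;
\item a group meeting one block has $F_W(S)=e>0$, which rules out JR violations and gives $|S|D_W(S)-(n-|S|)F_W(S)\le e(2|S|-n)\le eN<n$.
\end{itemize}
Giving the dummy voters the winner as a first choice is exactly what your design, with $F_W\equiv0$, excludes.
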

\begin{proof}
By Lemma~\ref{lem:all-rank-minimum-representation} and
Theorem~\ref{thm:all-rank-characterization}, a group violating any of the
four axioms can be checked in polynomial time. Each verification problem
is therefore in coNP.

For hardness, take an instance $(G_0,K_0)$ of \textsc{Independent Set}
with $N_0$ vertices and $1\leq K_0\leq N_0$. Let $e=\max\{1,|E(G_0)|\}$ and $h=2e$.
Add $(h-1)N_0$ vertices with no incident edges to obtain a graph $G$.
Its number of vertices and independent-set target are, respectively, $N=hN_0$ and $K=(h-1)N_0+K_0$.
The original instance has an independent set of size at least $K_0$ if
and only if $G$ has one of size at least $K$. We will use $N<2K$ and $(h-1)N<hK$.
There are $N$ vertex voters and two groups $L,R$ of dummy voters, each of
size
\[
 z=\frac{hK-N}{2}=e(K-N_0)>0.
\]
The total number of voters is $n=hK$, and the horizon is $h$.
All numbers are polynomial in the input size.

Use candidates $a,b,w$ and propose $w$ in every round. The first $e$
rounds contain one round for each edge; if the graph has no edges, use one
round of the form described below without special endpoint voters.
For an oriented edge $u\to v$, set the rankings to
\[
\begin{array}{c|l}
\text{voters}&\text{ranking}\\\hline
L\text{ and }u&a\succ w\succ b\\
\text{other vertex voters}&a\succ b\succ w\\
R\text{ and }v&b\succ w\succ a.
\end{array}
\]
Without an edge, all vertex voters are in the middle row. The displayed
row order is a single-crossing voter order. In the last $e$ rounds,
all vertex voters rank $a\succ b\succ w$ and all dummy voters rank
$w\succ a\succ b$. These rounds are single-crossing with the vertex voters
followed by the dummy voters. Every round has at most two first choices.

First consider a group $S$ containing only vertex voters. No winner is a
member's first choice, so $F_W(S)=0$. In an edge round, the candidates
ranked above $w$ are $\set{a}$ for $u$, $\set{b}$ for $v$, and
$\set{a,b}$ for every other vertex voter. The winner is therefore Pareto
efficient for $S$ exactly when both endpoints belong to $S$. It is Pareto
dominated in every other round. Thus, $h-D_W(S)=|E(G[S])|$.
The weak entitlement is $\lfloor |S|/K\rfloor$. By $|S|\leq N<2K$, it
is at most one, so either weak axiom is violated exactly when $S$ is an
independent set with at least $K$ vertices.

For either non-weak axiom, an independent group has $D_W(S)=h$ and violates
the axiom exactly when $|S|\geq K$, by
Theorem~\ref{thm:all-rank-characterization}. A group that is not independent
has $D_W(S)\leq h-1$, whence
\[
 |S|D_W(S)\leq N(h-1)<hK=n.
\]
Together with $F_W(S)=0$, this rules out violations of both non-weak axioms.
The number of added vertices is what makes this last inequality hold.

We must also check groups containing dummy voters. If $S$ meets both
$L$ and $R$, the winner is Pareto efficient in each of the first $e$ rounds
and is a member's first choice in each of the last $e$ rounds. Thus
$D_W(S)=0$, and no axiom is violated.

If $S$ meets exactly one of $L,R$, then
\[
 F_W(S)=e,\quad D_W(S)\leq e,\quad
 |S|\leq N+z=\frac{n+N}{2}.
\]
Its weak entitlement is at most $e$, since $\lfloor|S|/K \rfloor
 \leq \lfloor e+ N/2K\rfloor=e$.
The last $e$ rounds therefore meet both weak requirements. There is no
non-weak JR violation because $F_W(S)>0$. For PJR,
\[
 |S|D_W(S)-(n-|S|)F_W(S)
 \leq e(2|S|-n)\leq eN<2eK=n,
\]
so $S$ also satisfies the all-rank-PJR condition in
Theorem~\ref{thm:all-rank-characterization}.

Thus a violation of any of the four axioms exists exactly when the original
independent-set instance has a positive answer. This proves coNP-hardness.
Adding candidates below the three displayed candidates preserves the
construction for every larger fixed number of candidates.
\end{proof}

\subsection{Extended Justified Representation with Two Candidates}
The two-candidate example in Theorem~\ref{thm:ejr}(iii)
shows that the individual-cutoff EJR axioms cannot always be satisfied.
Nevertheless, both verification and the existence question for a given
profile are polynomial-time solvable on this domain. First-choice counts
again give the relevant test, this time for individual voters rather than
groups.

\begin{proposition}
\label{prop:binary-all-rank-ejr-verification}
On complete profiles with two candidates, all-rank-EJR and all-rank-wEJR
are equivalent. For a sequence $W$, let $u_i$ be the number of rounds in
which voter $i$'s first choice wins, and order these counts as
$u_{(1)}\leq\cdots\leq u_{(n)}$. Then either axiom holds if and only if
$u_{(s)}\geq\lfloor |T|s/n\rfloor$ for every $s\in[n]$.
Every satisfying sequence coincides with the sequence of first choices of
some voter.
Consequently, verification and deciding existence for a given profile can
both be done in polynomial time.
\end{proposition}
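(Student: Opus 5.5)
With two candidates $a,b$ and complete rankings, each individual cutoff gives a voter the approval set $\set{\text{her first choice}}$ (cutoff $1$) or $C$ (cutoff $2$). The plan is to show that, for every group $S$, the adversary can choose a rank matrix under which $S$ agrees in \emph{every} round and each member $i\in S$ approves exactly $u_i$ winners. No rank matrix can do better, since a voter always approves her first choice. This reduces both axioms to a condition on the counts $u_i$ alone. The case $m=1$ is trivial, so assume $m=2$.

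\emph{Step 1 (the extremal rank matrix).} Fix $S$ and a round $j$, and let $c\neq w^j$ be the other candidate. Give cutoff $1$ to every member whose first choice is $c$, and cutoff $2$ to every member whose first choice is $w^j$. Then every member approves $c$, so $S$ agrees in round $j$. A member approves $w^j$ exactly when $w^j$ is her first choice. Doing this in every round makes $S$ agree in all $|T|$ rounds, with each member approving exactly $u_i$ winners. Hence all-rank-wEJR implies that, for every group $S$,
\[
\max_{i\in S}u_i\geq\left\lfloor\frac{|T||S|}{n}\right\rfloor .
\]

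\emph{Step 2 (sufficiency for the non-weak axiom).} Conversely, suppose this condition holds for every $S$. Take any rank matrix, any group $S$ agreeing in $k$ rounds, and any $\ell$ with $|S|\geq\ell n/k$. Then $\ell\leq\lfloor k|S|/n\rfloor\leq\lfloor|T||S|/n\rfloor$. Every voter approves at least $u_i$ winners under every cutoff, so the member attaining the maximum approves at least $\ell$ winners, and all-rank-EJR holds. Since EJR implies wEJR, the two axioms are equivalent, and both are equivalent to the displayed condition.

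\emph{Step 3 (reduction to order statistics).} Among groups of size $s$, the left side $\max_{i\in S}u_i$ is minimized by the $s$ voters with the smallest counts, and the right side depends only on $s$. So the condition is equivalent to $u_{(s)}\geq\lfloor|T|s/n\rfloor$ for all $s\in[n]$. Taking $s=n$ gives $u_{(n)}\geq|T|$, so some voter's first choice wins in every round, and $W$ is that voter's first-choice sequence. Verification sorts the $n$ counts. Deciding existence tests the at most $n$ first-choice sequences of individual voters, each by the same check. Both run in polynomial time.

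The main obstacle is Step 1: showing that the chosen cutoffs achieve agreement in every round while adding no approval of $w^j$ beyond the unavoidable first-choice ones. The key point is that members who rank $w^j$ first are given cutoff $2$, so they approve $c$ as well. Step 2 then only needs the observation that the worst case uses $k=|T|$.
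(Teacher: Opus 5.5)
Your proposal is correct and follows the paper's proof essentially step for step. It uses the same extremal rank matrix (cutoff two for voters ranking the winner first, cutoff one otherwise), the same $k\leq|T|$ argument for the non-weak axiom, the same order-statistic reduction, and the same $s=n$ observation forcing $W$ to be some voter's first-choice sequence; the only cosmetic difference is that you define the cutoffs per group, whereas the paper uses one rank matrix that serves all groups at once.
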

\begin{proof}
For each round, let $c$ be the candidate other than the winner. Give cutoff
two to voters who rank the winner first, and cutoff one to all other voters.
Every voter approves $c$, so every group agrees in every round under this
single rank matrix. Voter $i$ approves the winner exactly when it is her
first choice, and thus approves $u_i$ winners. This is the smallest possible
number for that voter under any rank matrix.

It follows that all-rank-wEJR holds exactly when every group of size $s$
has a member with $u_i\geq\lfloor |T|s/n\rfloor$. These conditions also
suffice for all-rank-EJR, because a group agreeing in $k$ rounds has
$k\leq|T|$. Among groups of size $s$, the smallest possible largest count
is $u_{(s)}$, obtained by the $s$ voters with the smallest counts. This proves
the stated condition and the equivalence.

Taking $s=n$ in this condition gives $u_{(n)}=|T|$. Hence some voter has her first choice
selected in every round, and any satisfying sequence must equal one of the
at most $n$ first-choice sequences in the profile. For verification,
compute and sort the counts and check the condition for every $s\in[n]$. To decide
existence, apply this test to each of those $n$ sequences. This takes
$O(n^2|T|+n^2\log n)$ time.
\end{proof}

This is a test for existence on a particular profile, not a universal
existence guarantee. It therefore complements, rather than contradicts,
the two-voter impossibility example in Theorem~\ref{thm:ejr}(iii).

\section{Truncated Ballots}
\label{sec:truncated-ballots}

Many elections allow voters to rank only the candidates they find acceptable.
We now extend the model to nonempty truncated ballots. This raises two
questions: which of our rules retain their guarantees, and how does truncation
change the complexity of verification? The guarantees of GCR, general SCR,
the first-choice and ordered budget rules, and a modified MES rule extend to
truncated ballots. The polynomial-time SCR rule for a common single-crossing
order also extends, after restricting each considered group to voters whose
lists are long enough (Section~\ref{sec:truncated-rules}).
In contrast, checking all-rank-wJR or all-rank-wPJR can become coNP-complete
even when each ballot omits only two candidates. We explain what changes in
the characterization of these axioms, then determine when verification
remains tractable.

In round $j$, voter $i$ reports a nonempty list
\[
 c_{i,1}^j\succ_i^j c_{i,2}^j\succ_i^j\cdots\succ_i^j
 c_{i,\lambda_i^j}^j,
 \quad 1\leq \lambda_i^j\leq m,
\]
of distinct candidates. Let
$L_i^j=\set{c_{i,1}^j,\dots,c_{i,\lambda_i^j}^j}$.
We interpret an unlisted candidate as unacceptable, so it never belongs to an
approval set. Accordingly, for every $r\in[m]$, define $A_i^j(r)=
 \set{c_{i,1}^j,\dots,c_{i,\min\{r,\lambda_i^j\}}^j}$.
Thus, the approval set stops growing once the end of the list is reached.
The JR, PJR, and EJR axioms use these approval sets with no other change.
For the all-rank axioms, cutoffs larger than $\lambda_i^j$ are redundant, so
we may always assume that $r_i^j\leq\lambda_i^j$.
For PSC, a group $S$ is solid at cutoff $r$ in round $j$ when
$|\bigcap_{i\in S}A_i^j(r)|=r$, as before. Equivalently, each member must
list at least $r$ candidates, and their first $r$ candidates must form the
same set. The definitions of fixed-rank-PSC, rank-PSC, and their weak versions
then apply unchanged.

A truncated ballot is single-peaked on a given candidate axis if the unlisted
candidates can be ordered below all listed candidates so that the resulting
strict ranking is single-peaked on that axis. We use the analogous convention
for single-crossing and one-dimensional Euclidean profiles. Whenever a result
below assumes such a domain, the relevant axis, voter order, or coordinates
are given.

\subsection{Algorithms with Truncated Ballots}
\label{sec:truncated-rules}

The proofs for GCR and SCR use only properties that remain true for truncated
lists: approval sets are nested, and every ballot has a first choice.
For SCR, however, only voters who list at least $r$ candidates can belong to
a group that is solid at cutoff $r$.

\begin{proposition}
\label{prop:truncated-gcr-scr}
With the approval sets defined above, Algorithm~\ref{alg:fixed-rank-gcr}
still satisfies fixed-rank-PJR. Algorithm~\ref{alg:online-scr} still satisfies
fixed-rank-wPSC for constant rank vectors and rank-wPSC for arbitrary rank
vectors after the following change: for a rank vector $\mathbf r$, form
claims only from voters satisfying $\lambda_i^t\geq r^t$ in every
relevant round $t$. Both SCR versions also retain the Droop guarantee of
Corollary~\ref{cor:scr-droop}.
\end{proposition}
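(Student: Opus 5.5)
The plan is to re-run the two existing proofs and check which properties of complete rankings they actually use. For Algorithm~\ref{alg:fixed-rank-gcr}, the proof of Theorem~\ref{thm:fixed-rank-gcr} (Appendix~\ref{app:gcr-proof}) relies on two properties of the approval sets $A_i^j(r)$. First, they are nested in $r$, so a winner reserved at a smaller cutoff for an intersecting group still represents $S$ at every larger cutoff. Second, Stage~2 chooses, in each reserved round $j\in\tsr{S_h}{r_h}$, a candidate in $\bigcap_{i\in S_h}A_i^j(r_h)$, which is nonempty by the definition of $\tsr{\cdot}{\cdot}$. Both properties survive truncation: $A_i^j(r)=\set{c^j_{i,1},\dots,c^j_{i,\min\{r,\lambda_i^j\}}}$ is still nested, and the definitions of demand and agreement are unchanged. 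The counting argument (at most $k$ rounds are reserved for pairs with $|\tsr{S}{r}|\le k$) and the greedy-coverage argument use only these sets. One point needs explicit attention: completeness at cutoff $m$ is never used. Under truncation, cutoff $m$ no longer makes everyone approve everything, but GCR never relies on that.

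For Algorithm~\ref{alg:online-scr}, I will keep the modified claim family, in which $\mathcal V^j(\mathbf r)$ partitions only the voters with $\lambda_i^t\ge r^t$ for all $t\le j$. The first step is well-definedness. The eligibility argument uses $\mathbf r=(1,\dots,1)$, and every voter lists at least one candidate. Hence the voter $i$ with first choice $c^*\in X$ still yields a claim with common set $\set{c^*}\subsetneq X$. This requires checking that the chosen $i$ lies in $X$'s defining claim, which holds because all claim members have $\lambda\ge r$ and so rank the whole common set on top.

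Next comes the invariant, which is stated for all groups $S$ solid under $\mathbf r$ in every round. By the truncated definition, solidity at cutoff $r^t$ forces every member to have $\lambda_i^t\ge r^t$, and for these voters $C^t(S,\mathbf r)$ is an honest set of size $r^t$. The first key step is to verify that the maximal group $S^*\supseteq S$ exists among the kept claims. It does, because the eligible voters are grouped by their approval-set sequences, and $S$ consists only of eligible voters with identical sequences. The second key step is the intersection argument: for $i\in S\cap S'$, the sets $C^j(S,\mathbf r)$ and $C^j(S',\mathbf r')$ are both prefixes $A_i^j(\cdot)$ of $i$'s list, so one contains the other. The case analysis and the budget accounting with $p=n/|T|$ (or $p=n/(|T|+1)$ for Corollary~\ref{cor:scr-droop}) then go through verbatim. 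The second case only needs the total remaining budget to be at least $p$, which does not involve the rankings.

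I expect the main obstacle to be the place where the original proof uses that a claim's common set is a proper subset of $X$ in order to make it eligible, in particular at the first refinement, where $X=C$. Under truncation, an unrepresented solid group might have common set equal to $C$; that is, all members list all $m$ candidates and $r^j=m$. In that case the winner always lies in the common set, so the group is represented and no contradiction is needed. I will state this explicitly. I will also check that restricting claims to eligible voters never removes the claim $(S^*,\mathbf r)$ invoked in the contradiction, which again follows from $S$ being solid.
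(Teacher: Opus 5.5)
Your proposal is correct and follows essentially the same route as the paper. For GCR you note that only the nestedness of the truncated approval sets and the unchanged definitions of agreement and demand are used; for SCR you rely on the same three facts the paper isolates: nonempty ballots keep the singleton claims under $(1,\dots,1)$ available, two claims sharing a voter have nested common sets because both are prefixes of that voter's list, and solidity forces $\lambda_i^t\geq r^t$, so every solid group lies inside a kept maximal claim with the same representation and at least its priority. The case you flag as the main obstacle is not specific to truncation. An unrepresented group's common set in round $j$ excludes $w^j$ and is therefore a proper subset of $C$, exactly as in the original proof of Theorem~\ref{thm:online-scr}.
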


\begin{proof}
For GCR, the definition of $\tsr{S}{r}$ and the demand $\dem{S}{r}$ does not
change. The proof of Lemma~\ref{lem:fixed-rank-packing} uses only the demand
bound and the fact that groups selected in the iteration for the same cutoff are
disjoint. Corollary~\ref{cor:fixed-rank-stage-two} then assigns each selected
group the required number of rounds in which its members have a common
approved candidate. Finally, the coverage argument in the proof of
Theorem~\ref{thm:fixed-rank-gcr} uses
$A_i^j(t)\subseteq A_i^j(r)$ for $t\leq r$. This inclusion also holds for the
truncated approval sets. Hence the proof applies without any other change.

For SCR, fix a rank vector $\mathbf r=(r^1,\dots,r^j)$. Among the voters who
list at least $r^t$ candidates in every round $t\leq j$, group together the
voters with the same sets $A_i^t(r^t)$ in all these rounds, and keep the
maximal groups. The refinement loop is well defined because the vector
$(1,\dots,1)$ is available and every ballot is nonempty. Moreover, if two
claims share a voter, then their common approval sets in round $j$ are prefixes of that
voter's reported list and are therefore nested. These are the two facts used
in the refinement argument.

It remains to check the retrospective payment argument from
Theorem~\ref{thm:online-scr}. Any group $S$ that is solid under $\mathbf r$ in every round $t\leq j$ has
$|\bigcap_{i\in S}A_i^t(r^t)|=r^t$. Since every $A_i^t(r^t)$ has size at most
$r^t$, each member of $S$ lists at least $r^t$ candidates and all these
approval sets are equal. Hence $S$ is contained in one of the maximal groups
considered by the modified rule, with the same representation and at least the
same priority. In the first case of the original payment argument, suppose
this maximal group intersects the group that pays in the current round but is
not represented in that round. The two common approval sets in round $j$ are prefixes of
a shared voter's list and are nested. The common approval set of the maximal group is
a proper subset of that of the paying group, so its claim remains eligible after
the paying claim is chosen, with at least the required priority. This
contradicts the choice of the last selected claim with that priority. In the
second case, an unrepresented group has a common approval set in round $j$ that is strictly
contained in $C$, so the same maximal claim is eligible at the first
refinement. This contradicts the assumption that all selected claims have lower
priority. The invariant and the final budget
contradiction are therefore unchanged. Taking $p=n/(|T|+1)$ instead
of $n/|T|$ gives the Droop guarantee by Corollary~\ref{cor:scr-droop}.
\end{proof}

\paragraph{A common single-crossing order.}
Extending the polynomial-time rule of
Theorem~\ref{thm:single-crossing-online-psc} requires more care. With complete
rankings, every voter between the first and last members of a solid group
has the same relevant prefix. An intermediate voter may not report enough
candidates for this to remain true of her truncated ballot. For example,
on voter order $1,2,3$, the complete rankings
\[
 a\succ b\succ w,\quad a\succ b\succ w,\quad b\succ a\succ w
\]
are single-crossing. If they are truncated to $(a,b)$, $(a)$, and $(b,a)$,
then voters $1$ and $3$ are solid together at cutoff two, but voter $2$
cannot join them. We therefore keep only intermediate voters whose lists
are long enough in every relevant round.

\begin{theorem}
\label{thm:truncated-common-sc-scr}
Suppose that the nonempty truncated ballots admit single-crossing completions
with respect to the same given voter order. Then rank-wPSC, also with the
Droop size condition, can be satisfied by a polynomial-time online rule.
Both the Hare and Droop versions of rank-wPSC can be verified in polynomial
time on this domain.
\end{theorem}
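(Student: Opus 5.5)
The plan is to follow the proof of Theorem~\ref{thm:single-crossing-online-psc}, changing only how claims are represented. Number the voters by the common order. In round $j$ and for each cutoff $r$, the relevant voters are those with $\lambda_i^j\geq r$. The first structural step is an interval lemma: for a set $C'$ with $|C'|=r$, the voters who list at least $r$ candidates and whose first $r$ listed candidates form $C'$ are obtained from an interval of the order, after removing the voters with $\lambda_i^j<r$. To prove it, I would note that such a voter's completion has $C'$ as its top $r$ candidates, and that a completion's top-$r$ set is an intersection of prefixes and suffixes, hence an interval. Conversely, a voter in that interval whose list has length at least $r$ has $C'$ as her listed prefix, because her completion keeps the listed candidates on top. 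So the solid groups at a cutoff are the long-enough voters inside an interval.

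Claims will therefore be indexed by an interval $[a,b]$ together with a per-round cutoff. For a fixed interval and round $t$, define $r^t(a,b)$ as the smallest $r$ such that every voter $i\in[a,b]$ with $\lambda_i^t\geq r$ has the same approval set $A_i^t(r)$, and such that $a$ and $b$ themselves list at least $r$ candidates. The endpoints must qualify, since otherwise the claim would collapse to a different interval. The group of the claim is $\widehat S=\{i\in[a,b]: \lambda_i^t\geq r^t(a,b)\text{ for all }t\leq j\}$. I am not yet sure this single-cutoff choice is right, because the member set now depends on the cutoff. As a fallback, I would keep, for every interval and every round, all $m$ cutoffs at which the interval's endpoints are solid. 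I would then track the claim $([a,b],\text{history})$ only through the minimal such sets, and update representation counts online. This keeps $O(n^2m)$ live objects per round, provided the claims can be merged along histories.

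The key step, and the one I expect to be the main obstacle, is domination: every group $S$ solid under $\mathbf r$ in all rounds so far must be dominated by a kept claim. Let $a=\min S$ and $b=\max S$. I would take $\widehat S$ to be the voters of $[a,b]$ whose lists are long enough in every round. By the interval lemma, $\widehat S\supseteq S$, and it has the same common approval sets as $S$. With the minimal-cutoff representation, the kept claim for $[a,b]$ has common approval sets contained in those of $(S,\mathbf r)$. To make the priority comparison work, the kept group must have at least as many voters, and this is where the truncation hurts. A smaller cutoff excludes fewer short-list voters, so it only enlarges the group, which is the right direction. I would check this monotonicity carefully, together with nestedness of the two claims' sets for any shared voter; both sets are prefixes of that voter's list. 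Given these facts, the payment argument of Theorem~\ref{thm:online-scr} applies verbatim, as in Proposition~\ref{prop:truncated-gcr-scr}, and Corollary~\ref{cor:scr-droop} with $p=n/(|T|+1)$ gives the Droop version. The running time is polynomial because only $O(n^2)$ intervals (possibly times $m$) are stored, each with its current common set and representation count.

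For verification, I would use the domination fact: a violating solid group may be enlarged to its maximal form within an interval. Still, the cutoffs may vary per round, and enlargement must stay solid in every round. Fix endpoints $a\leq b$. In each round $t$, the feasible cutoffs are those $r$ at which $a$ and $b$ have equal length-$r$ prefixes. Among these, I would choose $r^t$ to exclude $w^t$ if possible, since any such choice yields zero representation in that round. The obstruction is that larger cutoffs exclude more short-list intermediate voters, which shrinks the group. I would therefore pick, in each round, the smallest feasible cutoff not containing $w^t$, or the smallest feasible cutoff if none excludes it. This minimizes representation and maximizes size simultaneously, since the feasible cutoffs not containing $w^t$ form a prefix of the feasible cutoffs. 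I would then compare the resulting count with $\lfloor |T||\widehat S|/n\rfloor$, or with the Droop entitlement, over all $O(n^2)$ endpoint pairs.
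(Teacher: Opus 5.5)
Your proposal is correct and is essentially the paper's proof. Your $r^t(a,b)$ coincides with the paper's $r_{ab}^t$, because single-crossing completions force every long-enough voter between two agreeing endpoints to agree with them, and your group is the paper's $S_{ab}^j$. Your domination step is the paper's observation that $r_{ab}^t\le r^t$ yields a larger group and a smaller common set, and the SCR payment argument with $p=n/(|T|+1)$ then finishes as in the paper. That monotonicity already shows one minimal-cutoff claim per endpoint pair suffices, so your fallback is unnecessary; likewise your verification rule always picks the smallest feasible cutoff, which is exactly the paper's check of the $n(n+1)/2$ endpoint groups.
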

\begin{proof}
Number voters according to the common order. For each pair $a\leq b$ and
round $t$, let
\[
 r_{ab}^t=
 \min\{r\leq\min(\lambda_a^t,\lambda_b^t):A_a^t(r)=A_b^t(r)\},
 \quad B_{ab}^t=A_a^t(r_{ab}^t),
\]
provided such a cutoff exists. If it does not exist in some round, omit
that pair from this and all subsequent rounds. For a remaining pair,
in round $j$ consider the group
\begin{equation}
 S_{ab}^j=\{i\in\{a,\dots,b\}:\lambda_i^t\geq r_{ab}^t
                      \text{ for every }t\leq j\}.
 \label{eq:truncated-sc-groups}
\end{equation}
Both endpoints belong to this group. In any single-crossing completion,
every voter between the endpoints has top-$r_{ab}^t$ set $B_{ab}^t$.
A member of $S_{ab}^j$ reports at least that many candidates in each round
$t\leq j$, so her reported prefix also equals $B_{ab}^t$. Thus this group,
with rank vector $(r_{ab}^1,\dots,r_{ab}^j)$, is a valid solid claim.
Its representation before round $j$ and its priority are
\[
 q_{ab}^{j-1}=|\{t<j:w^t\in B_{ab}^t\}|,
 \quad \frac{|S_{ab}^j|}{q_{ab}^{j-1}+1}.
\]
The representation count depends on the endpoint prefixes, not on which
intermediate voters remain in the group.

Use these claims in the refinement loop of Algorithm~\ref{alg:online-scr}.
There are at most $n(n+1)/2$ of them. A singleton pair is never omitted and
has cutoff one in every round. It ensures an eligible refinement whenever
$|X|>1$: choose any voter initially, or any member of the preceding selected
claim thereafter, and use the singleton claim for her first choice in $X$.

To show that the smaller collection suffices, fix any group $S$ solid
throughout rounds $1,\dots,j$ under a rank vector $\mathbf r$. Let $a$
and $b$ be its first and last members. Their common reported prefix shows
that $r_{ab}^t$ exists and satisfies $r_{ab}^t\leq r^t$ for every $t\leq j$.
Every member of $S$ therefore meets the list-length requirements in
\eqref{eq:truncated-sc-groups}. Moreover,
\[
 S\subseteq S_{ab}^j,\quad
 B_{ab}^t\subseteq C^t(S,\mathbf r),\quad
 q_{ab}^{j-1}\leq\mathrm{rep}^{j-1}(S,\mathbf r).
\]
The considered claim has at least the priority of $(S,\mathbf r)$.
Whenever $C^j(S,\mathbf r)$ is a proper subset of the current candidate
set $X$, so is $B_{ab}^j$. Thus an arbitrary eligible solid claim can always
be replaced by a considered eligible claim of at least its priority.

We apply the payment argument of Theorem~\ref{thm:online-scr} with
$p=n/(|T|+1)$. As there, the invariant concerns every solid group whose
priority after the current round is at least $p$, not only the groups
considered by the rule. Such a group is charged only in represented rounds.
Although $S_{ab}^j$ may shrink, the induction hypothesis applies to its
current members: this very group was solid in all previous rounds under
the corresponding shorter rank vector.

If the refinement loop selects a claim of priority at least $p$, take the
last such claim. Its members can pay $p$ by the induction hypothesis. An
unrepresented group of priority at least $p$ cannot intersect this paying
group. Otherwise their common prefixes are nested, and the unrepresented
group's prefix is a proper subset of the paying group's prefix. The
endpoint pair of the unrepresented group would then give an eligible next
refinement of priority at least $p$, contrary to the choice of the last
such claim. If no selected claim has priority at least $p$, any
unrepresented group of that priority would give an eligible first refinement
of priority at least $p$, again a contradiction. These are exactly the two
cases needed to preserve the invariant. There is enough total remaining
budget to pay $p$ in every round.

At the end, the electorate retains $p$ units. If a solid group $S$ with
$|S|>\ell p$ receives only $q\leq\ell-1$ represented rounds, its final
priority exceeds $p$. The invariant implies that it retains
$|S|-qp>p$ units, a contradiction. This proves the Droop guarantee and
hence the Hare guarantee.

For an implementation, keep the current group, endpoint prefix, and
representation count for each pair. In a new round, find the smallest
common reported endpoint prefix, remove intermediate voters whose current
lists are too short, and update the count after selecting the winner.
Finding prefixes by direct comparison and scanning the claims during
refinement takes $O(n^3+n^2m^2)$ elementary operations per round.
No future rankings or horizon are used.

Finally, given a proposed sequence, form the same groups and counts for
all rounds. Each final group is solid throughout. Conversely, the endpoint
group of any violating solid group has at least as many members and at most
as much representation. Since both entitlements are increasing in group
size, a violation exists exactly when one of these at most $n(n+1)/2$
groups violates the relevant entitlement. This gives polynomial-time
verification for both quotas.
\end{proof}

The rule needs the common voter order, but not the unreported parts of any
completion. Its guarantee assumes that compatible single-crossing completions
exist; it does not require finding them. In the preceding three-voter
example, the pair $(1,3)$ keeps exactly voters $1$ and $3$, since voter $2$'s
list is too short.

The fixed-rank verification result also extends to arbitrary truncation,
without a restriction on preferences.

\begin{corollary}
\label{cor:truncated-fixed-psc-verification}
Fixed-rank-wPSC can be verified in polynomial time for nonempty truncated
ballots, under either the Hare or Droop size condition.
\end{corollary}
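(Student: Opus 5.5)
The plan is to adapt the complete-ranking argument from the first paragraph of the proof of Theorem~\ref{thm:fixed-rank-verification} to truncated approval sets, and to check every cutoff $r\in[m]$ separately. Fix $r$. The first step is to characterize a solid group. A group $S$ is solid at cutoff $r$ in round $j$ exactly when every member lists at least $r$ candidates in round $j$ and all members have the same set $A_i^j(r)$. For a constant rank vector, a group that is solid in every round must therefore be contained in the set $V_r=\set{i\in V:\lambda_i^j\geq r\text{ for all }j\in T}$. Within $V_r$, its members must share the sequence $(A_i^1(r),\dots,A_i^{|T|}(r))$.

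The second step is to pass to equivalence classes. Partition $V_r$ according to these sequences. Every class is itself solid in every round at cutoff $r$. Any solid group lies inside a single class, and that class has the same common approval set in each round. The class is therefore represented in exactly the same rounds as the group, namely those with $w^j\in A_i^j(r)$ for any member $i$, and it has at least as many members.

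The third step is the monotonicity observation. Both the Hare entitlement $\lfloor |T||S|/n\rfloor$ and the Droop entitlement $\lceil (|T|+1)|S|/n\rceil-1$ are nondecreasing in $|S|$. So if some solid group violates fixed-rank-wPSC at cutoff $r$, its enclosing class also violates it. It therefore suffices to compute, for each class, its size and its number of represented rounds, and to compare the two.

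The algorithm follows directly. For each $r\in[m]$, compute $V_r$ and group its voters by their approval-set sequences; hashing or sorting the sequences costs $O(n|T|m)$ per cutoff, up to logarithmic factors. Then, for each of the at most $n$ classes, count the rounds in which the winner lies in the common set and compare this count with the relevant quota. The total running time is polynomial in $n$, $m$ and $|T|$.

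The only step needing care is the first. Under truncation, equality of the sets $A_i^j(r)$ alone does not give solidity, because a voter whose list is shorter than $r$ has a smaller approval set. The list-length filter defining $V_r$ is what restores the equivalence between solid groups and subsets of a class. Once $V_r$ is in place, the remaining steps go through as in the complete-ranking case.
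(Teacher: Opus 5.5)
Your proposal is correct and follows essentially the same approach as the paper: for each cutoff $r$, keep only voters whose lists have length at least $r$ in every round, partition them by their sequences of top-$r$ sets, and check only the classes, since every solid group lies in one class with the same representation and a weakly larger Hare or Droop entitlement. Your observation that the list-length filter is what makes equal approval sets equivalent to solidity is exactly the point that distinguishes the truncated case from the complete-ranking argument.
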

\begin{proof}
For each cutoff $r$, retain voters whose lists have length at least $r$
in every round. Partition them according to their sequences of top-$r$
sets, as in Theorem~\ref{thm:fixed-rank-verification}. Every solid group is
contained in one class with the same representation and a weakly larger
entitlement. Checking the classes therefore suffices for either quota.
\end{proof}

\paragraph{MES and the budget rules.}

MES requires one additional step because, unlike in the complete-ranking
model, cutoff $m$ need not make any candidate affordable. In
Algorithm~\ref{alg:MES1}, the sets $N_c^j(r)=\set{i\in V:c\in A_i^j(r)}$
are now defined by the truncated approval sets. If no candidate is affordable at any
cutoff in a round, select the first choice of a voter according to a fixed
tie-breaking rule and make no deductions in that round. In every other round,
use the original MES step and deduct one unit in total.

\begin{proposition}
\label{prop:truncated-mes}
The modified MES rule is a polynomial-time semi-online rule satisfying
rank-wPJR for nonempty truncated ballots. Starting each voter with
$(|T|+1)/n$ units instead gives the Droop size condition.
\end{proposition}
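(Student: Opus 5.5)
The plan is to reuse the budget argument of Theorem~\ref{thm:mes-rank-wpjr} and its Droop variant, Corollary~\ref{cor:mes-droop}. The key change is that voters are now allowed to hold leftover budget at the end. Instead of relying on budgets reaching zero, we compare a violating group's budget with the unspent budget of the whole electorate.

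Polynomial time and the semi-online property are immediate. Each round still examines at most $m^2$ (candidate, cutoff) pairs and solves a piecewise-linear equation. The fallback step uses only the current round, and the initial budgets use $|T|$.

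Next I would record two invariants.
\begin{enumerate}[(a)]
\item Every round that deducts charges only voters in $N_{w^j}^j(r)$, where $r$ is the cutoff at which the purchase happens. It charges each such voter at most her remaining budget.
\item A fallback round occurs only when, for every cutoff $r$ and every candidate $c$, the voters approving $c$ at cutoff $r$ hold total budget less than one.
\end{enumerate}

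Now suppose that $\mathbf r$, a group $S$ agreeing in every round, and $\ell$ with $|S|\geq\ell n/|T|$ witness a violation, so $S$ is represented in at most $\ell-1$ rounds. Consider the first round $j$ in which $S$ pays positively while unrepresented, if such a round exists. Before that round $S$ has paid at most $\ell-1$, so it retains at least $|S||T|/n-(\ell-1)\geq 1$. Then a common candidate in $\bigcap_{i\in S}A_i^j(r^j)$ is affordable at cutoff $r^j$, so this is not a fallback round. MES therefore buys at some cutoff $r\leq r^j$. Any payer $i\in S$ satisfies $w^j\in A_i^j(r)\subseteq A_i^j(r^j)$, using nestedness of the truncated approval sets. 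This contradicts the assumption that $S$ is unrepresented in round $j$.

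Hence $S$ pays only in represented rounds and retains at least $1$ unit at the end. The same reasoning applies at every round, so in every unrepresented round some common candidate of $S$ is affordable. Consequently no round is a fallback round, because in a fallback round a group holding at least one unit and agreeing at $r^j$ would make its common candidate affordable. So every round deducts exactly one unit, and the total final budget is $|T|-|T|=0$. This contradicts the fact that $S$ retains at least one unit. The contradiction argument therefore goes through.

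The main obstacle is exactly this counting: the proof must rule out fallback rounds \emph{while} $S$ holds at least one unit, and I expect the observation above to handle it. For the Droop version I would start each voter with $(|T|+1)/n$ units, so a violating $S$ has $x>q+1$ units. The same argument shows that $S$ keeps more than one unit throughout. Then every round is a paying round, the electorate ends with exactly $1$ unit, and $S$ alone holds more than one, a contradiction, as in Corollary~\ref{cor:mes-droop}.
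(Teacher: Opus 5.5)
Your proposal is correct and follows essentially the same route as the paper. You show that the violating group pays only in represented rounds, so it keeps at least one unit in every round; since it agrees in every round, that unit rules out every no-deduction round, so exactly $|T|$ units (all but one unit in the Droop case) are spent, contradicting the budget the group retains. (Your opening remark that the argument no longer relies on budgets reaching zero is accurate only for the Droop variant. In the Hare case you, like the paper, conclude that all budgets end at zero.)
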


\begin{proof}
Suppose, for contradiction, that a rank vector $\mathbf r$, a group
$S$, and an integer $\ell\geq1$ violate rank-wPJR. Thus, $S$ agrees in every
round, $|S|\geq \ell n/|T|$, and $S$ is represented in at most
$\ell-1$ rounds. The initial total budget of $S$ is at least $\ell$.

We prove by induction over the rounds that members of $S$ are charged only in
rounds that represent $S$. At the start of any round, the induction hypothesis
implies that all earlier payments by $S$ occurred in represented rounds. Since
there are at most $\ell-1$ represented rounds in the entire sequence and each
round deducts at most one unit, $S$ still has total budget at least one. Let
$c$ be a candidate in $\bigcap_{i\in S}A_i^j(r^j)$. The voters in $S$ alone
have enough remaining budget to buy $c$ at cutoff $r^j$: the function
$\sum_{i\in N_c^j(r^j)}\min\{\rho,b_i\}$ increases continuously from zero
to a value of at least one. The rule therefore finds an affordable candidate
at some cutoff $r\leq r^j$ and does not enter the no-deduction case in this
round. If a voter $i\in S$ is charged, then $w^j\in A_i^j(r)\subseteq A_i^j(r^j)$ and the round represents $S$. This proves the induction claim.

It follows that $S$ retains at least one unit after the final round. The same
argument also shows that an affordable candidate exists in every round, so the
no-deduction case is never used. The rule therefore deducts exactly $|T|$ units,
which is the total initial budget of all voters. Since budgets remain
nonnegative, every voter must end with budget zero, a contradiction.

For the Droop variant, suppose that a group with initial total budget
$x=(|T|+1)|S|/n$ is represented in $q$ rounds and violates the claimed
entitlement. Then $x>q+1$. The same induction, now using $x-q>1$, shows
that the group pays only in represented rounds and that an affordable
candidate exists in every round. The electorate ends with exactly one unit,
whereas the group retains more than one, a contradiction.

Affordability can be computed as in the proof of
Theorem~\ref{thm:mes-rank-wpjr}; the additional check and tie-breaking step are
polynomial. The rule uses the horizon but no future rankings, and is therefore
semi-online.
\end{proof}

The JR and PJR guarantees of the first-choice rules in
Section~\ref{sec:first-choices} extend without a change to their algorithms
or proofs: every approval set $A_i^j(r)$ still contains voter $i$'s first
choice. The EJR conclusion of Corollary~\ref{cor:ejr-candidate-horizon}
uses complete two-candidate ballots, and is not included in this extension. The same is true for
Algorithm~\ref{alg:ordered-budgets}. To see this, complete each truncated
ballot within the given single-peaked or single-crossing domain. The algorithm
uses only the first choices and the given order, so it returns the same
sequence on the completion. Every approval set of the truncated
ballot is also an approval set of that completion. Theorem~\ref{thm:ordered-budgets}
therefore implies all-rank-wPJR for the truncated profile.

\subsection{A Characterization for All-rank Verification}
\label{sec:truncated-characterization}

With complete rankings, every group can agree in every round by placing all
cutoffs at $m$. Truncation removes this possibility: the members must first
have a common listed candidate in every round. Among groups that meet this
condition, we can again determine the minimum number of represented rounds
over all cutoffs. This gives the counterpart of
Lemma~\ref{lem:all-rank-minimum-representation} needed for verification.
Fix a proposed sequence $W$. For every voter $i$ and round $j$, let
\[
 B_i^j=
 \begin{cases}
  \set{c_{i,t}^j:t<s},
   &\text{if }w^j=c_{i,s}^j\text{ for some }s,\\
  L_i^j, &\text{if }w^j\notin L_i^j.
 \end{cases}
\]
Thus, $B_i^j$ contains exactly the listed candidates that voter $i$ can
approve without also approving $w^j$.

\begin{lemma}
\label{lem:truncated-minimum-representation}
A group $S$ can agree in every round under some
rank matrix if and only if $\bigcap_{i\in S}L_i^j\neq\varnothing$ for every $j\in T$.
In this case, the minimum number of represented
rounds over all such rank matrices is
$q(S)=|\set{j\in T:\bigcap_{i\in S}B_i^j=\varnothing}|$.
\end{lemma}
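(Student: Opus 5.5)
The argument is round by round. Rank matrices factor across rounds: each round $j$ uses only the row $R^j=(r_i^j)_{i\in S}$, and agreement and representation in round $j$ depend only on that row. So both parts of the statement reduce to a single round $j$, and afterwards we sum over rounds. For each round we establish two facts. First, some row makes $S$ agree if and only if $\bigcap_{i\in S}L_i^j\neq\varnothing$. Second, some row makes $S$ agree without being represented if and only if $\bigcap_{i\in S}B_i^j\neq\varnothing$. Combining these, the minimum over rank matrices under which $S$ agrees in every round is attained by choosing each row independently. We pick a non-representing agreeing row whenever one exists and any agreeing row otherwise. This gives exactly $q(S)$ represented rounds.

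\emph{Agreement.} Every approval set satisfies $A_i^j(r)\subseteq L_i^j$, so agreement forces a common listed candidate. Conversely, if $c\in\bigcap_{i\in S}L_i^j$, set $r_i^j=\lambda_i^j$ for every member. Then $A_i^j(r_i^j)=L_i^j\ni c$ for all $i\in S$, so $S$ agrees.

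\emph{Agreement without representation.} Suppose $c\in\bigcap_{i\in S}B_i^j$. For each $i$, let $r_i^j$ be the position of $c$ in voter $i$'s list, which is at most $\lambda_i^j$. Then $c\in A_i^j(r_i^j)$. By the definition of $B_i^j$, the candidate $c$ is listed strictly above $w^j$ if $w^j$ is listed, and $w^j$ is unlisted otherwise. In either case $w^j\notin A_i^j(r_i^j)$, because approval sets are prefixes of the list. So $S$ agrees on $c$ and is not represented.

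Conversely, suppose some row makes $S$ agree on a candidate $c$ with $w^j\notin A_i^j(r_i^j)$ for all $i\in S$. Since $A_i^j(r_i^j)$ is a prefix of the list that contains $c$ but not $w^j$, the candidate $c$ is listed and, when $w^j$ is listed, lies strictly above it. Hence $c\in B_i^j$ for every $i\in S$, and the intersection is nonempty.

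Finally, when $\bigcap_{i\in S}B_i^j=\varnothing$ but $S$ can agree, every agreeing row represents $S$, so that round is counted in $q(S)$ under every admissible matrix. The only point that needs care is the truncation convention: cutoffs beyond $\lambda_i^j$ add nothing, and unlisted winners are never approved. I expect no step to be a genuine obstacle; the main thing to verify is that the prefix structure of truncated approval sets gives the converse direction.
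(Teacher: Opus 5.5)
Your proof is correct and follows the paper's argument essentially step by step. In each round you characterize agreement by a common listed candidate, and agreement without representation by $\bigcap_{i\in S}B_i^j\neq\varnothing$ using the prefix structure of the truncated approval sets, then combine rounds independently. The only cosmetic difference is in the agreement direction: you set each cutoff to $\lambda_i^j$, whereas the paper places every cutoff at the common candidate $c$.
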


\begin{proof}
If $S$ agrees in round $j$, every common approved candidate belongs to
$\bigcap_{i\in S}L_i^j$. Conversely, if a candidate $c$ belongs to this
intersection, each voter can place her cutoff at $c$, so the group agrees.
This proves the first statement.

Now fix a round $j$. If $\bigcap_{i\in S}B_i^j$ contains a candidate $c$,
placing every cutoff at $c$ makes the group agree and excludes $w^j$ from
every member's approval set. If this intersection is empty, suppose that some
cutoffs make the group agree without representing it. Any common approved
candidate must then precede $w^j$ on every list on which $w^j$ appears, and
$w^j$ must be unlisted on every remaining list. The candidate would therefore
belong to every $B_i^j$, a contradiction. Hence representation is unavoidable
exactly in the rounds counted by $q(S)$. Cutoffs can be
chosen independently across rounds, which proves the formula.
\end{proof}

By Lemma~\ref{lem:truncated-minimum-representation}, $W$ violates
all-rank-wPJR if and only if some group that can agree in every round also satisfies
\begin{equation}
 |T||S|-nq(S)\geq n.
 \label{eq:truncated-violation}
\end{equation}
Indeed, this inequality is equivalent to
$q(S)<\lfloor |S||T|/n\rfloor$. For all-rank-wJR, the corresponding conditions
are $q(S)=0$ and $|T||S|\geq n$.

For a given group $S$, the intersections $\bigcap_{i\in S}L_i^j$ and the value $q(S)$ can be computed in polynomial time.
Thus one can test directly whether $S$ satisfies
\eqref{eq:truncated-violation}, or whether $q(S)=0$ and $|T||S|\geq n$.
Hence, deciding whether $W$ violates either weak axiom is in NP, and all
verification problems considered below are in coNP.

\subsection{When Does Truncation Make Verification Hard?}

We consider three restrictions on the reported lists: two candidates, few
omitted candidates, and short lists. Our results show that verification can become
hard after only a small change to ballots for which it is easy.

\paragraph{Two candidates.}
With two candidates, all-rank-wJR and all-rank-wPJR behave differently under
truncation. All-rank-wJR remains easy to verify for arbitrary truncation. For
all-rank-wPJR, verification remains tractable when at most one ballot per round
has length one, whereas two such ballots per round suffice for coNP-completeness.

\begin{theorem}
\label{thm:truncated-binary-verification}
Suppose that $m=2$ and every ballot is nonempty.
\begin{enumerate}[(i)]
 \item All-rank-wJR can be verified in $O(n|T|)$ time.
 \item All-rank-wPJR can be verified in polynomial time if at most one ballot
 in each round has length one.
 \item All-rank-wPJR is coNP-complete even if at most two ballots in each
 round have length one, all other ballots are complete, the proposed winner
 is the same in every round, and the ballots have strict Euclidean completions
 with fixed candidate positions.
\end{enumerate}
\end{theorem}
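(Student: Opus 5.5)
Throughout, I would write $c^j$ for the candidate other than $w^j$ and reduce every part to Lemma~\ref{lem:truncated-minimum-representation}. With two candidates, a voter's list is one of $(w^j)$, $(w^j,c^j)$, $(c^j,w^j)$ or $(c^j)$. The definition of $B_i^j$ then gives $B_i^j=\varnothing$ exactly when voter $i$ lists $w^j$ first, and $B_i^j=\set{c^j}$ otherwise. Hence $q(S)=|\set{j:S\cap H^j\neq\varnothing}|$ with $H^j=\set{i:c_{i,1}^j=w^j}$, exactly as in Theorem~\ref{thm:two-candidate-verification}. The only new feature is the agreement condition $\bigcap_{i\in S}L_i^j\neq\varnothing$. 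It fails precisely when $S$ contains one voter whose list is $(w^j)$ and another whose list is $(c^j)$, since these are the only two singleton lists and every longer list is $\set{w^j,c^j}$.

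For (i), a wJR violation needs $q(S)=0$, so every member of $S$ lists $c^j$ first in every round. Then $c^j$ lies in every $L_i^j$, and the agreement condition holds automatically. Therefore the union $U$ of all voters who never list the winner first is the largest candidate group. A violation exists iff $|U||T|\geq n$, and computing $U$ takes $O(n|T|)$ time.

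For (ii), if each round has at most one ballot of length one, the two incompatible singleton lists never occur in the same round. So every group can agree in every round. Verification then reduces to testing \eqref{eq:truncated-violation} with the $q(S)$ above. This is exactly the minimum-cut test of Theorem~\ref{thm:two-candidate-verification}, applied to the sets $H^j$.

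For (iii), membership in coNP is already established. For hardness I would reduce from \textsc{Independent Set}, encoding edges through the agreement condition. Let $C=\set{a,w}$ and propose $w$ in every round. In the round for an edge $u\to v$, voter $u$ reports $(a)$, voter $v$ reports $(w)$, and every other voter reports a complete list. A group containing both endpoints then cannot agree in that round, so only independent vertex groups, or groups containing at most one endpoint of each edge, impose a weak requirement. The cost is that $v$ lists $w$ first, so $v$ contributes one represented round; this is the representation accounting I must control. I would pad as in the three-candidate fixed-rank-wPJR reduction of Theorem~\ref{thm:fixed-rank-verification}:
\begin{itemize}
\item for each vertex, add individual rounds in which that vertex alone among the vertices lists $w$ first, so that every vertex lists $w$ first in exactly $D$ rounds;
\item make all remaining vertex ballots $(a,w)$;
\item use dummy voters who list $w$ first throughout, so that any group containing a dummy is always represented;
\item take a horizon $h$ slightly above $Dn$.
\end{itemize}
For an independent group $S$ of size $s$, the rounds in which a member lists $w$ first are distinct, so $q(S)=Ds$ plus a constant. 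I would tune $h$ and the numbers of padding vertices and dummies so that the entitlement $\lfloor hs/n\rfloor$ exceeds $q(S)$ exactly when $s$ reaches the target.

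The main obstacle is this calibration, which must simultaneously ensure three things:
\begin{itemize}
\item small groups never violate;
\item each round has at most two length-one ballots, which the padding rounds respect with at most one;
\item each round is Euclidean with fixed candidate positions, which I would get by placing $a$ at $0$, $w$ at $1$, and each voter near her first choice.
\end{itemize}
I would try the same parameters as in the fixed-rank-wPJR reduction first and adjust them if needed.
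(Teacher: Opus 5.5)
Your proposal is essentially the paper's proof. Part (i) is the same argument. Part (ii) reaches the same minimum-cut test by a slightly different step: you apply the two-candidate cut to the sets $H^j$ directly, while the paper appends the missing candidate to the lone length-one ballot and invokes the single-peaked verifier. Part (iii) is the paper's \textsc{Independent Set} reduction up to relabeling the candidates.

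The calibration you left open is immediate. Each vertex lists the winner first in exactly $D$ rounds and there is no further constant, so $q(S)=D|S|$. The paper takes $n=K\lceil N/K\rceil$ and $|T|=Dn+\lceil N/K\rceil$, which makes the entitlement $D|S|+\lfloor |S|/K\rfloor$. A violation therefore exists exactly for independent sets of size at least $K$, and no padding vertices are needed. If you instead reuse the fixed-rank-wPJR parameters, you must also keep its all-winner-first round; otherwise its added isolated vertices make every instance a violation.
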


\begin{proof}
Let $U_W$ be the voters whose first listed candidate differs from $w^j$ in
every round $j$. If a group is unrepresented in every round, each of its members
belongs to $U_W$. Conversely, since there are two candidates, every voter in $U_W$
lists the other candidate first in each round. Hence $U_W$ agrees at cutoff $1$
in every round and is never represented. All-rank-wJR is violated exactly
when $|T||U_W|\geq n$, which can be checked in $O(n|T|)$ time.

Now suppose that at most one ballot per round has length one. In every round,
the candidate on that ballot is listed by every voter; hence every group can
agree. Append the missing candidate to each length-one ballot. This does not
change any set $B_i^j$: if the winner is listed, the appended candidate comes
after it, and if the winner is unlisted, it is appended after the listed
candidate. Lemma~\ref{lem:truncated-minimum-representation} therefore gives
the same minimum representation for every group before and after completion.
Every complete profile with two candidates is single-peaked, so
Theorem~\ref{thm:single-peaked-verification} gives a polynomial-time
verification algorithm.

For the third statement, reduce from \textsc{Independent Set}
\cite{karp1972reducibility}. Let $G=(X,E)$ have $N$ vertices, and let
$K\in[N]$. Orient every edge arbitrarily, and let $d_u^+$ be the outdegree of
$u$. Set $D=\max\set{1,\max_{u\in X}d_u^+}$, $n=K\lceil N/K \rceil$, and $|T|=Dn+ \lceil  N/K \rceil$.
Create one vertex voter for each $u\in X$ and $n-N$ dummy voters. The latter
number is nonnegative because $K\lceil N/K\rceil\geq N$. There are two
candidates, $a$ and $b$, and $w^j=a$ in every round. Every dummy voter reports
$(a,b)$ in every round.

For every oriented edge $u\to v$, add one round in which $u$ reports $(a)$,
$v$ reports $(b)$, and every other vertex voter reports $(b,a)$. For every
vertex $u$, add $D-d_u^+$ rounds in which $u$ reports $(a,b)$ and every other
vertex voter reports $(b,a)$. The number of rounds added so far is $|E|+\sum_{u\in X}(D-d_u^+)=DN$.
Add $|T|-DN=D(n-N)+ \lceil\frac{N}{K}\rceil$ more rounds in which every vertex voter reports $(b,a)$. Only the edge rounds
contain length-one ballots, and each such round contains exactly two. Place
$a$ at $0$ and $b$ at $1$; in each round, place every voter at the position of
her first choice. This gives strict Euclidean completions with the same
candidate positions in all rounds.

Consider a group $S$ of vertex voters. In the round for $u\to v$, its members
have a common listed candidate exactly when $S$ does not contain both $u$ and
$v$. No other round restricts agreement. Hence $S$ can agree in every round if
and only if it is an independent set.

In an edge round $u\to v$, the set $B_u^j$ is empty and every other vertex
voter has $B_i^j=\set{b}$. The same statement holds in each of the
$D-d_u^+$ additional rounds for $u$. Each vertex is therefore the unique
vertex voter with an empty $B_i^j$ in exactly $D$ rounds, and these rounds are
different for different vertices. For every independent set $S$, $q(S)=D|S|$.
On the other hand, $\lfloor |T||S|/n \rfloor
 =D|S|+\lfloor |S|/K \rfloor$.
Thus $S$ violates all-rank-wPJR exactly when $|S|\geq K$. Any group containing
a dummy voter is represented in every round and cannot violate the axiom.
Consequently, $W$ satisfies all-rank-wPJR if and only if $G$ has no independent
set of size at least $K$. The construction is polynomial, so verification is
coNP-complete.
\end{proof}

\paragraph{Few omitted candidates.}
With at least three candidates, omitting at most one candidate from each ballot
preserves polynomial-time verification on single-peaked profiles.
The reason is that every group still has a common listed candidate in every
round, and appending the missing candidate does not change its minimum
representation. Omitting two candidates is already enough for
coNP-completeness, even when the list lengths are equal.

\begin{theorem}
\label{thm:truncated-omission-boundary}
Suppose that every round has a given candidate axis and every truncated ballot
has a single-peaked completion on that axis.
\begin{enumerate}[(i)]
 \item If $m\geq3$ and every ballot omits at most one candidate, then
 all-rank-wJR and all-rank-wPJR can be verified in polynomial time.
 \item Both verification problems are coNP-complete when $m=3$ and each
 ballot omits at most two candidates, even with fixed Euclidean candidate
 positions, a constant proposed winner, and a candidate listed by every voter
 in every round.
 \item For every fixed $m\geq4$, both problems remain coNP-complete when
 every ballot lists exactly $m-2$ candidates. The candidate positions are
 fixed and the proposed winner is constant. For $m\geq5$, one candidate is
 listed by every voter in every round.
\end{enumerate}
\end{theorem}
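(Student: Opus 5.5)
The plan treats the three parts separately. Part~(i) reduces to the complete-ranking case; parts~(ii) and~(iii) are \textsc{Independent Set} reductions in the style of Theorems~\ref{thm:fixed-rank-verification} and~\ref{thm:truncated-binary-verification}. Membership in coNP for (ii) and (iii) is already given by Lemma~\ref{lem:truncated-minimum-representation} and the discussion after it.

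\emph{Part (i).} Fix a round and a voter who omits one candidate. In the single-peaked completion the unlisted candidate is ranked last, and the last candidate of a single-peaked ranking is an endpoint of the axis. Hence every omitted candidate is one of the two axis endpoints. Since $m\geq3$, every non-endpoint candidate is listed by all voters, so by Lemma~\ref{lem:truncated-minimum-representation} every group can agree in every round.

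Next I complete each ballot by appending its missing candidate and check that every set $B_i^j$ is unchanged:
\begin{itemize}
\item if $w^j$ is listed, the appended candidate comes after $w^j$ and does not enter $B_i^j$;
\item if $w^j$ is the omitted candidate, then before completion $B_i^j=L_i^j$, and after completion $w^j$ is last, so $B_i^j$ consists of all other candidates, which is the same set.
\end{itemize}
Thus $q(S)$ is the same for the truncated profile and its completion. By Lemma~\ref{lem:all-rank-minimum-representation}, on the completion $|T|-q(S)$ is the number of rounds in which the winner is not Pareto efficient for $S$. Theorem~\ref{thm:single-peaked-verification} then decides both axioms on the completed single-peaked profile in polynomial time, and the answer transfers to the truncated profile.

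\emph{Part (ii).} Because a common candidate is listed by everyone, agreement never restricts groups, so all the hardness must come from $q(S)$. I would mimic the fixed-rank-wPJR gadget of Theorem~\ref{thm:fixed-rank-verification}. Take three candidates at fixed positions, for example $a,b,c$ at $0,1,2$, with a constant winner, and let the middle candidate $b$ be listed by all voters.

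Each edge round $u\to v$ should satisfy
\[
\bigcap_{i\in S}B_i^j=\varnothing \quad\Longleftrightarrow\quad \set{u,v}\subseteq S .
\]
To achieve this, $u$ and $v$ receive the two singletons $\set{a}$ and $\set{b}$ (or $\set{b}$ and $\set{c}$) as their $B$-sets, and every other vertex voter has a $B$-set containing both. This is achieved by truncating lists and choosing whether the winner is listed.

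The rest of the construction follows Theorem~\ref{thm:truncated-binary-verification}(iii):
\begin{itemize}
\item For each vertex $u$, add individual rounds in which only $u$ has an empty $B$-set, so that each vertex accounts for exactly $D$ rounds of forced representation.
\item Add filler rounds and dummy voters who list the winner first, so that all dummies are represented in every round.
\item Tune $|T|$ and $n$ so that a vertex group has $q(S)=D|S|-|E(G[S])|$ (plus a constant, if needed) while its entitlement is $D|S|+\lfloor |S|/K\rfloor$.
\end{itemize}
The axiom is then violated exactly by independent sets of size at least $K$. For wJR, I would instead use the variant of Theorem~\ref{thm:three-candidate-verification} in which an independent set is unrepresented in every round.

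\emph{Part (iii).} Starting from the part~(ii) instances, add $m-3$ new candidates at both ends of the axis. Each ballot is arranged to list exactly $m-2$ candidates: the original listed candidates together with nearby padding candidates, chosen so that the resulting list is still single-peaked. The padding must leave every $B_i^j$ unchanged relative to the original winner. For $m\geq5$, one inner padding candidate can be placed on every list.

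\emph{Main obstacle.} I expect the hardest step to be the exact three-candidate gadget for part~(ii). I would first try winner $c$ at the axis end with $b$ listed by all. The difficulty is producing the two distinct singleton $B$-sets and the shared larger $B$-set simultaneously, while keeping every list a prefix of a single-peaked completion and omitting at most two candidates.
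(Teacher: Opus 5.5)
Your part~(i) is correct. Observing that an omitted candidate is last in a single-peaked completion, and hence an axis endpoint, shows that every interior candidate is listed by all voters; this is a valid substitute for the paper's pairwise-intersecting-intervals argument.

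The gap is the step you flag in part~(ii). The edge gadget is the whole content of the reduction, and the configuration you propose cannot realize it. With three candidates, the sets $B_u^j$ and $B_v^j$ of the edge's endpoints must be the two singletons of $C\setminus\set{w^j}$. Every other vertex voter $x$ must then have $B_x^j=C\setminus\set{w^j}$, because $\set{u,x}$ and $\set{v,x}$ need nonempty intersections.

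Suppose the middle candidate $b$ is listed by everyone and the winner is $c$. Then $B_u^j=\set{a}$ forces $u$ to list $c$ immediately after $a$ and $b$ later, and $a\succ c\succ b$ is not a prefix of any ranking single-peaked on $a,b,c$. Winner $a$ is symmetric. If the winner is $b$, the other vertex voters would need both $a$ and $c$ above $b$, which is again impossible.

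The missing idea is to have everyone list the axis endpoint opposite the winner, and to use a ballot of length one. The paper places $a,b,w$ at $0,10,12$. In the round for an edge, $u$ reports only $(a)$, so $B_u^j=L_u^j=\set{a}$ with $w$ unlisted. Voter $v$ reports $b\succ w\succ a$, the other vertex voters report $a\succ b$, and the dummy voters report $w\succ b\succ a$. Every voter lists $a$, and the four $B$-sets are $\set{a}$, $\set{b}$, $\set{a,b}$, $\varnothing$.

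Your bookkeeping also has the wrong sign. Agreement is unrestricted here, so an edge round forces representation exactly when both endpoints are present, and $q(S)$ grows with $|E(G[S])|$. The formula $q(S)=D|S|-|E(G[S])|$ works in Theorem~\ref{thm:fixed-rank-verification} only because agreement at cutoff two excludes non-independent groups there. Taken literally in your setting, groups with several internal edges would fall below $D|S|$ and violate wPJR even when $G$ has no large independent set.

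Individual rounds are unnecessary. After adding $N_0$ isolated vertices so that $N<2K$, and setting $n=K|T|$, every vertex group has entitlement $\lfloor|S|/K\rfloor\in\set{0,1}$ and $q(S)=|E(G[S])|$. One construction therefore handles wJR and wPJR simultaneously. The complete-ranking gadget of Theorem~\ref{thm:three-candidate-verification} cannot be reused as is, since all-rank verification on complete single-peaked profiles is polynomial by Theorem~\ref{thm:single-peaked-verification}.

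Part~(iii) likewise does not follow by padding that leaves every $B_i^j$ unchanged. Once all lists have length at least two, $B_u^j=\set{a}$ makes $\set{a,w}$ the top two of $u$, so $a$ and $w$ are adjacent on the axis. $B_x^j=\set{a,b}$ makes $a$ and $b$ adjacent. Then $B_v^j=\set{b}$ would require $\set{b,w}$ to be an interval, which fails.

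The paper instead changes the sets and keeps only the intersection pattern. For $m\geq5$ the $B$-sets are $\set{a_0,\dots,a_{m-3}}$, $\set{b}$, $\set{a_1,\dots,a_{m-3},b}$, and $\varnothing$, with $a_2$ listed by all. For $m=4$ the lists are $a_0\succ a_1$, $b\succ w$, $a_1\succ b$, and $w\succ b$. There a group containing both endpoints simply cannot agree, which is why no universally listed candidate is claimed for $m=4$.
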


\begin{proof}
Suppose first that every ballot omits at most one candidate. Append the missing
candidate to the end of each incomplete list. This is the unique completion of
that list, and it is single-peaked by assumption. The completion does not
change $B_i^j$: if $w^j$ is listed, the new candidate is appended after it; if
$w^j$ is unlisted, then $w^j$ is the appended candidate and every listed
candidate precedes it.

Each set $L_i^j$ is an interval of the candidate axis and has size at least
$m-1$. When $m\geq3$, any two such intervals intersect. Intervals on a line
that intersect pairwise have a common point, so
$\bigcap_{i\in V}L_i^j\neq\varnothing$ in every round. Hence every group can
agree in every round, and Lemma~\ref{lem:truncated-minimum-representation}
gives the same minimum representation in the truncated profile and its
completion. Theorem~\ref{thm:single-peaked-verification} now applies.

For the two hardness statements, start with an instance $(G_0,K_0)$ of
\textsc{Independent Set}, where $G_0$ has $N_0$ vertices and
$1\leq K_0\leq N_0$~\cite{karp1972reducibility}. Add $N_0$ new vertices with
no incident edges, and call the resulting graph $G$. Let $N=2N_0$ and $K=N_0+K_0$.
Then $N<2K$, and $G$ has an independent set of size at least $K$ if and only
if $G_0$ has one of size at least $K_0$. Set $|T|=\max\set{2,|E(G)|}$ and $n=K|T|$.
Create one vertex voter for each vertex of $G$ and $n-N$ dummy voters. Since
$n\geq2K>N$, this number is nonnegative. Use one round for each edge of $G$,
and add extra rounds, specified below, until the horizon is $|T|$. The proposed winner
is $w$ in every round.

For the second statement, place $a,b,w$ at $0,10,12$, respectively. For each
edge, choose either endpoint as $u$ and denote the other by $v$. In the
corresponding round, use the following positions and lists:
\[
\begin{array}{c|c|l}
\text{voter}&\text{position}&\text{reported list}\\\hline
u&0&a\\
v&10&b\succ w\succ a\\
\text{other vertex voter}&0&a\succ b\\
\text{dummy voter}&12&w\succ b\succ a
\end{array}
\]
Every list is a prefix of the strict Euclidean ranking at the displayed
position. Every voter lists $a$, and no ballot omits more than two candidates.
Relative to the proposed winner $w$, the four sets $B_i^j$ are $\set{a}, \set{b}, \set{a,b}, \varnothing$.
In each added round, every vertex voter uses the third row and every dummy
voter uses the fourth row. Thus every group can agree in every round. A group
containing a dummy voter is represented in every round. For a group $S$ of
vertex voters,
\begin{equation}
 q(S)=\left|\set{e\in E(G):e\subseteq S}\right|.
 \label{eq:truncated-edge-count}
\end{equation}
Indeed, in an edge round the intersection of the relevant $B_i^j$ sets is
empty exactly when both endpoint voters belong to $S$.

Since $n=K|T|$ and $|S|\leq N<2K$, $\lfloor |T||S|/n \rfloor  =\lfloor |S|/K \rfloor$
is positive exactly when $|S|\geq K$, and then it equals one. By
\eqref{eq:truncated-edge-count}, a vertex group violates either weak axiom
exactly when it is an independent set of size at least $K$. This proves the
second statement.

For the third statement, first suppose that $m\geq5$. Place $a_0,a_1,\dots,a_{m-3},b,w$ at positions $0,1,\dots,m-3,10m,10m+1$ respectively. For each edge, again choose one endpoint as $u$ and the other as
$v$. In its round, use
\[
\begin{array}{c|c|l}
\text{voter}&\text{position}&\text{reported list}\\\hline
u&0&a_0\succ a_1\succ\cdots\succ a_{m-3}\\
v&10m&b\succ w\succ a_{m-3}\succ\cdots\succ a_2\\
\text{other vertex voter}&5m+\tfrac14
 &a_{m-3}\succ\cdots\succ a_1\succ b\\
\text{dummy voter}&10m+1
 &w\succ b\succ a_{m-3}\succ\cdots\succ a_2
\end{array}
\]
Every list has length $m-2$, and every list contains $a_2$. The first,
second, and fourth rows are immediate from the displayed positions. For the
third row, all candidates $a_t$ with $t\geq1$ are closer than $b$; moreover,
$b$, $a_0$, and $w$ have respective distances
$5m-\tfrac14$, $5m+\tfrac14$, and $5m+\tfrac34$. Hence the displayed list is
exactly the first $m-2$ candidates in that voter's Euclidean ranking.

The four sets $B_i^j$ are $\set{a_0,\dots,a_{m-3}}$, $\set{b}$, $\set{a_1,\dots,a_{m-3},b}$, $\varnothing$.
Every group can agree because every voter lists $a_2$. For a vertex group,
the intersection of the $B_i^j$ sets is empty in an edge round exactly when
both endpoint voters belong to the group. Use the third row for every vertex
voter and the fourth row for every dummy voter in the added rounds. Equation
\eqref{eq:truncated-edge-count} and the same size calculation prove
coNP-hardness.

When $m=4$, place $a_0,a_1,b,w$ at $0,1,40,41$, respectively. In an edge
round, use
\[
\begin{array}{c|c|l}
\text{voter}&\text{position}&\text{reported list}\\\hline
u&0&a_0\succ a_1\\
v&40&b\succ w\\
\text{other vertex voter}&20+\tfrac14&a_1\succ b\\
\text{dummy voter}&41&w\succ b
\end{array}
\]
These are the first two candidates in the corresponding strict Euclidean
rankings. In each added round, every vertex voter uses the third row and every
dummy voter uses the fourth row. A vertex group has a common listed candidate
in an edge round exactly when it does not contain both endpoints. In every
such case, the intersection of its $B_i^j$ sets is also nonempty, so the group
can choose cutoffs that exclude $w$. Hence a vertex group can agree in all
rounds and remain unrepresented exactly when it is an independent set. The
same size calculation again proves coNP-hardness for both weak axioms.
\end{proof}

Thus, for single-peaked truncated ballots with $m\geq3$, one omitted candidate
per ballot preserves polynomial verification, whereas two omitted candidates
can make verification coNP-complete. The hardness does not depend on varying
list lengths. For every fixed ballot length of at least two, it holds when the
number of candidates is two larger than the ballot length.

\paragraph{Lists of length one and two.}
The preceding theorem concerns ballots that are almost complete. Short lists
do not in general make verification easier. Lists of length one can be checked
directly by grouping voters with identical lists in every round, but allowing
length two is enough for coNP-completeness. This holds even on a common
single-crossing voter order.

\begin{theorem}
\label{thm:truncated-short-list-boundary}
If every ballot has length one, then all-rank-wJR and all-rank-wPJR can be
verified in polynomial time. If every ballot has length at most two, both
problems are coNP-complete, even with three candidates at fixed Euclidean
positions, a constant proposed winner, and strict completions that are
single-crossing on the same given voter order in every round.
\end{theorem}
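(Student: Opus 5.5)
The statement has two parts, and I would prove them separately. Both rely on the counterpart of Lemma~\ref{lem:all-rank-minimum-representation} for truncated ballots, namely Lemma~\ref{lem:truncated-minimum-representation}.

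\emph{Length one.} If every ballot has length one, then $L_i^j=A_i^j(r)$ for every cutoff. A group can therefore agree in every round only if, in each round, all its members list the same single candidate. So every such group lies inside one class of the partition of $V$ by the sequence $(L_i^1,\dots,L_i^{|T|})$. Within a class, $B_i^j$ is empty when the listed candidate is $w^j$ and equals $L_i^j$ otherwise. Hence $q(S)$ is the number of rounds in which the class's candidate wins, and this number is the same for every nonempty subset of the class. Enlarging a group to its whole class keeps $q(S)$ and the ability to agree, and can only increase $|S|$. It therefore suffices to test the at most $n$ classes against~\eqref{eq:truncated-violation}, or against $q(S)=0$ and $|T||S|\geq n$ for wJR. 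Membership in coNP was noted after Lemma~\ref{lem:truncated-minimum-representation}.

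\emph{Hardness for lists of length at most two.} I would follow the template of Theorem~\ref{thm:truncated-omission-boundary}(ii). Take $(G_0,K_0)$, add $N_0$ isolated vertices to get $G$ with $N=2N_0<2K$, where $K=N_0+K_0$. Set $|T|=\max\{2,|E(G)|\}$ and $n=K|T|$, with $N$ vertex voters and $n-N$ dummies. The candidates are $a,b,w$ at $0,10,30$, and $w$ is proposed in every round. Number the vertices, and orient each edge from the smaller index $u$ to the larger index $v$. In the round for that edge:
\begin{itemize}
\item $u$ reports $(a)$ from position $0$;
\item $v$ reports $(b)$ from position $10$;
\item every other vertex voter with index less than $v$ reports $(a,b)$ from position $4$;
\item every other vertex voter with index greater than $v$ reports $(b,a)$ from position $6$;
\item every dummy reports $(w,b)$ from position $30$.
\end{itemize}
In the added rounds, all vertex voters report $(a,b)$ and all dummies report $(w,b)$. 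The sets $B_i^j$ are then $\{a\}$, $\{b\}$, $\{a,b\}$ and $\varnothing$. A vertex group has a common listed candidate in an edge round exactly when it does not contain both endpoints, and in that case the intersection of its $B_i^j$ is nonempty. So a vertex group can agree in every round if and only if it is independent, and then $q(S)=0$. Any group containing a dummy voter is always represented. Since $\lfloor|S||T|/n\rfloor=\lfloor|S|/K\rfloor\leq1$, it follows that either weak axiom is violated exactly when $G$ has an independent set of size at least $K$.

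\emph{Main obstacle.} The delicate step is the common single-crossing order, since the edge endpoints change from round to round. I would use the fixed order ``vertex voters by index, then dummies'' and check all three pairs of candidates. For $a$ versus $b$, the voters preferring $a$ are exactly those with index below $v$, which includes $u$ because $u<v$. This set is a prefix. For $a$ versus $w$ and for $b$ versus $w$, every vertex voter prefers the non-$w$ candidate and every dummy prefers $w$, so the preferring sets are prefixes or suffixes. It remains to check that each reported list is a prefix of the strict Euclidean ranking at its displayed position:
\begin{itemize}
\item position $4$ gives $a\succ b\succ w$;
\item positions $6$ and $10$ give $b\succ a\succ w$;
\item position $0$ gives $a\succ b\succ w$;
\item position $30$ gives $w\succ b\succ a$.
\end{itemize}
All distances are distinct at each of these positions, so the rankings are strict.
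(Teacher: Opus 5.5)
Your proposal is correct and follows the paper's proof. The length-one case is the same argument: partition the voters by their sequences of listed candidates and check each class. Your hardness reduction is the paper's construction up to cosmetic changes: candidate positions $0,10,30$ instead of $0,1,3$, splitting the vertex voters before $v$ rather than after $u$ for the $a$-versus-$b$ prefix, and dummies reporting $(w,b)$ instead of $(w)$. Your single-crossing and Euclidean checks go through, and fixing positions for the added rounds (vertex voters at $0$ or $4$, dummies at $30$) is routine.
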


\begin{proof}
Suppose first that every ballot has length one. Partition the voters according
to their sequence of listed candidates over all rounds. A group can agree in
every round if and only if it is contained in one class of this partition.
All subgroups of a class are represented in the same rounds, while the
proportional demand is nondecreasing with the size of the group. It therefore
suffices to check each whole class, which takes polynomial time.

For hardness, use the graph $G$, target $K$, horizon $|T|$, and electorate size
$n=K|T|$ from the proof of
Theorem~\ref{thm:truncated-omission-boundary}. Let
$C=\set{a,b,w}$, place these candidates at $0,1,3$, and set $w^j=w$ in every
round. Order the vertex voters as $1,\dots,N$, followed by all dummy voters.
In the round for edge $\set{u,v}$, rename its endpoints so that $u<v$. Place
vertex voters $1,\dots,u$ at $0$, the remaining vertex voters at $1$, and all
dummy voters at $3$. Their complete rankings are, respectively,
\[
 a\succ b\succ w,\quad b\succ a\succ w,\quad w\succ b\succ a.
\]
These rankings are Euclidean and single-crossing on the fixed voter order.
Truncate voter $u$ to $(a)$ and voter $v$ to $(b)$. Every other vertex voter
reports her first two candidates, and every dummy voter reports $(w)$. In each
added round, place all vertex voters at $0$, let them report $(a,b)$, and let
every dummy voter report $(w)$.

No vertex voter ever lists $w$. In an edge round, a group of vertex voters has
a common listed candidate exactly when it does not contain both endpoints.
Thus a vertex group can agree in every round exactly when it is an independent
set, and every such group is unrepresented in every round. A group containing
a dummy voter is represented in every round. Since $n/|T|=K$ and $N<2K$, a
violation of either weak axiom exists exactly when $G$ has an independent set
of size at least $K$. This proves coNP-hardness.
\end{proof}

\subsection{Verification with Few Deeply Truncated Ballot Types}
\label{sec:truncated-fpt}

The hardness results leave open whether verification is tractable when few
voters omit several candidates. We prove a stronger statement: it suffices
that these voters have few distinct types, defined by the sets relevant to
agreement and representation.
For $m\geq3$, call a voter \emph{deeply truncated} if she lists at most $m-2$
candidates in some round. For $m=2$, call her deeply truncated if she submits
a length-one ballot in some round. The case $m=1$ is immediate. Relative to
the proposed sequence $W$, two deeply truncated voters have the same type if
their sets $L_i^j$ and $B_i^j$ are identical in every round. Let $\theta$ be the
number of these types. We can decide which types belong to a group by
considering $2^\theta$ possibilities; for each choice, the remaining voters
can be handled by the minimum-cut method for complete rankings.

\begin{theorem}
\label{thm:truncated-fpt-verification}
For $m\geq2$, suppose that the truncated profile is single-peaked with
respect to a given axis in each round; the axes may differ across rounds. Then
all-rank-wJR and all-rank-wPJR can each be verified in
$2^\theta\operatorname{poly}(n,m,|T|)$ time, using at most $2^\theta n$ minimum-cut
computations. The algorithm also returns a violating group and a
corresponding rank matrix whenever the proposed sequence violates the axiom.
\end{theorem}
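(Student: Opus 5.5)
The plan is to enumerate the set $\Theta'$ of deeply truncated types that a violating group contains; there are at most $2^\theta$ such sets. For each choice, we reduce the remaining question to a minimum-cut problem of the kind used in Theorem~\ref{thm:single-peaked-verification}.

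\emph{Structure of the complementary voters.} A voter who is not deeply truncated omits at most one candidate in every round (for $m=2$, she lists both). Appending her missing candidate does not change $B_i^j$, as in the proof of Theorem~\ref{thm:truncated-omission-boundary}(i). Each such voter's listed set $L_i^j$ is an axis interval of size at least $m-1$. Fix $\Theta'$, let $M$ be the union of all deeply truncated voters of those types, and require that the group $S$ contain exactly these deeply truncated voters. We may take all of them, since voters of the same type have identical $L_i^j$ and $B_i^j$. Adding them never changes agreement or $q(S)$ once one is present, and it only increases $|S|$, so the entitlement can only grow.

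Set $I^j=\bigcap_{i\in M}L_i^j$, an interval, and $J^j=\bigcap_{i\in M}B_i^j$. By Lemma~\ref{lem:truncated-minimum-representation}, $S=M\cup S'$ must satisfy $I^j\cap\bigcap_{i\in S'}L_i^j\neq\varnothing$ in every round. Rounds where $J^j\cap\bigcap_{i\in S'}B_i^j=\varnothing$ count toward $q(S)$. For each non-deeply-truncated voter, $L_i^j$ misses only an endpoint of the axis. So agreement fails only when $I^j$ is a single axis endpoint $e$ (or $I^j=\varnothing$, which discards $\Theta'$ outright) and $S'$ contains a voter omitting $e$. That constraint says certain voters are forbidden, so we delete them from the pool in advance. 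After deletion, agreement holds automatically for every $S'$ in the pool.

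\emph{Representation.} The single-peaked structure should make $B_i^j$ for a completed voter the set of candidates on the far side of $w^j$ from her peak, or all listed candidates when the peak is at $w^j$, in which case $B_i^j$ is empty as a prefix. Hence $\bigcap B$ over $M\cup S'$ is empty exactly when (a) $J^j=\varnothing$, or (b) $S'$ contains a voter whose set is disjoint from $J^j$ (peak at $w^j$, or on the opposite side from what $J^j$ requires), or (c) $J^j$ is everything and $S'$ has peaks on both sides of $w^j$. Case (c) is the "between the peaks" condition of Theorem~\ref{thm:single-peaked-verification}. Cases (a) and (b) are monotone covering conditions: round $j$ is counted if some member of $S'$ belongs to a set $H^j$.

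\emph{The minimum cut.} For (c), I reuse the trick of fixing one extreme voter $i$ of $S'$, giving $n$ choices, as in Theorem~\ref{thm:single-peaked-verification}. Every counted round is then triggered by a single member of $S'$ hitting a precomputed set, and we use the min-cut network with arcs voter $\to$ round. Forced-represented rounds from (a) become a constant. The members of $M$ contribute a fixed term $|T||M|$ to the objective $|T||S|-nq(S)$, which enters as a constant offset when comparing the cut value. For wJR, we instead require $q(S)=0$ and take the largest admissible $S'$, namely all pool voters never hitting any $H^j$, and check its size. A violating group is read off the source side, and a rank matrix follows from Lemma~\ref{lem:truncated-minimum-representation}.

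The main obstacle is the second step: verifying that, given $M$, representation of $M\cup S'$ decomposes into "some single member of $S'$ triggers round $j$" so that the network is valid. The delicate points are the interaction of $J^j$ with the two-sided condition (c), and the case where $S'$ is empty. I would handle the latter separately by checking $M$ alone.
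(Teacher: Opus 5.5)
\textbf{Verdict: genuine gap, though a local and fillable one.} Your outer structure matches the paper's. You enumerate the at most $2^\theta$ unions $M$ of deeply truncated types, take whole types, and discard $M$ if some $I^j$ is empty. You then restrict the remaining voters to those compatible with every $I^j$; your endpoint argument is a correct and more explicit substitute for the paper's Helly argument, and it yields exactly the paper's set $F_Q$. Finally you solve a min-cut problem in which each counted round is triggered by a single member. The gap is in the step you yourself call the main obstacle, and your description of $B_i^j$ there is false.

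\textbf{What $B_i^j$ actually is.} Take a voter $i$ who is not deeply truncated. $B_i^j$ is not ``the set of candidates on the far side of $w^j$ from her peak.'' After appending her missing candidate, $B_i^j$ is the set of candidates she ranks above $w^j$ in a complete single-peaked ranking. It is empty if her peak is $w^j$. Otherwise it is an interval on the \emph{same} side of $w^j$ as her peak, ending at the candidate adjacent to $w^j$, and its length varies from voter to voter.

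\textbf{The missing idea: nestedness.} The nonempty sets $B_i^j$ on each side of $w^j$ are therefore nested but not equal. Suppose $J^j$ is nonempty and lies on one side of $w^j$. A member whose set is empty, or lies on the other side, is individually disjoint from $J^j$. If all members' sets lie on the side of $J^j$, their intersection is the smallest of them. Hence $J^j\cap\bigcap_{i\in S'}B_i^j=\varnothing$ if and only if some single $i\in S'$ has $B_i^j\cap J^j=\varnothing$.

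\textbf{Your parenthetical in (b) fails.} It misses same-side voters whose interval is too short to reach $J^j$. On the axis $a,b,c,d$ with $w^j=d$, take a deeply truncated voter listing $(a,b)$ and a voter ranking $c\succ d\succ b\succ a$. They can agree only on $a$ or $b$, both of which the second voter ranks below $d$, so representation is unavoidable. Yet her peak $c$ is not $w^j$ and lies on the same side as $J^j=\set{a,b}$. A network built from the parenthetical would undercount $q(S)$ and report false violations. Only the criterion $B_i^j\cap J^j=\varnothing$ is correct.

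\textbf{Why case (c) needs only $M=\varnothing$.} You also need a reason for this. For $i\in M$, $B_i^j$ is a top prefix of a single-peaked completion, hence an axis interval not containing $w^j$. So whenever $M\neq\varnothing$, $J^j$ is empty or lies on one side of $w^j$, and the single-member criterion applies without fixing any voter. For $M=\varnothing$, fix an \emph{arbitrary} member of $S'$, exactly as in Theorem~\ref{thm:single-peaked-verification}. Drop the word ``extreme'': with axes varying across rounds, no voter need be extreme in every round.

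\textbf{Comparison with the paper.} With these facts supplied, your split into $M=\varnothing$ and $M\neq\varnothing$ is a valid variant. The paper instead fixes a voter $v\in Y$ for every $Q$ and replaces $J^j$ by $D_Q^j\cap B_v^j$. One-sidedness then always comes from $B_v^j$, and both cases are handled uniformly. Your variant needs one cut for each nonempty $M$ and $n$ cuts only for $M=\varnothing$, still within the $2^\theta n$ bound.
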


\begin{proof}
Let $F$ be the voters who are not deeply truncated. We first reduce the choice
among deeply truncated voters to $2^\theta$ cases. If a group contains one voter of
a given deeply truncated type, adding every voter of that type changes neither
$\bigcap_{i\in S}L_i^j$ nor $\bigcap_{i\in S}B_i^j$ in any round, and it can
only increase the group's demand. Hence, whenever a violating group exists,
there is one that contains either all voters of each deeply truncated type or
none of them.

Enumerate all $2^\theta$ unions $Q$ of deeply truncated types. For a fixed $Q$, let $I_Q^j=\bigcap_{i\in Q}L_i^j$ and $D_Q^j=\bigcap_{i\in Q}B_i^j$, where both intersections are defined as $C$ when $Q=\varnothing$. If some $I_Q^j$ is empty, no group whose deeply truncated part is $Q$ can agree in every round, so discard this choice. Otherwise, let
\[
 F_Q=\set{i\in F:I_Q^j\cap L_i^j\neq\varnothing
                   \text{ for every }j\in T}.
\]
We claim that $Q\cup Y$ can agree in every round for every $Y\subseteq F_Q$.
Every listed set is a top prefix of a single-peaked completion and is therefore
an interval of the relevant candidate axis. If $m\geq3$,
each $L_i^j$ with $i\in F$ has size at least $m-1$, so these intervals
intersect pairwise. By the definition of $F_Q$, each also intersects $I_Q^j$,
which is itself an interval. Hence the whole family is pairwise intersecting,
and its members have a common point. For $m=2$, every voter in $F$ lists both
candidates, and the statement is immediate. Conversely, if a group with
deeply truncated part $Q$ can agree in every round, then each of its remaining
voters belongs to $F_Q$.

When $Q\neq\varnothing$, first compute $q(Q)$ and test $Q$ directly using
\eqref{eq:truncated-violation} for all-rank-wPJR, or the conditions
$q(Q)=0$ and $|T||Q|\geq n$ for all-rank-wJR. It remains to optimize over
nonempty sets $Y\subseteq F_Q$. Fix a voter $v\in F_Q$, which we require to belong to $Y$, and, for each round, let $J^j=D_Q^j\cap B_v^j$.
Every set $B_i^j$ is an interval because it is a prefix of a single-peaked
completion. For $i\in F$, append the sole missing candidate when necessary;
this does not change $B_i^j$, and the resulting complete ranking is
single-peaked by the domain assumption. If the peak
is to the left of $w^j$, then $B_i^j$ is an interval ending immediately to the
left of $w^j$. If the peak is to the right, it is an interval beginning
immediately to the right of $w^j$; if the peak is $w^j$, it is empty. Thus the
nonempty sets $B_i^j$ on either side of $w^j$ are nested.

For every $Y\subseteq F_Q$ containing $v$,
\begin{equation}
 \bigcap_{i\in Q\cup Y}B_i^j=\varnothing
 \quad\Longleftrightarrow\quad
 \text{there is an }i\in Y\text{ with }J^j\cap B_i^j=\varnothing.
 \label{eq:truncated-fixed-voter}
\end{equation}
If $J^j\cap B_i^j$ is empty for some $i\in Y$, then the intersection on the
left is empty. Conversely, suppose that $J^j$ is nonempty and meets every
$B_i^j$ with $i\in Y$. All these nonempty sets lie on the same side of $w^j$
as $B_v^j$ and form a nested family. Their smallest member meets $J^j$, so
their intersection with $J^j$ is nonempty. This proves
\eqref{eq:truncated-fixed-voter}. If $J^j$ is empty, both sides of the
equivalence hold.

For fixed $Q$ and $v$, adding a voter to the group contributes $|T|$ to the
left-hand side of~\eqref{eq:truncated-violation}, while each additional round
in which representation is unavoidable costs $n$. We encode this calculation
by one minimum-cut computation. For all-rank-wPJR, construct a directed
network with a source, a sink, one node for
each voter in $F_Q$, and one node for each round. Add an arc of capacity
$|T|$ from the source to every voter node and an arc of capacity $n$ from every
round node to the sink. Add an arc of capacity $|T||F_Q|+n|T|+1$ from voter $i$ to round $j$ whenever $J^j\cap B_i^j=\varnothing$, and add an arc of the same capacity from the source to $v$.

There is a cut avoiding all large-capacity arcs whose capacity is at most
$|T||F_Q|+n|T|$: put $v$ and every round reached from $v$ on the source
side, and put all other voter nodes and round nodes on the sink side. Thus no
minimum cut uses a large-capacity arc. If $Y$ is the set of voter nodes on the
source side, then $v\in Y$. The large-capacity arcs force every round required
by a voter in $Y$ to the source side. No other round is placed there in a
minimum cut, because moving it to the sink side strictly lowers the capacity.
By~\eqref{eq:truncated-fixed-voter}, the number of source-side round nodes is
$q(Q\cup Y)$. The cut
capacity is therefore
\[
 |T|(|F_Q|-|Y|)+nq(Q\cup Y).
\]
If $\kappa(Q,v)$ denotes the minimum cut value, then
\begin{equation}
 |T|(|Q|+|F_Q|)-\kappa(Q,v)
 =\max_{Y\subseteq F_Q, v\in Y}
   (|T||Q\cup Y|-nq(Q\cup Y)).
 \label{eq:truncated-cut-pjr}
\end{equation}
By~\eqref{eq:truncated-violation}, a value of at least $n$ in
\eqref{eq:truncated-cut-pjr} is exactly an all-rank-wPJR violation.

For all-rank-wJR, replace the capacity $n$ on every round-to-sink arc by
$n|T|$, and use $|T||F_Q|+n|T|^2+1$ for every large-capacity arc. The same
cut argument gives
\[
 \max_{Y\subseteq F_Q, v\in Y}
 (|T||Q\cup Y|-n|T|q(Q\cup Y)).
\]
If $q(Q\cup Y)\geq1$, this expression is at most zero because
$|Q\cup Y|\leq n$. It is at least $n$ precisely when $q(Q\cup Y)=0$ and
$|T||Q\cup Y|\geq n$, which is exactly an all-rank-wJR violation.

There are at most $n$ choices of $v$ for each of the $2^\theta$ sets $Q$. Every
network has $O(n+|T|)$ nodes, $O(n|T|)$ arcs, and capacities with polynomially
many bits. This proves the running-time bound.

Finally, let $S$ consist of $Q$ and the voters whose nodes are on the source
side; this is the violating group. In each round, choose a candidate from $\bigcap_{i\in S}B_i^j$ when
this intersection is nonempty, and otherwise choose a candidate from
$\bigcap_{i\in S}L_i^j$. Place each member's cutoff at the chosen candidate,
and set the cutoffs of voters outside $S$ to one. By
Lemma~\ref{lem:truncated-minimum-representation}, the resulting rank matrix
represents the group in exactly $q(S)$ rounds.
\end{proof}

\begin{corollary}
\label{cor:truncated-fpt-eth}
Assuming the Exponential Time Hypothesis, neither verification problem admits
a $2^{o(\theta)}\operatorname{poly}(n,m,|T|)$ algorithm, even with three
candidates.
\end{corollary}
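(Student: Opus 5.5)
We will give a reduction from \textsc{Independent Set} in which the parameter $\theta$ grows only linearly with the graph size. Under ETH, \textsc{Independent Set} on $N$-vertex graphs with $O(N)$ edges admits no $2^{o(N)}$ algorithm; this follows from the sparsification lemma together with the standard reduction from 3-SAT. So it suffices to construct verification instances with three candidates, $\theta=O(N)$, and size polynomial in $N$. A $2^{o(\theta)}\operatorname{poly}$ verification algorithm would then decide the original instance in $2^{o(N)}$ time.

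We start from the three-candidate reduction in Theorem~\ref{thm:truncated-omission-boundary}(ii), whose graph $G$ has $N=2N_0$ vertices and one round per edge. With $m=3$, a voter is deeply truncated if she lists at most one candidate in some round. In that construction, only the endpoint voter $u$ of each edge round submits the length-one list $(a)$, so only vertex voters can be deeply truncated. Dummy voters list two or three candidates in every round and are never deeply truncated. Each vertex voter forms at most one type, which gives $\theta\leq N=O(N_0)$.

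The padding vertices have no incident edges and are never truncated, so they are not deeply truncated and do not affect this count. The number of rounds is $\max\{2,|E(G)|\}$, which is $O(N_0)$ for sparse input. The electorate size $n=K|T|$ is polynomial. The construction is therefore polynomial and preserves the yes/no answer, as that theorem already shows.

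The step that needs the most care is applying ETH to the right parameter. We must check that sparsification yields hardness in terms of $N_0$ even for instances whose edge count is linear, so that $\theta\leq 2N_0$ translates a $2^{o(\theta)}$ algorithm into a $2^{o(N_0)}$ one. We must also confirm that the hypotheses of the preceding theorem hold, since the corollary is stated in that setting. The single-peaked axis is $a,b,w$ at positions $0,10,12$, so the same axis is used in every round. Finally, the corollary covers both all-rank-wJR and all-rank-wPJR because the reduction is correct for both axioms.
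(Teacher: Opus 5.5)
Your proposal is correct and follows the paper's proof. Both reuse the three-candidate reduction of Theorem~\ref{thm:truncated-omission-boundary}(ii), observe that only original vertex voters ever submit a length-one list, so $\theta$ is linear in $N_0$ (the paper gets $\theta\le N_0$), and transfer the ETH lower bound for \textsc{Independent Set}; the sparsification step is harmless but unnecessary, because the constructed instance is polynomial in $N_0$ for any number of edges, and note that the padding vertices are truncated (they omit $w$), just not deeply truncated.
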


\begin{proof}
Under the Exponential Time Hypothesis, \textsc{Independent Set} on an
$N_0$-vertex graph has no $2^{o(N_0)}\operatorname{poly}(N_0)$ algorithm
\cite{impagliazzo2001strongly}. In the three-candidate reduction used in the
proof of Theorem~\ref{thm:truncated-omission-boundary}, only original graph
vertices can have a list of length one. The $N_0$ added vertices with no
incident edges and all dummy voters list at least two candidates in every
round. Hence $\theta\leq N_0$, and the size of the constructed instance is
polynomial in $N_0$. A $2^{o(\theta)}\operatorname{poly}(n,m,|T|)$ verification
algorithm for either weak axiom would therefore give a
$2^{o(N_0)}\operatorname{poly}(N_0)$ algorithm for \textsc{Independent Set}.
\end{proof}

Truncation thus affects construction and verification differently. GCR, MES,
and general SCR retain their respective guarantees, as do the first-choice
and ordered budget rules on the corresponding domains. A common single-crossing
order also permits polynomial-time online rank-wPSC and polynomial-time
verification of that axiom, despite truncation. Yet checking
all-rank-wJR or all-rank-wPJR can be coNP-complete even with three candidates
and Euclidean completions, for either nearly complete or very short ballots.
On single-peaked profiles with few deeply truncated types, the minimum-cut
approach for complete rankings still gives an efficient verification algorithm.

\section{Discussion}
Ranked temporal proportionality depends on how cutoffs may vary, which rounds
of agreement give a group an entitlement, and what information is available
when a winner is selected. Our axioms separate these choices. On unrestricted
rankings, fixed-rank-PJR is attainable offline, rank-wPJR is attainable
semi-online, and the weak PSC axioms are attainable online. The corresponding
impossibilities explain why restricting either cutoff flexibility or the
groups protected can be necessary. For all-rank-PJR, the optimal additive
loss describes what remains achievable when exact proportionality is
impossible. The EJR results give a further distinction. When there are only
two candidates and every voter ranks both, fixed-rank-wEJR and rank-wEJR can
be satisfied even on every time interval, whereas individual cutoffs can make
all-rank-wEJR unattainable.

Restrictions on preferences improve different parts of the theory. Few first
choices give guarantees even to groups that agree in only some rounds;
single-peaked and single-crossing preferences give exact weak all-rank
proportionality without bounding the number of first choices. A common
single-crossing order permits efficient online rank-wPSC, including with
truncated ballots. For the approval-based axioms, knowing the horizon can
still be necessary even when the entire profile is static and Euclidean.

Verification behaves differently from construction. On three-candidate
Euclidean profiles, weak JR and PJR are easy to check in their fixed-rank
and all-rank versions but hard in their rank versions. A common
single-crossing order makes both weak and non-weak all-rank verification
polynomial, whereas changing orders makes it hard even when an exact
all-rank-PJR sequence is easy to construct. Truncation can also make
verification hard without removing the constructive guarantees.

Several questions remain open. Can fixed-rank-PJR be achieved by a
polynomial-time rule on unrestricted rankings? Can rank-wPSC be achieved
by a polynomial-time online rule without a common single-crossing order?
With three first choices, can an online rule satisfy all-rank-PJR exactly
at every horizon, counting representation from the first round? The
impossibility on every time interval does not answer this question, because
it uses intervals that start later. The static Euclidean example does not
answer it either, since it uses fifteen first choices. Finally, for more
than three first choices, the optimal additive loss on time intervals lies
between $\Delta_d=d/e+O(1)$ and $d-2$; determining it exactly remains open.

\bibliographystyle{alpha}
\bibliography{sample}

@inproceedings{lackner2023proportional,
  title={Proportional decisions in perpetual voting},
  author={Lackner, Martin and Maly, Jan},
  booktitle={Proceedings of the 37th AAAI Conference on Artificial Intelligence (AAAI)},
  pages={5722--5729},
  year={2023}
}

@incollection{karp1972reducibility,
  author = {Karp, Richard M.},
  title = {Reducibility among Combinatorial Problems},
  booktitle = {Complexity of Computer Computations},
  pages = {85--103},
  publisher = {Springer},
  year = {1972}
}

@article{chandak2024proportional,
  author = {Nikhil Chandak and Shashwat Goel and Dominik Peters},
  title = {Proportional Aggregation of Preferences for Sequential Decision Making},
  journal = {Journal of Artificial Intelligence Research},
  volume = {85},
  year = {2026},
}

@book{dummett1984voting,
	author = {Michael Dummett},
	editor = {},
	publisher = {Oxford University Press UK},
	title = {Voting Procedures},
	year = {1984}
}

@inproceedings{brill2023robust,
  title={Robust and Verifiable Proportionality Axioms for Multiwinner Voting},
  author={Brill, Markus and Peters, Jannik},
  booktitle={Proceedings of the 24th ACM Conference on Economics and Computation (EC)},
  pages={301},
  year={2023}
}

@article{bulteau2021justified,
  title={Justified representation for perpetual voting},
  author={Bulteau, Laurent and Hazon, Noam and Page, Rutvik and Rosenfeld, Ariel and Talmon, Nimrod},
  journal={IEEE Access},
  volume={9},
  pages={96598--96612},
  year={2021},
  publisher={IEEE}
}

@inproceedings{aziz2017condorcet,
  title={The Condorcet Principle for Multiwinner Elections: From Shortlisting to Proportionality},
  author={Haris Aziz and Edith Elkind and Piotr Faliszewski and Martin Lackner and Piotr Skowron},
  booktitle={Proceedings of the 26th International Joint Conference on Artificial Intelligence (IJCAI)},
  pages={84--90},
  year={2017}
}

@inproceedings{elkind2025verifying,
  title={Verifying proportionality in temporal voting},
  author={Elkind, Edith and Obraztsova, Svetlana and Peters, Jannik and Teh, Nicholas},
  booktitle={Proceedings of the 39th AAAI Conference on Artificial Intelligence (AAAI)},
  pages={13805--13813},
  year={2025}
}

@inproceedings{phillips2026strengthening,
author = {Phillips, Bradley and Elkind, Edith and Teh, Nicholas and W{\k{a}}s, Tomasz},
title = {Strengthening Proportionality in Temporal Voting},
booktitle = {Proceedings of the 25th International Conference on Autonomous Agents and Multiagent Systems (AAMAS)},
pages = {3823--3832},
year = {2026}
}

@inproceedings{aziz2024committee,
  title={Committee monotonicity and proportional representation for ranked preferences},
  author={Aziz, Haris and Lederer, Patrick and Peters, Dominik and Peters, Jannik and Ritossa, Angus},
  booktitle={Proceedings of the 26th ACM Conference on Economics and Computation (EC)},
  year={2025},
  pages = {896}
}

@article{schmidt1972irregularities,
  author = {Schmidt, Wolfgang M.},
  title = {Irregularities of Distribution, {VII}},
  journal = {Acta Arithmetica},
  volume = {21},
  number = {1},
  pages = {45--50},
  year = {1972}
}

@article{borodin2026online,
  author = {Borodin, Allan and Lueger, Tristan},
  title = {Online Temporal Voting: Strategyproofness, Proportionality and Asymptotic Analysis},
  journal = {arXiv preprint arXiv:2603.26504},
  year = {2026}
}

@article{holte1992pinwheel,
  author = {Holte, Robert and Rosier, Louis and Tulchinsky, Igor and Varvel, Donald},
  title = {Pinwheel Scheduling with Two Distinct Numbers},
  journal = {Theoretical Computer Science},
  volume = {100},
  number = {1},
  pages = {105--135},
  year = {1992},
  doi = {10.1016/0304-3975(92)90365-M}
}

@article{impagliazzo2001strongly,
  author = {Impagliazzo, Russell and Paturi, Ramamohan and Zane, Francis},
  title = {Which Problems Have Strongly Exponential Complexity?},
  journal = {Journal of Computer and System Sciences},
  volume = {63},
  number = {4},
  pages = {512--530},
  year = {2001}
}

@inproceedings{ouyang2022training,
  author = {Ouyang, Long and Wu, Jeffrey and Jiang, Xu and Almeida, Diogo and Wainwright, Carroll and Mishkin, Pamela and Zhang, Chong and Agarwal, Sandhini and Slama, Katarina and Ray, Alex and Schulman, John and Hilton, Jacob and Kelton, Fraser and Miller, Luke and Simens, Maddie and Askell, Amanda and Welinder, Peter and Christiano, Paul F. and Leike, Jan and Lowe, Ryan},
  title = {Training Language Models to Follow Instructions with Human Feedback},
  booktitle = {Proceedings of the 36th International Conference on Neural Information Processing Systems (NeurIPS)},
  pages = {27730--27744},
  year = {2022}
}

@inproceedings{conitzer2024social,
  author = {Conitzer, Vincent and Freedman, Rachel and Heitzig, Jobst and Holliday, Wesley H. and Jacobs, Bob M. and Lambert, Nathan and Moss{\'e}, Milan and Pacuit, Eric and Russell, Stuart and Schoelkopf, Hailey and Tewolde, Emanuel and Zwicker, William S.},
  title = {Position: Social Choice Should Guide {AI} Alignment in Dealing with Diverse Human Feedback},
  booktitle = {Proceedings of the 41st International Conference on Machine Learning (ICML)},
  pages = {9346--9360},
  year = {2024}
}

@inproceedings{sorensen2024roadmap,
  author = {Sorensen, Taylor and Moore, Jared and Fisher, Jillian and Gordon, Mitchell L. and Mireshghallah, Niloofar and Rytting, Christopher Michael and Ye, Andre and Jiang, Liwei and Lu, Ximing and Dziri, Nouha and Althoff, Tim and Choi, Yejin},
  title = {Position: A Roadmap to Pluralistic Alignment},
  booktitle = {Proceedings of the 41st International Conference on Machine Learning (ICML)},
  pages = {46280--46302},
  year = {2024}
}

@inproceedings{elkind2024temporal,
  author = {Elkind, Edith and Obraztsova, Svetlana and Teh, Nicholas},
  title = {Temporal Fairness in Multiwinner Voting},
  booktitle = {Proceedings of the 38th AAAI Conference on Artificial Intelligence (AAAI)},
  pages = {22633--22640},
  year = {2024},
  doi = {10.1609/aaai.v38i20.30273}
}

@inproceedings{elkind2024temporalelections,
  author = {Elkind, Edith and Neoh, Tzeh Yuan and Teh, Nicholas},
  title = {Temporal Elections: Welfare, Strategyproofness, and Proportionality},
  booktitle = {Proceedings of the 27th European Conference on Artificial Intelligence (ECAI)},
  pages = {3292--3299},
  year = {2024},
  doi = {10.3233/FAIA240877}
}

@inproceedings{zech2024multiwinner,
  author = {Zech, Valentin and Boehmer, Niclas and Elkind, Edith and Teh, Nicholas},
  title = {Multiwinner Temporal Voting with Aversion to Change},
  booktitle = {Proceedings of the 27th European Conference on Artificial Intelligence (ECAI)},
  pages = {3236--3243},
  year = {2024},
  doi = {10.3233/FAIA240870}
}

@inproceedings{elkind2025nimby,
  author = {Elkind, Edith and Neoh, Tzeh Yuan and Teh, Nicholas},
  title = {Not in My Backyard! {T}emporal Voting over Public Chores},
  booktitle = {Proceedings of the 34th International Joint Conference on Artificial Intelligence (IJCAI)},
  pages = {3814--3820},
  year = {2025},
  doi = {10.24963/ijcai.2025/424}
}

@inproceedings{teh2026price,
  author = {Teh, Nicholas},
  title = {The Price of Proportional Representation in Temporal Voting},
  booktitle = {Proceedings of the 35th International Joint Conference on Artificial Intelligence (IJCAI)},
  year = {2026},
  note = {Extended version available at arXiv:2605.11157}
}

@inproceedings{lackner2020perpetual,
  author = {Lackner, Martin},
  title = {Perpetual Voting: Fairness in Long-Term Decision Making},
  booktitle = {Proceedings of the 34th AAAI Conference on Artificial Intelligence (AAAI)},
  pages = {2103--2110},
  year = {2020},
  doi = {10.1609/aaai.v34i02.5584}
}

@inproceedings{kozachinskiy2025optimal,
  author = {Kozachinskiy, Alexander and Shen, Alexander and Steifer, Tomasz},
  title = {Optimal Bounds for Dissatisfaction in Perpetual Voting},
  booktitle = {Proceedings of the 39th AAAI Conference on Artificial Intelligence (AAAI)},
  pages = {13977--13984},
  year = {2025},
  doi = {10.1609/aaai.v39i13.33529}
}

@inproceedings{elkind2022fairness,
  author = {Elkind, Edith and Kraiczy, Sonja and Teh, Nicholas},
  title = {Fairness in Temporal Slot Assignment},
  booktitle = {Proceedings of the 15th International Symposium on Algorithmic Game Theory (SAGT)},
  pages = {490--507},
  year = {2022},
  doi = {10.1007/978-3-031-15714-1_28}
}

@inproceedings{elkind2025temporal,
  author = {Elkind, Edith and Lam, Alexander and Latifian, Mohamad and Neoh, Tzeh Yuan and Teh, Nicholas},
  title = {Temporal Fair Division of Indivisible Items},
  booktitle = {Proceedings of the 24th International Conference on Autonomous Agents and Multiagent Systems (AAMAS)},
  pages = {676--685},
  year = {2025}
}

@article{choi2026temporal,
  author = {Choi, Kui-Wang and Li, Minming and Teh, Nicholas},
  title = {Temporal Fair Division of Indivisible Mixed Manna: Tractable Settings},
  journal = {arXiv preprint arXiv:2608.20033},
  year = {2026}
}

@article{teh2026FJRtemporal,
  author = {Nicholas Teh},
  title = {Full Justified Representation in Temporal Voting: Efficient Computation and Verification},
  journal = {SSRN preprint ssrn.7508758},
  year = {2026}
}

@article{aziz2017justified,
  author = {Aziz, Haris and Brill, Markus and Conitzer, Vincent and Elkind, Edith and Freeman, Rupert and Walsh, Toby},
  title = {Justified Representation in Approval-Based Committee Voting},
  journal = {Social Choice and Welfare},
  volume = {48},
  pages = {461--485},
  year = {2017}
}

@article{sanchez2017proportional,
  author = {S{\'a}nchez-Fern{\'a}ndez, Luis and Elkind, Edith and Lackner, Martin and Fern{\'a}ndez, Norberto and Fisteus, Jes{\'u}s A. and Basanta Val, Pablo and Skowron, Piotr},
  title = {Proportional Justified Representation},
  journal = {Artificial Intelligence},
  volume = {353},
  pages = {104503},
  year = {2026}
}

@inproceedings{peters2021proportional,
  author = {Peters, Dominik and Pierczy{\'n}ski, Grzegorz and Skowron, Piotr},
  title = {Proportional Participatory Budgeting with Additive Utilities},
  booktitle = {Proceedings of the 35th International Conference on Neural Information Processing Systems (NeurIPS)},
  pages = {12726--12737},
  year = {2021}
}

@book{lackner2023multi,
  author = {Lackner, Martin and Skowron, Piotr},
  title = {Multi-Winner Voting with Approval Preferences},
  series = {SpringerBriefs in Intelligent Systems},
  publisher = {Springer},
  year = {2023}
}

@article{teh2026strengthening,
  author = {Teh, Nicholas},
  title = {Strengthening Full Justified Representation: Efficient Verification and Computation},
  journal = {arXiv preprint arXiv:2608.11500},
  year = {2026}
}

\clearpage
\appendix

\section{Omitted Proofs from Section~\ref{sec:ranked-axioms}}
\label{app:ranked-axioms-proofs}

\subsection{Proof of Lemma~\ref{lem:all-rank-minimum-representation}}

Suppose that $S$ agrees in round $j$ under $R$ and that $w^j$ is Pareto efficient for $S$. If no voter in $S$ approves $w^j$, choose $c\in\bigcap_{i\in S}A_i^j(r_i^j)$. Then $c\succ_i^j w^j$ for every $i\in S$, contradicting Pareto efficiency. Thus, $S$ is represented in every such round, under every such $R$.

Conversely, we construct a rank matrix under which $S$ agrees in every round and is represented only in rounds in which $w^j$ is Pareto efficient for $S$. If $w^j$ is not Pareto efficient for $S$, choose $c^j\in C$ with $c^j\succ_i^j w^j$ for every $i\in S$, and set $r_i^j=\mathrm{rank}_i^j(c^j)$ for every $i\in S$. Then all voters in $S$ approve $c^j$, while none approves $w^j$. In every other round, set $r_i^j=m$ for every $i\in S$. The remaining entries of $R$ are arbitrary. Under this rank matrix, $S$ agrees in every round and, by the first part of the proof, is represented exactly in the rounds in which $w^j$ is Pareto efficient for $S$. This proves the lemma.

\subsection{Proof of Theorem~\ref{thm:all-rank-characterization}}

Put $\alpha=|S|/n$. Since $\mathrm{rep}_R(S)$ is an integer,
$\lfloor\alpha k_R(S)\rfloor-\mathrm{rep}_R(S)
=\lfloor\alpha k_R(S)-\mathrm{rep}_R(S)\rfloor$. We may therefore
maximize $\alpha k_R(S)-\mathrm{rep}_R(S)$ first and take the floor afterward.
The expression is a sum over rounds, and cutoffs in different rounds can
be chosen independently. The cutoffs of voters outside $S$ are arbitrary.
We consider three cases in a round $j$.

If $w^j$ is Pareto dominated for $S$, place each member's cutoff at a
candidate that every member prefers to $w^j$. The group agrees and is not
represented, giving a contribution of $\alpha$, the largest possible
contribution from one round.

If $w^j$ is a member's first choice, representation is unavoidable.
Placing every cutoff at $m$ also gives agreement, for a contribution of
$\alpha-1$. No other choice can give more.

Otherwise, $w^j$ is Pareto efficient for $S$ but no member ranks it first.
Cutoff one for every member excludes the winner and gives no agreement:
a common first choice would Pareto dominate $w^j$. This choice contributes
zero. Agreement without representation is impossible by Pareto efficiency,
and either outcome with representation contributes at most zero, since
$\alpha\leq1$.

The maximum sum is therefore
$\alpha(D_W(S)+F_W(S))-F_W(S)$, which proves the stated equality.
By definition, $S$ violates all-rank-PJR exactly when some rank matrix $R$
gives $\mathrm{rep}_R(S)<\lfloor k_R(S)|S|/n\rfloor$, that is, when the
maximum is at least one. This holds if and only if
$|S|(D_W(S)+F_W(S))\geq n(F_W(S)+1)$, which is equivalent to
$|S|D_W(S)-(n-|S|)F_W(S)\geq n$.

For JR, $S$ violates all-rank-JR exactly when some rank matrix $R$ gives
$\mathrm{rep}_R(S)=0$ and $k_R(S)|S|\geq n$. Zero representation is
possible only when $F_W(S)=0$. Under this
condition, exactly the $D_W(S)$ Pareto-dominated rounds can be made agreeing
and unrepresented; all other rounds can be made non-agreeing and
unrepresented by cutoff one. The largest agreement count consistent with
zero representation is thus $D_W(S)$, which proves the stated condition
for all-rank-JR.

\section{Omitted Proofs from Section~\ref{sec:attainability}}

\subsection{Proof of Theorem~\ref{thm:fixed-rank-gcr}}
\label{app:gcr-proof}
We prove Theorem~\ref{thm:fixed-rank-gcr}, which states that Algorithm~\ref{alg:fixed-rank-gcr} satisfies fixed-rank-PJR. The proof has two parts. Lemma~\ref{lem:fixed-rank-packing} and Corollary~\ref{cor:fixed-rank-stage-two} show that all rounds reserved in Stage~1 can be assigned in Stage~2. The proof of the theorem at the end of this section then shows that, for every cutoff $r$ and every group $S$, the assigned rounds represent $S$ under $A(r)$ in at least $\dem{S}{r}$ rounds. We state the following bound for every subset $V'$ of voters, not only for $V$, because the inductive step applies the bound to a proper subset of $V'$.

\begin{lemma}
\label{lem:fixed-rank-packing}
At every point during Stage~1, for every $V'\subseteq V$ and every
$k\in \set{1,\dots,|T|}$, the sum of $g(S',t)$ over all pairs $(S',t)$ with $t\in[m]$, $S'\subseteq V'$, and $|\tsr{S'}{t}|\leq k$ is at most $k|V'|/n$.
\end{lemma}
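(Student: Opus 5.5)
The plan is strong induction on $|V'|$, applied to the current values of $g$ at an arbitrary point of Stage~1; $k$ stays fixed throughout. Each value $g(S,r)$ is set at most once and is never changed afterwards, so it suffices to order the pairs with positive $g$ by the time they were selected. The base case $V'=\varnothing$ is trivial. For the inductive step, fix $V'$ and $k$. Call a pair $(S',t)$ \emph{counted} if $g(S',t)>0$, $S'\subseteq V'$, and $|\tsr{S'}{t}|\leq k$. If no pair is counted, the bound holds. Otherwise, let $(S,r)$ be the counted pair selected last. It has the largest cutoff among counted pairs, and within that cutoff it was the latest selection.

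First I bound the contribution of $(S,r)$ together with all counted pairs that intersect $S$. When $(S,r)$ was selected, $g(S,r)=\drem{S}{r}>0$. Hence the $\max$ in the definition was attained by the second term, and
\[
 g(S,r)+\sum_{t<r}\ \sum_{S'\cap S\neq\varnothing} g(S',t)=\dem{S}{r}\leq\frac{|\tsr{S}{r}|\,|S|}{n}\leq\frac{k|S|}{n}.
\]
Here the inner sum ranges over all groups $S'\subseteq V$. Values of $g$ at cutoffs $t<r$ were fixed before cutoff $r$ was processed, so they are the same now as at the moment of selection.

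Next I show that every other counted pair intersecting $S$ appears in the sum on the left. Such a pair has cutoff at most $r$. If its cutoff equals $r$, it was selected earlier in the iteration for $r$. Every selection removes all intersecting groups from $\mathcal V$, and $S$ was still present when it was chosen. So that earlier group is disjoint from $S$, and no counted pair at cutoff $r$ other than $(S,r)$ meets $S$. All remaining counted pairs that meet $S$ therefore have cutoff $t<r$ and appear in the sum. Since every $g$ is nonnegative, their total together with $g(S,r)$ is at most $k|S|/n$.

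Every other counted pair is disjoint from $S$, so its group is a subset of $V'\setminus S$. This is a proper subset of $V'$, because $S\neq\varnothing$. By the induction hypothesis, applied to the current $g$ values, these pairs contribute at most $k(|V'|-|S|)/n$. Adding the two bounds gives $k|V'|/n$.

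I expect the main obstacle to be the bookkeeping in the first step. One must check that the subtracted sum really covers every counted intersecting pair: the same-cutoff disjointness handles cutoff $r$, and the "last selected" choice guarantees that no counted pair has cutoff larger than $r$. One must also use the positivity of $g(S,r)$ to remove the $\max$. The subtraction ranges over groups that are not necessarily inside $V'$, but this only helps, because all $g$ are nonnegative.
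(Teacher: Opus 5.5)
Your proof is correct and follows essentially the paper's argument. You bound the pairs meeting $S$, together with $g(S,r)$, by $\delta(S,r)\leq k|S|/n$ using the remaining-demand identity and the disjointness of same-cutoff selections, and you handle the pairs disjoint from $S$ inductively on $V'\setminus S$; the only difference is that the paper inducts on the number of selections made so far, whereas you fix a snapshot of $g$ and induct on $|V'|$ by removing the last-selected counted pair.
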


\begin{proof}
We use induction on the number of groups selected during Stage~1.
Before any group is selected, all the values of $g$ are zero, and
therefore the inequality holds.
Suppose that the inequality holds for every $V'\subseteq V$ and every $k\in\{1,\dots,|T|\}$ immediately before the algorithm
selects a group $S$ during the iteration for cutoff $r$, and fix $V'$ and $k$.
If $S\nsubseteq V'$ or $|\tsr{S}{r}| > k$, then the new value $g(S,r)$ does not appear in the sum. Since no other value of $g$ changes, the inequality continues to hold.
Thus, suppose that $S\subseteq V'$ and $|\tsr{S}{r}| \leq k$.
Recall that all values $g(S',t)$ are initialized to zero and that
$g(S',t)$ can become positive only when $S'$ is selected during the iteration for cutoff $t$. Hence, at the current point
of the execution, every pair $(S',t)$ that has not yet been selected has $g(S',t)=0$.
In particular, only pairs with $t \leq r$ contribute to the sum.
Now, partition the non-zero terms that appear in the sum into two classes: terms associated with groups that are disjoint from $S$ and terms associated with groups that intersect $S$. Let $D$ and $I$ be the sums of the terms from the first and second classes, respectively. Every group counted by $D$ is contained in $V' \setminus S$. Therefore, by the induction hypothesis applied to $V' \setminus S$, $D \leq \frac{k|V' \setminus S|}{n}$.

Let $z = \sum_{t=1}^{r-1}\sum_{S'\subseteq V, S'\cap S\neq\varnothing} g(S',t)$. Since $S$ has positive remaining demand when it is selected, the algorithm sets $g(S,r) = \drem{S}{r} = \dem{S}{r}-z$.
Note that every group selected before the selection of $S$ during the current iteration for cutoff $r$ is disjoint from $S$. Indeed, when such a group was selected, the algorithm removed from $\mathcal{V}$ every group intersecting it. Since $S$ is still contained in $\mathcal{V}$, it cannot intersect any group selected earlier during the current iteration. 
Consequently, immediately before $S$ is selected, every pair $(S',t)$ with $g(S',t)>0$ and $S'\cap S\neq\varnothing$ satisfies $t < r$. That is, every non-zero term
counted by $I$ is associated with a cutoff $t < r$. Hence, every term
counted by $I$ is also counted by $z$, and thus $I\leq z$.
Therefore, 
$I+g(S,r) \leq z+\dem{S}{r}-z = \dem{S}{r}$. 
Moreover,
$\dem{S}{r} = \lfloor |\tsr{S}{r}| \cdot \frac{|S|}{n} \rfloor \leq |\tsr{S}{r}| \cdot \frac{|S|}{n} \leq \frac{k|S|}{n}$.
Overall, after selecting $S$ and setting $g(S,r)$, the sum is $D + I + g(S,r) \leq \frac{k|V' \setminus S|}{n} + \frac{k|S|}{n} = \frac{k|V'|}{n}$.
This completes the induction.
\end{proof}

\begin{corollary}
\label{cor:fixed-rank-stage-two}
Whenever a pair $(S,r)$ with $g(S,r)>0$ is processed in Stage~2,
there are at least $g(S,r)$ unassigned rounds in which the members
of $S$ approve a common candidate under $A(r)$.
\end{corollary}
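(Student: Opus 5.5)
Take a pair $(S_h,r_h)$ at the moment Stage~2 processes it, and write $k=|\tsr{S_h}{r_h}|$. The corollary follows from a counting argument. Every round of $\tsr{S_h}{r_h}$ that is no longer free was assigned to an earlier pair. The number of such rounds is at most the total reservation of the earlier pairs. This is what we bound using Lemma~\ref{lem:fixed-rank-packing}.

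\textbf{Step 1 (which pairs came earlier).} Stage~2 processes pairs in nondecreasing order of $|\tsr{S'}{t}|$. Every pair $(S_{h'},r_{h'})$ with $h'<h$ therefore satisfies $|\tsr{S_{h'}}{r_{h'}}|\leq k$. Each such pair removed exactly $g(S_{h'},r_{h'})$ rounds from $T_{\mathrm{free}}$.

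\textbf{Step 2 (apply the lemma).} Apply Lemma~\ref{lem:fixed-rank-packing} with $V'=V$, at the end of Stage~1, and with this $k$. Note that $1\leq k\leq|T|$, because $g(S_h,r_h)>0$ forces $\dem{S_h}{r_h}>0$ and hence $k\geq1$. The lemma says the sum of $g(S',t)$ over all pairs with $|\tsr{S'}{t}|\leq k$ is at most $k|V|/n=k$. This sum contains the current pair together with all earlier pairs, so
\[
 \sum_{h'<h} g(S_{h'},r_{h'}) \;\leq\; k-g(S_h,r_h).
\]

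\textbf{Step 3 (count the free rounds).} At most $k-g(S_h,r_h)$ rounds of $\tsr{S_h}{r_h}$ have already been assigned. At least $g(S_h,r_h)$ of them therefore remain in $T_{\mathrm{free}}$. In each of these rounds the members of $S_h$ approve a common candidate under $A(r_h)$, by the definition of $\tsr{S_h}{r_h}$.

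The only delicate point is the bookkeeping. First, Stage~2 never alters $g$, so the values of $g$ at the end of Stage~1 are exactly the amounts assigned. Second, the lemma holds at every point of Stage~1, in particular at its end. Third, the current pair is itself included in the lemma's sum, and that inclusion is what yields the subtraction of $g(S_h,r_h)$ in Step~2. No further obstacle is expected.
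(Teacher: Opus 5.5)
Your proposal is correct and follows essentially the same route as the paper. Both arguments use the Stage~2 ordering to ensure that every earlier pair has $|\tsr{S'}{t}|\leq k$, and then apply Lemma~\ref{lem:fixed-rank-packing} with $V'=V$ to bound the combined reservations of the earlier pairs and the current pair by $k$. The paper phrases this as a contradiction from the first pair that cannot be served, whereas you give the count directly; your observation that the occupied rounds number at most the earlier reservations is what lets the direct version work without an explicit induction.
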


\begin{proof}
Let $(S_1,r_1),\dots,(S_q,r_q)$ be the pairs with positive $g$-value, ordered as in Stage~2, so that $|\tsr{S_1}{r_1}| \leq \cdots \leq |\tsr{S_q}{r_q}|$. 
Suppose, towards a contradiction, that the corollary is false, and
let $(S_h,r_h)$ be the first pair that cannot be assigned its required number of rounds. Write $k=|\tsr{S_h}{r_h}|$.
That is, the members of $S_h$ approve a common candidate under
$A(r_h)$ in exactly $k$ rounds.

If fewer than $g(S_h,r_h)$ of these rounds are still unassigned, then at least $k-g(S_h,r_h)+1$ of them have already been assigned to earlier pairs. Since distinct pairs are assigned disjoint sets of rounds, it follows that $\sum_{a=1}^{h-1}g(S_a,r_a) \geq k-g(S_h,r_h)+1$.
Consequently,
$\sum_{a=1}^{h}g(S_a,r_a)>k$.
On the other hand, by the ordering of the pairs,
$|\tsr{S_a}{r_a}|\leq k$ for every $a\leq h$.
Applying Lemma~\ref{lem:fixed-rank-packing} after Stage~1 with
$V'=V$ gives $\sum_{t\in[m],\ S\subseteq V, |\tsr{S}{t}|\leq k}
g(S,t) \leq k$.
In particular,
$\sum_{a=1}^{h} g(S_a,r_a)\leq k$, which is a contradiction.
Hence, every pair can be assigned all of its reserved rounds.
\end{proof}

\begin{proof}[Proof of Theorem~\ref{thm:fixed-rank-gcr}]
Fix a cutoff $r\in[m]$ and a group
$S\subseteq V$. We first show that
\begin{equation}
\label{eq:coverage-in-theorem}
\sum_{t=1}^{r}
\sum_{S'\subseteq V, S'\cap S\neq\varnothing}
g(S',t) \geq \dem{S}{r}.
\end{equation}
If $\dem{S}{r}=0$, the inequality follows from $g\geq 0$. Otherwise,
$\tsr{S}{r}\neq\varnothing$, so $S$ belongs to $\mathcal V$ at the beginning
of the iteration for cutoff $r$.
Suppose first that $S$ is removed from $\mathcal{V}$ when the algorithm selects a group $\widehat{S}$. Immediately before $\widehat S$ is
selected, let $z = \sum_{t=1}^{r-1} \sum_{S'\subseteq V, S'\cap S\neq\varnothing} g(S',t)$.
If $z\geq \dem{S}{r}$, then the required inequality already holds.
Otherwise, $\drem{S}{r} = \dem{S}{r}-z > 0$.
Since $\widehat S$ is selected to maximize the remaining demand among all groups currently in $\mathcal V$, 
$g(\widehat S,r) = \drem{\widehat S}{r} \geq \drem{S}{r} = \dem{S}{r}-z$.
Moreover, $\widehat S\cap S\neq\varnothing$,
since this is why $S$ is removed from $\mathcal V$. Therefore,
$g(\widehat S,r)$ is included in the sum on the left-hand side of
\eqref{eq:coverage-in-theorem}. Immediately after
$\widehat S$ is selected, this sum is exactly
$z+g(\widehat S,r) \geq z+\dem{S}{r}-z = \dem{S}{r}$.
Since all $g$ values are nonnegative, the
left-hand side of \eqref{eq:coverage-in-theorem} cannot decrease during the remainder of the iteration for cutoff $r$, and thus \eqref{eq:coverage-in-theorem} holds at the end of this iteration.

Suppose now that $S$ is never removed from $\mathcal V$. It then
remains in $\mathcal V$ until the while-loop terminates. The loop
terminates only when every group remaining in $\mathcal V$ has zero
remaining demand. In particular, $\drem{S}{r}=0$.
By the definition of remaining demand, this implies
$\sum_{t=1}^{r} \sum_{S'\subseteq V, S'\cap S\neq\varnothing}
g(S',t) \geq \dem{S}{r}$, as required.

Now, consider a pair $(S',t)$ contributing a positive value to the sum in \eqref{eq:coverage-in-theorem}. Thus, $t \leq r$ and $S'\cap S\neq\varnothing$.
Stage~2 assigns $g(S',t)$ distinct rounds to this pair. In each such round $j$, the algorithm chooses $w^j\in\bigcap_{i\in S'}A_i^j(t)$.
Choose a voter $i\in S'\cap S$. Then $w^j\in A_i^j(t)$.
Since $t\leq r$, the fixed-rank approval sets are nested: $A_i^j(t)\subseteq A_i^j(r)$. It follows that $w^j\in A_i^j(r)
\subseteq \bigcup_{i'\in S}A_{i'}^j(r)$.
Thus, every round assigned to $(S',t)$ is a round in which the winner is approved under $A(r)$ by at least one member of $S$.
The sets of rounds assigned to distinct pairs are disjoint. Therefore,
\eqref{eq:coverage-in-theorem} implies that there are at least
$\dem{S}{r}$ distinct rounds $j\in T$ such that $w^j\in\bigcup_{i\in S}A_i^j(r)$. That is, $S$ is represented in at least $\dem{S}{r}$ rounds. 
Since $r$ and $S$ were arbitrary, $W$ satisfies fixed-rank-PJR.

\end{proof}

\section{A Stronger Size Condition with Three First Choices}
\label{app:three-choice-quota}
The three-first-choice guarantee permits a smaller size requirement than all-rank-PJR. The improvement comes from giving voters slightly more initial budget while keeping the total unspent budget at most one.

\begin{theorem}
\label{thm:three-choice-quota}
Suppose that at most three candidates are ranked first in each round. If Algorithm~\ref{alg:first-choice-budgets} starts each voter with $(|T|+1/4)/n$ units, then for every group $S$,
\[
 \left|\set{j\in T:w^j\in\bigcup_{i\in S}A_i^j(1)}\right|
 \geq\left\lceil\frac{(|T|+1/4)|S|}{n}\right\rceil-1.
\]
Thus, a group with more than $\ell n/(|T|+1/4)$ members has a member's first choice selected in at least $\ell$ rounds. No constant greater than $1/4$ can replace $1/4$ for all numbers of voters and all horizons, even for an offline rule and only for groups that agree in every round under some rank matrix.
\end{theorem}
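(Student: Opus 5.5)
The plan has two parts. For the upper bound I would rerun the budget argument from the proof of Theorem~\ref{thm:first-choice-loss} with $d=3$ and the new initial budgets. For the lower bound I would reuse the product construction from that theorem with $d=3$ and $|T|=2$.

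\emph{Upper bound.} Let $B$ be the total remaining budget before a round. As in Theorem~\ref{thm:first-choice-loss}, the sets $N_c^j(1)$ split the voters into at most three classes. The selected class therefore holds at least $B/3$ units, and the round removes at least $\min\set{1,B/3}$. Hence the new total is at most $f(B)=B-\min\set{1,B/3}$, and $f$ is nondecreasing. So it suffices to iterate $f$ from the initial total $|T|+1/4$. I will show the final total $F$ satisfies $F\leq1$.
\begin{itemize}
\item If $|T|\geq2$, the first $|T|-2$ rounds each remove one unit and leave $9/4$. The last two rounds multiply by $2/3$, giving $3/2$ and then exactly $1$.
\item If $|T|=1$, the single round gives $(5/4)(2/3)<1$.
\end{itemize}

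Now fix a group $S$. It starts with $x=(|T|+1/4)|S|/n$ units and ends with at most $F\leq1$, so it pays at least $x-1$. It is charged only in rounds where a member's first choice wins, and at most one unit per such round. The number of these rounds is an integer that is at least $x-1$, hence at least $\lceil x-1\rceil=\lceil x\rceil-1$. For the size-condition form, note that $|S|>\ell n/(|T|+1/4)$ means $x>\ell$, so $\lceil x\rceil-1\geq\ell$.

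The step to check carefully is that the recurrence bound is applied correctly through monotonicity of $f$, exactly as in the earlier proof. In particular, the collection in each round is $\min\set{1,\cdot}$ of the selected class's budget and never exceeds what that class holds, so the bound $f$ applies regardless of how the class's budget is distributed.

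\emph{Lower bound.} Take $C=\set{c_1,c_2,c_3}$ and $|T|=2$, with one voter for each pair $(x_1,x_2)\in[3]^2$, so $n=9$. In round $j$ this voter has the cyclic ranking starting at $c_{x_j}$. Fix any sequence $W$, even one chosen offline.
\begin{itemize}
\item In each round, exactly one value of the coordinate matches the winner. So the voters whose first choice never wins form a group $S$ with $|S|=4$.
\item By Lemma~\ref{lem:cyclic}, no winner is Pareto efficient for $S$.
\item By Lemma~\ref{lem:all-rank-minimum-representation}, some rank matrix makes $S$ agree in every round while never representing it.
\end{itemize}
Now suppose the constant $1/4$ were replaced by $\beta>1/4$. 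Then $4>9/(2+\beta)$, so $|S|>\ell n/(|T|+\beta)$ holds with $\ell=1$, and $S$ would be owed a round but receives none. The same holds if every voter is replicated, so the construction works for arbitrarily many voters. This shows $1/4$ is tight, even for offline rules and even restricted to groups that agree in every round under some rank matrix.
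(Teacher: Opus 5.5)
Your proposal is correct and matches the paper's proof step for step. You iterate the same monotone recurrence $B\mapsto B-\min\{1,B/3\}$ from $|T|+1/4$ down to a final total of at most one ($5/6$ when $|T|=1$), take the same ceiling on the rounds in which $S$ pays, and use the same $d=3$, $|T|=2$, $n=9$ product construction, which leaves a never-represented group of four voters and refutes any $\beta>1/4$ (your replication remark is a harmless extra).
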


\begin{proof}
We apply the bound from the proof of Theorem~\ref{thm:first-choice-loss} on the total remaining budget, with $d=3$ and initial total $|T|+1/4$. For $|T|=1$, the final total is at most $(2/3)(5/4)=5/6$. For $|T|\geq2$, the upper-bound recurrence subtracts one unit in each of its first $|T|-2$ steps, leaving $9/4$, and multiplies the total by $2/3$ in each of its last two steps, leaving one. Every group pays at least $(|T|+1/4)|S|/n-1$, only in rounds where a member's first choice wins and at most one unit per round. Taking the ceiling proves the bound. Since $S$ is nonempty, this bound is at least $\lfloor |T||S|/n\rfloor$.

For optimality, use the construction in Theorem~\ref{thm:first-choice-loss} with $d=3$, $|T|=2$, and $n=9$. For every sequence, this construction gives a group of four voters that agrees in every round under some rank matrix and is never represented. Replacing $1/4$ by any constant $\beta>1/4$ would require this group to receive at least one represented round, because $(2+\beta)4/9>1$. This is impossible.
\end{proof}

\end{document}